\documentclass[10pt, twocolumn]{article}
\usepackage{graphicx} 

\usepackage{styles/style}
\usepackage{styles/QIA-Network-Diagram-Style-Guide}
\usepackage[giveninits=true, url=false, maxnames=20, eprint=true]{biblatex}
\color{black}
\pagecolor{white}

\title{Arqon: A suite of control applications enabling a reliable quantum network}
\author{Scarlett Gauthier\footnote{These authors contributed equally.}, Thomas R. Beauchamp$^{*}$, Stephanie Wehner}
\date{\today}

\begin{document}

\maketitle

\begin{abstract}
A quantum network's purpose is to enable users to execute applications on end nodes.
This requires the network to provide the service of creating entangled links between those nodes.
Users of mature networks, such as the internet or the telephone network expect accepted service demands to be met reliably. 
We first define reliability requirements that extend classical computer network concepts to quantum network service delivery.  
We then introduce Arqon, a suite of control applications designed to deliver reliable service in centrally controlled quantum networks.
We demonstrate through both analytic and numerical evaluation that Arqon satisfies all reliability requirements for accepted demands. These evaluations consider static network topologies.
We provide a complete Python implementation and perform complexity analysis showing that admission control scales as $O(k^3)$ in the number of incoming demands~$k$ and schedule computation scales as ${O(N^3)}$ in the number of accepted demands to schedule~$N$.
\end{abstract}

\section{Introduction}\label{sec:Introduction}

A quantum network enables users to execute applications that enable new functionalities. These  
include secure remote computation \cite{bqc1, bqc2, childs_secure_2005}, secure communication which does not rely on computational assumptions \cite{BB84, e91}, fast coordination of decisions between remote parties without the real-time exchange of messages \cite{loadBalancing, EntDecisionCoordinationJiang, GraphRendezvous, Entanglementassistedorientationspace, GraphRendezvous2}, anonymous leader election \cite{leaderElectionPaper}, and improvements in the precision of metrology \cite{ClockSynch, TelescopesBaseline}.
These applications require quantum communication between quantum devices, supplementing the classical communication which underpins classical network applications.
 Entangled links between end nodes, known as end-to-end links, are key resources that can enable arbitrary quantum communication operations.
The service that a quantum network must provide to end nodes is to create these end-to-end entangled links according to the quality, quantity, and time based requirements of the end node applications.

To correctly schedule the quantum communication operations needed in local application execution, the operating systems of programmable end nodes may rely on a network schedule. This schedule dictates when the nodes can access internal resources of the network to attempt end-to-end entangled link generation.
In~\cite{NetArx}, Beauchamp~\textit{et~al.} proposed a centrally controlled quantum network architecture which is designed to provide this service in a way that is compatible with the state of the art operating systems for quantum processing nodes, QNodeOS~\cite{QNodeOS} and its upgrade Qoala~\cite{Qoala}. 

The network architecture in~\cite{NetArx} is modular, in the sense that it is defined to consist of a set of core functionalities, which may be implemented in separate control applications that can each be modified or upgraded independently of the others. A control application packages related network functionalities such as admission control and schedule computation, and provides specific algorithms implementing these functionalities. 
The network architecture is centered around the periodic computation and distribution of network schedules, enabling internal resources and end nodes to receive schedules in advance of the times at which they can be executed.
In a proof of principle implementation and evaluation~\cite{NetArx}, the proportion of satisfied demands decreased substantially as the number of submitted demands increased, exposing a need for improved control applications. 
More broadly, an outstanding challenge in the design of quantum network architectures is to develop control applications that can deliver reliable service to end nodes.

In computer networking, a system or protocol is reliable if it is able to consistently deliver services to their intended recipients without error or significant delay \cite{ReliabilityInternetStandardBook}. Adapting this principle to the service of a quantum network, we define the essential features of reliable service:
\begin{enumerate}[label=\textit{(R\arabic*)}]
    \item demands receive accept/reject responses;\label{req: 1: responses} 
    \item accepted demands are satisfied with a tolerably high probability;\label{req: 2: satisfied}
    \item accepted demands are satisfied before their deadlines; \label{req: 3: on time}
    \item and the amount or frequency of demands from other users does not disrupt service to accepted demands.\label{req: 4: independent}
\end{enumerate}

 Arqon is a suite of control applications for the centrally controlled quantum network architecture in~\cite{NetArx}, which is designed to overcome the challenge of delivering reliable service to end nodes. 
 Arqon comprises four control applications: A \textit{Network Manager}, a \textit{Demand Manager}, a \textit{Network Scheduler}, and a \textit{Schedule Manager}. 
 The Network Manager is responsible for maintaining an accurate overview of the network, including all components and the entanglement generation capabilities of the components. 
 The Demand Manager is responsible for all processes that explicitly relate to demands, such as registration of new demands, creating accept/reject responses to send to end nodes, and re-formatting demands into an internal representation.
 The Network Scheduler produces network schedules based on end node demands. In addition to a \textit{Compute Schedule} process, the Network Scheduler also includes an \textit{Admit Tasks} process that determines whether demands are accepted for service or rejected. 
 Finally, the Schedule Manager directs the distribution of network schedules to all network components.

In this work we make the following contributions: \begin{itemize}
    \item We introduce \textit{Arqon}, the first suite of control applications for a quantum network architecture capable of delivering reliable service to all accepted demands. To make this achievement precise, we propose a set of reliability requirements for service from a quantum network. We deliver a full performance analysis of our Network Scheduler control application, proving that it can meet the proposed requirements. Moreover, we validate in simulation Arqon's performance for a variety of network topologies and sets of demands.
    \item We provide a Python implementation of Arqon that enables researchers to simulate realistic multi-user quantum networks, investigate performance under diverse traffic patterns, and develop extensions to our control algorithms. In particular, the Arqon simulator allows configuration of a network topology, the entanglement generation capabilities (success rate and fidelity) between end nodes, and sets of end node application requirements.
    \item We prove that the operational complexity of our implementation of the Network Scheduler is $${O \bigg{(}k 2^{R} \big{(} (N + k)^2 +R \big{)} + (N+R)^2 + N^3 R \bigg{)}},$$ where $N$ is the current number of accepted demands to schedule, $k$ is the number of internal representations of new demands requiring an accept/reject decision, and $R$ is the number of internal resources in the network (network components that are not end nodes). 
    In our evaluations, we observe that the time to produce a network schedule ($t_{\text{compute}}$) is always significantly less than the allowed time, known as a scheduling interval ($T^{SI}$). That is, ${t_{\text{compute}}} \ll T^{SI}$. 
\end{itemize}

The notation used throughout this paper is summarized in Table~\ref{tab: notation summary}, included in Appendix~\ref{app: notation}. 
The rest of this paper is organized as follows: In Section~\ref{sec: Related Work} we discuss related work. In Section~\ref{sec: design considerations} we discuss design considerations for Arqon. In Section~\ref{sec: Arqon Suite of Control Applications} we define each control application in turn. In Section~\ref{sec: Performance Analysis} we state and prove performance results for the abstractly defined control applications. 
In Section~\ref{sec: Implementation} we discuss a specific implementation of Arqon in Python. In Section~\ref{sec: Complexity} we provide an operational complexity analysis of this implementation. In Section~\ref{sec: Evaluation} we discuss results from numeric simulations. Finally, we conclude in Section~\ref{sec: Conclusion} and discuss our outlook on future research goals.

\section{Related Work}
\label{sec: Related Work}
The Arqon suite of control applications is designed to integrate with the centrally controlled quantum network architecture defined in \cite{NetArx}.
This architecture is compatible with the network protocol stack proposed by Dahlberg~\textit{et~al.}~in~\cite{dahlberg_link_2019}, where the stack is decomposed into layers, each of which exposes a functionality to the layer above and exploits the functionality of the layer below, as in the classical OSI network model~\cite{OSIrefModel}.
We make use of the terminology of this network stack. The \textit{physical layer} of the network is responsible for attempting entanglement generation. Above the physical layer is the \textit{link layer}, which manages link level entanglement generation, transforming it into a robust service. The link layer service was experimentally realized in Pompili~\textit{et~al.}~in~\cite{pompili_experimental_2022}, demonstrating it's viability as a functional layer of a quantum network. Operations such as scheduling entanglement swaps in a repeater chain and aligning entanglement generation with the memory management service of a quantum node are functions of the link layer.
The network architecture~\cite{NetArx} defines a centrally controlled structure for the \textit{network layer}, which is responsible for coordinating end-to-end entanglement generation. 
The combination of this architecture with the Arqon suite of control applications constitutes an implementation of the network layer. 

Experiments on leading quantum network hardware platforms present a major technological challenge: the expected time to generate an entangled link between neighboring nodes at metropolitan distance scales is almost as long as the maximum storage time. In a state-of-the-art demonstration based on trapped ion nodes~\cite{IonsHighRateEnt} the average reported time to generate one entangled pair over an effective 10 km link was 450 ms, while the average maximum storage time was 550 ms with a standard deviation of 36 ms.
In networks where the maximum storage time of entangled links does not exceed by orders of magnitude the time required to generate such a link, the multiple neighboring entangled links required to construct an end-to-end link will rarely co-exist.
To overcome this technical challenge, Arqon produces network schedules that simultaneously allocate nodes along end-to-end routes for durations of time calculated based on end-to-end entanglement generation rates. This method of resource allocation is reminiscent of circuit switching in wired telephone networks~\cite{TelephoneCircuits, tanenbaum2003computer}, Frame Relay~\cite{FrameRelay1, FrameRelay2}, and Autonomous Transfer Mode (ATM) networks~\cite{JainATM, tanenbaum2003computer, steenhaut_scheduling_1997}.

In wired telephone networks, multiple calls share physical links through time division multiplexing (TDM), where each call receives a guaranteed time slot within a frame that repeats at fixed frequency for the entire call duration~\cite{TelephoneCircuits, tanenbaum2003computer}. The maximum number of simultaneous calls is determined by the number of time slots per frame on each link along a route.
In ATM networks, multiple virtual circuits share physical links through statistical multiplexing of 53-byte cells transmitted on demand rather than on fixed schedules~\cite{JainATM, tanenbaum2003computer}. Users requesting virtual circuits provide traffic contracts specifying bounds on data production: sustained cell rate (average long-term rate), peak cell rate (maximum rate), and maximum burst size (longest burst at peak rate). Admission control evaluates these contracts and, if accepted, calculates effective bandwidth to reserve along the entire route for the virtual circuit's duration~\cite{steenhaut_scheduling_1997}.
Frame Relay uses similar mechanisms to ATM but with variable-length frames (1 to 8,192 bytes) and a simpler single-service-class quality of service model with three parameters: a guaranteed data transfer rate (Committed Information Rate (CIR)), an allowed burst size at the CIR and a maximum amount that a burst may exceed the CIR~\cite{FrameRelay1, FrameRelay2}.
In contrast, Arqon schedules exclusive end-to-end physical access for demand-dependent durations separated by demand-dependent minimum separation intervals when routes fully release access. Network-wide schedules are pre-computed sequentially for extended scheduling intervals that are long relative to typical access durations. Arqon's admission control evaluates user demands, with accepted demands receiving service agreements guaranteeing a minimum scheduling frequency per interval while respecting minimum separation requirements.



Many proposals of quantum network architectures and protocols, including those of~\cite{cicconetti_request_2021, pirker_quantum_2019, ESDI, van_meter_path_2013, dur_quantum_1999, van_meter_quantum_2022, LeveragingInternetPrinciples, QuantNetTwoLevel}, have considered the problem of coordinating the generation of end-to-end entangled links, requiring the use of network resources. Van meter~\textit{et~al.} in~\cite{van_meter_quantum_2022} further proposed a framework for routing of entanglement across many smaller networks, and identified the types of protocols required to do so. A common feature of these architectures and protocols is that they do not aim to provide end nodes with a guaranteed quality of service and they do not develop protocols for preventing arriving requests from disrupting service to existing requests. In contrast, Arqon is designed to provide end nodes with guaranteed quality of service agreements which derive from end node application execution requirements and it incorporates an admission control process that can reject arriving demands to prevent disrupting existing service agreements. 

In~\cite{skrzypczyk_architecture_2021} Skrzypczyk~\textit{et~al.} defined a quantum network architecture for meeting quality of service requirements in multi-user quantum networks. Quality of service was defined with respect to classically inspired metrics (throughput and jitter of entangled link generation) and quantum specific metrics (fidelity of generated entanglement). The architecture employs Time-Division Multiple-Access (TDMA)~\cite{TDMA} scheduling with fixed-duration time slots, granting contention-free access to all qubits along a demand's assigned route for integer numbers of consecutive slots. The authors identified the need for an admission control protocol to prevent arriving demands from disrupting service to existing demands. However, no admission control protocol was defined or implemented, leaving the architecture unable to satisfy reliability requirement~\ref{req: 4: independent}. In contrast, Arqon defines a concrete admission control protocol which rejects demands that cannot be satisfied without disrupting service to accepted demands.

\section{Design Considerations}\label{sec: design considerations}
Our task is to design a suite of control applications, which when integrated into the modular architecture defined in~\cite{NetArx}, create a quantum network satisfying all of the reliability requirements~\ref{req: 1: responses}-\ref{req: 4: independent}. From these requirements we derive the following core considerations which guide the design of the Arqon suite of control applications.
\subsection{Core Considerations}\label{ssec: core design considerations}
\begin{enumerate}[label=\textit{(C\arabic*)}]
    \item \label{consideration: fast} Algorithms that handle time-sensitive network scheduling must execute within bounded time, requiring operational complexity that is polynomial in the number of demands.
    
    \item \label{consideration: service agreements} 
    Service agreements must be created for accepted demands, defining the type and quality of service that Arqon commits to provide. In complement, rejection messages must be sent to rejected demands.
    
    \item \label{consideration: AC prevents overload} A demand may be accepted by Arqon if and only if acceptance will not disrupt any existing service agreements.

    \item \label{consideration: robustness} Arqon must support dynamic updates of the network topology by allowing new nodes to register and removing network components that fail or become unresponsive.
\end{enumerate}

Service agreements will be formally defined in Section~\ref{sec: Arqon Suite of Control Applications}, following necessary preliminaries. 

\subsection{Network Model} \label{sec:NetworkModel} 

\begin{figure*}
\center
\resizebox{0.96\textwidth}{!}{
\begin{tikzpicture}[
            node distance = 1cm,
            onOffLink/.style = {rectangle,
                                draw=orange!60,
                                fill=orange!5,
                                minimum width = 5.5 cm,
                                minimum height = 0.5cm,
                                very thick,
                                align = center},
            processingNode/.style= {circle,
                draw=ProcessBlue!60,
                fill=ProcessBlue!5,
                very thick,
                align = center,
                minimum width = 0.25cm
                },
            metroHub/.style = {rectangle,
                                draw=blue!50!red!80,
                                fill=blue!50!red!10,
                                thick,
                                align = center,
                                inner sep = 0.25cm}
                    ]

            \node[jct] (J1) at (0,0) {$J_1$};
            
            \node[MH] (M1) [above= 0.2cm of J1, xshift=-1.5cm] {$I_1$};
            \node[MH] (M2) [below = 0.2cm of J1, xshift=-1.5cm] {$I_2$};

            \node[PN] (PN11) [xshift=1.5cm, yshift=0.0cm] at (M1) {};
            \node[PN] (PN12) [xshift=-1.5cm, yshift=0cm] at (M1) {};
            \node[PN] (PN13) [xshift=-1.0cm, yshift=1.0cm] at (M1) {};
            \node[PN] (PN14) [xshift=+1.0cm, yshift=1.0cm] at (M1) {};
            \node[PN] (PN15) [xshift=0cm, yshift=1.5cm] at (M1) {};

            \node[PN] (PN21) [xshift=-1.5cm, yshift=1.0cm] at (M2) {};
            \node[PN] (PN22) [xshift=1.5cm, yshift=0.0cm] at (M2) {};
            \node[PN] (PN23) [xshift=-1.5cm, yshift=0.0cm] at (M2) {};
            \node[PN] (PN24) [xshift=1.0cm, yshift=-1.0cm] at (M2) {};
            \node[PN] (PN25) [xshift=-1.0cm, yshift=-1.0cm] at (M2) {};

            \node[onOffLink] (L1) [right = 0.8cm of J1] {$B_1$};
            
            \node[jct] (J2) [right = 0.8cm of L1] {$J_2$};

            \node[MH] (M4) [xshift=1.5cm, yshift=0cm] at (J2) {$I_3$};
            
            \node[PN] (PN41) [xshift=-1.0cm, yshift=1.25cm] at (M4) {};
            \node[PN] (PN42) [xshift=0.0cm, yshift=1.5cm] at (M4) {};
            \node[PN] (PN43) [xshift=1.0cm, yshift=1.25cm] at (M4) {};
            \node[PN] (PN44) [xshift=-1.0cm, yshift=-1.25cm] at (M4) {};
            \node[PN] (PN45) [xshift=0cm, yshift=-1.5cm] at (M4) {};

            \path let \p1 = ($(L1.west)!0.5!(L1.east)$) in coordinate (midpointBB) at (\p1);

            \node[RN] (2) [left = 0.2 cm of midpointBB, yshift=-1cm] {};
            \node[RN] (1) [left = 0.4 cm of 2] {};
            \node[RN] (3) [right = 0.4 cm of 2] {};
            \node[RN] (4) [right = 0.4 cm of 3] {};
            \node[RN] (C) [left=0.4cm of 1] {};
            \node[RN] (D) [right = 0.4cm of 4] {};

            \draw (PN11) -- (M1);
            \draw (PN12) -- (M1);
            \draw (PN13) -- (M1);
            \draw (PN14) -- (M1);
            \draw (PN15) -- (M1);

            \draw (PN21) -- (M2);
            \draw (PN22) -- (M2);
            \draw (PN23) -- (M2);
            \draw (PN24) -- (M2);
            \draw (PN25) -- (M2);

            \draw (PN41) -- (M4);
            \draw (PN42) -- (M4);
            \draw (PN43) -- (M4);
            \draw (PN44) -- (M4);
            \draw (PN45) -- (M4);

            \draw (M1.south east) -- (J1.north west);
            \draw (M2.north east) -- (J1.south west);
            \draw[decorate,decoration=zigzag] (J1.east) -- (L1.west);
            \draw[decorate,decoration=zigzag] (L1.east) -- (J2.west);
            \draw (J2.east) -- (M4.west);

            \draw (C.east) -- (1.west);
            \draw (1.east) -- (2.west);
            \draw (2.east) -- (3.west);
            \draw (3.east) -- (4.west);
            \draw (4.east) -- (D.west);

            \draw[decorate, decoration = {brace,mirror,raise = -22pt, amplitude = 6pt}] (L1.east) -- (L1.west);
            \draw[decorate, decoration = {brace,raise = 10pt, amplitude = 6pt}] (L1.east) -- (L1.west);

            \node[rectangle, fill = purple!20, draw=purple!60] (BBLogic) [above = 0.25cm of L1] {B.B. Logic};

            \node[rectangle, fill = purple!20, draw=purple!60, align=center] (CC) [above= 1.75cm of L1] {Central\\Controller};

            \node[rectangle, dashed, draw=orange!60, thick] (BB encompassing) [yshift = 0cm, minimum width=6cm, minimum height = 3cm] at (midpointBB){};

            \node (PNLegend) [right = 1.75cm of M4, yshift=2.5cm] {};
            \node[PN] [left = 0cm of PNLegend] (PNL) {};
            \node[] [right = 0cm of PNLegend] {End Node};

            \node (MetroLegend) [below = 0.5cm of PNLegend] {};
            \node[rectangle, fill = blue!50!red!10, draw=blue!50!red!80] [left = 0cm of MetroLegend, minimum width=0.3cm, minimum height=0.3cm]  {};
            \node[] [right = 0cm of MetroLegend] (ML) {Metropolitan Hub};

            \node (MHLegend) [below = 0.5cm of MetroLegend] {};
            \node[MH] [left = 0cm of MHLegend, minimum width=0.3cm, minimum height=0.3cm]  {};
            \node[] [right = 0cm of MHLegend] (MHL) {EGI};
            
            \node (JctLegend) [below=0.5cm of MHLegend] {};
            \node[jct] [left = -0.05cm of JctLegend, minimum width=0.2cm, minimum height=0.2cm] {};
            \node[] [right = 0cm of JctLegend] (JctL) {Junction (Border) Node};

            \node (LDBBLegend) [below=0.5cm of JctLegend] {};
            \node[onOffLink] [left = 0cm of LDBBLegend, minimum width=0.3cm, minimum height=0.3cm]  {};
            \node[] [right = 0cm of LDBBLegend] (LDBBL) {Long Distance Backbone};

            \node (RNLegend) [below = 0.5cm of LDBBLegend] {};
            \node[RN] [left = -0.1cm of RNLegend, minimum width=0.1cm, minimum height=0.1cm]  {};
            \node[] [right = 0cm of RNLegend] (RNL) {Repeater Node};

            \node (ControlLegend) [below = 0.5cm of RNLegend] {};
            \node[rectangle, fill = purple!20, draw=purple!60] [left = 0cm of ControlLegend, minimum width=0.3cm, minimum height=0.3cm]  {};
            \node[] [right = 0cm of ControlLegend] (CL) {Control Processor};

            \begin{pgfonlayer}{background}
                \node[metroHub, fit = (M1) (M2)] (MHBox1) {};
                \node at (MHBox1.center) {$H_1$};
                
                \node[metroHub, inner sep = 0.45cm, fit = (M4)] (MHBox2) {};
                \node at ($(MHBox2)+(0.45,-0.63)$) {$H_2$};
                
                \node[draw=Black!60, fill = Black!5, fit = (PNL) (JctL) (MHL) (LDBBL) (RNL) (CL) (ML)] (LegendBox) {};
                \node[above=0cm of LegendBox.north] {\textbf{\underline{Legend}}};
            \end{pgfonlayer}

\end{tikzpicture}
}

\caption{Example quantum network resource graph $\mathcal G = (\mathcal V, \mathcal E)$ for a quantum network where the resources to be shared are a long distance backbone channel $B_1$, two junction nodes $J_{1}, J_{2}$, and three entanglement generation interfaces $I_1, I_2, I_3$ grouped into two metropolitan hubs ($H_1$ and $H_2$).
Edges represent logical connections, that is if $(v_1, v_2)\in\mathcal{E}$, then it is possible to create an optical path between components $v_1$ and $v_2$, possibly via some optical switch. 
}
\label{fig: example network topology}
\end{figure*}
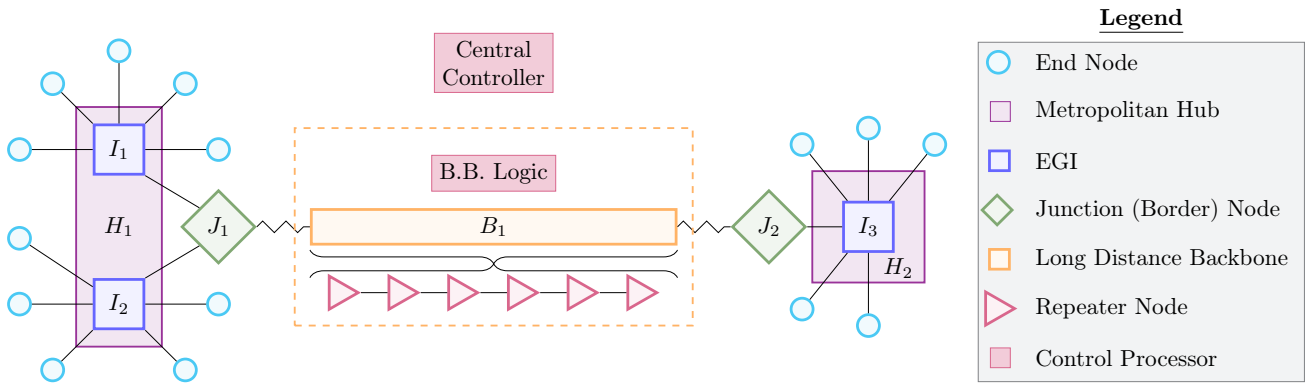

We consider a centrally controlled quantum network where user controlled end nodes submit demands encoding application execution requirements~\cite{NetArx}. Demands are not for single entangled pairs, but rather for repeated creation of packets containing multiple entangled links, with a minimum separation time between packets, until the demand expires. The network can be classified as a \textit{generate-when-requested} network, where end-to-end entangled links are created to serve these demands.  
This means we do not assume that end-to-end entangled links can be stored between any two scheduled periods of time.
This type of network is implementable with the technological maturity of current devices and those that will exist in the coming years. In the following sub-section, we introduce the network components that can be used to construct a general purpose quantum network. 

\subsubsection{Network Components}
\label{subsubsec: Network Components} 
At the physical layer, a general quantum network may consist of a heterogeneous mixture of quantum nodes and the interfaces between them, in an arbitrary topology. 
Figure~\ref{fig: example network topology} represents a possible network topology as a graph, where network components are illustrated as colored nodes and edges connecting components indicate logical connections. 
We consider the following network components:
\begin{itemize}
    \item \textbf{End Nodes} These devices execute programs realizing part of a quantum network application, operate under independent (local) control, and accept input from users.
    An end node may be a \textit{quantum processing node} with some quantum memory capabilities, as in~\cite{ruf_quantum_2021, ThreeNodeQN, HerEntTrappedIons2, HerEntTrappedIons1, NeutralAtoms1, NeutralAtoms2, HerEntOriginalAE, HerEntSecondAE}, or a device without quantum memory which is capable of preparing/measuring photons, as in~\cite{PhotonicClients1, MDIQKDDeployed}. Any end node can also perform classical operations, such as arithmetic operations and classical communication. Resources within the end nodes are managed locally, for example by the operating system QNodeOS~\cite{QNodeOS} or its upgrade Qoala~\cite{Qoala}, and are therefore not under the direct control of any network controller. 
    The local control includes a quantum network agent, which is a process responsible for communicating with a central network controller. 
    This agent submits demands to the central network controller, retrieves and installs network schedules and processes network status messages.

    \item \textbf{Central Controller}
    The purpose of the central controller is to produce network schedules in response to demands from end nodes.
    To do so, the central controller also maintains information about the overall network topology and the entanglement generation capabilities of each component in the network. 
    The central controller hosts a suite of control applications which implement its various functionalities.

    \item \textbf{Metropolitan Hubs} A metropolitan hub facilitates a scalable method of multi-user network access~\cite{RCPpaper, OnDemandPaper}. 
    A metropolitan hub contains a logical control element which enforces the network schedule by controlling the physical devices used to realize the schedule. These devices are: 
    \begin{itemize}
            \item \textbf{Entanglement Generation Interfaces (EGIs)} These devices mediate entangled link generation between neighboring nodes via remote entanglement generation protocols~\cite{RemoteEntProtocolsQubitswithPhotonicInterfaces, Jones_2016}. Different protocols require different EGIs: detection-in-midpoint protocols (e.g., heralded entanglement generation~\cite{theoryHeraldCCGZ, theoryHeraldDLCZ}) use Bell-State Analyzers (BSAs)~\cite{braunstein1995measurement, michler1996interferometric, walther2005experimental}; source-in-midpoint protocols~\cite{Jones_2016} use entangled photon sources; and sender-receiver protocols use only optical channels~\cite{RemoteEntProtocolsQubitswithPhotonicInterfaces, ConditionalPhaseReflection, HarvardConditionalAmplitudeReflection1}.
        Each EGI can only facilitate the generation of entangled links between a single pair of nodes at any one time.
    
        \item \textbf{Switches}
        A switch is a reconfigurable interface that routes signals between different paths.
        Optical switches are compatible with fiber-based or free-space implementations (where light travels through air or vacuum). 
        In both cases, an optical switch enables dynamic, high-speed reconfiguration of signal paths, supporting efficient routing.
        Switches allow the metropolitan hub to allocate pairs of nodes optical access to EGIs according to the network schedule.
    \end{itemize}

    \item \textbf{Long-Distance Backbones} A long-distance backbone is a repeater chain consisting of a linear sequence of \textit{repeater nodes}, such as those in~\cite{theoryHeraldCCGZ, theoryHeraldDLCZ, RepeatersPhotonic1, RepeatersAE1, RepeatersTrappedIon1}. The network treats each backbone as a single component that produces entangled links between its two border nodes~\cite{BorderNodes} at a fixed rate and average fidelity. A long-distance backbone controller configures the repeater chain according to a repeater policy (see e.g.~\cite{repeatersProtocols1, inesta_optimal_2023, repeatersTheory2}). The repeater nodes and border nodes are automated components that strictly implement a repeater policy without supporting additional control logic.

    \item \textbf{Junction Nodes} A junction node is a border node equipped with additional control logic to support network branching. 
    They provide an interface between multiple repeater chains and between repeater chains and a metropolitan hub. They have the physical capabilities to execute remote entanglement generation protocols, store the resulting entangled links, and perform entanglement swaps. The logical controller of a junction node manages local resources (such as quantum memories) to execute the network schedule.
\end{itemize}

\subsubsection{Independently Programmed End Nodes}
Multi-node quantum network applications are partitioned into separate single-node programs that are run
concurrently on different end nodes. The central controller is not aware of the programs on independent end nodes. Each program contains instructions for local classical and quantum computations, and programs on separate end nodes interact with each other only through classical messaging and entanglement generation~\cite{Qoala}. 

This hybrid classical-quantum programming model has an important implication for resource management: programs can be structured into sub-routines such that quantum states required by one sub-routine need not persist through following sub-routines. Individual quantum states therefore need not be stored for the entire program duration.

\subsection{Physical Considerations}

Generation of end-to-end entanglement requires implementing some intermediary protocols, in particular performing many physical attempts to generate entanglement between neighboring network components. 
Each of these physical attempts only succeeds probabilistically, resulting in some average rate of successfully generating end-to-end entangled links.
Successfully generated entangled links are stored in quantum memories, which are subject to time-dependent errors due to environmental interactions~\cite{DecoherenceZurek, DecoherenceUnruh}. These decoherence processes limit the storage time, characterized by the memory coherence time~\cite{QuantumMemory1, QuantumMemory2, QuantumMemory3}. To execute a sub-routine of an application program that requires a packet of entangled links, all links must remain sufficiently coherent throughout the duration of the sub-routine. Consequently, all links in a packet must be generated within a time window $w$ determined by both the quantum memory coherence times and the sub-routine execution duration.
The probability that a batch of attempts to generate end-to-end entangled links succeeds and the packet generation window $w$ are not simply modeling parameters, they are inherited from the physical capabilities of the network components.

\subsection{Timescales}

The time required for network elements to complete actions can only be estimated with finite precision. At the physical layer, actions have precisely characterized durations, allowing for accurate synchronization between multiple nodes, with timing precision ranging from tens of picoseconds (ps) to microseconds ($\mu$s) to milliseconds (ms), depending on the operations~\cite{ThreeNodeQN, QNodeOS}. Precise timing of a sequence of operations is crucial for processes like entanglement generation~\cite{DoubleClickDiamond, humphreys_deterministic_2018, HerEntTrappedIons1, HerEntTrappedIons2, HerEntOriginalAE, HerEntSecondAE, NeutralAtomsHeralded}.

In contrast, at higher layers of the network stack, actions have variable durations and latencies, limiting feasible timing precision to $\mu$s or ms~\cite{OSLatency, ModernOS, HardRealTimeComputing}. In particular, the time required to send a message over the internet varies, and a difference of several ms may occur in the communication times of repeated messages between a fixed sender and receiver. 

An additional type of timing consideration is that entanglement generation involves sequential non-overlapping attempts, each with a particular sequence of operations, setting a minimum period and maximum rate for the process. This also means operations cannot change mid-attempt without disrupting entanglement generation.
The large differences in timescales between the physical layer and higher layers of the network stack imply that scheduling and triggering of actions at the physical layer is not a function of any higher layer of the network stack, but is the responsibility of dedicated and sophisticated electronics that are key components of the physical layer.

\subsection{Architecture}\label{subsec:Architecture}

\begin{figure*}
    \center
\resizebox{0.90\textwidth}{!}{\begin{tikzpicture}
    [
    			node distance = 1cm,
                background rectangle/.style={fill=white},
                endNode/.style={rectangle,
                                draw=blue!60, 
                                fill=blue!20, 
                                very thick, 
                                align=center, 
                                minimum width=8.5cm, 
                                minimum height = 6.5cm,
                                rounded corners,
                                },
                Component/.style={rectangle,
                                draw=red!60, 
                                fill=red!20, 
                                very thick, 
                                align=center, 
                                minimum width=3cm, 
                                minimum height = 1cm,
                                rounded corners,
                                },
                PU/.style={rectangle,
                                draw=red!60, 
                                fill=red!20, 
                                very thick, 
                                align=center, 
                                minimum width=1cm, 
                                minimum height = 1cm,
                                rounded corners,
                                },
                QNtwkCmp/.style={rectangle,
                                draw=Fuchsia!60, 
                                fill=Fuchsia!20, 
                                very thick, 
                                align=center, 
                                minimum width=3.25cm, 
                                minimum height = 0.5cm,
                                },
                CNtwkCmp/.style={rectangle,
                                draw=BlueGreen!60, 
                                fill=BlueGreen!20, 
                                very thick, 
                                align=center, 
                                minimum width=2.9cm,  
                                minimum height = 0.5cm,
                                },
                CC/.style={rectangle,
                                draw=orange!60, 
                                fill=orange!20, 
                                very thick, 
                                align=center, 
                                minimum width=20cm, 
                                minimum height = 5cm,
                                rounded corners,
                                },
                SA/.style={thick, ->,},
                DA/.style={thick, <->,},
    			]
    
    \draw[thick] (-6,2.8) --  (-6, -0.5) node [below] {$k-3$} ;
    \draw[thick] (-3,2.8) --  (-3, -0.5) node [below] {$k-2$} ;
    \draw[thick] ( 0,2.8) --  ( 0, -0.5) node [below] {$k-1$} ;
    \draw[thick] ( 3,2.8) --  ( 3, -0.5) node [below] {$k$};
    \draw[thick] ( 6,2.8) --  ( 6, -0.5) node [below] {$k+1$};
    \draw[->, thick] (-6.5,-0.4) -- (6.8,-0.4); 
    
    \newcommand{\actionstack}[7]{
        \coordinate (root) at #3;
        \ifnum#1=1
    
            \node[CNtwkCmp] (execute schedule) [left=0.025cm of root, draw=Blue!60, preaction={fill, Blue!20}, pattern=#2, pattern color=Blue!60] {};
            \node[CNtwkCmp] (distribute schedule) [left=0.05cm of execute schedule, yshift=0.6cm, xshift = 0cm, minimum width = 2.9cm, draw=Red!60, preaction={fill, Red!20}, pattern=#2, pattern color=Red!60] {};
            \node[CNtwkCmp] (compute schedule) [left=1.39cm of distribute schedule, xshift=0cm, yshift=0.6cm, minimum width = 0.75cm, preaction={fill, Green!20}, draw=Green!60, pattern=#2, pattern color=ForestGreen!60] {};
            
            \node[CNtwkCmp] (admit demands) [left=-0.cm of compute schedule, yshift=0.6cm, minimum width = 0.78cm, preaction={fill, Green!20}, draw=Green!60, pattern=#2, pattern color=ForestGreen!60] {};
            
            \node[CNtwkCmp] (format demands) [left=9.025cm of root, xshift=0cm, yshift=2.4cm, minimum width = 2.9cm, preaction={fill, Fuchsia!20}, draw=Fuchsia!60, pattern=#2, pattern color=Fuchsia!60] {};
            
        \else
            \ifnum#4=1
                \node[CNtwkCmp] (execute schedule) [draw = black!20, preaction={fill, Black!5}, pattern=#2, pattern color=Black!20] [left = 0.025cm of root] {};
            \fi
            \ifnum#5=1
                \node[CNtwkCmp] (distribute schedule) [left=3.025cm of root, yshift=0.6cm, xshift = 0cm, minimum width = 2.9cm, draw=Black!20, preaction={fill, Black!5}, pattern=#2, pattern color=Black!20] {};
            \fi
            \ifnum#6=1
                \node[CNtwkCmp] (compute schedule) [left=7.35cm of root, xshift=0cm, yshift=1.2cm, minimum width = 0.74cm, preaction={fill, Black!5}, draw=Black!20, pattern=#2, pattern color=Black!60] {};
                \node[CNtwkCmp] (admit demands) [left=8.14cm of root, yshift=1.8cm, minimum width = 0.78cm, preaction={fill, Black!5}, draw=Black!20, pattern=#2, pattern color=Black!20] {};
            \fi
            \ifnum#7=1
                \node[CNtwkCmp] (format demands) [left=9.025cm of root, xshift=0cm, yshift=2.4cm, minimum width = 2.9cm, preaction={fill, Black!5}, draw=Black!20, pattern=#2, pattern color=Black!20] {};
            \fi
        \fi
    }
    
    \actionstack{0}{none}{(15,0)}0001
    \actionstack{0}{none}{(12,0)}0011
    \actionstack{0}{none}{(9,0)}0111
    \actionstack{1}{crosshatch}{(6,0)}1111
    \actionstack{0}{none}{(3,0)}1110
    \actionstack{0}{none}{(0,0)}1100
    \actionstack{0}{none}{(-3,0)}1000

    
    \coordinate (label root) at (-6,-0.4);
    
    \node[align=right] at ($(label root|-admit demands)$) [left] {Admit Tasks};
    \node[align=right] at ($(label root|-format demands)$) [left] {Demand Registration};
    \node[align=right] at ($(label root|-compute schedule)$) [left]  {Compute Schedule};
    \node[align=right] at ($(label root|-distribute schedule)$) [left]  {Distribute Schedule};
    \node[align=right] at ($(label root|-execute schedule)$) [left]  {Execute Schedule};
    
    \node[align=center] at (0.0,-1) [below] {Scheduling Interval};
        
    \end{tikzpicture}}

    \caption[Modular Timing Overview]{Process sequence resulting in  execution of a network schedule in scheduling interval $k$. The hatched, colorful processes are executed in the indicated scheduling intervals in order to produce and distribute the schedule which will be executed in scheduling interval $k$. The grey processes highlight the periodic nature of our architecture.}
    \label{fig:network scheduling timings}

\end{figure*}

We state key features of our network architecture~\cite{NetArx}, as we make frequent reference to these features when defining the Arqon control applications. 

For simplicity, we assume quantum network applications are executed between two end nodes. However, our architecture is directly compatible with applications involving multiple end nodes.

\subsubsection{Application Sessions}\label{sssec: DC: Arx: Application Sessions}

The aim of users of a quantum network is to successfully run applications. To do so, they execute hybrid classical-quantum programs on independent end nodes. 
As many hybrid programs have probabilistic outcomes, it is often required to execute the same program many times to extract a useful and reliable output.
Realizing a single execution of a quantum network application requires each collaborating end node to successfully execute their local program.
We refer to each of these individual executions of an application as an \textit{application instance}.
In practice, applications are typically associated with a deadline, before which all application instances must be executed. 
To capture these requirements, we define an \textit{application session}.

\begin{definition}[Application Session]
Suppose end nodes $\mathcal{N}=(\emph{\texttt{node1}}, \emph{\texttt{node2}},...)$ wish to execute application \emph{\texttt{App}} at least $N^{\texttt{\emph{inst}}}$ times, before time $t^{\texttt{\emph{expiry}}}$.

Then we write the corresponding \emph{\textbf{application session}} as 
\begin{equation}
    \mathcal S = (\text{\normalfont\texttt{session\_id}}, \mathcal{N}, \text{\normalfont{\texttt{APP}}},N^{\texttt{\emph{inst}}}, t^{\texttt{\emph{expiry}}}) \label{eq: application session}, 
\end{equation} where \normalfont{\texttt{session\_id}} is a unique identifier for the session.
\end{definition}

To ensure feasibility, it is necessary that the specified expiry time, $\texpiry$, is longer than any time-scale of the network.
To establish a common understanding of how to locally configure programs for executing an application session~\eqref{eq: application session} and ensure appropriate resource allocation, the end nodes in~$\mathcal{N}
$ execute two processes: a \textit{network capability update} and \textit{capability negotiation}~\cite{NetArx}. In the network capability update, each node independently queries a network capability table obtained from the central controller to retrieve end-to-end entanglement generation capabilities for each suitable path, including the average entanglement generation rate and the corresponding minimum fidelity of entangled pairs. In capability negotiation, the end nodes in $\mathcal{N}$ communicate to coordinate their local resource requirements for executing application programs, using the acquired network capabilities as input.

\subsubsection{Application Demands}\label{sssec: DC: Arx: Demands}
Based on their requirements, application sessions on end nodes submit demands for packets of entanglement generation to the network. 
A distinct group of end nodes may simultaneously host multiple application sessions, hence the group may be associated with multiple demands at once.
\begin{definition}[Demand]
\label{def: demand}
Application sessions register \emph{\textbf{demands}}. These demands have the format \begin{equation}\label{eq: demand raw}
    d^{\text{\emph{raw}}} = (\mathfrak{p}; \tminsep; t^{\texttt{\emph{expiry}}};N^{\texttt{\emph{inst}}}) 
\end{equation} where $\mathfrak{p} = (w,s,F)$ defines a packet of $s$ pairs to be generated within time window $w$, with minimum fidelity $F$; $\tminsep$ is a minimum time separation between attempts to generate packets, $t^{\texttt{\emph{expiry}}}$ is the expiry time of the application session, $N^{\texttt{\emph{inst}}}$ is the number of successes required to satisfy the demand. 
A demand is associated with implementation specific metadata, denoted $M_d$. The full demand can then be represented as \begin{equation}\label{eq: demand full}
    d = (d^{\text{\emph{raw}}}, M_d).
\end{equation} 
\end{definition}

The metadata accompanying a demand is implementation specific and should include source and destination identifiers for the end nodes. 
Analogously to the metadata which accompanies a data packet in the classical internet, metadata can be used to inform control processes, including routing.

In our original formulation~\cite{NetArx}, the raw demand included a parameter $R_{\texttt{packet}}$ specifying the average requested rate of successful packet generation until the expiry time $\texpiry$. We have removed $R_{\texttt{packet}}$ from~\eqref{eq: demand raw}, as the number of instances $N^{\texttt{inst}}$ and expiry time $t^{\texttt{expiry}}$ together specify an average rate of packet generation. 

As a new addition to the model from~\cite{NetArx}, we introduce a service error parameter, $\epsilon^{\text{service}}_d$, associated with a demand $d$. This parameter allows precise specification of the reliability requirement~\ref{req: 2: satisfied} for each demand $d$,\begin{equation}\label{eq: prob satisfied epsilon service}
    \mathbb{P}[\text{demand d is satisfied}] \geq (1 - \epsilon^{\text{service}}_d).
\end{equation}
In a deployment of Arqon, the service error parameter may be chosen according to service contracts (i.e. contracts between the network operator and its clients) or priority classes associated with the demand. 
In some implementations, it may be set by the central controller. 

\subsubsection{Periodic Computation and Distribution of Schedules}\label{sssec: DC: Arx: Periodic Schedules}
The central controller must regularly accept demands, as well as compute and distribute network schedules.
Each schedule is associated with a version identifier and covers an identical execution time known as the \textit{scheduling interval} (denoted $T^{SI}$). Figure~\ref{fig:network scheduling timings} is an illustrated overview of the process sequence that enables execution of a network schedule for a scheduling interval $k$. 

In our network architecture there is a constant \textit{scheduling offset}, in terms of a number of scheduling intervals, between the interval at which a demand passes demand registration and the time at which the first schedule involving the demand will be executed if accepted.
In Figure \ref{fig:network scheduling timings}, a scheduling offset of two scheduling intervals is illustrated.
To account for this offset, when demands are registered their earliest possible start time $t^{\texttt{start}}$ is calculated and added to the internal representation of the demand.

\subsubsection{Packet Generation Attempts}\label{sssec: DC: Arx: PGA}
Network schedules allocate \textit{packet generation attempts}~(PGAs) to accepted demands. A PGA always has some duration $E$, success probability $\ppacket$, and reserves internal resources on an end-to-end route through the network. 
PGAs are associated with a success probability due to the probabilistic nature of generating entangled links.
It is impossible to guarantee that a packet will be generated when any finite execution time is allocated to a PGA.

A minimum separation time between successful packet generation, $\tminsep$, is specified as part of a demand. In the PGA framework, this is enforced as a minimum separation time between subsequently scheduled PGAs for the same demand. 
This minimum separation is included to ensure that there is sufficient time for the local runtime environment on end nodes to execute local operations before the next allocated period of time for generating entangled links begins. Examples of these local operations are additional blocks of quantum operations in the application program, or operations to reset the hardware between subsequent attempts to generate a packet.

\subsubsection{Packet Generation Tasks}\label{sssec: DC: Arx: PGT}
\textit{Packet Generation Tasks}~(PGTs) are the internal demand representation format in our network architecture~\cite{NetArx}.
They are persistent tasks requiring the central controller to schedule PGAs to serve an accepted demand until it expires or is terminated. 
To create a PGT $\gamma$ from a demand it is necessary to determine a suitable duration $E_\gamma$ and packet success probability $\ppacket_{\gamma}$ for the PGAs. 
These key parameters of a PGA must be set based on the end-to-end entanglement generation capabilities along a particular path.
For this reason, a PGT $\gamma$ is created for a particular end-to-end entanglement generation path $\pi_{\gamma}$.
We formally define PGTs in Section~\ref{ssec: Demand Registration}.

\section{Arqon Suite of Control Applications}\label{sec: Arqon Suite of Control Applications}
Arqon consists of four control applications:  Network Manager, a Demand Manager, a Network Scheduler, and a Schedule Manager. These control applications are hosted by a central controller, such as an implementation of an Software Defined Network (SDN) controller \cite{kreutz_software-defined_2015, halpern_forwarding_2010, mckeown_openflow_2008,  kozlowski_p4_2020}. The central controller manages all communication with other network components. Figure \ref{fig: systems diagram} is a systems interaction diagram that provides an overview of the control applications and their shared interfaces.


\begin{figure*}
    \centering
    \resizebox{0.88\textwidth}{!}{
\begin{tikzpicture}[x=1cm, y=1cm]
\pgfdeclarelayer{background}
\pgfdeclarelayer{background2}
\pgfdeclarelayer{background3}
\pgfsetlayers{background3, background2, background,main}

    \node[component, minimum height = 5.25cm, minimum width = 15cm] (Arqon) at (0,0) {};
    \node [below =0.1cm of Arqon.north, align=center] {
    \normalsize \itshape \textbf{Arqon}
    };

    \coordinate (ArqonLine) at ($(Arqon.north)!0.22!(Arqon.center)$);

    \draw ($(Arqon.west |- ArqonLine)$) -- (ArqonLine) -- ($(Arqon.east |- ArqonLine)$);
    
    \node[component, minimum height=4.0cm] (DM) [right = 0.3cm of Arqon.south west, anchor=south west, yshift=0.3cm] {\small\textsc{«Component»}\\ \normalsize Demand Manager};
    \node[component] [right = 2cm of DM.south east, anchor=south west] (NM) {\small\textsc{«Component»}\\ \normalsize Network Manager};
    \node[component] [left = 0.3cm of Arqon.south east, anchor=south east, yshift=0.3cm] (SM) {\small\textsc{«Component»}\\ \normalsize Schedule Manager};
    \node[component] [anchor = north] at ($(DM.north -| SM)$) (NS) {\small\textsc{«Component»}\\ \normalsize Network Scheduler};

    \node[component, minimum width = 15cm, below = 1.2cm of Arqon] (SDN) {Central Controller};


    \coordinate (NS1) at ($(NS.south west)!0.25!(NS.north west)$);
    \coordinate (NS2) at ($(NS.north west)!0.25!(NS.south west)$);

    \coordinate (NM1) at ($(DM.east |- NM.west)$);
    \draw[-{Circle[open, scale=2]}] (NM.west) -- ($(NM.west)!0.35!(NM1)$) node [below left, pos=0.9] {\tiny \scshape \textbf{NM:P}};
    \draw[-{Arc Barb[reversed, scale=2.2]}] (NM1) -- ($(NM.west)!0.3!(NM1)$);

    \coordinate (NM2) at ($(NM.west)!0.5!(NM1)$);
    
    \draw (NM2) |- ($(NM2 |- NS1)$) -- (NS1);

    \coordinate (DM1) at ($(DM.east |- NS2)$);
    \coordinate (DM2) at ($(DM.east |- NS.west)$);
    
    \draw[-{Circle[open, scale=2]}] (DM1) -- ($(DM1)!0.2!(NS2)$) node [above left, pos=0.9] {\tiny \scshape \textbf{DM:B}};
    \draw[-{Circle[open, scale=2]}] (DM2) -- ($(DM2)!0.3!(NS.west)$) node [below left, pos=0.9] {\tiny \scshape \textbf{DM:DS}};

    \draw[-{Arc Barb[reversed, scale=2.2]}] (NS2) -- ($(DM1)!0.185!(NS2)$);
    \draw[-{Arc Barb[reversed, scale=2.2]}] (NS.west) -- ($(DM2)!0.285!(NS.west)$);

    \draw[-{Arc Barb[reversed, scale=2.2]}] (NS.south) -- ($(NS.south)!0.5!(SM.north)$) node [below left, pos=1.0] {\tiny \scshape \textbf{SM:SD}};
    \draw[-{Circle[open, scale=2]}] (SM.north) -- ($(NS.south)!0.43!(SM.north)$);

    \coordinate (external) at ($(Arqon.south)!0.5!(SDN.north)$);

    \coordinate (NM3) at ($(NM.south)!0.5!(NM.south west)$);
    \coordinate (SDN1) at ($(SDN.north -| NM3)$);
    \draw[-{Circle[open, scale=2]}] (NM3) -- ($(NM3)!0.55!(SDN1)$) node [above left, pos=0.9] {\tiny \scshape \textbf{NM:N}};
    \draw[-{Arc Barb[reversed, scale=2.2]}] (SDN1) -- ($(NM3)!0.475!(SDN1)$);

    \coordinate (NM4) at ($(NM.south)!0.5!(NM.south east)$);
    \coordinate (SDN2) at ($(SDN.north -| NM4)$);
    \draw[-{Circle[open, scale=2]}] (NM4) -- ($(NM4)!0.55!(SDN2)$) node [above left, pos=0.9] {\tiny \scshape \textbf{NM:NC}};;
    \draw[-{Arc Barb[reversed, scale=2.2]}] (SDN2) -- ($(NM4)!0.475!(SDN2)$);

    \coordinate (SM1) at ($(SM.south)!0.5!(SM.south west)$);
    \coordinate (SDN3) at ($(SDN.north -| SM1)$);
    \draw[-{Circle[open, scale=2]}] (SM1) -- ($(SM1)!0.55!(SDN3)$)  node [above left, pos=0.9] {\tiny \scshape \textbf{SM:S}};;
    \draw[-{Arc Barb[reversed, scale=2.2]}] (SDN3) -- ($(SM1)!0.475!(SDN3)$);

    \coordinate (DM3) at ($(DM.south)!0.5!(DM.south east)$);
    \coordinate (SDN4) at ($(SDN.north -| DM3)$);
    \draw[-{Circle[open, scale=2]}] (DM3) -- ($(DM3)!0.55!(SDN4)$) node [above left, pos=0.9] {\tiny \scshape \textbf{DM:D}};
    \draw[-{Arc Barb[reversed, scale=2.2]}] (SDN4) -- ($(DM3)!0.475!(SDN4)$);

    \coordinate (DM4) at ($(DM.south)!0.5!(DM.south west)$);
    \coordinate (SDN5) at ($(SDN.north -| DM4)$);
    \draw[-{Arc Barb[reversed, scale=2.2]}] (DM4) -- ($(DM4)!0.7!(SDN5)$) node [below left, pos=0.95] {\tiny \scshape \textbf{X}};
    \draw[-{Circle[open, scale=2]}] (SDN5) -- ($(DM4)!0.625!(SDN5)$);

    \coordinate (SM2) at ($(SM.south)!0.5!(SM.south east)$);
    \coordinate (SDN6) at ($(SDN.north -| SM2)$);
    \draw[-{Arc Barb[reversed, scale=2.2]}] (SM2) -- ($(SM2)!0.7!(SDN6)$) node [below right, pos=0.95] {\tiny \scshape \textbf{Z}};
    \draw[-{Circle[open, scale=2]}] (SDN6) -- ($(SM2)!0.625!(SDN6)$);

    \coordinate (NM5) at ($(NM.east)!0.5!(SM.west)$);
    \coordinate (SDN7) at ($(SDN.north -| NM5)$);
    \draw[-{Arc Barb[reversed, scale=2.2]}] (NM.east) -- (NM5) -- ($(NM5)!0.8!(SDN7)$) node [below right, pos=0.95] {\tiny \scshape \textbf{Y}};
    \draw[-{Circle[open, scale=2]}] (SDN7) -- ($(NM5)!0.75!(SDN7)$);


    \node[rectangle, fill=blue!10, draw=black] at (NS1) {};
    \node[rectangle, fill=blue!10, draw=black] at (NS2) {};
    \node[rectangle, fill=blue!10, draw=black] at (NS.west) {};
    
    \node[rectangle, fill=blue!10, draw=black] at (DM1) {};
    \node[rectangle, fill=blue!10, draw=black] at (DM2) {};
    \node[rectangle, fill=blue!10, draw=black] at (DM3) {};
    \node[rectangle, fill=blue!10, draw=black] at (DM4) {};
    
    \node[rectangle, fill=blue!10, draw=black] at (NM1) {};
    \node[rectangle, fill=blue!10, draw=black] at (NM.west) {};
    \node[rectangle, fill=blue!10, draw=black] at (NM3) {};
    \node[rectangle, fill=blue!10, draw=black] at (NM4) {};
    \node[rectangle, fill=blue!10, draw=black] at (NM.east) {};

    \node[rectangle, fill=blue!10, draw=black] at (SM1) {};
    \node[rectangle, fill=blue!10, draw=black] at (SM2) {};
    \node[rectangle, fill=blue!10, draw=black] at (SM.north) {};

    \node[rectangle, fill=blue!10, draw=black] at (NS.south) {};

    \node[rectangle, fill=blue!10, draw=black] at (SDN1) {};
    \node[rectangle, fill=blue!10, draw=black] at (SDN2) {};
    \node[rectangle, fill=blue!10, draw=black] at (SDN3) {};
    \node[rectangle, fill=blue!10, draw=black] at (SDN4) {};
    \node[rectangle, fill=blue!10, draw=black] at (SDN5) {};
    \node[rectangle, fill=blue!10, draw=black] at (SDN6) {};
    \node[rectangle, fill=blue!10, draw=black] at (SDN7) {};

\begin{pgfonlayer}{background}

\end{pgfonlayer}
\end{tikzpicture}
}
    \caption{System Component diagram for Arqon. The interfaces CC:II are described in the main text. The interface \textsc{\textbf{X}} allows Arqon to update the status of demands, interface \textsc{\textbf{Y}} allows Arqon to retrieve the current capabilities of network nodes and \textsc{\textbf{Z}} allows Arqon to push network schedules to components.}
    \label{fig: systems diagram}
\end{figure*}
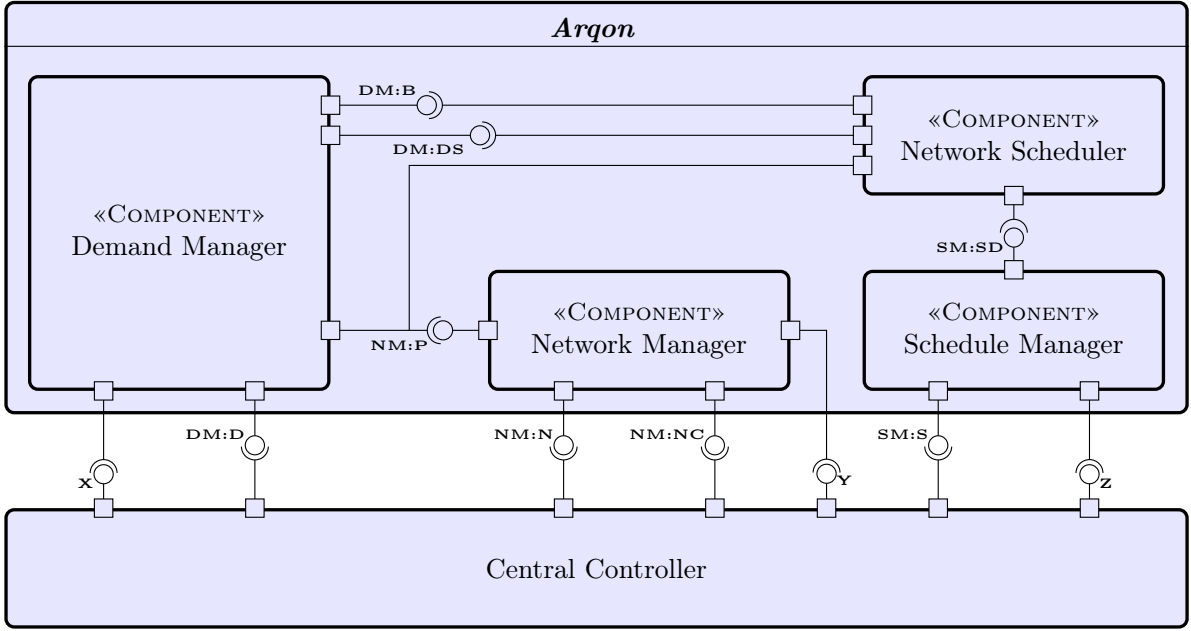

To allow the clearest possible description of the Arqon control applications, we first define the service agreements that Arqon creates when a demand is accepted for service, which is accompanied by delivery of an \texttt{ACCEPT} message to the end node(s) that submitted the demand. Service agreements may only be updated or canceled in the case of network component failure(s) disrupting the compatible end-to-end entanglement generation routes through the network. 

\begin{definition}[Service Agreement]\label{def: ServiceAgreement} 
  The \textit{service agreement} created for an accepted demand is a guarantee from Arqon that:
  \begin{center}
\textbf{    ``In every scheduling interval between $t^{\texttt{\emph{start}}}$ and the demand being terminated or expiring, at least a number of PGAs called a minimal allocation will be scheduled for the PGT accepted to realize the demand.''  }
  \end{center}
\end{definition}

The minimum allocation is a number $\minimumallocation_{\gamma_d}$ of PGAs that must be scheduled for the PGT $\gamma_d$ in each scheduling interval between $t^{\texttt{start}}$ and its expiry time, in order to ensure the demand will be satisfied with at least probability $(1-\epsilon^{\text{service}}_d)$.

\begin{definition}[Minimal allocation]\label{def: minAlloc}
  Let $\gamma_d$ be a PGT for demand $d$, let $t^\texttt{\emph{start}}$ be the start time of the first scheduling interval in which demand $d$ may be scheduled, and let ${n^{SI} = \lceil(t^{\texttt{\emph{expiry}}}_d - t^{\texttt{\emph{start}}}_d)/T^{SI}\rceil}$. 
  Let $\minimumallocation_{\gamma_d}$ be such that if \begin{equation}
      X\sim\mathrm{Binomial}(\minimumallocation_{\gamma_d} n^{SI}, p^{\texttt{\emph{packet}}}_{\gamma_d}),
  \end{equation} then \begin{equation}
      \mathbb{P}[X<N^{\texttt{\emph{inst}}}_d] < \epsilon^{\text{\emph{service}}}_d.
  \end{equation}
  Let $\#_d(S)$ be the number of PGAs for demand $d$ scheduled in schedule $S$. 
  Then we say a demand receives \emph{\textbf{minimal allocation}} in schedule $S$ if ${\#_{d}(S) >= \minimumallocation_{\gamma_d}}$. 
\end{definition}

In the rest of this section we describe each control application in turn, beginning with the Network Manager. 

\subsection{Network Manager}
The Network Manager is responsible for maintaining an accurate overview of all network components, their relevant capabilities, and current statuses.
Its most basic function is to compose and maintain a graph $\mathcal G = (\mathcal V,\mathcal E)$, called the \textit{network resource graph}. The vertices $\mathcal{V}$ are network components, and the logical connections between components are indicated by the set of edges $\mathcal{E}$. 
The Network Manager sets the duration of a scheduling interval, which is a global parameter that must be communicated to the other control applications.
The other core processes of the Network Manager are \textit{Node Registration}, \textit{Path Computation} and a \textit{Network Capabilities Manager} (NCM). 

The Network Manager provides two interfaces to the central controller, known as the \textit{node} (NM:N) interface and 
the \textit{network capabilities} (NM:NC) interface, and an internal interface to the other control applications, known as the \textit{path} (NM:P) interface.
The node interface is used by nodes (re-)joining the network to register themselves. 
The network capabilities interface allows end nodes to obtain information about the entanglement generation capabilities of the network.
The path interface allows other Arqon control applications to access information about the paths in the network either by requesting a data structure called the path partition, denoted $\Pi$, or by requesting sets of paths between specific pairs of nodes.

\subsubsection{Network Resource Graph}\label{sssec: Arqon: Network resource graph}

The network resource graph, $\mathcal{G} = (\mathcal{V}, \mathcal{E})$ is an undirected graph in which all the schedulable components in the network are vertices.  
The set of vertices can be decomposed as $\mathcal V = E\sqcup\mathfrak{R}$ where $E$ is the set of end nodes and $\mathfrak{R}$ is the set of schedulable internal resources: EGIs ($I$), junction nodes ($J$) and long-distance backbones ($B$). 
Each long-distance backbone is allocated as a single unit and is therefor represented by a single vertex in the network resource graph.
A network resource graph for a dumbbell network connecting two metropolitan areas across a long-distance backbone is illustrated in Figure \ref{fig: example network topology}.

The edges of the network resource graph $\mathcal{G}$ represent the logical connections between components.
For example, if a metropolitan hub contains two EGIs which can be accessed by end node $e$, $\mathcal{G}$ has an edge between $e$ and each of these EGIs.
In practice, such a setup could be realized with a single physical connection from the end node to an optical switch located at the metropolitan hub. The optical switch could then be configured to open a physical path from the end node to either of the specific EGIs. 

A path ${\pi= (v_0, v_1,...,v_k)}$ through the network resource graph $\mathcal{G}$ is a sequence of vertices ${v_i \in \mathcal{V}}$ where for each adjacent pair ${(v_i, v_{i+1})}$ there is an edge $e \in \mathcal{E}$ connecting $v_i$ and $v_{i+1}$, and no vertex is visited more than once.
It is only possible to serve demands for end-to-end entanglement generation if there is at least one path through the internal resources connecting two end nodes. If such a path exists between all pairs of end nodes, the network resource graph is internally connected. 

\begin{restatable}[Internally Connected]{definition}{defInternallyConnected}\label{def: internally connected} 
    We say that a network resource graph ${\mathcal{G} = (\mathcal V, \mathcal E)}$ with end nodes $E$ is \emph{\textbf{internally connected}} if for all pairs of nodes ${v,v'\in\mathcal V}$, there exists a path ${\pi = (v, \pi_1, ...,\pi_{k-1}, v')}$ in $\mathcal{G}$ such that ${\forall i, \pi_i \notin E}$.
\end{restatable}

End-to-end paths must not pass through end nodes, because these are under local control only and are not bound to execute network schedules.

\subsubsection{Node Registration}
When a node joins the network for the first time, or re-joins the network after a period of inactivity, it needs to register itself with the central controller. 
This is handled by the Node Registration process of the Network Manager via the node (NM:N) interface with the central controller. 
When nodes register with the Network Manager it assigns a unique node ID that Arqon uses in the creation and distribution of network schedules, and it triggers updates to the network resource graph and the network capabilities table maintained by the NCM. 
A registration response is returned to the node with the assigned node ID. 

\subsubsection{Path Computation}\label{sssec: arqon: NM: path computation}
Path Computation identifies which paths through the network resource graph are suitable for generating end-to-end entangled links between any two end nodes. It is not a time-critical process and is only triggered by updates to the network resource graph. 
As end nodes are under local control only, valid end-to-end entanglement generation paths do not traverse end nodes.
\begin{definition}[Valid entanglement generation paths]
    Let $\mathcal{G}$ be a network resource graph with end nodes $E$. Let ${\pi=(\pi_0, ..., \pi_k)}$ be a path in $\mathcal{G}$. Then we say that $\pi$ is a \emph{\textbf{valid entanglement generation path}} if: \begin{enumerate}
        \item $\pi_0\neq\pi_k$
        \item $\pi_0, \pi_k\in E$
        \item $\forall i = 1,...,k-1; \pi_{i}\notin E$.
    \end{enumerate} We denote the set of all valid entanglement generation paths by \normalfont $\mathcal{P}_{\text{valid}}$.
    \label{def:valid ent gen paths}
\end{definition}

The number of valid entanglement generation paths through a network with resource graph $\mathcal{G}$ is ${O(2^{|\mathcal{V}|})}$. For large networks, this can quickly grow to a size which is infeasible to efficiently compute. 
Therefore, implementations of Arqon may benefit from reducing the set of valid entanglement generation paths to a set of \textit{allowable paths} by imposing additional restrictions on the paths which may be used to realize a demand. 

The outcome of Path Computation is a data structure called the path partition, $\Pi$, which is a disjoint partition of all valid or allowable entanglement generation paths. In Section~\ref{sec: Implementation} we define a specific path partition, but Arqon is compatible with any disjoint partition of $\mathcal{P}_{\text{valid}}$. Looking forward, a disjoint path partition is useful in enabling efficient scheduling of network resources, as non-overlapping paths can be simultaneously scheduled to produce entanglement serving different demands without any ill effects.

\subsubsection{Network Capabilities Manager}\label{sssec: NCM}
The NCM maintains a \textit{network capabilities table} containing the end-to-end entanglement generation capabilities of each path ${\pi \in\Pi}$. The network capabilities table has a version identifier that is incremented with each update.

The demands issued by end nodes (see \eqref{eq: demand raw} and \eqref{eq: demand full}) include a minimum fidelity $F$, a number of instances $\ninst$, and an expiry time $\texpiry$, all of which are not arbitrary, but should be informed by the entanglement generation capabilities along paths ${\pi \in \Pi}$. 
To obtain this information, end nodes query the central controller for their entries of the network capabilities table~\cite{NetArx}. 
The metadata that is included with a demand includes the network capability version id, which is used by the Demand Manager to screen demands before registering them. 

Other Arqon control applications also need to read the entanglement generation capabilities along end-to-end routes, in particular the Demand Manager. For this reason, the path partition $\Pi$ is updated by the Network Capabilities Manager to specify these capabilities for each path ${\pi \in \Pi}$.

In practice, the NCM may establish the end-to-end entanglement generation capabilities in a variety of ways. A direct method would be for the central controller to periodically schedule blocks of time for characterization of end-to-end routes. An implementation may include a method of determining the entanglement generation capabilities of an end-to-end route from the capabilities of every link in the route individually.

\subsection{Demand Manager}\label{sec: Demand Manager}
The Demand Manager, is responsible for receiving, screening and registering incoming demands, converting demands into PGTs, maintaining an accurate ledger of demand statuses, and creating the status messages that the central controller sends to end nodes.  
These processes are implemented by an \textit{Intake and Status Manager}, a \textit{Demand Registration} process, and a \textit{Buffer Manager}. 
The Demand Manager provides three interfaces: the \verb|DemandStatus| (DM:DS) interface, the \verb|Buffer| (DM:B) interface and the \verb|Demand| (DM:D) interface. 
The DM:DS interface is used by the Network Scheduler to update demand statuses with \texttt{ACCEPT}/\texttt{REJECT} outcomes after admission control. 
The DM:B interface allows the Network Scheduler to retrieve a buffer of registered demands and a buffer of demands to terminate, held by the Buffer Manager. 
Finally, the DM:D interface is an interface to the central controller through which the Demand Manager retrieves arriving demands and termination requests for existing demands.
The Demand Manager with all internal components and interfaces is illustrated in Figure~\ref{fig: demand manager}.



\begin{figure}
    \centering
    \resizebox{0.98\linewidth}{!}{
        \begin{tikzpicture}[x=1cm, y=1cm]

\node[component, minimum height=9cm, minimum width = 10.2cm] (DM) at (0,0) {};
\node [below = 0.1cm of DM.north, align=center] {\textbf{\textit{Demand Manager}}};
\coordinate (DMLine) at ($(DM.north)!0.17!(DM.center)$);
\draw ($(DMLine -| DM.east)$) -- (DMLine) -- ($(DMLine -| DM.west)$);

\coordinate (DMTopLeft) at ($(DMLine -| DM.west)$);
\coordinate (DMTopRight) at ($(DMLine -| DM.east)$);

\coordinate (DM1) at ($(DMTopRight)!0.1!(DM.south east)$);
\coordinate (DM2) at ($(DMTopRight)!0.2!(DM.south east)$);
\coordinate (DMExternalRight) at ($(DMTopRight) + (0.75cm, 0)$);
\coordinate (DMExternalBelow) at ($(DM.south) + (0, -0.75cm)$);

\node[component, inner sep = 0.25cm] (IM) [right = 0.5cm of DM.south west, anchor=south west, yshift =0.5cm] {\small\textsc{«Component»}\\ \normalsize Intake and Status Manager};

\node[component, minimum height=4cm, minimum width =9.2cm, right=0.5cm of DMTopLeft, anchor=north west,] (DR)  [yshift=-0.3cm] {};

\node[below=0.1cm of DR.north, align=center] (DRText) {\small «\textsc{Component}» \\ \small \textbf{\textit{Demand Registration}}};
\coordinate (DRLine) at ($(DR.north)!0.6!(DR.center)$) {};
\draw ($(DRLine -| DR.east)$) -- (DRLine) -- ($(DRLine -| DR.west)$);

\node[component, text width = 1.5cm, minimum height=1.75cm] (Meta) [above=0.3cm of DR.south west, anchor = south west, xshift=0.3cm] {
\small Metadata checks
};
\node[component, text width = 1.5cm, minimum height=1.75cm] (Format) [right=0.75cm of Meta.east, anchor = west] {
\small Create \\ PGTs
};
\node[component, text width = 1.5cm, minimum height=1.75cm] (Sanity) [right=0.75cm of Format.east, anchor = west] {
\small Sanity Checks
};

\node[component] (BM) [xshift=-0.5cm, anchor=east] at ($(DM.east |- IM.east)$) {\small «\textsc{Component}» \\ Buffer Manager};


\draw[-{Circle[open, scale=2]}] (BM.east) -- ($(BM.east -| DMExternalRight)$) node [above, pos=1, yshift=0.1cm] {\small \textbf{\textsc{DM:B}}};

\node[rectangle, fill=blue!10, draw=black] at (BM.east) {};
\node[rectangle, fill=blue!10, draw=black] at ($(BM.east -| DM.east)$) {};

\coordinate (DMDS1) at ($(BM.north)!0.2!(Sanity.south)$);
\coordinate (DMDS2) at ($(IM.east)!0.5!(BM.west)$);
\coordinate (DMDS3) at ($(IM.east)!0.5!(IM.north east)$);
\coordinate (DMDS4) at ($(DMDS3 -| DMDS2)$);

\draw[-{Circle[open, scale=2]}] (DMDS3) -- (DMDS4) |- (DMDS1) -- ($(DMDS1 -| DMExternalRight)$) node [above, yshift=0.1cm] {\small \textbf{\textsc{DM:DS}}};
\node[rectangle, fill=blue!10, draw=black] at (DMDS3) {};
\node[rectangle, fill=blue!10, draw=black] at ($(DMDS1 -| DM.east)$) {};

\coordinate (NMP1) at ($(BM.north)!0.6!(Sanity.south)$);
\draw[-{Arc Barb[reversed, scale=2.2]}] (Format.south) |- (NMP1) node [pos=0.7, below] {$\{\pi\}$} -- ($(NMP1 -| DMExternalRight)$) node [above, yshift=0.2cm] {\small \textbf{\textsc{NM:P}}};
\node[rectangle, fill=blue!10, draw=black] at (Format.south) {};
\node[rectangle, fill=blue!10, draw=black] at ($(Format.south |- DR.south)$) {};
\node[rectangle, fill=blue!10, draw=black] at ($(NMP1 -| DM.east)$) {};

\coordinate (DMD1) at ($(IM.south east)!0.5!(IM.east)$);
\coordinate (DMD2) at ($(IM.east)!0.5!(BM.west)$);
\coordinate (DMD3) at ($(DMD1 -| DMD2)$);
\coordinate (DMD4) at ($(DMD3 -| BM.west)$);

\draw (DMD1) -- (DMD3) -- (DMD4);
\draw[-{Circle[open, scale=2]}] (DMD3) -- ($(DMD3 |- DMExternalBelow)$) node [below] {\small \textbf{\textsc{DM:D}}};
\node[rectangle, fill=blue!10, draw=black] at ($(DMD3 |- DM.south)$) {};
\node[rectangle, fill=blue!10, draw=black] at (DMD1) {};
\node[rectangle, fill=blue!10, draw=black] at (DMD4) {};

\draw[-{Arc Barb[reversed, scale=2.2]}] (IM.south) -- ($(IM.south |- DMExternalBelow)$) node [below] {\small \textbf{\textsc{X}}};;
\node[rectangle, fill=blue!10, draw=black] at (IM.south) {};
\node[rectangle, fill=blue!10, draw=black] at ($(IM.south |- DM.south)$) {};

\draw[<-, thick] (Meta.south) -- ($(Meta.south |- IM.north)$) node [pos =0.5, right] {$\mathcal{D}$};
\draw[->, thick] (Meta.east) -- (Format.west) node [pos=0.5, above] {$\mathcal{D}$};
\draw[->, thick] (Format.east) -- (Sanity.west) node [pos=0.5, above] {$\mathcal{D}$} node [pos=0.5, below] {$\{\gamma\}$};

\coordinate (PGTS1) at ($(Sanity.south |- BM.north)$);
\coordinate (PGTS2) at ($(PGTS1)!0.675!(Sanity.south)$);
\coordinate (PGTS3) at ($(PGTS1)!0.275!(Sanity.south)$);

\draw[->, thick] (Sanity.south) -- (PGTS2) arc (90:270:0.15cm) -- (PGTS3) node [pos=0.5, right] {$\{\gamma\}$} arc (90:270:0.15cm) -- (PGTS1);

\coordinate (AR0) at ($(Sanity.north |- DRLine)$);
\coordinate (AR1) at ($(Sanity.north)!0.5!(AR0)$);
\coordinate (AR2) at ($(Meta.west -| DR.west)$);
\coordinate (AR3) at ($(AR2)!0.5!(Meta.west)$);

\coordinate (AR4) at ($(IM.north west)!0.5!(DR.south west)$);
\coordinate (AR5) at ($(IM.north)!0.5!(IM.north west)$);
\coordinate (AR6) at ($(AR5 |- AR4)$);

\coordinate (AR7) at ($(Meta.north |- AR1)$);

\draw[->, dashed, thick] (Sanity.north) |- (AR1) -| (AR3) |- (AR6) node [above, align=center] {\small \texttt{PASS}/\\\small \texttt{FAIL}} -- (AR5);
\draw[dashed, thick] (Meta.north) -- (AR7);

\end{tikzpicture}

    }
    \caption{Detailed diagram of the internals of the Demand Manager. The arrows show the process of registering a demand, showing how the demand submission $\mathcal{D}$ is passed from metadata checks to Create PGTs. The relevant paths $\{\pi\}$ are obtained via the NM:P interface, before the created PGTs $\{\gamma\}$ are also passed to the sanity checks. Finally, any PGTs which pass the sanity checks are passed to the buffer manager and an \texttt{PASS/FAIL} message is sent back to the intake and status manager as appropriate. The interface X provided by the central controller is used to update the statuses of demands at the relevant end nodes.}
    \label{fig: demand manager}
\end{figure}
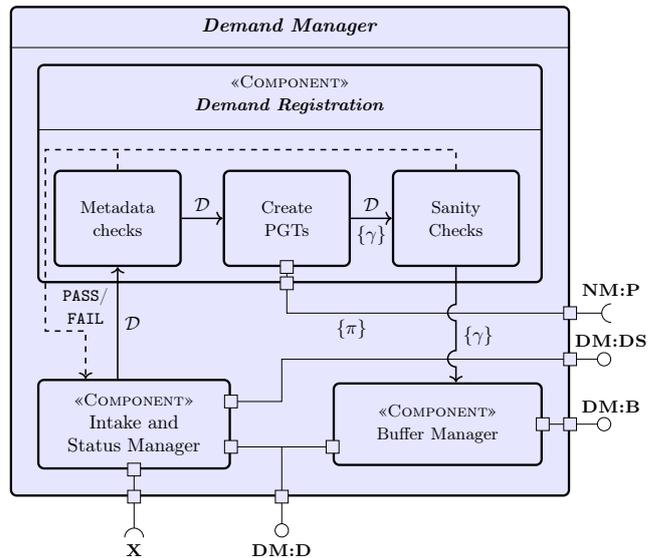


\subsubsection{Intake and Status Manager}\label{ssec: Demand Intake Manager}

The Intake and Status Manager is the first process which a demand interacts with. For every incoming demand it first assigns a unique demand ID and creates an entry in a ledger of demand statuses, with an initial status flag indicating that the demand is not yet registered. The demand status flag is updated as it progresses through each further processing step. The demand is not removed from the status ledger until it is either rejected, expires, or is terminated by the end nodes. 
Each incoming demand is added to a First In First Out (FIFO) queue, where it remains until a worker of the Demand Registration process removes it. 
An implementation may also specify an alternative queuing system for Demand Registration, including a priority based queuing system. 

The Intake and Status Manager interprets \texttt{PASS}/\texttt{FAIL} outcomes received from Demand Registration and \texttt{ACCEPT}/\texttt{REJECT} outcomes received from the Network Scheduler through the DM:DS interface. Upon receipt of a \texttt{FAIL} or \texttt{REJECT} outcome it issues a demand rejection message to end nodes, with an error code indicating whether the demand failed in Demand Registration (\texttt{FAIL}) or was rejected by the Network Scheduler (\texttt{REJECT}). Upon receipt of a \texttt{PASS} or \texttt{ACCEPT} outcome, it updates the status of the demand. For an \texttt{ACCEPT} outcome the Intake and Status Manager further creates a demand acceptance message for end nodes, which includes a service agreement. 

\subsubsection{Demand Registration}\label{ssec: Demand Registration}
Demand Registration comprises three phases: a metadata check, the creation of a set of Packet Generation Tasks (PGTs), and implementation specific sanity checks. 
There are a finite number of parallel instances of the Demand Registration process, known as \textit{workers}, each of which handles the registration of one demand at a time. 
Workers return a \texttt{PASS}/\texttt{FAIL} outcome for each demand to the Intake Manager and take the next demand from the demand queue, as long as it is not empty. 
An implementation may specify any number of workers, including just a single worker.

\begin{enumerate}
    \item Metadata check:

    The following assertions must be true for a demand to be valid:
\begin{enumerate}
    \item The network capability version identifier is of the expected data type and corresponds to the latest version;
    \item Both end node id's correspond to registered nodes and are of the expected data type;
    \item There is an allowed/valid path $\pi$ connecting the source and destination nodes;
    \item The message is a well-formatted demand adhering to the format in \eqref{eq: demand raw}. 
    \item Each of $\ninst, \ F, \ w, s \ > 0$ and all fields are of the expected data type.
    \item The expiry time $\texpiry$ is later than $t^{\texttt{start}}$, ${\texpiry > t^{\texttt{start}}}$.
\end{enumerate}
These assertions are necessary to ensure that a demand can be handled in a consistent manner by the network. Additional metadata checks may be specified by an implementation. 

\item Create Packet Generation Tasks:
A PGT is associated with a particular path from the source to destination end node. 
To create a PGT $\gamma_{d}$ for a demand $d$ with a path ${\pi_{\gamma_d} \in \Pi}$ it is necessary to set the duration of the PGAs ($E_{\gamma_d}$), to calculate the packet success probability ($\ppacket_{\gamma_d}$), and to calculate the minimum allocation ($\minimumallocation_{\gamma_d}$). In Section~\ref{sec: Implementation}, we specify a particular method of setting $E_{\gamma_d}$ and calculating $\ppacket_{\gamma_d}$.

The minimal allocation (Definition~\ref{def: minAlloc}) is the key parameter of the service agreement that Arqon creates for each accepted demand. It depends on the service error parameter, $\epsilon^{\text{service}}_d$, which is either specified by the end nodes as part of the demand, or set by the central controller. 


\begin{definition}[Packet Generation Task]
   A \emph{\textbf{packet generation task}} is a tuple \begin{equation}\label{eq:PGT}
     \gamma = (E_\gamma, \minimumallocation_\gamma, \pi_\gamma, \tminsep_\gamma, t^{\texttt{\emph{expiry}}}_{\gamma}, t^{\texttt{\emph{start}}}_{\gamma})
   \end{equation} where $E_\gamma$ is the execution time of each packet generation attempt (PGA), $\minimumallocation_\gamma$ is the minimum allocation for $\gamma$, ${\pi_{\gamma}\in \Pi}$ is a valid entanglement generation path and $\tminsep_{\gamma}, t^{\texttt{\emph{expiry}}}_\gamma$ are as in the raw demand \eqref{eq: demand raw}, and $t^{\texttt{\emph{start}}}_{\gamma}$ is the first scheduling interval in which a PGA for $\gamma$ could be scheduled, based on the scheduling offset.  
\label{def: PGT}
\end{definition}

In general, there may be multiple valid entanglement generation paths between two end nodes.
Each of these paths is only suitable to realize the demand if its average rate of end-to-end entanglement generation is high enough so that demanded $s_d$ pairs can be generated within a window of time $w_d$ with a non-zero probability ${\ppacket_{\gamma_d} > 0}$.
In the Create PGTs phase of Demand Registration each demand $d$ is mapped to a set $\bm{\Gamma}_d$ of suitable PGTs, from which at most one particular PGT $\gamma$ will be accepted by the Network Scheduler.
This also means that a routing decision for a demand $d$ is not made until the Admit Tasks process of the Network Scheduler, where a particular ${\gamma \in \bm{\Gamma}_d}$ may be accepted for scheduling.

\item Implementation specific sanity checks: 

Once $\bm{\Gamma}_d$ is constructed, then each possible PGT  ${\gamma\in \bm{\Gamma}_d}$ may undergo further sanity checks. 
If a PGT $\gamma$ fails a sanity check, then it is removed from $\bm{\Gamma}_d$. 
We define one necessary sanity check, to verify that for each proposed PGT $\gamma \in \bm{\Gamma}_d$ \begin{equation}
    E_{\gamma} \leq T^{SI},\label{eq: execution time sanity check}
\end{equation}
meaning that the execution time of a single PGA of the PGT $\gamma$ must not exceed the scheduling interval.

If all PGTs fail the sanity check, then the demand is marked as invalid and a \texttt{FAIL} outcome is returned to the Intake Manager.
Otherwise, a \texttt{PASS} outcome is returned to the Intake Manager and $\bm{\Gamma}_d$ is passed to the Buffer Manager to be added to the demand intake buffer, denoted $\bm{\Gamma}$. 

\end{enumerate}

\subsubsection{Buffer Manager}\label{sssec: Buffer Manager}
The sets of PGTs $\bm{\Gamma}_d$ which can realize each demand $d$ that passes Demand Registration are needed as an input to the Network Scheduler control application. 
Between Demand Registration and the Network Scheduler however there is a type of timing boundary, in that the Network Scheduler executes a set of periodic processes once per scheduling interval, whereas Demand Registration occurs based on the queue arrival dynamics of demands and the availability of workers. 
Due to this mismatch in process timings, the set of PGTs $\gamma_d$ for each registered demand $d$ are placed into a task intake buffer $\bm{\Gamma}$, which is maintained by the Buffer Manager. 

Within the Buffer Manager, $\bm{\Gamma}$ is write only. In contrast, the Demand Manager exposes $\bm{\Gamma}$ to the Network Scheduler control application via the DM:B interface, but the buffer is exposed with read only access. 
The structure used to represent what is read from the task intake  buffer by the Network Scheduler is referred to as the \textit{task intake object} and is also denoted $\bm{\Gamma}$ for simplicity.  
After the buffer has been read by the Network Scheduler, the Buffer Manager must implement a flush of the buffer. In an implementation, the Buffer Manager may flush the entire buffer after each read. Alternatively, it may be that the Network Scheduler only reads a maximum number $M$ of sets of PGTs ${\bm{\Gamma}_i, \ i \in \{ 0, 1, \cdots, M-1\}}$ from the buffer in each read. In that case the Buffer Manager may implement a flush of only the top $M$ sets of PGTs after a read.

Similarly, it needs to maintain a buffer of requests from end nodes to terminate demands, called the \textit{termination buffer} $\tau$.
As with $\bm{\Gamma}$, this is write-only from the within the Buffer Manager itself, and via the DM:B interface it is exposed in read-only form to the Network Scheduler. 
Unlike $\bm{\Gamma}$, the Network Scheduler should always read the entire termination buffer.
Once this has occurred, the entire buffer is flushed so that it is ready to receive more termination requests in the next scheduling interval.

\subsection{Network Scheduler}\label{ssec: Network Scheduler control application}


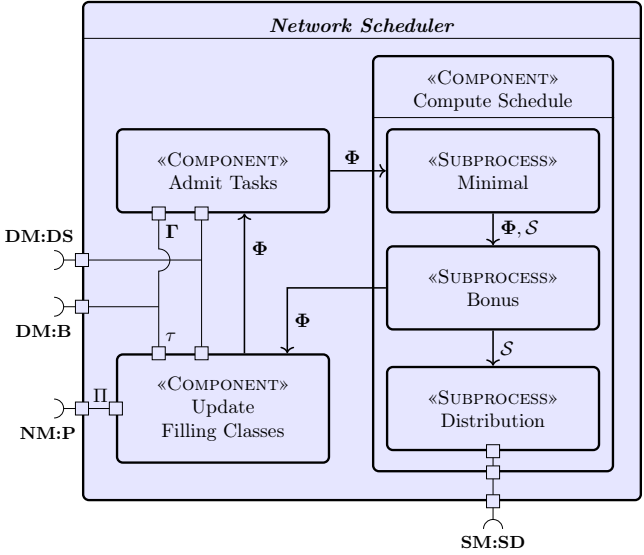
\begin{figure}
    \centering
    \resizebox{0.98\linewidth}{!}{
        \begin{tikzpicture}[x=1cm, y=1cm]

\node[component, minimum width = 10cm, minimum height=9cm] (NS) at (0,0) {};
\node[below =0.2cm of NS.north] {\textbf{\textit{Network Scheduler}}};

\coordinate (NSLine) at ($(NS.north)!0.15!(NS.center)$);
\coordinate (NSTopLeft) at ($(NS.west |- NSLine)$);
\coordinate (NSTopRight) at ($(NS.east |- NSLine)$);

\draw (NSTopLeft) -- (NSLine) -- (NSTopRight);

\node[component] (CS) [
    left = 0.5cm of NSTopRight,
    minimum height = 7.5cm,
    text width = 3.5cm,
    yshift=-0.3cm,
    anchor=north east
] {};

\node[below=0.15cm of CS.north, anchor=north, align=center] {«\textsc{Component}»\\Compute Schedule};
\coordinate (CSLine) at ($(CS.north)!0.3!(CS.center)$);
\coordinate (CSTopLeft) at ($(CS.west |- CSLine)$);
\coordinate (CSTopRight) at ($(CS.east |- CSLine)$);
\draw (CSTopLeft) -- (CSTopRight);

\coordinate (CSMinimal) at ($(CSTopLeft)!0.15!(CS.south west)$);
\coordinate (CSBonus) at ($(CSTopLeft)!0.48!(CS.south west)$);
\coordinate (CSDist) at ($(CSTopLeft)!0.82!(CS.south west)$);

\node[component] (Minimal) at ($(CSMinimal -| CS.north)$) {«\textsc{Subprocess}» \\ Minimal};
\node[component] (Bonus) at ($(CSBonus -| CS.north)$) {«\textsc{Subprocess}» \\ Bonus};
\node[component] (Distribute) at ($(CSDist -| CS.north)$) {«\textsc{Subprocess}» \\ Distribution};

\node[component] (AC) [left = 1cm of Minimal,anchor=east] {«\textsc{Component}»\\Admit Tasks};

\node[component] (UFC) at ($(AC |- Distribute)$) {«\textsc{Component}»\\Update\\Filling Classes};

\coordinate (NSExtLeft) at ($(NS.west) + (-0.5cm, 0)$);

\draw[-{Arc Barb[reversed, scale=2.2]}] (UFC.west) -- ($(UFC.west -| NSExtLeft)$) node [below, yshift=-0.2cm, pos=1.1] {\small \textbf{NM:P}} node [pos=0.25, above] {$\Pi$};
\node[rectangle, fill=blue!10, draw=black] at (UFC.west) {};
\node[rectangle, fill=blue!10, draw=black] at ($(UFC.west -| NS.west)$) {};

\coordinate (DMDS1) at ($(AC.south west)!0.333!(UFC.north west)$);
\coordinate (DMDS2) at ($(AC.south west)!0.4!(AC.south east)$);
\coordinate (DMDS3) at ($(UFC.north west)!0.4!(UFC.north east)$);

\draw[-{Arc Barb[reversed, scale=2.2]}] ($(DMDS2 |- DMDS1)$) -- ($(DMDS1 -| NSExtLeft)$) node [above, yshift=0.2cm, pos=1.1] {\small \textbf{DM:DS}};
\draw (DMDS2) -- (DMDS3);

\node[rectangle, fill=blue!10, draw=black] at (DMDS2) {};
\node[rectangle, fill=blue!10, draw=black] at (DMDS3) {};
\node[rectangle, fill=blue!10, draw=black] at ($(DMDS1 -| NS.west)$) {};

\coordinate (DMB1) at ($(AC.south west)!0.667!(UFC.north west)$);
\coordinate (DMB2) at ($(AC.south west)!0.2!(AC.south east)$);
\coordinate (DMB3) at ($(UFC.north west)!0.2!(UFC.north east)$);

\coordinate (DMB4) at ($(DMB3 |- DMDS1) + (0,-0.2cm)$);

\draw[-{Arc Barb[reversed, scale=2.2]}] ($(DMB2 |- DMB1)$) -- ($(DMB1 -| NSExtLeft)$) node [below, yshift=-0.2cm, pos=1.1] {\small \textbf{DM:B}};
\draw (DMB3) -- (DMB4) node [pos=0.2, right] {$\tau$} arc (-90:90:0.2cm) -- (DMB2) node [pos=0.5, right] {$\mathbf{\Gamma}$};

\node[rectangle, fill=blue!10, draw=black] at (DMB3) {};
\node[rectangle, fill=blue!10, draw=black] at (DMB2) {};
\node[rectangle, fill=blue!10, draw=black] at ($(DMB1 -| NS.west)$) {};

\coordinate (UFC1) at ($(UFC.north west)!0.6!(UFC.north east)$);
\draw[->, thick]  (UFC1) -- ($(UFC1 |- AC.south)$) node [pos=0.75, right] {$\mathbf{\Phi}$};
\draw[->, thick] (AC.east) -- (Minimal.west) node [pos=0.4, above] {$\mathbf{\Phi}$};
\draw[->, thick] (Minimal.south) -- (Bonus.north) node [pos =0.5, right] {$\mathbf{\Phi}, \mathcal{S}$};
\draw[->, thick] (Bonus.south) -- (Distribute.north) node [pos=0.5, right] {$\mathcal{S}$};

\coordinate (UFC2) at ($(UFC.north west)!0.8!(UFC.north east)$);
\draw[->, thick] (Bonus.west) -| (UFC2) node [pos=0.75, right] {$\mathbf{\Phi}$};

\coordinate (NSExtBelow) at ($(NS.south) + (0,-0.5cm)$);
\draw[-{Arc Barb[reversed, scale=2.2]}] (Distribute.south) -- ($(Distribute.south |- NSExtBelow)$) node [below] {\small \textbf{SM:SD}};
\node[rectangle, fill=blue!10, draw=black] at (Distribute.south) {};
\node[rectangle, fill=blue!10, draw=black] at ($(Distribute.south |- CS.south)$) {};
\node[rectangle, fill=blue!10, draw=black] at ($(Distribute.south |- NS.south)$) {};

\end{tikzpicture}
}
\caption{Detailed diagram of the Network Scheduler. Arrows show the movement of the set of filling classes $\mathbf{\Phi}$ and the network schedule $\mathcal{S}$ through the scheduling process. }
\label{fig: network scheduler diagram}

\end{figure}
\begin{algorithm}[t]
    \SetKwInOut{Input}{Input}
    \SetKwInOut{Output}{Output}
    \SetKw{Return}{return}
    \SetKw{Continue}{continue}
    \SetKw{Break}{break}
    \SetKw{Sleep}{sleep}
    \SetKwProg{Fn}{Function}{:}{end}
    \Fn{\textsc{NetworkScheduler}}{
    \Input{Set of filling classes $\Phi$}
    \Output{List of accepted PGTs $\bm{A}$, network schedule $S$}
    \While{True}{
        Get $\Pi$ from the Network Manager \;
        Get $\tau, \ \bm{\Gamma}$ from the Demand Manager\;
        Set $\Phi, \xi \leftarrow$ \textsc{UpdateFillingClasses}($\Phi, \tau, \Pi$) \;
        Set $\Phi, \mathbf{A}\leftarrow\textsc{AdmitTasks}(\bm{\Gamma}, \Phi, \xi)$\;

        Send ($\mathbf{A}, \bm{\Gamma}\setminus\mathbf{A}$) to the Demand Manager\;
        Set $S\leftarrow\textsc{ComputeSchedule}(\Phi)$ \;

        Send $S$ to the Schedule Manager\;
        \Sleep until the start of the next scheduling interval\;
    }
}

\caption{Algorithm for the main process of the Network Scheduler control application.
\textsc{UpdateFillingClasses}, \textsc{AdmitTasks} and \textsc{ComputeNewSchedule} are given in Algorithms~\ref{alg: Update Filling Classes}, \ref{alg: Admit Tasks} and \ref{alg: Compute Schedule} respectively. 
}
\label{alg: Network Scheduler main}
\end{algorithm}

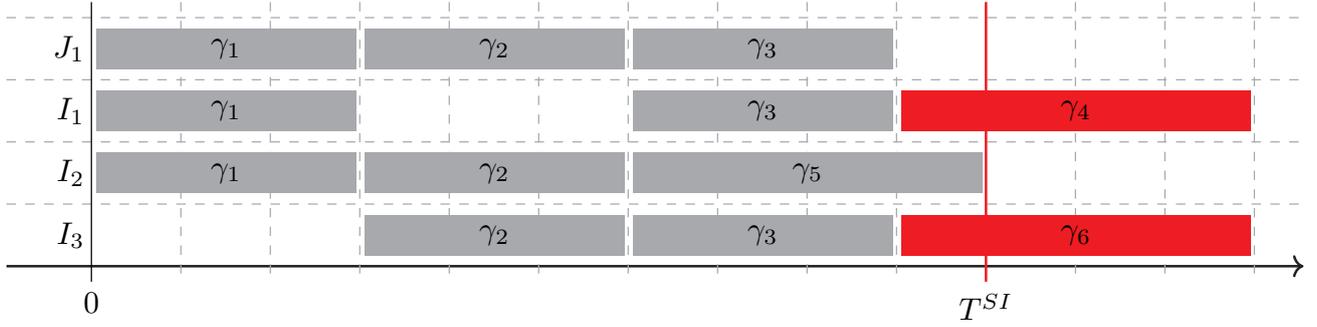
\begin{figure*}
\centering
\resizebox{0.95\textwidth}{!}{\begin{tikzpicture}

\pgfmathsetmacro{\rowHeight}{0.7}  

\coordinate (J1) at (0, -0 * \rowHeight);
\coordinate (I1) at (0, -1 * \rowHeight);
\coordinate (I2) at (0, -2 * \rowHeight);
\coordinate (I3) at (0, -3 * \rowHeight);

\foreach \a in {0.5, -0.5, -1.5, -2.5} {
    \draw[draw=Black!40, dashed] (-1, \a * \rowHeight) -- (13.5, \a * \rowHeight);
}

  \draw[draw=Black, thick, ->] (-1, -3.5*\rowHeight) -- (13.5, -3.5*\rowHeight);

  \foreach \a in {1,...,13}{
    \draw[draw=Black!40, dashed] (\a-0.05, 0.75*\rowHeight) -- (\a-0.05, -3.75*\rowHeight);}

  \draw (-0.05, 0.75 * \rowHeight) -- (-0.05, -3.75 * \rowHeight) node [below] {0};

\node[rectangle, fill=Black!40, minimum width=2.9cm, right=0cm of J1] {$\gamma_1$};
\node[rectangle, fill=Black!40, minimum width = 2.9cm, right = 0cm of I1] {$\gamma_1$};
\node[rectangle, fill=Black!40, minimum width = 2.9cm, right = 0cm of I2] {$\gamma_1$};

\node[rectangle, fill=Black!40, minimum width=2.9cm, right=3cm of J1] {$\gamma_2$};
\node[rectangle, fill=Black!40, minimum width = 2.9cm, right = 3cm of I2] {$\gamma_2$};
\node[rectangle, fill=Black!40, minimum width = 2.9cm, right = 3cm of I3] {$\gamma_2$};

\node[rectangle, fill=Black!40, minimum width = 2.9cm, right = 6cm of J1] {$\gamma_3$};
\node[rectangle, fill=Black!40, minimum width = 2.9cm, right = 6cm of I3] {$\gamma_3$};
\node[rectangle, fill=Black!40, minimum width = 2.9cm, right = 6cm of I1] {$ \gamma_3$};

\node[rectangle, fill=Red, minimum width = 3.9cm, right = 9cm of I1] {$\gamma_4$};

\node[rectangle, fill=Black!40, minimum width = 3.9cm, right = 6cm of I2] {$\gamma_5$};

\node[rectangle, fill=Red, minimum width = 3.9cm, right = 9cm of I3] {$ \gamma_6$};

\node [left=0cm of J1] {$J_1$};
\node [left=0cm of I1] {$I_1$};
\node [left=0cm of I2] {$I_2$};
\node [left=0cm of I3] {$I_3$};

  \draw[draw=Red, thick] (9.95, 0.75*\rowHeight) -- (9.95, -3.75*\rowHeight) node [below] {$T^{SI}$};

\end{tikzpicture}}
  \caption{Excerpt from a possible network schedule for four components $J_1$, $I_{1-3}$ and a set of PGTs $\gamma_{1-6}$. PGAs for a PGT $\gamma_{i}$ are illustrated by gray boxes scheduled simultaneously on every component on their path $\pi_{\gamma_i}$. PGAs for $\gamma_{4}$ and $\gamma_{6}$, illustrated by red boxes, can not be added to the schedule because $I_1$ (for $\gamma_4$) and $I_{3}$ (for $\gamma_6$) are not available for sufficiently long blocks of uninterrupted time before the end of the scheduling interval, illustrated by the red line labeled $T^{SI}$.}
\label{fig:badAccounting}
\end{figure*}

The Network Scheduler is responsible for determining which PGTs to accept and for producing network schedules that uphold service agreements with end nodes.
These network schedules must include a minimum allocation of PGAs for each accepted PGT. It includes an \textit{Admit Tasks} process that examines PGTs ${\gamma \in \bm{\Gamma}}$ in the task intake buffer and determines if they can be accepted without disrupting already accepted PGTs.  
It also includes a \textit{Compute Schedule} process that occurs after Admit Tasks and produces these network schedules. 
This control application is time critical, as its processes must start and complete within a single scheduling interval to ensure distribution and execution of a network schedule will not be delayed.
Algorithm~\ref{alg: Network Scheduler main} for \textsc{NetworkScheduler} is an overview of the main loop of the Network Scheduler control application, and Figure~\ref{fig: network scheduler diagram} illustrates the Network Scheduler with all internal components.

In our network model, we assume that each network component can only be scheduled to execute PGAs for a single PGT at any given time, and each PGA is non-preemptive. Two different PGAs $\gamma_1$ and $\gamma_2$ can be scheduled simultaneously as long as ${\pi_{\gamma_1} \cap \pi_{\gamma_2} = \emptyset}$. 
To account for these considerations, Arqon structures network schedules as a set of aligned resource schedules, one per network component.
To schedule a PGA for a PGT $\gamma$, all resources on the path $\pi_{\gamma}$ of the PGT must be simultaneously available for an uninterrupted period of time that is at least as long as the PGA duration $E_{\gamma}$. 
As illustrated in Figure \ref{fig:badAccounting}, it is possible for a network schedule to contain gaps of time in which certain resources have no PGAs scheduled, but the subsequently scheduled PGAs can not be started earlier due to the requirement that all resources on the path $\pi_{\gamma}$ be simultaneously available. Figure \ref{fig:badAccounting} also illustrates that there are situations in which the total amount of available time on a network resource is greater than or equal to a PGA duration, however the available time is non-consecutive and therefore the PGA cannot actually be scheduled.

To determine which PGTs can be simultaneously scheduled, Arqon leverages a partition of active PGTs into cells called \textit{filling classes} that are based on the path partition $\Pi$. 
Active PGTs are those that have been accepted in some scheduling interval and have neither expired nor been terminated. The filling classes are used by both Admit Tasks and Compute Schedule.
\begin{restatable}[Filling Classes]{definition}{defFillingClasses}\label{def: filling classes}
Let $\Pi = \{\Pi_\phi\}$ be a disjoint partition of a set of paths $\mathcal{P}$ and let $Z$ be a set of PGTs such that $\forall \gamma \in Z, \ \exists \phi \text{ s.t. } \pi_{\gamma} \in \Pi_{\phi}$. The set of \emph{\textbf{filling classes}} $\Phi\subset 2^Z\times\Pi$ has elements \begin{equation}\label{eq: filling classes}
        \phi = (Z_\phi, \Pi_\phi),
    \end{equation} where $Z_\phi = \{\gamma\in Z : \pi_\gamma\in\Pi_\phi\}$.
\end{restatable}

The set of paths $\mathcal{P}$ in Definition \ref{def: filling classes} may be either the set of valid entanglement generation paths $\mathcal{P}_{\text{valid}}$, or a further restriction to a set of allowed paths $\mathcal{P}_{\text{allowed}}$.

\begin{remark}
    The set of filling classes $\Phi$ are essentially containers ${\phi \in \Phi}$ for active PGTs, based on the cells of the path partition ${\Pi_{\phi} \in \Pi}$. The set of filling classes does not need to be populated by PGTs, which means that $\Phi$ can be created based on the path partition $\Pi$ and an empty set of PGTs $Z=\emptyset$. Once created, the set of filling classes may be updated by populating the containers ${\phi \in \Phi}$ with a set of PGTs $Z\neq{\emptyset}$.
\end{remark}


To identify which resources are required by at least one path in a filling class we introduce the mapping of associated resources between a set of paths and the set of resources on at least one path in the set.
 \begin{restatable}[Associated Resources]{definition}{defFCAssociatedRes}\label{def: FCAssociatedResource} 
    Let $\mathcal{P}$ be a set of paths and let $\mathcal{R}$ be the set of resources which are on some path in $\mathcal{P}$. Let $\{\pi_x\}_x \subset \mathcal P$. The \emph{\textbf{resources associated}} with $\{\pi_x\}_x$ are indicated by $\xi\big{(}\{\pi_x\}_x \big{)}$, as defined by the following mapping: \begin{align}
        \xi :& \ 2^{\mathcal{P}} \to 2^\mathcal{R}\label{eq: domain and range resource dicationary mapping}\\ 
        & \{\pi_x\}_x \mapsto \{r\in\mathcal{R} : \exists\pi\in \{\pi_x\}_x~s.t.~r\in\pi\}\label{eq: resources associated with a filling class}.
    \end{align}
\end{restatable}

\subsubsection{Update Filling Classes}\label{sssec: update filling classes}

In every scheduling interval, the first process that the Network Scheduler does is Update Filling Classes. This process happens first, as both Admit Tasks and Compute Schedule require the updated set of filling classes as input.

The Network Scheduler carries forward the filling classes from a scheduling interval to the subsequent one. However, before proceeding it is necessary to retrieve the path partition from the Network Manager via the NM:P interface and check for a change to the version id. In case of an update to the path partition, the filling classes are rebuilt. 
There may be one or more PGTs $\gamma$ for which the path $\pi_{\gamma}$ is no longer in the path partition. In that case, Update Filling Classes must apply a \textit{missing path} rule. 
The default missing path rule is to attempt to re-admit the demand as an alternative PGT $\gamma'$ realizing the demand, for which $\pi_{\gamma'}$ is in the updated path partition. This may result in an update the service agreement for the demand to reflect the minimal allocation of the alternate PGT.  
If no alternate PGT can be admitted then Update Filling Classes removes the PGT and updates the demand status to removed so that the Demand Manager can notify the end nodes and cancel the service agreement.

Update Filling Classes is also responsible for removing terminated or expired PGTs. 
To check for PGT terminations, the Network Scheduler reads the termination buffer $\tau$ from the Demand Manager via the DM:B interface. 
Every PGT is checked and those for which there is a termination message or the expiry time has been reached are removed from their filling class.  

The full process Update Filling Classes is described by  
Algorithm~\ref{alg: Update Filling Classes}, \textsc{UpdateFillingClasses} in Appendix~\ref{app: Network Scheduler Algorithms}.

\subsubsection{Admit Tasks}\label{sssec: Admission Control}
\begin{algorithm*}[t]
    \SetKwInOut{Input}{Input}
    \SetKwInOut{Output}{Output}
    \SetKw{Return}{return}
    \SetKw{Continue}{continue}
    \SetKw{Break}{break}
    \SetKwProg{Fn}{Function}{:}{end}
    \Fn{\textsc{CalculateRequiredTime}}{
    \Input{Filling class $\phi = (Z_{\phi}, \Pi_{\phi})$}
    \Output{Required time for filling class, $R(Z_{\phi})$}

            Order PGTs $\gamma \in Z_{\phi}$ by a set of indices $X=\{1, \cdots, |Z_{\phi}|\}$ such that $\minimumallocation_x\leq \minimumallocation_{x+1}$ \;
            Set $M \leftarrow |Z_{\phi}|$ \;
            Set $n^{\phi}_0 \leftarrow\minimumallocation_0  -1 $\;
            Set $c^{\phi}_{0} \leftarrow \max\left( \underset{y \in \{0, \cdots M-1\}}{\max} \big{(} E_y + \tminsep_y \big{)} , \sum_{y=0}^{M-1} E_y \right)$ \; 
            \For{$x \in \{1, \cdots, M - 1\}$ }{
            Set $n^{\phi}_x \leftarrow \minimumallocation_x - \minimumallocation_{x-1}$\;
            Set $c^{\phi}_{x} \leftarrow$  $\max\left( \underset{y \in \{x, \cdots M-1\}}{\max} \big{(} E_y + \tminsep_y \big{)} , \sum_{y=x}^{M-1} E_y \right) $ \; 
            }
        
        Set $R(Z_{\phi}) \leftarrow \overset{M-1}{\underset{x=0}{\sum}} (n^{\phi}_{x} \cdot c^{\phi}_{x} + E_x)$ \;
        \Return $R(Z_{\phi})$
}
\caption[Admission Control]{Calculation of an upper bound on the minimum required time to schedule a minimum allocation of PGAs $\minimumallocation_{\gamma}$ for every PGT $\gamma \in Z_{\phi}$ in a filling class $\phi$. \label{alg: Calculate Required Time}}
\end{algorithm*}

The process Admit Tasks is described in full by Algorithm~\ref{alg: Admit Tasks},~ \textsc{AdmitTasks} in Appendix~\ref{app: Network Scheduler Algorithms}. 
It tests PGTs $\gamma'$ in the task intake buffer $\bm{\Gamma}$ to determine whether there is sufficient time available to schedule a minimum allocation $\minimumallocation_{\gamma}$ of PGAs in subsequent scheduling intervals without disrupting service to any active PGTs. 
Before testing any new PGTs ${\gamma' \in \bm{\Gamma}}$, it calculates the time required to schedule a minimum allocation of each active PGT and reduces the available time on all resources associated with the PGTs filling class by this duration. 
The function \textsc{CalculateRequiredTime}, defined in Algorithm \ref{alg: Calculate Required Time} implements this calculation. 
Recall that PGAs for a PGT $\gamma$ must be scheduled for uninterrupted periods of duration $E_{\gamma}$ on all components in $\pi_{\gamma}$. Gaps of time between PGAs of PGTs in the same filling class are not treated as available time by Admit Tasks, because they may not be long enough to allow a PGA to be scheduled. By reducing the available time for all resources associated with a PGTs filling class, rather than just for the resources on the PGTs path, Admit Tasks avoids considering gaps between PGAs as available time. This simplification prioritizes the satisfaction of service agreements, but it may induce an overestimation of the required time and limit the number of accepted PGTs.

Incoming PGTs in the task intake buffer are organized into sets ${\bm{\Gamma}_d \in \bm{\Gamma}}$ by the demand they realize, with each specific PGT ${\gamma'_d \in \bm{\Gamma}_d}$ corresponding to a different path between the end nodes. For each of these PGTs, Admit Tasks repeats the calculation of \textsc{CalculateRequiredTime} with $\gamma'_d$ included in the filling class it would be in (the filling class $\phi$ such that ${\pi_{\gamma'_{d}} \in \Pi_{\phi}}$). This required time is compared to the duration of the scheduling interval. If it is less than or equal to the scheduling interval the PGT is accepted, the other PGTs in $\bm{\Gamma}_d$ are removed and the demand status is set to accepted. Otherwise, the PGT is rejected and Admit Tasks checks the next PGT ${\gamma''_d \in \bm{\Gamma}_d}$, until either one is accepted or no alternatives remain -- in which case the demand status is set to rejected. If Admit Tasks accepts a particular PGT to realize a demand, it fixes the path along which the demand will be served. Hence Admit Tasks performs the function of routing.


\subsubsection{Compute Schedule}\label{ssec: Compute Schedule}
\begin{figure*}
    \centering
    \includegraphics[width=0.96\linewidth, trim={1.0cm 0.5cm 1.5cm 0cm}]{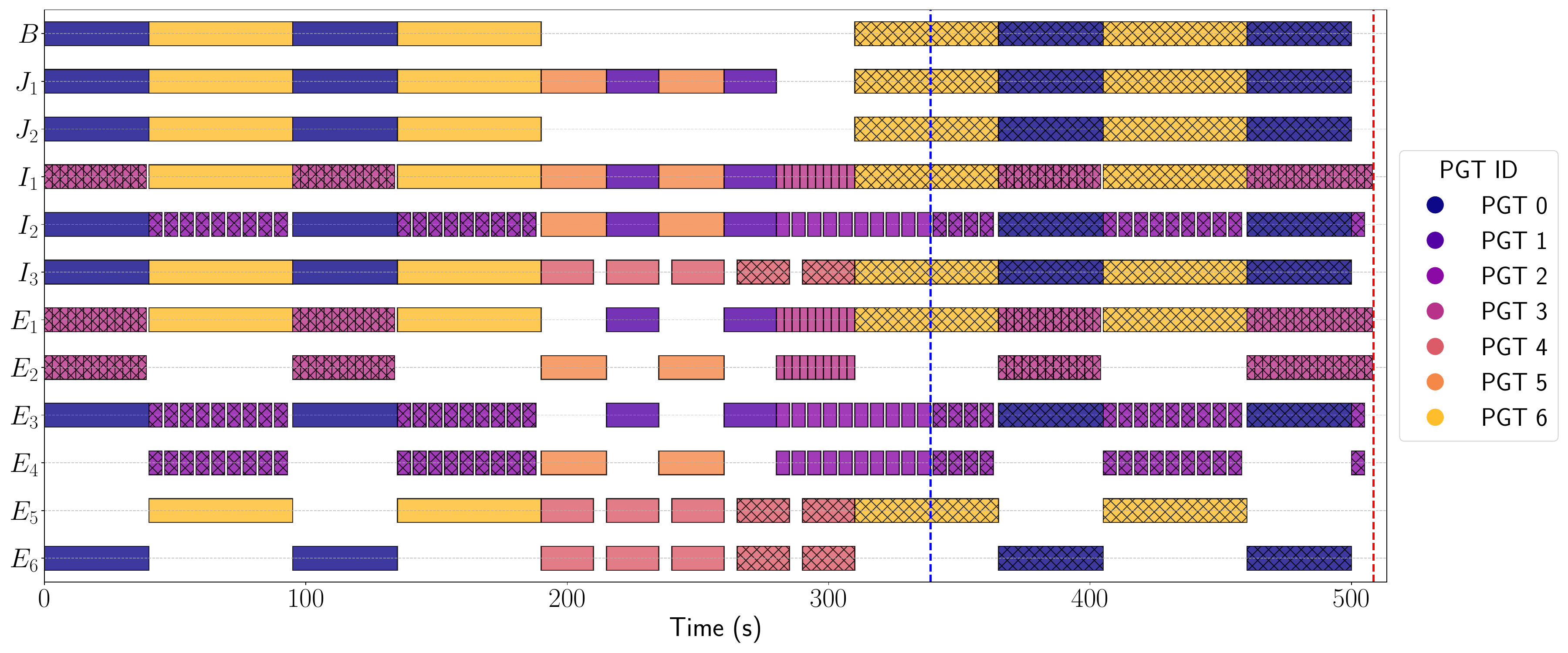}
    \caption{An example network schedule produced by \textsc{ComputeSchedule} for a simplified version of the dumbbell topology network illustrated in Figure~\ref{fig: example network topology}, with a six end nodes and seven active PGTs. End nodes $E_{1-2}$ connect to $I_1$, $E_{3-4}$ connect to $I_2$, and $E_{5-6}$ connect to $I_3$. The PGTs are configured as examples, with different values for their minimal allocations, minimum separations between PGAs, and PGA durations. They are organized into the specific filling classes defined in Section~\ref{sec: Implementation}, so that $
    {Z_{\phi^B} = \{\gamma_0, \gamma_6\}},
    {Z_{\phi^J_1} = \{\gamma_5, \gamma_1\}}, 
    {Z_{\phi^I_1} = \{\gamma_3\}}, {Z_{\phi^I_2} = \{ \gamma_2\}}, {Z_{\phi^I_3} = \{\gamma_4 \}}.$
    Solid (hatched) color boxes represent PGAs for PGTs of the matching color scheduled by the minimal (bonus) allocation phases of Compute Schedule. The vertical blue dashed line indicates the end of the required time to schedule a minimal allocation of PGAs for each PGT. The red vertical dashed line indicates the end of a scheduling interval. } 
    \label{fig: example network schedule}
\end{figure*}

\begin{algorithm}[t]
    \SetKwInOut{Input}{Input}
    \SetKwInOut{Output}{Output}
    \SetKw{Return}{return}
    \SetKw{Continue}{continue}
    \SetKw{Break}{break}
    \SetKwProg{Fn}{Function}{:}{end}

     \Fn{\textsc{MinimalAllocation}}{
       \Input{Set of Filling Classes $\Phi$, network schedule $S$}
       \Output{Updated network schedule $S$.}

    \For{$\phi\in\Phi$}{
        $S = \text{\textsc{DirectAllocation}}(\phi, S, t_0 + \sum_{\phi' < \phi} R(Z_{\phi'}))$\;
    }

     }

     \Fn{\textsc{BonusAllocation}}{
       \Input{Set of Filling Classes $\Phi$, network schedule $S$}
       \Output{Updated network schedule $S$.}
    
    \For{$\phi\in\Phi$}{
        $S = \text{\textsc{RoundRobinBonus}}(\phi,S)$\;
    }

  }

  \Fn{\textsc{ComputeSchedule}}{
    \Input{Set of Filling classes $\Phi$}
    \Output{Compiled Schedule $\tilde{S}$}
    Initialise new empty schedule $S$\;

    Set $S\leftarrow $ \textsc{MinimalAllocation}$(\Phi, S)$\;
    Set $S\leftarrow $ \textsc{BonusAllocation}$(\Phi, S)$\;
    }

    \caption[Overall scheduling algorithm for creating network schedules.]{
      Overall scheduling algorithm for creating network schedules.
    \textsc{DirectAllocation} and \textsc{RoundRobinBonus} are respectively as in Algorithms~\ref{alg: direct allocation subroutine} and \ref{alg: bonus round robin resource schedule} in Appendix~\ref{app: Network Scheduler Algorithms}.
    }
    \label{alg: Compute Schedule}

\end{algorithm}

The Network Scheduler computes the network schedule for all components of the network, including end nodes. 
The schedules of network components are aligned such that a PGA for the PGT $\gamma$ with path $\pi_{\gamma}$ is added to the schedule of every component on the path $\pi_{\gamma}$, so that the PGA covers the same span of time in each component schedule. Due to this alignment, it suffices to refer to a single network schedule S, although the schedule is actually a set of individual schedules for each component of the network. 

The Compute Schedule process is described by the function \textsc{ComputeSchedule} in Algorithm~\ref{alg: Compute Schedule}.
It is a time-critical process that must compute a schedule including a minimal allocation of PGAs (Definition \ref{def: minAlloc}) for every accepted PGT once per scheduling interval. 
This is essential for satisfying the service agreements established with accepted demands (Definition~\ref{def: ServiceAgreement}).
Hence, the computation time must not exceed the scheduling interval. 
A secondary aim of the Compute Schedule process is to minimize the amount of idle time in which no PGAs are scheduled on internal resources.
It is divided into two phases: a \textit{minimal allocation} phase which is time-critical and must be completed by the end of a scheduling interval, and a subsequent \textit{bonus allocation} phase which can be cut off at any time without consequences for service agreements. 

The minimal allocation phase relies on \textsc{DirectAllocation}, Algorithm~\ref{alg: direct allocation subroutine} in Appendix~\ref{app: Network Scheduler Algorithms}, which produces a schedule for one filling class at a time. The \textsc{DirectAllocation} algorithm is fundamentally based on a modification of the classical Round Robin scheduling algorithm. 
A schedule $S$ is constructed by calculating a \textit{direct allocation} schedule for each filling class ${\phi \in \Phi}$ which includes a minimal allocation of PGAs for every PGT ${\gamma \in Z_{\phi}}$, and sequentially inserting the direct allocation schedules into $S$ according to a partial ordering of ${\phi \in \Phi}$. 
A partial ordering of the filling classes ${\phi \in \Phi}$ is specified in an implementation and derives from the path partition $\Pi$ which determines the structure of $\Phi$. 
The same logic underpins both the \textsc{DirectAllocation} scheduling algorithm and the \textsc{CalculateRequiredTime} function (Algorithm~\ref{alg: Calculate Required Time}) from the Admit Tasks process. \textsc{CalculateRequiredTime} essentially calculates the duration of a schedule that would be produced by the \textsc{DirectAllocation} algorithm, which actually adds PGAs to a network schedule. 
In this way, Admit Tasks leverages insight into the Compute Schedule process when determining which PGTs to admit.

The bonus allocation phase relies Algorithm~\ref{alg: bonus round robin resource schedule},~\textsc{RoundRobinBonus} of Appendix~\ref{app: Network Scheduler Algorithms}, which is also based on a modification of the classical Round Robin scheduling algorithm.
In this phase Compute Schedule passes again through the schedule constructed during the minimal allocation phase, and wherever possible schedules additional PGAs. 
This decreases the idle time of each resource in the network, as well as decreasing the expected time to generate all $\ninst_{d}$ packets of each accepted demand $d$. 

An example network schedule produced by \textsc{ComputeSchedule} is illustrated in Figure~\ref{fig: example network schedule}. This schedule is for a simplified version of the dumbbell topology network illustrated in Figure~\ref{fig: example network topology}, with six end nodes instead of fifteen. There are seven active PGTs in the example, configured to have different values for their minimal allocations, minimum separations between PGAs, and PGA durations. The solid (hatched) colored boxes represent PGAs scheduled during the minimal (bonus) allocation phase. 
The \textsc{DirectAllocation} schedule for each filling class has a block like structure and may include gaps of time in which no PGAs are scheduled. These gaps of time carry forward into the minimal allocation part of the schedule $S$ built from concatenated direct allocation schedules. In the bonus allocation phase, additional PGAs are scheduled in some of these gaps, improving the resource utilization and decreasing idle time.

The time complexity of the Compute Schedule process motivates the division of the process into the minimal and bonus allocation phases. In Appendix~\ref{app: complexity proofs} we prove that an implementation of the minimal allocation phase has operational complexity ${O(NR) + O(N^2)}$, where $N = |Z_{\Phi}|$ is the number of active PGTs and $R=|\mathfrak{R}|$ is the number of resources in the network. In contrast, the bonus allocation phase has operational complexity $O(N^3 R)$.
In the two phase system implemented by \textsc{ComputeSchedule}, the bonus allocation phase can be cut off by the end of the scheduling interval without impacting demand service agreements. When the scheduling offset is set to two scheduling intervals there is an entire scheduling interval to distribute schedules (as in Figure~\ref{fig:network scheduling timings}) and it is possible to cut off schedule computation arbitrarily close to the end of a scheduling interval.

\subsection{Schedule Manager}
The Schedule Manager control application stores the current and subsequent network schedules and formats the component schedules for distribution. 
It offers two interfaces, the \textit{schedule distribution} (SM:SD) interface and the schedule (SM:S) interface. 
The SM:SD interface is an internal interface which allows the Network Scheduler to push new schedules to the schedule manager as they are computed. 
The SM:S interface is an external interface which allows network components to request the current or next schedule. The Schedule Manager consumes the Z interface with the central controller, enabling it to push network schedules for distribution to components.

\section{Performance Analysis}\label{sec: Performance Analysis}

Upholding the service agreement that Arqon establishes for each accepted demand (Definition \ref{def: ServiceAgreement}) is essential to ensure the delivery of reliable service. 
Here we develop the performance analysis of \arqon, by stating and proving results which qualify the operating conditions under which Arqon is guaranteed to satisfy these service agreements.
As a supplement to these results, we motivate the connection between satisfaction of service agreements and successful execution of quantum network applications by end nodes. We prove that satisfaction of a service agreement for a demand $d$ with service error parameter $\epsilon^{\text{service}}_d$ translates into end nodes successfully generating all packets of pairs required to execute their targeted application with probability at least $(1- \epsilon^{\text{service}}_{d})$.

We define \textit{relevant} scheduling intervals as those that are relevant to the satisfaction of the service agreement. 

\begin{definition}[Relevant scheduling intervals]
    The \emph{\textbf{relevant}} scheduling intervals for an accepted demand $d$ are those between $t^{\texttt{\emph{start}}}_{d}$ and the scheduling interval in which the demand is either terminated or expires.
\end{definition}

\begin{theorem}[Deterministic Satisfaction of Service Agreements]
\label{thm: QoSAcceptedTasks}
    Let $d$ be a demand accepted by Arqon as the PGT $\gamma_d$. Suppose the path partition $\Pi$ remains static in all relevant scheduling intervals for $d$ and that every component on the path $\pi_{\gamma_d}$ retrieves each network schedule before its start time. Then, Arqon deterministically satisfies the service agreement for demand $d$.
\end{theorem}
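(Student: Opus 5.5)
The service agreement for $d$ (Definition~\ref{def: ServiceAgreement}) asks that, in every relevant scheduling interval, the schedule executed in that interval contains at least $\minimumallocation_{\gamma_d}$ PGAs for $\gamma_d$. I would prove this by induction over the relevant scheduling intervals, with the invariant
\[
  \sum_{\phi\in\Phi}R(Z_\phi)\le T^{SI}
\]
over the current active filling classes, where $R(Z_\phi)$ is the output of \textsc{CalculateRequiredTime} (Algorithm~\ref{alg: Calculate Required Time}). Since $\Pi$ is static, \textsc{UpdateFillingClasses} never triggers the missing path rule, so $\gamma_d$ stays in its filling class $\phi_d$ until it expires or is terminated. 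Between intervals, the active set changes only in two ways. Removals (expiry or termination) can only decrease each $R(Z_\phi)$; this is a monotonicity check on Algorithm~\ref{alg: Calculate Required Time}, since deleting a PGT removes one summand $E_x$ and can only shrink each $c^\phi_x$ and the telescoping increments $n^\phi_x$. Additions happen only through \textsc{AdmitTasks}, which admits a new PGT only if the recomputed required time, with the newcomer included and accounted against the resources it affects, still fits in $T^{SI}$. The invariant therefore holds in every relevant interval. The base case is the interval in which $\gamma_d$ was admitted: at that point the test passed, so the invariant held once $\gamma_d$ was included.

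The core step is a lemma: for a single filling class $\phi$ starting at offset $t$, \textsc{DirectAllocation} places at least $\minimumallocation_\gamma$ PGAs for every $\gamma\in Z_\phi$ inside $[t,t+R(Z_\phi)]$. The schedule must also respect $\tminsep$ and the non-overlap of PGAs on shared resources. I would prove this by following the block structure mirrored in Algorithm~\ref{alg: Calculate Required Time}. Sort the PGTs so that $\minimumallocation_x$ is nondecreasing. In block $x$, the PGTs $y\ge x$ each receive one PGA per round, and there are $n^\phi_x$ rounds. A round of length $c^\phi_x$ suffices for two reasons. First, $c^\phi_x\ge\sum_{y\ge x}E_y$, so all these PGAs fit sequentially within the round. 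Second, $c^\phi_x\ge\max_{y\ge x}(E_y+\tminsep_y)$, so consecutive PGAs of the same PGT are separated by at least $\tminsep_y$. The extra $E_x$ terms account for the final PGA of each PGT. Summing the rounds, PGT $x$ receives $\sum_{x'\le x}n^\phi_{x'}+1=\minimumallocation_x$ PGAs; the $-1$ in $n^\phi_0$ compensates for the trailing $+E_x$.

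Because \textsc{MinimalAllocation} places the filling classes sequentially at offsets $t_0+\sum_{\phi'<\phi}R(Z_{\phi'})$, the blocks for different classes are disjoint in time. Every PGA for $\gamma_d$ is therefore placed on all of $\pi_{\gamma_d}$ without conflicts, and by the invariant it ends before $t_0+T^{SI}$. \textsc{BonusAllocation} only inserts PGAs into idle gaps and never removes any, so the minimal allocation survives into the final schedule.

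The remaining link is execution. Compute Schedule completes its minimal phase within the interval, and the schedule is distributed before its start time by the scheduling offset. Each component on $\pi_{\gamma_d}$ retrieves it in time by hypothesis, so the $\minimumallocation_{\gamma_d}$ PGAs are actually scheduled and executed. Nothing probabilistic enters this argument, which is why satisfaction is deterministic.

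I expect the main obstacle to be the single-class lemma. It requires matching \textsc{DirectAllocation} exactly to the formula in \textsc{CalculateRequiredTime}. In particular, the minimum-separation constraint must hold across block boundaries, including the transition at which a PGT with smaller $\minimumallocation$ drops out. Here I would argue that each PGT occupies the same relative order within every round, and that the round lengths $c^\phi_x$ are nonincreasing in $x$. A second delicate point is whether admission in fact charges time to all resources associated with a filling class, so that the sequential concatenation of classes is feasible. If it does not, I would instead bound, for each resource, the sum of $R(Z_\phi)$ over the classes that touch it, and use the partial order on $\Phi$.
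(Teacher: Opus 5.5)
Your single-class lemma (matching \textsc{DirectAllocation} to \textsc{CalculateRequiredTime}) is fine. So is your monotonicity-under-removal observation, a point the paper glosses over. The multi-class step, however, rests on an invariant that Arqon does not maintain. \textsc{AdmitTasks} never bounds $\sum_{\phi\in\Phi}R(Z_\phi)$. It charges $R(Z_\phi)$ only to the resources in $\xi(\Pi_\phi)$, and for a newcomer in class $\phi$ it tests only those resources. What is preserved, using your monotonicity for removals, is the per-resource invariant
\[
\sum_{\psi\,:\,r\in\xi(\Pi_\psi)} R(Z_\psi)\le T^{SI}\qquad\text{for every } r\in\mathfrak{R}.
\]
Two classes with disjoint associated resources, e.g.\ two interface classes $\phi^I_i,\phi^I_{i'}$ of the implementation, can each be filled to nearly $T^{SI}$ while every admission test passes. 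The global sum can then approach $2T^{SI}$, so your induction step for additions fails. Consistently with this, \textsc{MinimalAllocation} does not concatenate all classes. It starts $\phi$ at $t_0+\sum_{\psi>_\xi\phi}R(Z_\psi)$ (Corollary~\ref{corr:minimumAllocTimeReq}), so incomparable classes run in parallel. Your claim that blocks of different classes are disjoint in time is therefore false for the actual algorithm.

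The missing ingredient is the well-behaved property of $\Phi$, which the paper imports through Theorem~\ref{thm: min alloc guaranteed}. With it the argument closes in three steps.
\begin{itemize}
\item Two classes sharing a resource are $\geq_\xi$-comparable.
\item If $\psi>_\xi\phi$, transitivity gives $s_\phi\ge s_\psi+R(Z_\psi)$. Hence any two blocks are either time-disjoint or resource-disjoint.
\item The block of $\phi$ ends at $t_0+\sum_{\psi\geq_\xi\phi}R(Z_\psi)$, and every such $\psi$ has $\xi(\Pi_\phi)\subseteq\xi(\Pi_\psi)$. So for any $r\in\xi(\Pi_\phi)$ the end time is at most $t_0+\sum_{\psi\,:\,r\in\xi(\Pi_\psi)}R(Z_\psi)\le t_0+T^{SI}$.
\end{itemize}
The paper packages exactly this as Good Accounting (Theorem~\ref{thm: good accounting applies to partially ordered FCs}) plus Admit Tasks is Sound (Theorem~\ref{thm: admin control is sound}). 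The main theorem then reduces to checking four properties.

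Your fallback sentence points this way, but it is triggered by the wrong condition. Admission does charge all of $\xi(\Pi_\phi)$, and the global bound still fails. The fallback also omits well-behavedness, without which partial-order offsets can produce resource conflicts.

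Finally, to ensure the whole schedule does not overrun $T^{SI}$ (the paper's fourth property), you also need that \textsc{RoundRobinBonus} adds a PGA only when $t^{\texttt{est}}+E_\gamma<T^{SI}$. It is not enough that it never removes PGAs.
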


To prove Theorem~\ref{thm: QoSAcceptedTasks}, which is our main performance result, we first state the following theorem, a proof of which is given in Appendix~\ref{app: performance analysis proofs}. The proof of the following theorem is non-trivial and relies upon many intermediate results, each of which is stated and proved in Appendix~\ref{app: performance analysis proofs}. The statement requires the set of filling classes to be \textit{well-behaved}. This property, defined formally in Appendix~\ref{app: performance analysis proofs}, means that a method of comparison between filling classes can be defined consistently based on the resources associated with the filling classes. 

\begin{restatable}[Minimal Allocation is guaranteed]{theorem}{MinAllocGuaranteed}\label{thm: min alloc guaranteed}
Let $d$ be a demand accepted by Arqon as the PGT $\gamma_d$. Suppose ${\pi_{\gamma_d} \in \Pi}$ in every relevant scheduling interval. Suppose $\Phi$ is a well-behaved set of filling classes. Then, in every relevant scheduling interval the schedule ${S= \textsc{ComputeSchedule}(\Phi)}$ contains at least $\minimumallocation_{\gamma_d}$ PGAs of the PGT $\gamma_d$. Moreover, the duration of $S$ does not exceed $T^{SI}$.
\end{restatable}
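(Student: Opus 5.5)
The proof splits into a per-filling-class statement and a global assembly step. I would first fix a relevant scheduling interval and a filling class $\phi$ containing $\gamma_d$. The central lemma is that \textsc{DirectAllocation}$(\phi, S, t_\phi)$, started at offset $t_\phi = t_0 + \sum_{\phi'<\phi} R(Z_{\phi'})$, has two properties. First, it places at least $\minimumallocation_\gamma$ PGAs of every $\gamma \in Z_\phi$. Second, all of these PGAs lie in the window $[t_\phi, t_\phi + R(Z_\phi)]$, where $R(Z_\phi)$ is the output of \textsc{CalculateRequiredTime} (Algorithm~\ref{alg: Calculate Required Time}).

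I would prove this lemma by induction over the blocks $x = 0,\dots,M-1$ of the sorted order $\minimumallocation_0 \le \dots \le \minimumallocation_{M-1}$. During block $x$, the round-robin schedules $n^\phi_x$ rounds, each containing one PGA of every $\gamma_y$ with $y \ge x$. One round takes time at most $c^\phi_x$:
\begin{itemize}
\item the sequential sum $\sum_{y\ge x} E_y$ accounts for the PGAs in the same class competing for shared resources;
\item the term $\max_{y\ge x}(E_y+\tminsep_y)$ ensures that the minimum separation of every task is respected before that task's next PGA.
\end{itemize}
The extra $E_x$ term covers the final PGA of $\gamma_x$. Summing the rounds shows that the cumulative number of PGAs of $\gamma_x$ reaches $\minimumallocation_x$, with total length $R(Z_\phi)$.

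To use the sequential sum as a bound, I need that PGTs in the same filling class can be serialized on the associated resources $\xi(\Pi_\phi)$. This holds because the direct allocation reserves $\xi(\Pi_\phi)$ exclusively over its window.

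Next I would assemble the classes into $S$. The well-behaved property supplies the partial order used in \textsc{MinimalAllocation}, and I would show that it makes the concatenation conflict-free. Take two classes $\phi' < \phi$ whose associated resources intersect. The window of $\phi'$ ends before $t_\phi$, so the two windows do not overlap. Classes with disjoint resources may overlap in time, since their PGAs touch disjoint components.

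The main obstacle is to pin down which windows actually clash, given that the offset uses a sum over all earlier classes. My first attempt would be a cruder argument: every class window lies inside $[t_0, t_0 + \sum_{\phi} R(Z_\phi)]$, and distinct windows are disjoint in time. This already avoids conflicts without needing the partial order, and I would then refine it using well-behavedness only if the duration bound requires something tighter.

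Finally, the duration claim comes from admission control. Before $\gamma_d$ was accepted, and at each later admission, \textsc{AdmitTasks} reduced the available time on every resource in $\xi(\Pi_\phi)$ by $R(Z_\phi)$. It accepts a new PGT only if the recomputed required time stays within $T^{SI}$. I would need to argue that in every relevant interval, the sum of $R(Z_{\phi'})$ over the chain of classes preceding and sharing resources with any class is at most $T^{SI}$.

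The admission invariant is maintained over time because filling classes change in only two ways while $\Pi$ is static. Terminations and expiries remove PGTs, and I would prove a monotonicity lemma stating that removing PGTs from $Z_\phi$ does not increase $R(Z_\phi)$. The only other change is an accepted addition, which preserves the invariant by construction.

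Bonus allocation only fills existing gaps and never moves or deletes minimal PGAs, and \textsc{RoundRobinBonus} respects the $T^{SI}$ boundary. Together these give both conclusions of the theorem.
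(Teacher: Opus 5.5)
Your outline is sound and reaches the result by a different route from the paper. Your per-class lemma is the paper's Lemma~\ref{lemma: direct allocation schedule properties} together with Proposition~\ref{prop: timeReqSingleFC}. After that the two routes diverge.

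\textbf{The paper's route.} It proves a general Good Accounting theorem. It enumerates the maximal totally ordered subsets (chains) of $\Phi$ and serializes each chain. Claim~1 shows that chains sharing a class agree on their prefix, so the chain schedules are consistent. The schedule length is identified as $\mathcal{R}(Z_\Phi)$, the maximum over chains of $\sum R$. The soundness proof of \textsc{AdmitTasks} then equates $\mathcal{R}(Z_\Phi)$ with the largest time reserved on any single resource. The theorem itself only cites soundness and the $T^{SI}$ cutoff in \textsc{RoundRobinBonus}.

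\textbf{Your route.} You replace the chain machinery with a pairwise check: intersecting classes are comparable by well-behavedness, and then, by transitivity of $>_\xi$, the earlier window ends no later than the later one starts. You add a direct per-resource bound. This is shorter and avoids the unique-greatest-element and internal-connectivity detour.

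You also handle a point the paper only asserts (``every active PGT was previously accepted''): the reservation invariant must survive terminations and expiries across intervals. Your monotonicity lemma is exactly what is needed, and it holds. Write $R(Z)=\sum_{y}E_y+\sum_{j\ge 1}c(T_j)$ with $T_j=\{y:\minimumallocation_y>j\}$ and $c(T)=\max\bigl(\max_{y\in T}(E_y+\tminsep_y),\sum_{y\in T}E_y\bigr)$, $c(\emptyset)=0$. Removing a PGT shrinks every $T_j$, and $c$ is monotone under inclusion.

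\textbf{Corrections.}
\begin{enumerate}
\item Discard the ``crude'' argument. It is false, and it could never give the duration bound.
\begin{itemize}
\item \textsc{MinimalAllocation} offsets $\phi$ only by the classes $\phi'>_\xi\phi$ (Corollary~\ref{corr:minimumAllocTimeReq}; \texttt{preceding\_filling\_classes} in the code). Incomparable classes therefore start together, and their windows overlap.
\item Even a fully serialized schedule would have length $\sum_\phi R(Z_\phi)$. \textsc{AdmitTasks} never bounds that quantity; it bounds only the sum reserved on each individual resource.
\end{itemize}
So the partial-order argument is mandatory, not a refinement.
\item Make the duration step explicit; this also pins down the direction of the order, which you left implicit. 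The window of $\phi$ ends at $t_0+\sum_{\phi'>_\xi\phi}R(Z_{\phi'})+R(Z_\phi)$. Every such $\phi'$ has $\xi(\Pi_{\phi'})\supset\xi(\Pi_\phi)$. Hence, for any fixed $r\in\xi(\Pi_\phi)$, all these terms were reserved on $r$ by admission control, and the window ends by $t_0+T^{SI}$. This step uses only the definition of $>_\xi$. Well-behavedness is what you need for conflict-freeness; its condition~1 stops two classes with identical resource sets from running in parallel.
\item You assume $\Pi$ is static, which is stronger than the stated hypothesis $\pi_{\gamma_d}\in\Pi$. The paper's argument tacitly needs the same assumption, since classes rebuilt after a partition change are not re-checked by admission control.
\end{enumerate}
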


\begin{proof}[Proof: Theorem \ref{thm: QoSAcceptedTasks}]
 To satisfy the service agreement it is necessary that \begin{enumerate}
    \item The schedule is computed before the end of each relevant scheduling interval;
    \item The schedule is retrieved successfully by every component on $\pi_{\gamma_d}$ in every relevant scheduling interval;
    \item The schedule contains at least $\minimumallocation_{\gamma_d}$ PGAs of the PGT $\gamma_d$. 
    \item The duration of the schedule $S$ does not exceed the duration of one scheduling interval $T^{SI}$.
\end{enumerate}
Property 4. ensures that no schedules overruns the scheduling interval, which would prevent complete execution of the schedule for the subsequent scheduling interval. 
In the hypothesis, it is assumed that 1. and 2. hold. Properties 3. and 4. are guaranteed by Theorem~\ref{thm: min alloc guaranteed}, a full proof of which is given in Appendix~\ref{app: performance analysis proofs}. 
\end{proof}


An application session succeeds if $\ninst$ instances of the quantum application are successfully executed by the end nodes which register it. For this to be possible, at least $\ninst$ packets of end-to-end links must be successfully generated. In this way, application success hinges upon the receipt of minimal service.

\begin{definition}[Minimal service]\label{def: min service} 
  Recall from Definition~\ref{def: demand} that a demand $d$ from an application session on end nodes specifies at least $N^{\texttt{\emph{inst}}}_{d}$ packets to be generated.
  We say that an application session achieves \emph{\textbf{minimal service}} if at least $N^{\texttt{\emph{inst}}}_d$ packets of end-to-end links are successfully generated before the demand's expiry time. 
\end{definition}


\begin{lemma}\label{lemma:ServiceAgreementMeansMinService}
    Let $d$ be a demand accepted as a PGT $\gamma_d$ for which Arqon satisfies the service agreement. Suppose all nodes on the path $\pi_{\gamma_d}$ attempt all scheduled PGAs. Then, the application session which submitted $d$ achieves minimal service with probability at least ${(1-\epsilon^{\text{\emph{service}}}_{d})}$. 
\end{lemma}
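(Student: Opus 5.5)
The plan is a stochastic-domination argument: compare the number of successful packets to a binomial variable and then apply Definition~\ref{def: minAlloc}. The first step counts attempted PGAs. Let $n^{SI} = \lceil (t^{\texttt{expiry}}_d - t^{\texttt{start}}_d)/T^{SI}\rceil$ be the number of relevant scheduling intervals. Since Arqon satisfies the service agreement (Definition~\ref{def: ServiceAgreement}), each of these intervals contains at least $\minimumallocation_{\gamma_d}$ PGAs of $\gamma_d$. By hypothesis every node on $\pi_{\gamma_d}$ attempts all of them. The total number of attempted PGAs before expiry is therefore at least $n := \minimumallocation_{\gamma_d} n^{SI}$.

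A boundary issue arises here. The last relevant interval may extend past $t^{\texttt{expiry}}_d$, and we must argue that PGAs scheduled in it still count. I would handle this in one of two ways. The first is to note that the schedule only contains PGAs for active (unexpired) PGTs, and to interpret ``before the demand's expiry time'' at scheduling-interval granularity, consistently with the ceiling in Definition~\ref{def: minAlloc}. The second is to restrict attention to the first $\minimumallocation_{\gamma_d} n^{SI}$ attempts and note that they fall within the relevant intervals. I will state whichever convention matches the paper's definition of relevant intervals.

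The second step models the outcomes. Each PGA of $\gamma_d$ is executed for duration $E_{\gamma_d}$ on the reserved path. Because the path is allocated exclusively and the network is generate-when-requested, no stored state carries over between PGAs. I therefore model each PGA as an independent Bernoulli trial with success probability at least $\ppacket_{\gamma_d}$. Label the first $n$ attempts $B_1,\dots,B_n$. The number of successes $Y$ among all attempts satisfies $Y \ge \sum_{i=1}^n B_i$. This sum stochastically dominates $X \sim \mathrm{Binomial}(n, \ppacket_{\gamma_d})$, by a standard coupling of each $B_i$ with a Bernoulli$(\ppacket_{\gamma_d})$ variable lying below it.

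The third step concludes. Minimal service (Definition~\ref{def: min service}) fails only if $Y < N^{\texttt{inst}}_d$. Hence
\[
\mathbb{P}[\text{minimal service fails}] \le \mathbb{P}[X < N^{\texttt{inst}}_d] < \epsilon^{\text{service}}_d ,
\]
where the last inequality is exactly the defining property of $\minimumallocation_{\gamma_d}$ in Definition~\ref{def: minAlloc}. This gives success probability at least $1-\epsilon^{\text{service}}_d$.

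The main obstacle is justifying the independence of PGA outcomes and the lower bound $\ppacket_{\gamma_d}$ on each success probability. This is really a modeling assumption inherited from how $\ppacket_{\gamma_d}$ is set in Demand Registration, so I would state it explicitly as part of what "attempt all scheduled PGAs" means. The other delicate point is the counting near $t^{\texttt{expiry}}_d$, treated in the first step.
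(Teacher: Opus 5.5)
Your proposal follows the paper's route: lower-bound the number of attempted PGAs by $\minimumallocation_{\gamma_d} n^{SI}$ and invoke the binomial tail condition of Definition~\ref{def: minAlloc}. Where the paper makes a one-line appeal to that definition, you are more careful. You state the independence/Bernoulli model explicitly, and your coupling handles both ``at least $\minimumallocation_{\gamma_d}$ PGAs per interval'' and success probabilities ``at least $p^{\texttt{packet}}_{\gamma_d}$''.

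\textbf{The gap: early termination.} You omit a case the paper does treat. The service agreement (Definition~\ref{def: ServiceAgreement}) only guarantees a minimal allocation in intervals up to the demand being terminated \emph{or} expiring. So your Step~1 claim, that all $\lceil (t^{\texttt{expiry}}_d - t^{\texttt{start}}_d)/T^{SI}\rceil$ intervals carry $\minimumallocation_{\gamma_d}$ PGAs, fails if $d$ is terminated early. Here you conflate ``relevant intervals'' with $n^{SI}$.

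The paper splits this case off. Termination either happens after $N^{\texttt{inst}}_d$ packets have been generated, so minimal service holds trivially. Or it is a cancellation by the application session, in which case the service agreement is terminated too and the lemma makes no claim. Only the expiry case needs the binomial bound. You need to add this case split. In your coupling it is also easy to see why termination-on-success is harmless: the actual process agrees with the hypothetical untruncated one up to the $N^{\texttt{inst}}_d$-th success, so $\{Y \ge N^{\texttt{inst}}_d\}$ is unaffected.

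\textbf{The boundary interval.} The paper uses $n^{SI}$ without comment, so it implicitly adopts your first (interval-granularity) reading of ``before expiry''. Your second option does not resolve the issue on its own. The final relevant interval can straddle $t^{\texttt{expiry}}_d$, so some of the first $\minimumallocation_{\gamma_d} n^{SI}$ attempts may still fall after expiry. You should commit to the first convention.
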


\begin{proof}(Lemma \ref{lemma:ServiceAgreementMeansMinService})\label{proof:minService}
    By the hypothesis, the service agreement for demand $d$ is satisfied and all nodes on the path $\pi_{\gamma_d}$ of the PGT $\gamma$ realizing the demand attempt all scheduled PGAs.
    Let $x$ be the total number of packets generated for $d$ as a result of the PGAs attempted. There are two possible scenarios in which a service agreement is satisfied: the demand is either terminated or expires. We are only concerned with demands that expire. If $d$ was terminated, then we may assume that either $\ninst_d$ packets were successfully generated and the application session achieved minimum service, or the application session instance was canceled for some reason unknown to the central controller -- in which case the service agreement is also terminated.
    It follows from the Definition~\ref{def: minAlloc} of the minimal allocation that if $\minimumallocation_{\gamma_d}$ PGAs are attempted in each of ${n^{SI} = \lceil (t^{\texttt{expiry}}_d - t^{\texttt{start}}_d)/T^{SI} \rceil}$ scheduling intervals, then the number $x$ of successfully generated packets satisfies ${x \geq \ninst_d}$ with probability at least ${(1- \epsilon^{\text{service}}_{d})}$.
\end{proof}

\section{Implementation}\label{sec: Implementation}
We have developed a particular implementation of Arqon written in Python~\cite{arqon-sim} that is designed to enable configurable simulations of realistic networks.


We define a family of network topologies that adheres to a set of physically motivated principles and limit the scope of our implementation to this family of networks. 
Crucially, our implementation includes the specification of a set of allowed paths $\mathcal{P}_{\text{allowed}}$, a specification of the path partition $\Pi$, and the specific choice of filling classes $\Phi$ induced by $\Pi$. The specific choice of filling classes thus defined has the property of being well-behaved, ensuring compatibility with the performance results in Section~\ref{sec: Performance Analysis}.

To enable numeric evaluations of heterogeneous networks from within this family, with a diverse range of entanglement generation capabilities, we implement a method of generating a network capabilities table which relies on input parameters for mean and standard deviations of the entanglement generation rate and fidelity on fixed length paths, and an scaling calculation based on the length of end-to-end paths. 

The full details of our implementation are provided by our open source simulator and its documentation~\cite{arqon-sim}. Here we highlight key features, particularly those that are relevant to the numeric evaluations in Section~\ref{sec: Evaluation}.


\subsection{End Nodes}
Our implementation is not intended to provide a detailed simulation of end nodes. However, we implement a simple physically motivated model for assigning applications to end nodes and configuring application demands, and we implement demand submission and termination behaviors of end nodes because they are critical for simulating Arqon.
In a deployed network, user input will determine which applications an end node runs.
We define a set of realistic quantum network applications and implement a method of assigning these applications to end nodes~\cite{arqon-sim}. 

\subsection{Applications}
We define a dictionary of possible applications~\cite{arqon-sim} based on minimum fidelity and number of entangled pair requirements of Quantum Key Distribution (QKD)~\cite{e91, BB84}, Blind Quantum Computing (BQC)~\cite{bqc1, bqc2}, quantum teleportation~\cite{TeleportationOriginal, HorodeckiFidelityTeleportation}, or any of these applications combined with entanglement purification~\cite{DEJMPS,BBPSSW}. 
We derive the minimum entanglement fidelity and number of pairs requirements from resources~\cite{FeasableParamsQKD, pairsPaperBethany, HorodeckiFidelityTeleportation, HorodeckiFidelityTeleportation, DEJMPS}. 
We treat teleportation type applications as \textit{test-applications}, which may be used to validate near term quantum network deployments. For this reason, we simply require an entanglement fidelity of greater than $0.5$ for teleportation, which is the classical bound.

\subsubsection{Assigning Applications}

Each application we define in~\cite{arqon-sim} specifies critical information required to produce a valid demand.
In particular this includes the (minimum) window duration $w$. 
For real end nodes, the maximum memory lifetimes of qubits will affect which application programs can be supported, as these set hard upper bounds for window durations. 
Application programs don't have an inherent window. 
In a deployed setting, the local runtime of an end node must set a window based not only on qubit memory lifetimes but also on the duration of time in which a generated packet must remain in memory when executing an instance of the application program. 
We define a minimum window for an application based on an estimate of how long exemplary quantum hardware may require to execute an application instance.  
We model the quantum hardware of end nodes based on either the NV center in diamond~\cite{NVSeminal, NV_as_network_node_and_processor, NV_multiqubit_Seminal, ThreeNodeQN} or trapped ion platforms~\cite{IonsSeminal1, IonsPhotonEnt, IonsHighRateEnt, CoherenceTrappedIonQIA}.
We use data from~\cite{ThreeNodeQN} for the NV platform and from~\cite{CoherenceTrappedIonQIA} for the trapped ion platform to set the memory lifetimes of individual end nodes to currently experimentally achievable lifetimes in a networked setting (i.e. memory lifetime while simultaneously attempting entanglement generation). 
Then, when randomly assigning applications to end nodes, we ensure that the memory lifetime of the end node can support the requirements of the application. 

In a deployment, some end nodes may be operated as server nodes, for example for serving BQC applications, where less powerful end nodes may be the requesting clients. 
To support such use cases, we implement a registration of end nodes as either \textit{discoverable} or not. 
Practically, a server should be discoverable. 
When assigning applications, we use use this discoverable property as a proxy for a node being a server or a client.
We only assign applications between two clients, or between a client and a server. 

By mocking these constraints when assigning applications to end nodes, we are effectively replicating some of the behaviors of \textit{capability negotiation}, without fully implementing such a protocol. 

\subsubsection{Demand Submission}
For each application assigned to an end node, we draw the initial demand submission time from a ${\mathrm{Uniform}[0,\texpiry_{\text{rel}})}$ distribution, where ${\texpiry_{\text{rel}}=t^{\texttt{start}}+\texpiry}$ is the \textit{relative} expiry time. 
This ensures that the initial demand submissions are not all clustered together around the start of a simulation and mocks the demand submission process by independent end nodes. 
Thereafter, whenever a demand obtains minimal service (and is then terminated), expires or is removed from the network, a new demand is submitted time ${\mathrm{Exponential}(t^\texttt{resubmit})}$ later, where the value of ${t^\texttt{resubmit}}$, the average resubmission time, is set separately for each application source assigned to an end node. 

\subsection{Creating Packet Generation Tasks}

To create a PGT $\gamma_d$ for a demand $d$ there needs to be a method of calculating $E_{\gamma_d}$, the execution time of a PGA for $\gamma_d$, $\ppacket_{\gamma_d}$, the packet generation probability for a PGA of duration $E_{\gamma}$, and $\minimumallocation_{\gamma_d}$, the minimum allocation of PGAs for $\gamma_d$. These calculations depend on the demand parameters, the service error parameter $\epsilon^{\text{service}}_d$, and the rate of successful entanglement generation along the path $\pi_{\gamma_d}$ of $\gamma_d$.
In our implementation $\epsilon^{\text{service}}_d$ is a configurable parameter set by the central controller and the same value is used for every demand $d$.

To calculate possible values of $E_{\gamma_d}$ for a PGT $\gamma_d$ we use approximations by Naus from \cite{naus_approximations_1982} for the probability of $k$ successes in a window $w$ given a rate of average successes $\lambda$. For a specified rate of entanglement generation success, these approximations are used to calculate possible pairs $E_{\gamma_d}$ and $\ppacket_{\gamma_d}$. 
To calculate $\minimumallocation_{\gamma_d}$ for a PGT $\gamma_d$ as in Definition \ref{def: minAlloc} we use Hoeffding's inequality \cite{HoeffdingInequality}. This calculation depends on the value of $\ppacket_{\gamma_d}$.
See Appendix~\ref{app: Implementation} for more information about the calculation of $E_{\gamma}$ based on $\ppacket$, and~\cite{arqon-sim} for the Python implementation of these calculations. 

We set the final values for $E_{\gamma}, \ppacket_{\gamma},$ and $\minimumallocation_{\gamma}$ based on the solution of \begin{equation}\label{eq: ppacket selection}
    \ppacket_{\gamma_d} = \argmin_{p}\minimumallocation_{\gamma_d}(p)\bigg(E_{\gamma_d}(p)+\tminsep_{\gamma_d}\bigg).
\end{equation}
The quantity minimized in \eqref{eq: ppacket selection} captures a notion of how much load will be placed on the network by accepting the PGT. The calculation therefore optimizes the selection of $\ppacket$ for the value that minimizes the load. 

\subsection{Network Topology}\label{ssec: implementation: network topology}

We specify a set of physically motivated constraints on the network resource graph which define a family of networks.
These constraints help to ensure that the network topology of any network we simulate represents a physically realizable quantum network. 
In a deployment of Arqon these constraints will be replaced by simple input of the actual network topology and the logical connections present in the network. 
If a deployed network is also an element of our broad family of networks, then it may be equipped with $\mathcal{P}_{\text{allowed}}$ and $\Pi$ as in our implementation.

\subsubsection{Network Resource Graph}\label{sssec: implementation: NM: NRG}
The first constraint we impose is to consider only internally connected network resource graphs (Definition~\ref{def: internally connected}). 
Every network component ${v \in \mathcal{V}}$ is uniquely identified as either an end node ($v\in E$), an EGI ($v\in I$), a junction node ($v \in J$), or a long-distance backbone ($v \in B$). Therefore, the set of vertices $\mathcal{V}$ in the network resource graph is a disjoint union of component types, \begin{equation}\label{eq: implementation NRG vertices are disjoint over resource types}
    \mathcal{V} = E \sqcup I \sqcup J \sqcup B.
\end{equation}

To further define a family of internally connected networks we introduce a constraint on the $\binom{\mathcal{V}}{2}$ possible edges to a set of allowed edges, denoted $\mathcal{E^{*}}$.
The definition of allowed edges is based on three principles: 
\begin{enumerate}
    \item An EGI is always needed for entanglement generation between two nodes with entanglement generation capabilities \cite{RemoteEntProtocolsQubitswithPhotonicInterfaces, REGPs2}.
    \item Entanglement swapping must always be mediated by a node that has at least one qubit with entanglement generation capabilities and an additional qubit with some memory capabilities \cite{theoryHeraldDLCZ,repeatersProtocols1,ThreeNodeQN, vardoyan2020exact}.
    \item Long-distance backbones always own a private set of EGIs, both internally to the long-distance backbone, as well as at the end points where they are exposed to other network components \cite{theoryHeraldDLCZ, RepeatersTrappedIon1, RepeatersAE1, RepeatersPhotonic1, CoherenceTrappedIonQIA}.
\end{enumerate}  


The set of edges $\mathcal E$ in a network resource graph must be a subset of the set of allowed edges, $\mathcal{E} \subseteq \mathcal{E}^{*}$.
Informally, the following types of edges are allowed: $(e,i)$, $(i, j)$, $(j, b)$, where $e \in E, \ i \in I, \ j \in J, \ b \in B$ and allowed edges are reflexive, so that $(e, i)$ is an allowed edge means that also $ (i, e)$ is an allowed edge. A formal definition of the set $\mathcal{E}^{*}$ of allowed edges is stated in Appendix \ref{app: Implementation}.
For details on generating random network topologies from within this family of networks, see~\cite{arqon-sim}.

\begin{restatable}[Local Area]{definition}{LocalArea}\label{def: local area}
    Let $\mathcal{G}=(\mathcal{V}, \mathcal{E})$ be a network resource graph, and $B$ the set of long-distance backbones. Let ${\mathcal E_{B}= \{(e_1, e_2) \in \mathcal E \ | \ e_1, e_2 \in B \}}$ and ${\overline{\mathcal{E}_{B}} = \mathcal E \setminus \mathcal{E}_B}$.
    Let \begin{equation}
        \overline{\mathcal{G}_B} := (\mathcal V \setminus B, \overline{\mathcal{E}_B})
    \end{equation}
     be the subgraph of $\mathcal{G}$ formed by removing every long-distance backbone vertex and corresponding edges.
    Then the \emph{\textbf{local areas}} of $\mathcal{G}$ are the (maximal) internally connected sub-components of $\overline{\mathcal{G}_B}$. 
    The set of local areas is denoted $\mathcal{L}$.
\end{restatable}

The concept of a local area is introduced to aid in describing the scale of a quantum network. Local areas induce a structure on the network resource graph such that it can be decomposed into the graph $\overline{\mathcal{G}_{B}}$ and a graph $\mathcal{G}_C$ which contains only local areas and long distance backbones. The graph $\mathcal{G}_{C}$ captures the local area connectivity of $\mathcal{G}$.

\subsection{Path Computation}\label{sssec: implementation: path computation}

The set of allowed paths ${\mathcal{P}_{\text{allowed}}\subset \mathcal{P}_{\text{valid}}}$ is a restriction on the set of valid entanglement generation paths. 
In Appendix~\ref{app: Implementation} we formally define $\mathcal{P}_{\text{allowed}}$ following a construction which minimizes the number of long-distance backbones traversed, rather than strictly minimizing the absolute number of hops in a path. 
This construction is motivated by the expectation that a relatively large loss of entanglement generation success rate will result from traversing of a long-distance backbone, as compared to traversing multiple hops within a local area~\cite{DeliveryTimeCoopmans, CoherenceTrappedIonQIA, RepeatersTrappedIon1}. 

\begin{restatable}[Specific choice of path partition]{definition}{SpecificPathPartition}\label{def: specific path partition}
Let ${\mathcal G = (\mathcal{V}, \mathcal{E})}$ be an internally connected resource graph and let ${\mathcal{E} \subseteq \mathcal{E}^{*}}$.
Let ${\mathcal{L} = \{L_i = (\mathcal V_i, \mathcal E_i)\}}$ be the set of local areas of $\mathcal{G}$.
Let ${J_i = J \cap \mathcal V_i}$ be the junction nodes in sub-graph $L_i$. Denote by ${\pi = (\pi_1, \pi_2, \cdots)}$ a path $\pi$ with vertices ${\pi_i \in \mathcal{V}}$.

We define the partition $\Pi$ of $\mathcal P_{\text{\emph{allowed}}}$ in the following manner:
\small
    \begin{align*}
        \Pi^B &= \{\pi \in \mathcal P_{\text{\emph{allowed}}} : \exists k~s.t.~\pi_k\in B\} \\
        \Pi^J_j &= \{\pi \in \mathcal{P}_{\text{\emph{allowed}}}\setminus\Pi_b : \exists k~s.t.~\pi_k \in J_j\}, \phantom{xxxxxxx}  \ \forall  j \in J\\
        \Pi^I_i &= \{\pi \in \mathcal{P}_{\text{\emph{allowed}}}\setminus(\Pi_b\cup \bigcup_{j \in J}\!\Pi^J_j) : \exists k~s.t.~\pi_k = i\}, \phantom{x} \forall i \in I,
    \end{align*} \normalsize
    where $\mathcal{P}_{\text{\emph{allowed}}}$ is as defined Appendix~\ref{app: Implementation}, Definition~\ref{def: allowed entanglement generation paths}. 
\end{restatable}

\begin{restatable}{lemma}{SpecificPathPartitionDisjoint}\label{lemma: specific path partition disjoint}
\begin{equation}
    \Pi = \{\Pi^B\}\cup\{\Pi_j\}_{j\in J}\cup\{\Pi_i\}_{i\in I},
\end{equation}
where ${\Pi^B, \ \Pi^J_j \ \forall j, \text{ and } \Pi^I_i \ \forall i}$ are as in Definition~\ref{def: specific path partition}, is a disjoint partition of ${\mathcal{P}_{\text{\emph{allowed}}}}$.
\end{restatable}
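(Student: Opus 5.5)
We have to show two things: every path in $\mathcal{P}_{\text{allowed}}$ lies in at least one cell of $\Pi$ (covering), and no path lies in two different cells (disjointness). Note that $\Pi_b$ in Definition~\ref{def: specific path partition} is read as $\Pi^B$. The cells are then defined by successive exclusion: $\Pi^J_j$ only contains paths outside $\Pi^B$, and $\Pi^I_i$ only contains paths outside $\Pi^B\cup\bigcup_j\Pi^J_j$. So disjointness across the three \emph{levels} (backbone, junction, EGI) is immediate from the set differences. What remains is disjointness \emph{within} a level and covering.

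For covering, take $\pi\in\mathcal{P}_{\text{allowed}}$. It is a valid entanglement generation path, so $\pi_0\ne\pi_k$ are end nodes and all interior vertices are internal resources in $I\sqcup J\sqcup B$. The interior is nonempty because, by the allowed-edge structure $\mathcal{E}\subseteq\mathcal{E}^{*}$, edges touching end nodes are only of type $(e,i)$. Hence $\pi_1\in I$.
\begin{itemize}
\item If $\pi$ visits a backbone vertex, it lies in $\Pi^B$.
\item Otherwise, if it visits a junction node, it lies in some $\Pi^J_j$. Here I interpret the index as ranging over the junction nodes, so the condition is $\pi_k=j$; the definition's $J_j$ notation should be read this way, or as local-area junction sets, and I would fix this reading first.
\item Otherwise it visits $\pi_1\in I$, so it lies in $\Pi^I_{\pi_1}$.
\end{itemize}
Finally, nonempty cells are what matters; empty cells, if any, are discarded or harmless.

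The main work is disjointness within a level, and this must use the definition of $\mathcal{P}_{\text{allowed}}$ in Appendix~\ref{app: Implementation}.

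At the junction level, suppose a path avoids all backbones but passes through two distinct junction nodes $j\ne j'$. Since edges are only $(e,i),(i,j),(j,b)$, any internal walk between two junctions avoiding $B$ must pass $j\to i\to j'$ through an EGI. I expect the allowed-path construction (minimizing backbone traversals and staying within a local area without backbones) to forbid paths with more than one junction when no backbone is used. In fact, for a path with no backbone vertex, the edge types force its structure to be $e,i,\dots$. A junction neighbors only EGIs and backbones, so a backbone-free path looks like $e,i,j,i',j',\dots,i'',e'$. I would check in Definition~\ref{def: allowed entanglement generation paths} that such backbone-free allowed paths have the form $(e,i,e')$ or $(e,i,j,i',e')$, i.e. at most one junction. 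That gives uniqueness of $j$.

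At the EGI level, a path with no backbone and no junction must be $(e,i,e')$. This holds because an EGI's only neighbours are end nodes and junctions, and interior vertices cannot be end nodes. So it contains exactly one EGI, and $\Pi^I_i$ are pairwise disjoint.

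The hardest step is the junction level, which depends on the precise form of $\mathcal{P}_{\text{allowed}}$. If the appendix allows multiple junctions in a backbone-free local path, I would instead use the local-area reading. In that reading, $\Pi^J_j$ is indexed by local areas $L_j$ via $J_j=J\cap\mathcal{V}_j$. Disjointness then follows because a backbone-free path stays in a single connected component of $\overline{\mathcal{G}_B}$, hence in a single local area, and local areas are vertex-disjoint as maximal components.
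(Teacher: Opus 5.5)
Your structure matches the paper's proof. Cross-level disjointness comes from the set differences, EGI-level disjointness and covering come from the edge types $(e,i),(i,j),(j,b)$, and junction-level disjointness holds because junctions of different local areas can only be joined through a backbone.

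You must commit to the local-area reading $J_j=J\cap\mathcal{V}_j$, with $\Pi^J_j$ indexed by local areas, which is what the paper uses. The junction-node reading is not merely unverified but false. Take a backbone-free network whose only edges are $(e,i),(i,j),(j,i''),(i'',j'),(j',i'),(i',e')$. It is internally connected, and $(e,i,j,i'',j',i',e')$ is the only valid path from $e$ to $e'$, hence allowed. That path lies in both $\Pi^J_j$ and $\Pi^J_{j'}$.

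In the local-area step, replace ``single connected component of $\overline{\mathcal{G}_B}$'' with an argument about the interior of the path. An end node adjacent to EGIs of two different local areas merges them into one component of $\overline{\mathcal{G}_B}$, so your ``hence'' does not follow as stated. What does hold is this: the interior of a backbone-free valid path avoids $E\cup B$, so it is internally connected. It therefore lies in a single local area, because local areas are disjoint on non-end-node vertices by maximality.
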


\begin{proof}
    See Appendix \ref{app: Implementation}.
\end{proof}

\subsection{Network Capabilities}
The network capabilities manager populates the network capabilities table using a simple heuristic model when the network is initialized. Our implementation does not include dynamic update handling for the capabilities of network components or paths through the network. See~\cite{arqon-sim} for full details of the model.

\subsection{Filling Classes}

The properties of the set of filling classes are directly inherited from the specific choice of path partition, which defines their structure (Definition~\ref{def: specific path partition}). The specific choice of filling classes can be written as
\begin{equation}\label{eq: specific FC partition}
    \Phi = \{\Phi^{B}\} \cup \{\Phi^J_j\}_{j \in J} \cup \{\Phi^I_{i}\}_{i \in I}.
\end{equation}
In Appendix~\ref{app: Implementation} we prove that this specific choice of filling classes has the property of being well-behaved.

\subsection{Execution of Network Schedules}
\label{subsec: simulating minimal service}
Since we do not implement a detailed model of end nodes, in our implementation we mock the execution of network schedules by sampling random variables to determine how many PGAs successfully generate packets.
We then assume that whenever a packet is generated as the result of a PGA, then the corresponding application instance is successfully executed. 
Our implementation does include session termination behavior in which end nodes terminate application sessions after they have obtained minimal service.

For any network schedule in which a PGT $\gamma$ cannot reach the termination criteria of obtaining minimal service (i.e. less PGAs are scheduled than the remaining instances to execute), we determine the number of packets generated by drawing from a Binomial distribution with probability of success $\ppacket_{\gamma}$.
Otherwise, we sample from a Bernoulli distribution with probability of success $\ppacket_{\gamma}$ for each PGA in turn.
This two-stage approach allows us to minimize the time required to simulate execution of network schedules, while still allowing accurate reporting of when minimal service is obtained. 


\section{Complexity Analysis}\label{sec: Complexity}
Arqon must produce a new network schedule every scheduling interval, with a total computation time not exceeding the scheduling interval.  
To validate that this is feasible, we analyze the operational complexity of our implementation of the Network Scheduler control application. 
Our operational complexity analysis is based on determining which fundamental operations, such as value assignment, list insertion, list appending, list sorting, etc. occur in the implementations of our algorithms, as in Listings~\ref{lst: directallocation}-\ref{lst: resource schedule} in Appendix~\ref{app: code listings}. For each fundamental operation, such as value assignment, we determine the time complexity of the operation, for example: value assignment is O(1). We therefore calculate the time complexity of our algorithms based on the number of operations of each type, and the time complexity of each type of operation. 
The details of our analysis are given in Appendix~\ref{app: complexity proofs}. 

\begin{restatable}{theorem}{thmOverallComplexity}\label{thm: overall complexity} 
The complete program \textsc{NetworkScheduler} has operational complexity \begin{equation}\label{eq: overall complexity}
   O \bigg{(}k 2^{R} \big{(} (N + k)^2 +R \big{)} + (N+R)^2  + N^3 R \bigg{)},
\end{equation}
where ${R = |\mathfrak{R}|}$ is the number of internal resources in the network, ${N=|Z_{\Phi}|}$ is the number of active PGTs, and ${k=|\bm{\Gamma}|}$ is the number of PGTs in the task intake object.
\end{restatable}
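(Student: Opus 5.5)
The plan is to decompose \textsc{NetworkScheduler} (Algorithm~\ref{alg: Network Scheduler main}) into its sequential calls, bound each call separately, and add the bounds. One pass of the main loop makes four non-trivial calls: \textsc{UpdateFillingClasses}, \textsc{AdmitTasks}, and the two phases of \textsc{ComputeSchedule} (\textsc{MinimalAllocation} and \textsc{BonusAllocation}). Retrieving $\Pi$, $\tau$ and $\bm{\Gamma}$, and sending results to the Demand and Schedule Managers, cost at most linear time in the sizes of these objects, so they are absorbed.

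For each call I would count fundamental operations in the corresponding listing in Appendix~\ref{app: code listings}. I would try to prove four bounds.
\begin{enumerate}
\item \textsc{UpdateFillingClasses} costs $O\big((N+R)^2\big)$. The static-partition case only scans the $N$ active PGTs against $\tau$ and the expiry times. Rebuilding the associated-resource map $\xi$ over at most $O(R)$ filling classes with $O(R)$ resources each gives the $(N+R)^2$ term.
\item \textsc{AdmitTasks} costs $O\big(k2^R((N+k)^2+R)\big)$. For each of the $k$ candidate PGTs, it recomputes \textsc{CalculateRequiredTime} for the affected filling class and checks available time on the associated resources.
\item \textsc{MinimalAllocation} costs $O(NR)+O(N^2)$, as already claimed in Section~\ref{ssec: Compute Schedule}.
\item \textsc{BonusAllocation} costs $O(N^3R)$, also already claimed there.
\end{enumerate}

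For \textsc{CalculateRequiredTime}, Algorithm~\ref{alg: Calculate Required Time} sorts at most $N+k$ PGTs. Computing each $c^\phi_x$ naively needs a max and a sum over a suffix, which gives $O((N+k)^2)$ per call; the sort is dominated. The factor $2^R$ is the step I expect to be the main obstacle, and I must justify it honestly from the listing. The most likely source is a lookup over the path partition $\Pi$: the number of valid paths is $O(2^{|\mathcal V|})$, and the number of allowed paths through internal resources is bounded by $2^R$. Locating the filling class $\phi$ with $\pi_{\gamma'}\in\Pi_\phi$, or evaluating $\xi(\Pi_\phi)$, then requires scanning up to $2^R$ paths. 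Each scan step either triggers a required-time recomputation or an $O(R)$ resource update. Multiplying by $k$ candidates yields $k2^R((N+k)^2+R)$. I would verify against the Admit Tasks listing whether the $2^R$ multiplies both terms, as the statement suggests. If it does not, I would simply upper bound the costs so that it does, since only an upper bound is needed.

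For the Compute Schedule phases, I would rely on the appendix analysis of \textsc{DirectAllocation} and \textsc{RoundRobinBonus}. For \textsc{DirectAllocation}, each PGA insertion touches at most $R$ resource schedules, and the round-robin over filling classes is at most quadratic in $N$. For \textsc{RoundRobinBonus}, each of the $N$ PGTs may scan $O(N)$ gaps per resource, and each gap test costs $O(N)$ on up to $R$ resources, which gives $N^3R$. Summing all terms, the $NR$ and $N^2$ terms are absorbed into $(N+R)^2$, which yields \eqref{eq: overall complexity}.
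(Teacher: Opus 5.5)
Your decomposition matches the paper's proof, which sums Lemma~\ref{lemma: UFC complexity}, Theorem~\ref{thm: admit new tasks complexity}, and the minimal, bonus and compilation bounds for \textsc{ComputeSchedule}. The problem is the step you flagged, the factor $2^R$: the mechanism you propose does not occur in the algorithm.

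\textsc{GetFillingClass} iterates over the $O(R)$ filling classes and performs one set-membership test $\pi_{\gamma'}\in\Pi_\phi$ for each. Locating $\phi$ therefore costs $O(R)$, not a scan of up to $2^R$ paths. The map $\xi$ is passed in precomputed from \textsc{UpdateFillingClasses}, so each access to $\xi(\Pi_\phi)$ is $O(R)$. \textsc{CalculateRequiredTime} runs once per candidate PGT, not once per scanned path. Even a genuine $2^R$-path scan would add $O(2^R R)$ per candidate; it would not multiply $(N+k)^2$.

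In the paper, $2^R$ has a different origin. The appendix proof treats $k$ as the number of demands and bounds $|\bm{\Gamma}_d|$, the number of alternative PGTs for one demand, by $O(2^R)$: one per path through internal resources, times a constant number of protocols. Each alternative then costs one accept/reject decision of $O((N+k)^2+R)$. You had the right combinatorial fact (at most $2^R$ paths) but attached it to the wrong loop.

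If you read the statement literally, $k=|\bm{\Gamma}|$ already counts PGTs. The per-candidate cost then gives $O(k((N+k)^2+R))$ directly, and $2^R$ is pure slack. Your fallback of upper bounding ``so that it does'' is therefore legitimate, but present it as slack rather than deriving it from a path scan.

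You also undercount \textsc{UpdateFillingClasses}. Even with a static partition, each filling class recomputes its cycle numbers, cycle times and required time after removing expired and terminated PGTs. That costs $O(N_\phi^2)$ per class by Lemma~\ref{lemma: compute required time complexity}, so $O(N^2)$ in total. When the partition changes, re-inserting the PGTs costs $O(NR)$. The paper's $R^2$ term comes from building the ordering of the filling classes (a double loop over $O(R)$ classes), not from $\xi$.

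All of these lie inside $(N+R)^2$. So does the $O(R^2)$ initialisation of \textsc{AdmitTasks}, which is the additive $R^2$ in Theorem~\ref{thm: admit new tasks complexity} and which your item 2 omits. Your final bound is therefore unaffected once these costs are counted.
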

\begin{proof} 
    The program \textsc{NetworkScheduler} consists of the program sequence \textsc{UpdateFillingClasses}, \textsc{AdmitTasks}, and \textsc{ComputeSchedule}.
    By Lemma~\ref{lemma: UFC complexity} \textsc{UpdateFillingClasses} has operational complexity \begin{equation}\label{eq: UFC complexity main text}
        O((N+R)^{2}).
    \end{equation} 
    By Theorem~\ref{thm: admit new tasks complexity} \textsc{AdmitTasks} has operational complexity \begin{equation}\label{eq: admit tasks complexity main text}
        O \bigg{(}k 2^{R} \big{(} (N + k)^2 +R \big{)} \bigg{)}.
    \end{equation}
    The program \textsc{ComputeSchedule} consists of a minimal allocation phase, a bonus allocation phase, and a schedule compilation phase. The minimal allocation phase contributes (Lemma~\ref{lemma: single FC direct allocation})  \begin{equation}\label{eq: minimal allocation complexity main text}
        O(NR) + O(N^2)
    \end{equation} to the operational complexity.
     The bonus allocation phase contributes (Lemma~\ref{lemma: single FC Bonus RR}) \begin{equation}\label{eq: bonus round robin complexity main text}
         O(N^3 R).
     \end{equation}
     Compiling the resulting schedule contributes only $O(NR)$ (Lemma~\ref{lemma: compile schedule complexity}). 
    The total operational complexity of \textsc{ComputeSchedule} is therefore dominated by the bonus allocation phase and is given by Theorem \ref{thm: compute NS complexity} as \begin{equation}\label{eq: compute schedule complexity main text}
        O(N^3 R).
    \end{equation}
    Overall, the program \textsc{NetworkScheduler} has operational complexity, \begin{align*}
         O \bigg{(}k 2^{R} \big{(} (N + k)^2 +R \big{)} + (N+R)^2 \bigg{)} + O(N^3 R). 
    \end{align*}
\end{proof}

Our implementation of the Network Scheduler has an operational time complexity with a polynomial dependency on the number of active PGTs $N$ and the size of the task intake buffer $k$. 
However, there is an exponential dependence on the number of resources in the network, $R$. 
In a deployment, the number of resources is expected to be relatively static, and will typically act as a constant factor in the operational complexity of the Network Scheduler. 
Nevertheless, the exponential dependence on $R$ does indicate that Arqon is best suited for networks in which $R < 1000 $. This result conforms with the heuristic expectation that distributed control architectures are more suitable than centrally controlled network architectures for very large networks~\cite{DistributedNetworks}.

\section{Evaluation} \label{sec: Evaluation}
Arqon is designed to deliver reliable service, according to reliability requirements \ref{req: 1: responses}-\ref{req: 4: independent}. 
The first requirement, \ref{req: 1: responses} is satisfied by design, as all demands receive accept/reject messages from the demand manager within two scheduling intervals following their registration~\cite{arqon-sim}. 
By Theorem~\ref{thm: overall complexity}, which captures the overall operational complexity of the Network Scheduler control application, it is expected that network schedules can always be computed in time to be delivered.
Then, by Theorem~\ref{thm: QoSAcceptedTasks}, it is expected that Arqon  satisfies requirements~\ref{req: 2: satisfied}-\ref{req: 4: independent} as long as the path partition $\Pi$ remains static throughout an evaluation.
However, from the perspective of a network operator, there remain several relevant questions regarding the manner in which reliable service is delivered, including the following. \begin{enumerate}[label=\textit{(Q\arabic*)}]
    \item \label{eval Q: e service}What effect does the service error parameter $\epsilon^{\text{service}}$ have on the service delivered? 
    \item \label{eval Q: relative service delivery time}How quickly does Arqon satisfy service agreements, relative to demand expiry times? 
    \item \label{eval Q: bonuns round robin impact}What impact does the bonus allocation phase of the Compute Schedule process of the Network Scheduler have on the service delivered?
    \item \label{eval Q: prop accepted}What proportion of submitted demands are accepted? 
    \item \label{eval Q: choice of FCs}What impact does the specific choice of filling classes \eqref{eq: specific FC partition} have on which of the registered PGTs are accepted by Admit Tasks?
\end{enumerate}

In this section we first validate that Arqon delivers reliable service on static network topologies and then address these questions, which capture additional performance metrics. 
Finally, we validate our operational complexity analysis of the Network Scheduler with thorough numeric evaluations that supply concrete time values for computation times in a variety of parameter regimes.

In all simulations, the demands for service submitted by pairs of end nodes are based on the real minimum fidelity requirements of BQC (2, 6, or 10 qubit) \cite{pairsPaperBethany, HorodeckiFidelityTeleportation}, e91 based QKD \cite{FeasableParamsQKD}, teleportation \cite{HorodeckiFidelityTeleportation},
or any of these applications pre-pended by purification following the DEJMPS protocol \cite{DEJMPS} of the bipartite entangled pairs. The other parameters of each demand, such as the number of pairs and the window duration are set based on the application and the network capabilities database for that simulation~\cite{arqon-sim}.
Lists of the applications simulated in each simulation are included as JSON files in~\cite{arqon-sim}.

\subsection{Service Reliability}
\label{subsec: eval Demand Satisfaction}
\begin{figure}
    \centering
    \includegraphics[width=0.98\linewidth, trim={0.5cm 0.25cm 0.0cm 0.0cm}]{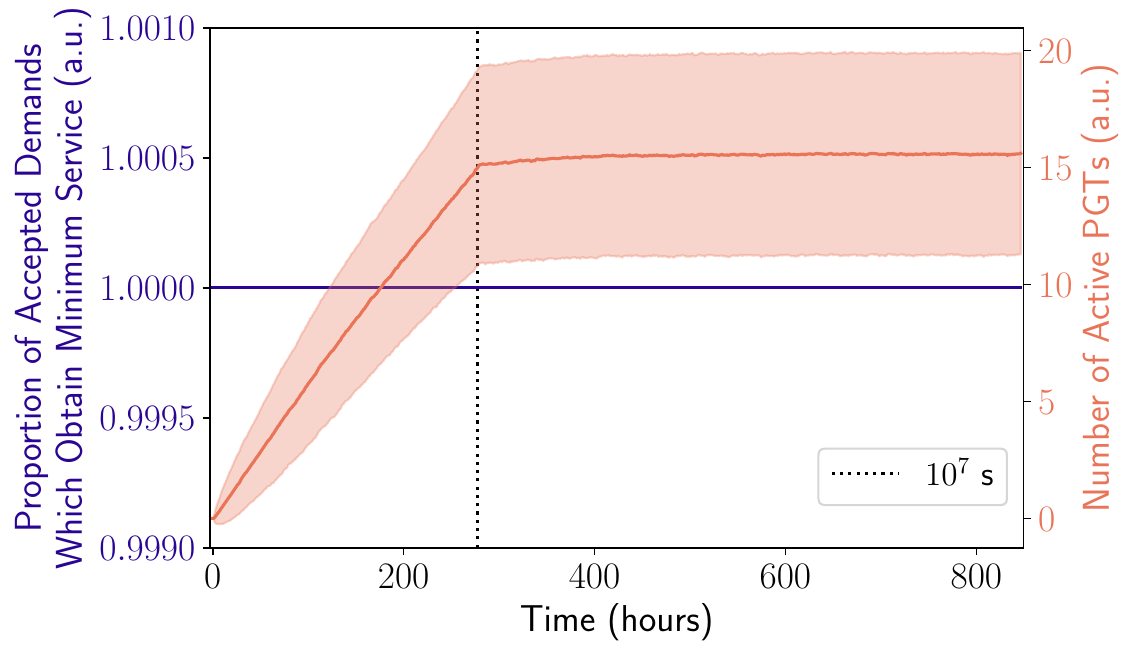}
    \caption{Proportion of accepted demands which obtain minimal service (left, blue) and the number of active PGTs (right, orange), compared to the time at which the scheduling interval occurs. The scheduling interval duration 30 minutes and the data corresponds to a simulated 10 weeks of continuous network operation. Shaded regions show 1 standard deviation of the mean.}
    \label{fig: satisfaction service agreements}
\end{figure}

To meet the reliability requirements \ref{req: 2: satisfied}-\ref{req: 4: independent}, Arqon must satisfy demands before their deadlines and the arrival of new demands must not disrupt service to already accepted demands. A demand is satisfied when its associated application session achieves minimal service (Definition \ref{def: min service}). 
By Lemma ~\ref{lemma:ServiceAgreementMeansMinService}, if the service agreement for a demand $d$ is satisfied, then the corresponding application session achieves minimal service with at least probability ${(1-\epsilon^{\text{service}}_{d})}$.

We validate that Arqon delivers reliable service first with simulations of a network with the dumbbell topology illustrated in Figure~\ref{fig: example network topology}, and with $\epsilon^{\text{service}}=1e-5$ for all demands. Then, we validate that the reliability of service delivery is not conditional on a specific network topology or number of demand sources by simulating random network topologies from the family of networks defined in Section~\ref{sec: Implementation}.

\subsubsection{Dumbbell Topology Network}
We simulate one thousand randomly seeded runs, each corresponding to 846 hours (1694 half hour scheduling intervals) of continuous operation of the dumbbell topology network in Figure~\ref{fig: example network topology}. In every run, applications are assigned to ${80\%}$ of all possible end node pairings connected by a route in ${\Pi^I_i \ \forall i \in I}$, ${10\%}$ of all end node pairings connected by a route in ${\Pi^J_j \ \forall j \in J}$, and ${10 \%}$ of all end node pairings connected by a route in $\Pi^B$. As long as at least one of the nodes assigned the application is `not discoverable' (see Section~\ref{sec: Implementation}), then the application session submits demands for service. These simulations included an average of 18 demand sources, with the (static) number per simulation ranging from 3 to 26.

Figure~\ref{fig: satisfaction service agreements} shows how two metrics vary over time for the dumbbell topology simulations: (1) the proportion of accepted demands that obtain minimal service, and (2) the number of active PGTs at any given time. 
For the minimal service proportion metric, time corresponds to when each demand was accepted as a particular PGT. Across all one thousand simulation runs, a total of 180,662 accepted demands reached their expiry times.
With the stringent value of $\epsilon^{\text{service}}=1e-5$ used in these simulations, Lemma~\ref{lemma:ServiceAgreementMeansMinService} predicts that at most 2 accepted demands should fail to obtain minimal service. 
The simulation results confirm this expected behavior: all accepted demands obtained minimal service (proportion = 1.0, standard deviation = 0).These outcomes validate that the service delivered satisfies reliability requirements \ref{req: 2: satisfied} and \ref{req: 3: on time}. 

The variation in the number of active PGTs over time enables validating that
arriving demands do not disrupt service to already accepted demands~\ref{req: 4: independent}.
Starting from network initialization at time zero, pairs of end nodes begin submitting demands and some of these are accepted as specific PGTs by Admit Tasks.
In these simulations, the time at which different end node pairs submit their first demand is uniformly distributed over the first $10^7$ seconds of network operation. 
This process results in the number of PGTs to schedule increasing over time, before reaching a steady state average of 15.5 PGTs after about 278 hours ($10^7$ seconds) of continuous operation. 
The ramp-up in the number of PGTs to schedule has no impact on the proportion of demands obtaining minimal service, confirming that~\ref{req: 4: independent} is satisfied.


\subsubsection{Random Network Topologies}
\renewcommand{\arraystretch}{1.1}\small
\begin{table}[t]
    \centering
    \begin{tabular}{|c|c|c|}
    \hline
         \thead{\textbf{Long-distance} \\ \textbf{Backbones}} & \thead{\textbf{Local Areas}} & \thead{\textbf{End Nodes}} \\
         \hline
         1 & 2 & 15 \\
         2 & 2 & 15 \\
         2 & 2 & 50 \\
         2 & 3 & 30 \\
         2 & 3 & 50 \\
         3 & 3 & 30 \\
         5 & 4 & 40 \\
         6 & 3 & 35 \\
         7 & 5 & 50 \\
         12 & 4 & 40 \\
         \hline
    \end{tabular}
    \caption{Parametrization of the randomly generated topologies simulated.}
    \label{tab: random topology parameters}
\end{table}
\normalsize

The random topologies we simulate are parameterized in Table~\ref{tab: random topology parameters} by the number of long-distance backbones, the number of local areas, and the number of end nodes in the network. 
For each parameterization, we randomly generate ten network topologies and then, we simulate ten randomly seeded runs for every topology. Every run corresponds to 757 hours (1514 half hour scheduling intervals) of continuous operation. In every run we assign application sessions to ${20\%}$ of all possible end node pairings connected by a route in ${\Pi^I_i \ \forall i \in I}$, ${15\%}$ of end node pairings connected by a route in ${\Pi^J_j \ \forall j \in J}$, and ${5 \%}$ of end node pairings connected by a route in $\Pi^B$. 

To address~\ref{eval Q: e service} and study the impact of different values for the $\epsilon^{\text{service}}$ parameter, we perform each simulation with five different settings: \begin{equation*}
    \epsilon^{\text{service}} \in \{1{e-5}, \ 0.001, \ 0.01, \ 0.1, \ 0.5 \}.
\end{equation*}
In Table~\ref{tab: random topos reliability} we record the proportion of accepted demands which obtain minimal service, which is always greater than the lower bound of $(1-\epsilon^{\text{service}})$ predicted by Lemma~\ref{lemma:ServiceAgreementMeansMinService}.
These results validate that Arqon delivers reliable service for a wide range of network topologies and number of demand sources. 

\subsection{Additional Performance Metrics}
The average values of performance metrics addressing~\ref{eval Q: e service}-\ref{eval Q: bonuns round robin impact} are reported in Table~\ref{tab: random topos reliability}. To characterize the impact of the bonus allocation phase of Compute Schedule, we disabled this feature and repeated all simulations of randomly generated network topologies with the same random seeds as in the simulations with default Arqon.
The total proportion of accepted demands which obtain minimal service in these simulations was always ${>99\%}$, both with and without the bonus allocation phase. This outcome persisted even when $\epsilon^{\text{service}}$ was relaxed to the very high value of $0.5$, which Lemma~\ref{lemma:ServiceAgreementMeansMinService} predicts to allow up to ${50\%}$ of demands to fail to obtain minimal service. 

These results indicate that Arqon over-serves accepted demands, as compared to the service error parameter $\epsilon^{\text{service}}$. As this effect persists when the bonus allocation phase is disabled, we trace the cause of this excess service to how Arqon calculates the minimal allocation for each demand (Definition~\ref{def: minAlloc}), which has proven to be a large overestimate.
Application sessions on end nodes may benefit from over-service of accepted demands, but wider network performance metrics such as the proportion of accepted demands may suffer from over-service to demands.
In an alternate implementation, this may be addressed by replacing the method which solves the the system of equations determining the minimal allocation with a weaker solution that determines smaller values for the minimal allocation. 

Arqon schedules a minimal allocation of PGAs for all active PGTs in every scheduling interval, regardless of whether or not the expected number of successfully generated packets,\begin{equation}\label{eq: expected successful PGAs}
    \mathbb{E}[\# \text{ successful PGAs}] = N^{\text{PGAs scheduled}} \times \ppacket
\end{equation} already exceeds the number of required successes ($\ninst$). To prevent over-serving demands, an alternate implementation may decrease the minimal allocation for PGTs following the scheduling interval in which \eqref{eq: expected successful PGAs} exceeds $\ninst$. 

The data in Table~\ref{tab: random topos reliability} also addresses~\ref{eval Q: relative service delivery time} and~\ref{eval Q: bonuns round robin impact}. \begin{enumerate}[label=\textit{(Q\arabic*):}, start=2]
    \item Arqon satisfies service agreements for accepted demands significantly in advance of demand expiry times. This is true for all $\epsilon^{\text{service}}$ values simulated and remains true if the bonus allocation phase of Compute Schedule is disabled. 
    
    \item The bonus allocation phase significantly impacts how quickly an accepted demand obtains minimum service, relative to its expiry time. The mean service to expiry time more than doubles when the bonus allocation phase is disabled.
    
    Approximately half of all PGAs scheduled by Arqon are scheduled by the bonus allocation phase of Compute Schedule, regardless of the value of $\epsilon^{\text{service}}$. This proportion directly accounts for the speedup of the service to expiry time introduced by this phase.  
\end{enumerate}

\renewcommand{\arraystretch}{1.2}\small
\begin{table*}[t]
    \centering
    \begin{tabular}{|c|c|c|c|}
    \hline
    $\mathbf{\epsilon_{service}}$ &\thead{\textbf{Mean Proportion of Accepted Demands} \\ \textbf{Which Obtain Minimal Service}} & \thead{\textbf{Mean Service} \\ \textbf{to Expiry Time}} & \thead{\textbf{Mean Proportion of PGAs}\\\textbf{Scheduled in Bonus Allocation}} \\\hline
    \multicolumn{4}{|c|}{\footnotesize \textbf{\textit{Default Arqon: Bonus Allocation Enabled in Compute Schedule}} \small}\\
    \hline
    1e-05 & 1.0000 & 0.3706 & 0.4950 \\
    0.001 & 1.0000 & 0.3777 & 0.5279 \\
    0.01  & 1.0000 & 0.3809 & 0.5382 \\
    0.1   & 1.0000 & 0.3847 & 0.5386 \\
    0.5   & 0.9998 & 0.3837 & 0.5423 \\
    \hline
    \multicolumn{4}{|c|}{\footnotesize \textbf{\textit{with Bonus Allocation Disabled in Compute Schedule}} \small}\\
    \hline
    1e-05 &1.0000 & 0.7481 & 0 \\
    0.001 &1.0000 & 0.8194 & 0 \\
    0.01 &1.0000 & 0.8483 & 0 \\
    0.1 &0.9998 & 0.8639 & 0 \\
    0.5 &0.9956 & 0.8646 & 0 \\
    \hline
    \end{tabular}
    \caption{Additional performance metrics addressing~\ref{eval Q: e service}-\ref{eval Q: bonuns round robin impact} from simulations of the randomly generated network topologies parameterized in Table~\ref{tab: random topology parameters}.}
    \label{tab: random topos reliability}
\end{table*}
\normalsize

\subsubsection{Proportion of Demands Accepted}

\begin{figure}[t]
    \centering
    \includegraphics[width=0.98\linewidth, trim={0.2cm 0.5cm 0.1cm 0}]{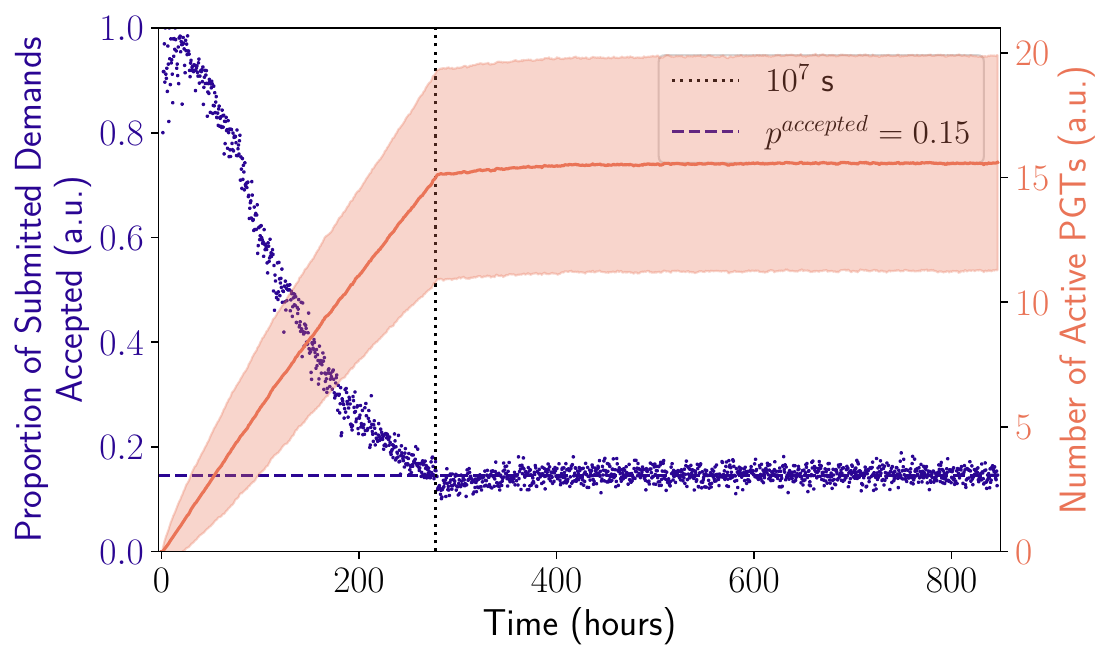}
    \caption{Proportion of submitted demands which are accepted at a specific time (left, blue) and the number of active PGTs (right, orange) over time. Application sessions are first created at times that are uniformly distributed over the first $10^7$ s of network operation, emphasized by the vertical black dotted line. The horizontal dashed blue line indicates the average of the points after the black dotted line. }
    \label{fig: proportion accepted demands}
\end{figure}

The proportion of submitted demands that are accepted, the subject of~\ref{eval Q: prop accepted}, is a performance metric that captures information about the load on a network relative to the tolerance of Arqon's admission control algorithm (Admit Tasks). It is a metric that directly impacts application sessions on end nodes, as a low proportion of accepted demands may translate into long wait times to receive a service agreement, a result of demands needing to be re-submitted multiple times. 

For simulations of the Dumbbell topology network, Figure~\ref{fig: proportion accepted demands} plots the average proportion of submitted demands that are accepted by Admit Tasks per scheduling interval and the number of active PGTs in each scheduling interval. 
The proportion of demands accepted as some particular PGT decreases as a function of the number of active PGTs. It reaches a steady state value of $0.15$ after $10^{7}$ s, after which time all application sessions have submitted at least one demand. When demands are rejected, terminated, or expire, application sessions on end nodes submit new demands following a Poisson process with a session specific resubmission rate. 

\renewcommand{\arraystretch}{1.2}
\begin{table*}[t]
    \centering
    
    \begin{tabular}{|c|c|c|c|}
    \hline
    $\mathbf{\epsilon_{service}}$ & \thead{\textbf{ Total proportion of} \\ \textbf{Demands Accepted}} & \thead{\textbf{Total Proportion of} \\\textbf{Registered Demands Accepted}} & \thead{\textbf{Total Proportion of Demands} \\ \textbf{which Pass Demand Registration}}\\\hline
    \multicolumn{4}{|c|}{\footnotesize \textbf{\textit{Default Arqon: Bonus Allocation Enabled in Compute Schedule}} \small}\\
    \hline
    1e-05 &  0.127 & 0.131 & 0.992\\
    0.001 & 0.123 & 0.128 & 0.991\\
    0.01 & 0.102 & 0.105 & 0.993\\
    0.1 & 0.120 & 0.123 & 0.993\\
    0.5 & 0.211 & 0.222 & 0.984\\
    \hline
    \multicolumn{4}{|c|}{\footnotesize \textbf{\textit{with Bonus Allocation Disabled in Compute Schedule}} \small}\\
    \hline
    1e-05 & 0.073 & 0.077 & 0.992\\
    0.001 & 0.087 & 0.096 & 0.986\\
    0.01 & 0.042 & 0.047 & 0.991\\
    0.1 & 0.055 & 0.069 & 0.982\\
    0.5 & 0.138 & 0.155 & 0.977\\
    \hline
    \end{tabular}
    \caption{The total proportion of demands registered and accepted as particular PGTs, addressing~\ref{eval Q: prop accepted}, in simulations of the randomly generated network topologies parametrized in Table~\ref{tab: random topology parameters}.}
    \label{tab:random topos prop accepted}
\end{table*}
For the simulations of randomly generated network topologies, the average proportion of demands accepted and the proportion of demands which pass demand registration, both with default Arqon and with the bonus allocation phase of Compute Schedule disabled are recorded in Table~\ref{tab:random topos prop accepted}. With default Arqon, for all values of $\epsilon^{\text{service}}$ simulated the average proportion of demands accepted was between $0.12$ and $0.22$. In contrast, the proportion of demands which passed Demand Registration always exceeded $0.98$. Therefore, almost all rejected demands are rejected by Admit Tasks rather than be Demand Registration.
With bonus allocation disabled, the average proportion of demands accepted decreased by at least $0.036$ for every simulated value of $\epsilon^{\text{service}}$, and by up to $0.073$ for ${\epsilon^{\text{service}}=0.5}$.

In all simulations (dumbbell topology and random network topologies), the average proportion of demands accepted is low. This indicates that the average load on the network in our simulations is similar to the maximum supportable load calculated by Admit Tasks. 
Combined with the results from Table~\ref{tab: random topos reliability} which indicate Arqon over-serves accepted demands, we expect that the acceptance criterion of Admit Tasks is more stringent than is necessary to satisfy reliability requirement \ref{req: 4: independent}. An alternate implementation of Arqon may modify the function \textsc{CalculateRequiredTime}, with which AdmitTasks calculates the time required to schedule a minimal allocation of all active PGTs. A possible modification would be to omit from the calculation any accepted PGTs~$\gamma$ for which the expected number of successfully generated packets \eqref{eq: expected successful PGAs}, already exceeds the required number,~$\ninst_{\gamma}$.

\subsubsection{Profiling Accepted PGTs}
\label{subsec: eval Profiling PGT acceptance}

To address~\ref{eval Q: choice of FCs} Figure~\ref{fig: profiling PGTs by FC} plots the number of submitted PGTs (Figure~\ref{subfig: FC interplay submitted}) and the number of accepted PGTs (Figure~\ref{subfig: FC interplay accepted}) in each filling class. There is one data point for each scheduling interval, for each network topology parameterization in Table~\ref{tab: random topology parameters}. Data points are averages over the ten random topologies generated from each parameterization and the randomly seeded simulations for each topology. 

The number of active PGTs in one filling class can impact the proportion of PGTs admitted to another filling class because Admit Tasks reserves time for active PGTs on the resources associated with their filling classes.
In our implementation of Arqon, we employ the specific choice of filling classes \eqref{eq: specific FC partition} derived from the path partition in Definition~\ref{def: specific path partition}. 
In this path partition, all paths which contain at least one long-distance backbone are in the same cell of the partition. For network topologies with a large number of local areas and long-distance backbones, the result is that PGTs with paths between only two local areas are in the same filling class as PGTs with paths between multiple local areas. 

In Figure~\ref{fig: profiling PGTs by FC} there is some clustering of the registered PGTs (Figure~\ref{subfig: FC interplay submitted}) based on the number of local areas in the network parameterization. In contrast, after Admit Tasks (Figure~\ref{subfig: FC interplay accepted}) there is clear clustering of the active PGTs based on the unique network topology parameterization, with ten data groupings visible in each subfigure, each corresponding to one of the ten distinct network parameterizations. 

For all network parameterizations simulated, the number of registered PGTs in a junction (${\Phi^J_{j} \ \forall j \in J}$) or interface (${\Phi^{I}_{i} \ \forall i \in I}$) filling class increases nearly monotonically with the number of registered PGTs in the backbone filling class (Figure~\ref{subfig: FC interplay submitted}). Similarly, the number of registered PGTs in an interface filling class increases nearly monotonically with the number of registered PGTs in a junction filling class.
The relative number of PGTs registered in each filling class derives immediately from the simulation configuration, which sets the number of application sessions between end nodes connected by routes in interface, junction, and backbone filling classes. 
After Admit Tasks (Figure~\ref{subfig: FC interplay accepted}), there remains a nearly monotonic increase in the number of active PGTs in an interface filling class with the number of accepted PGTs in a junction filling class. In contrast, the numbers of active PGTs in a junction or interface filling class relative to the number in the backbone filling class indicate a different relationship. The number of PGTs accepted to the backbone filling class inhibit acceptance of PGTs to both junction and interface filling classes.
To combat this effect an alternate implementation of Arqon may define a path partition which sub-divides the backbone cell of the partition $\Pi^B$ based on the number of local areas traversed by a path. 

\begin{figure*}
\centering
        \begin{subfigure}{\textwidth}
        \centering
        \includegraphics[width=0.9\textwidth, trim={0cm 0 0cm 0}]{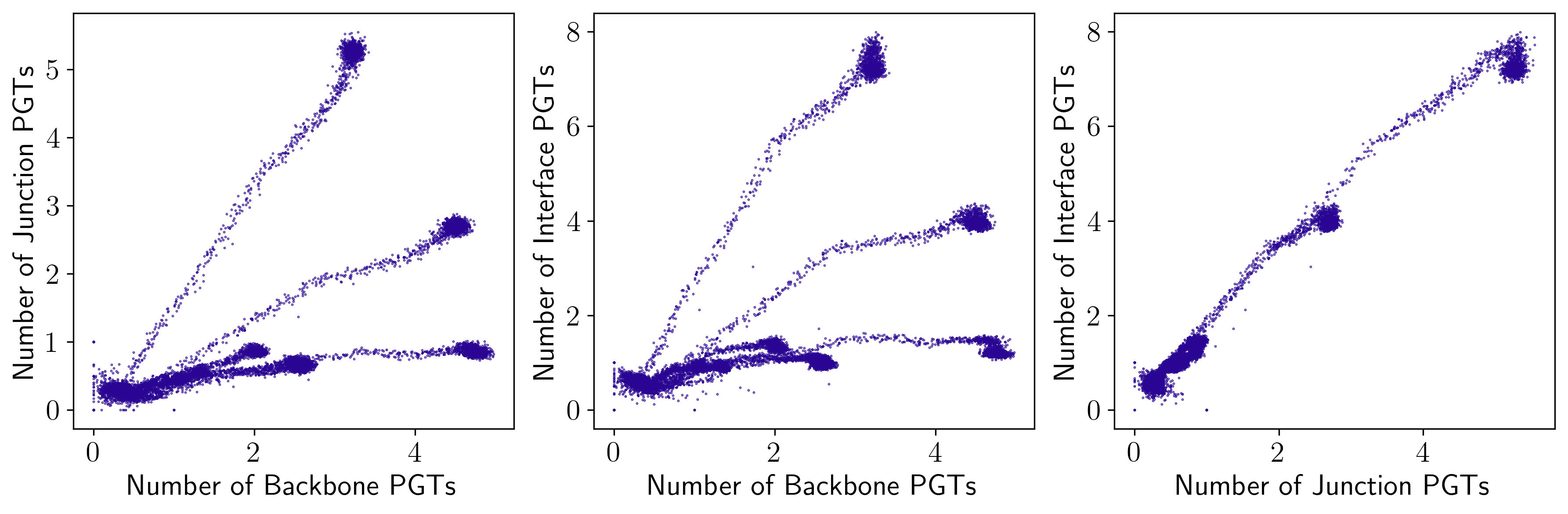}
        \caption{Number of Registered PGTs}
        \label{subfig: FC interplay submitted}
    \end{subfigure}
    \begin{subfigure}{\textwidth}
    \centering
    \includegraphics[width=0.9\textwidth, trim={0cm 0 0cm 0}]{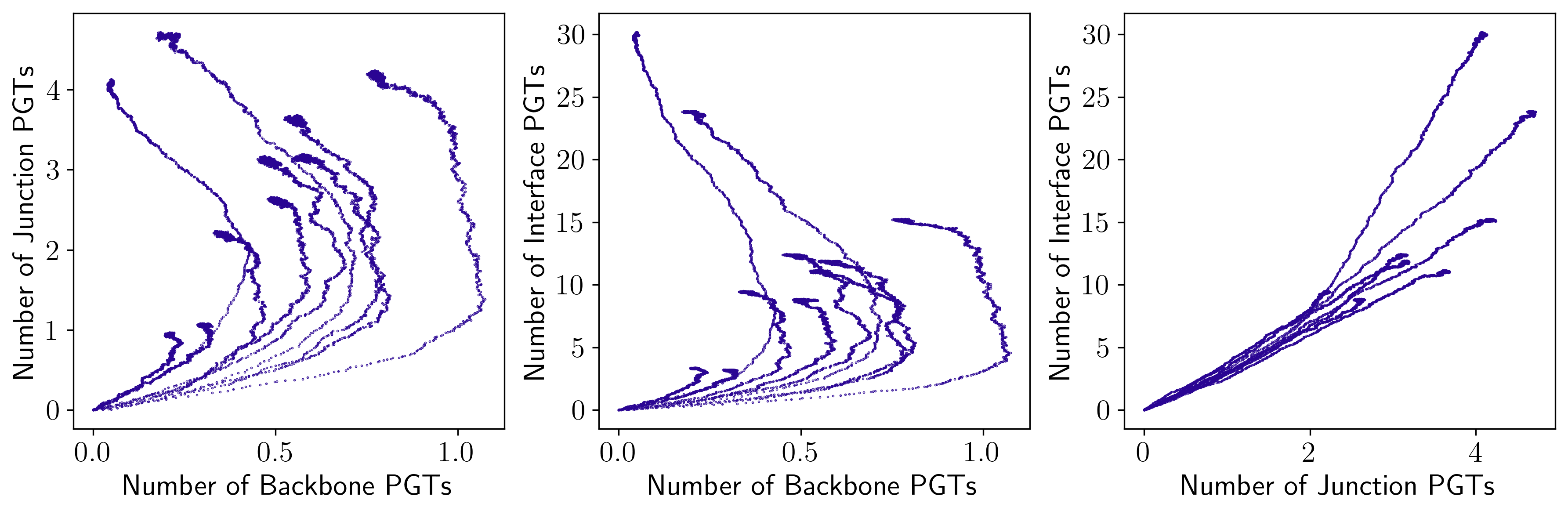}
    \caption{Number of Active PGTs}
    \label{subfig: FC interplay accepted}
    \end{subfigure}
    \caption{Profiling of PGTs registered by Demand Registration and active PGTs accepted by Admit Tasks based on their filling classes [backbone ($\Phi^B$), junction (${\Phi^J_j \ \forall j \in J}$), and interface (${\Phi^I_i \ \forall i \in I}$) filling classes]. There is one data point for each scheduling interval, for each network topology parameterization in Table~\ref{tab: random topology parameters}. Data is averaged over the ten random topologies generated from each parameterization, and ten randomly seeded simulations for each topology.} 
    \label{fig: profiling PGTs by FC}
\end{figure*}

\subsection{Numerical Complexity Analysis}
\label{subsec: Numerical Complexity Analysis}

In every simulation conducted, \arqon was able to compute the network schedule much faster than the cutoff time for schedule distribution ($t_{\text{compute}} \ll T^{SI}$). The average time to produce a network schedule for each of the random topologies simulated is recorded in Table~\ref{tab: random topos computation times} in Appendix~\ref{app: complexity proofs}. The maximum of the average times required to produce a network schedule was less than 0.25 seconds~($t_{\text{compute}}$), whereas the duration of the scheduling interval was 1800 seconds~($T^{SI}$). 

In Section~\ref{sec: Complexity} we proved Theorem~\ref{thm: overall complexity}, which states that the full program \textsc{NetworkScheduler} has operational complexity that is polynomial in the number of PGTs in the task intake buffer ($k=|\bm{\Gamma}|$) and the number of active PGTs ($N=|Z_{\Phi}|$), but exponential in the total number of internal resources in the network ($R=|\mathfrak{R}|$). 
We validate our complexity analysis with numerical results, demonstrating how operational complexity translates to execution time for various combinations of $k$, $N$, and $R$. 
First, we assess the computation time of \textsc{AdmitTasks} using randomly generated sets of active PGTs~($Z_{\Phi}$ and PGTs in the task intake buffer~($\bm{\Gamma}$). The methods for generating these random sets are included in our simulator~\cite{arqon-sim}. Second, we evaluate \textsc{ComputeSchedule} with a carefully designed set of PGTs to schedule~($Z^{*}_{\Phi}$) for which specific algorithm characteristics produce clear features in the computation time. 
This second evaluation complements our earlier simulations on random network topologies, where schedule computation times were consistently aligned with the complexity analysis.

\subsubsection{Admit Tasks}

\begin{figure*}
    \centering
    \includegraphics[width=0.9\textwidth]{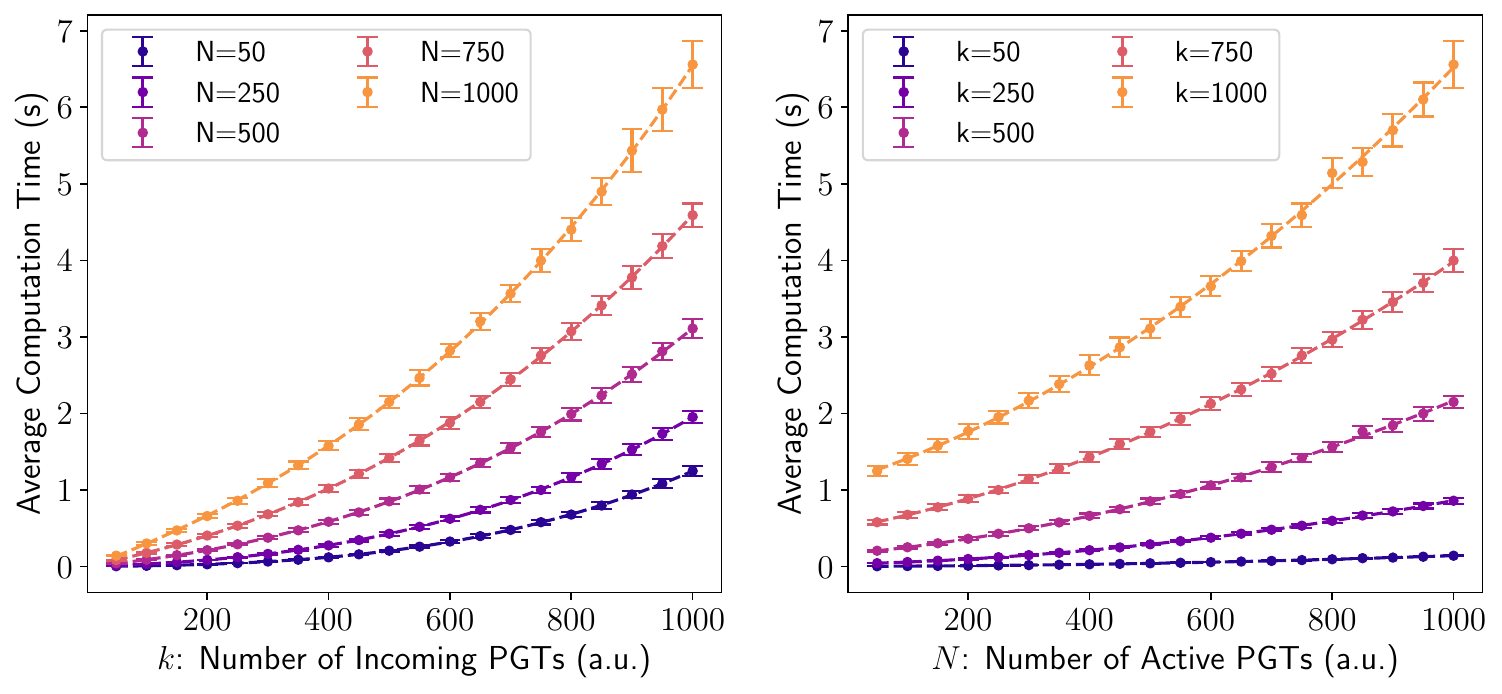}
    \caption{Average computation time of Admit Tasks for ${N=|Z_{\Phi}|}$ active PGTs and ${k=|\bm{\Gamma}|}$ incoming PGTs requiring an accept/reject decision from Admit Tasks.
    Each data point is the average of 100 randomly generated $Z_{\Phi}, \bm{\Gamma}$ pairs and error bars show one standard deviation. Data points are fitted with polynomials of $O(k^3)$ (left) and $O(N^2)$ (right).
    }
    \label{fig: admission control complexity}
\end{figure*}

The operational complexity of \textsc{AdmitTasks} is given by \eqref{eq: admit tasks complexity main text}. 
Figure~\ref{fig: admission control complexity} plots the computation time against ${k}$ for constant $N$ (left) and against $N$, for constant ${k}$ (right). 
Each data point is the average over one hundred randomly generated $Z_{\Phi} \text{ and } \bm{\Gamma}$ pairs, with error bars indicating one standard deviation. Polynomial fits of ${O(k^{3})}$ (left) and ${O(N^{2})}$ (right) confirm scaling consistent with~\eqref{eq: admit tasks complexity main text}, validating Theorem~\ref{thm: admit new tasks complexity}. 
Importantly, these results demonstrate that \textsc{AdmitTasks} is fast enough to process a large number of incoming PGTs ($k$) within a single scheduling interval, even when there are many active PGTs ($N$). For example, one thousand incoming PGTs ($k = 1000$) are processed in less than seven seconds when there are already one thousand active PGTs ($N=1000$).

\subsubsection{Compute Schedule}

\begin{figure*}
    \centering
    \includegraphics[width=0.98\textwidth]{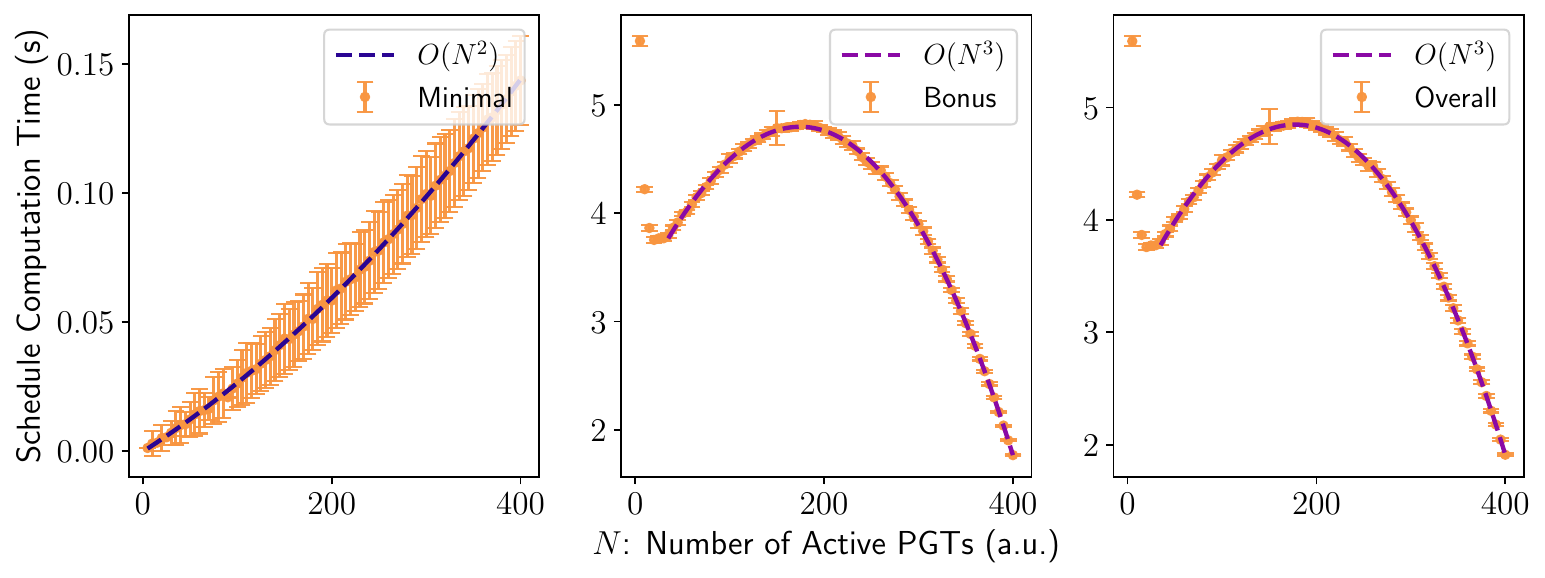}
    \caption{Average computation time of the minimal (left) and bonus (middle) allocation phases of Compute Schedule, and overall computation time (right), plotted against the number of active PGTs $N=|Z^{*}_{\Phi}|$. Each data point is the average of 50 repetitions of Compute Schedule for a set of active PGTs $Z^{*}_{\Phi}$. Error bars show one standard deviation. Polynomial fits to the data are based on least squares fitting.}
    \label{fig: scheduling complexity}
\end{figure*}

The operational complexity of \textsc{ComputeSchedule} is given by~\eqref{eq: compute schedule complexity main text}. 
The minimal allocation phase, implemented by Algorithm~\ref{alg: direct allocation subroutine}, contributes $O(NR) + O(N^2)$ and contains no branching. The total number of operations depends on $N$, but not on particular PGT characteristics. The resulting network schedule may contain gaps during which no PGAs are scheduled, as illustrated in Figure~\ref{fig: example network schedule}.
The bonus allocation phase, implemented by Algorithm~\ref{alg: bonus round robin resource schedule} attempts to fill these gaps. 
It contributes $O(N^3 R)$, which dominates the overall complexity. Its branching behavior depends non-trivially on the number and duration of gaps, the execution times ${E_{\gamma} \ (\forall \gamma \in Z_{\Phi}})$ of PGAs that may be added, and the minimum separation ${\tminsep_{\gamma} \ (\forall \gamma \in Z_{\Phi}})$ between PGAs  of each PGT.

To illustrate this branching behavior, we consider a specific set of PGTs ($Z^{*}_{\Phi}$) with the following structure: \begin{itemize}
    \item the first PGT $\gamma_0$ has ${\tminsep_{\gamma_0} = 200}$ seconds, and all other PGTs ${\gamma_i, \ i>0}$ have ${\tminsep_{\gamma_i}=0}$;
    \item ${E_{\gamma} = 1 \text{ second}, \ \forall \gamma \in Z^{*}_{\Phi}}$;
    \item $\minimumallocation_{\gamma} = 20, \ \forall \gamma \in Z^{*}_{\Phi}$.
\end{itemize}
To study this scenario, the scheduling interval was set to eight thousand seconds ($T^{SI} = 8000$) and we repeated schedule computation for scenarios with ${|Z^{*}_{\Phi}|}$ ranging from five to four hundred, in steps of five.  

In Figure~\ref{fig: scheduling complexity} the schedule computation time, decomposed into the minimal allocation phase (left), the bonus allocation phase (middle) and the overall computation time (right), is plotted against $N$.
Data points are the average over fifty independent repetitions and error bars indicate one standard deviation.
The time to complete the minimal allocation phase (left) is increasing in $N$ and a polynomial fit validates that it has $O(N^2)$ complexity (left).
The time to complete the bonus allocation phase (middle) exhibits non-trivial variation with $N$, yet all features are explained by our algorithmic analysis. 

Whenever there is a gap in the schedule, the \textsc{BonusRoundRobin} algorithm checks the next PGT in round robin order to determine if all the required resources are available ($O(R)$ operation). Checking $\gamma_0$ additionally requires both checking for both past and future minsep violations, since ${\tminsep_{\gamma_0} \neq 0}$. For small $N$, numerous gaps exist and the round robin index frequently revisits $\gamma_0$, resulting in the initial spike in computation time. 
The most time-intensive operation occurs when filling gaps between two already scheduled PGAs, which requires list insertion ($O(N)$ in our implementation). When no gaps exist but time remains in the scheduling interval, PGAs can simply be appended to the schedule ($O(1)$).
For ${N<200}$, gaps exists in the schedule output by the minimal allocation phase, and these are filled by list insertion. 
For ${{N} \geq 200}$, no gaps remain and new PGAs are only appended to the schedule by \textsc{RoundRobinBonus}. 
This transition produces the local maximum at ${N=200}$. 
As ${N}$ increases further, the number of PGAs that can be appended to the schedule decreases, until ${N=400}$ where the schedule output by the minimal allocation phase has no gaps and occupies the entire scheduling interval. A polynomial fit validates $O({N}^3)$ complexity for the bonus allocation phase. The overall complexity (right) confirms the result of Theorem~\ref{thm: compute NS complexity}, showing that the schedule computation time is dominated by the bonus allocation phase and has $O(N^3)$ complexity.
See Appendix~\ref{app: complexity proofs} for full details of our complexity analysis.

\section{Conclusion and Outlook}\label{sec: Conclusion}
We have designed Arqon, a novel suite of control applications for a centrally controlled quantum network that establishes and satisfies well-defined service agreements with end nodes. 
These service agreements address realistic application execution requirements on programmable end nodes, so that satisfying a service agreement implies that user applications can succeed. 
In contrast to existing quantum network control architectures, Arqon includes an admission control mechanism which prioritizes service to already accepted demands over incoming demands, preventing service disruptions.

We extended the concept of reliable service from classical computer networks and demonstrated through analysis and numeric evaluations with a proof of principle implementation that Arqon provides reliable service to all accepted demands in static network topologies.
We designed Arqon to respond to changes in network topology or capabilities. 
In our proof of principle implementation we did not implement the mechanisms for handling these updates.
A future research direction is to create concrete implementations of these mechanisms and evaluate the robustness of Arqon against changes to topology and capabilities in a variety of parameter regimes.  

We developed fast algorithms for producing network schedules and executing a demand admission control process, and demonstrated numerically that each algorithm can process up to one thousand demands in less than 7 seconds. The core of these algorithms relies on directly calculating how a minimal allocation of PGAs for all active PGTs can be scheduled in each scheduling interval. In contrast, many quantum network architectures implement scheduling algorithms which rely on optimization, which is generally slower than direct calculation of a schedule.   

Arqon's admission control process, which  ensures arriving demands do not disrupt service to accepted demands, comes at the cost of rejecting some demands. 
In simulations based on our proof of principle implementation we observed that there is scope to increase the proportion of demands that Arqon accepts.
We proposed modifying the calculation which Arqon's admission control process uses to estimate load on the network. 
To further increase the resource utilization and proportion of demands that may be accepted, we proposed defining a path partition that differentiates between paths through the network that cross one or multiple long-distance backbones.

The performance of any quantum network architecture needs to be tested in a real world deployment. We have developed an implementation of Arqon in Python as a simulator. The next step is to create a real-time implementation of Arqon, integrate it with an SDN-controller, and use the integrated system to produce network schedules for the components of a test-bed quantum network. 
Such a demonstration would bring scalable quantum networking one step closer to reality.

\section{Acknowledgements}
SG, TRB and SW acknowledge funding from the Quantum Internet Alliance (QIA). 
QIA has received funding from the European Union’s Horizon Europe research and innovation programme under grant agreement No. 101102140.
SW also acknowledges funding from NWO VICI.

\small
\printbibliography

\normalsize
\appendix

\begin{center}
{\Large\scshape \textbf{Appendices}}
\end{center}

    \section{Notation}\label{app: notation}
    \normalsize
    The notation used throughout this paper is summarized in Table~\ref{tab: notation summary}.
    \begin{table*}
        \centering
    \begin{tabular}{c|l}
        \textit{\textbf{Symbol}} & \textit{\textbf{Definition}} \\\hline
        \multicolumn{2}{c}{\textbf{Applications}}\\\hline
         $\mathcal{S}$ & An application session\\
         $\mathcal{N}$ & A set of end nodes which hosts an application session \\
         $N^{\text{inst}}$ & A number of application instances required to complete an application session\\
         $\texpiry$ & The expiry time of an application session and demand\\
         \hline
         \multicolumn{2}{c}{\textbf{Demands}}\\\hline
         $\mathfrak{p} = (w, s, F)$ & A packet of $s$ end-to-end links to be generated within time window $w$, with minimum fidelity $F$ \\
         $\tminsep$ & A minimum required time separation between attempts to generate a packet \\
         $d = (d^{\text{raw}}, M_d)$ & The full demand consists of the raw demand $d^{\text{raw}}$ and demand metadata $M_d$ \\
         $\epsilon^{\text{service}}_{d}$ & The service error parameter for a demand $d$ \\ \hline
         \multicolumn{2}{c}{\textbf{Packet Generation Tasks}}\\\hline 
         $\gamma$ & A PGT \\
         $E_{\gamma}$ & The duration of PGAs for the PGT $\gamma$ \\
         $p^{\texttt{packet}}_{\gamma}$ & The probability a PGA of duration $E_{\gamma}$ successfully produces the required packet of links \\
         $t^{\texttt{start}}_{\gamma}$ & The earliest time that a PGA for a PGT $\gamma$ could be scheduled if it is accepted\\
         $\pga(\gamma, t; r)$ & A PGA for a PGT~$\gamma$ scheduled at time~$t$ on resource~$r$ \\ \hline   
         \multicolumn{2}{c}{\textbf{Network Topology and Components}}\\ \hline
         $\mathcal{G} = (\mathcal{V}, \mathcal{E})$ & A network resource graph with vertices $\mathcal{V}$ and edges $\mathcal{E}$ \\ 
         $E \subset{\mathcal{V}}$ & The set of end nodes \\
         $I\subset{\mathcal{V}}$ & The set of entanglement generation interfaces \\
         $J\subset{\mathcal{V}}$ & The set of junction nodes \\
         $B\subset{\mathcal{V}}$ & The set of long-distance backbones \\
         $\mathfrak{R} = I \sqcup J \sqcup B $ & Set of all internal resources in a network \\
         $R=|\mathfrak{R}|$ & Number of internal resources in a network \\
         $\mathcal{L}$ & The set of local areas of $\mathcal{G}$ \\
         $\mathcal{E}^{*}\subset \mathcal{E}$ & A set of allowed edges of a network resource graph $\mathcal{G}$ \\ \hline
         \multicolumn{2}{c}{\textbf{Paths}}\\ \hline
         $\pi = (\pi_0, \cdots, \pi_{k})$ & a path through the network resource graph from source node $\pi_0$ to destination node $\pi_{k}$ \\
         $\mathcal{P}_{\text{valid}}$ & The set of valid entanglement generation paths \\
         $\mathcal{P}_{\text{allowed}}\subset{\mathcal{P}_{\text{valid}}}$ & A set of allowed paths, further restricted from the set of valid paths \\
         $\Pi = \{ \Pi_{\phi}\}$ & A disjoint partition  of $\mathcal{P}_{\text{valid}}$ or $\mathcal{P}_{\text{allowed}}$ \\ \hline   
         \multicolumn{2}{c}{\textbf{Demand Manager}}\\ \hline
         $\tau$ & The demand termination buffer \\
         $\bm{\Gamma}$ & The PGT intake buffer \\
         $\bm{\Gamma}_d \subseteq \bm{\Gamma}$ & A set of PGTs in the task intake buffer which can realize a demand $d$ \\
         $k = |\bm{\Gamma}|$ & Number of PGTs in the task intake buffer \\ \hline  
         \multicolumn{2}{c}{\textbf{Network Scheduler}}\\ \hline
         $S$ & A network schedule \\
         $T^{SI}$ & The duration of a scheduling interval\\
        $t_{\text{compute}}$ & The computation time of the Network Schedule application \\
        $\Phi$ & The set of filing classes \\
        $\phi = (Z_{\phi}, \Pi_{\phi})$ & A filling class based on the cell of the path partition $\Pi_{\phi}$ so that $\pi_{\gamma} \in \Pi_{\phi} \ \forall \gamma \in Z_{\phi}$  \\
        $Z_{\Phi}$ & The set of all active PGTs \\
        $N = |Z_{\Phi}|$ & Number of active PGTs\\
        $\xi$ & Mapping from a set of paths to the associated internal resources \\
    \hline
    \end{tabular}
        \caption{Summary of the notation used throughout this paper. The notation $\bm{f}[i]$ indicates the $i^{th}$ entry of a vector, tuple, matrix or dictionary-like object $\bm{f}$.}
        \label{tab: notation summary}
    \end{table*}

\section{Network Scheduler Algorithms}\label{app: Network Scheduler Algorithms}
In this appendix we provide pseudocode algorithms for all processes of the Network Scheduler for which an algorithm is specified in the main text. These are as follows: \begin{itemize}
    \item[--] \textsc{UpdateFillingClasses}, Algorithm~\ref{alg: Update Filling Classes} for Update Filling Classes. 
    \item[--] \textsc{AdmitTasks}, Algorithm~\ref{alg: Admit Tasks} for Admit Tasks.
    \item[--] \textsc{DirectAllocation}, Algorithm~\ref{alg: direct allocation subroutine}, which implements the inner loop of the minimal allocation phase of Compute Schedule.
    \item[--] \textsc{BonusRoundRobin}, Algorithm~\ref{alg: bonus round robin resource schedule}, which implements the inner loop of the bonus allocation phase of Compute Schedule.
    \item[--] \textsc{nextStartTime}, \textsc{allResourcesAvailable} and \textsc{getLeft[Right]MinsepViolations} are all in Algorithm~\ref{alg: auxiliary algoroithms used in bonus round robin}. These functions are used by the \textsc{BonusRoundRobin} algorithm to determine whether additional PGAs can be added to a network schedule. 
\end{itemize}

\begin{algorithm}
    \SetKwInOut{Input}{Input}
    \SetKwInOut{Output}{Output}
    \SetKw{Return}{return}
    \SetKw{Continue}{continue}
    \SetKw{Pass}{pass}
    \SetKw{Break}{break}
    \SetKwProg{Fn}{Function}{:}{end}
    \Fn{\textsc{UpdateFillingClasses}}{

    \Input{Set of filling classes $\Psi = \{Z_{\psi}, \Pi_{\psi}\}_{\psi}$, Path partition $\Pi$, Set of terminated PGTs $\tau$.}
    \Output{Updated set of filling classes $\Phi = \{Z_{\phi}, \Pi_{\phi} \}_{\phi} $, Mapping of associated resources $\xi$.}
    
    Set $\Gamma^{*} \leftarrow \emptyset$ \;
    Set $\Phi, \xi \leftarrow \textsc{BuildFillingClasses}(Z=\emptyset, \Pi)$ \;
    \For{$\psi \in \Psi$}{
        \For{$\gamma \in Z_{\psi}$}{
            \If{$ \gamma \in \tau$ \emph{\textbf{ or }} $    \texpiry_{\gamma} \leq \emph{\texttt{CurrentTime}}$}{
             \Continue \;
            }
            \uIf{$\exists \phi \text{ s.t. } \pi_{\gamma} \in \Pi_{\phi}$}{
                Set $\Phi \leftarrow \textsc{AssignFillingClass}(\gamma, \Phi)$\;
                }
            \Else{
            Set $\bm{\Gamma}^{*} \leftarrow \bm{\Gamma}^{*}\cup \gamma$ \;
            }
        }
    } 
    \For{$\gamma \in \bm{\Gamma}^{*}$}{
        Set $\Phi \leftarrow \textsc{ApplyRuleMissingPath}(\gamma, \Phi)$ \;
    }
    
}
\Return $\Phi, \xi$\;

\caption{Algorithm for Update Filling Classes. The functions \textsc{BuildFillingClasses}, \textsc{AssignFillingClass}, and \textsc{ApplyRuleMissingPath} may be re-defined by an implementation, or they may follow the default implementation of Section~\ref{sec: Implementation}.}
\label{alg: Update Filling Classes}
\end{algorithm}

\begin{algorithm*}
    \SetKwInOut{Input}{Input}
    \SetKwInOut{Output}{Output}
    \SetKwProg{try}{try}{:}{}
    \SetKwProg{except}{except}{:}{end}
    \SetKw{Return}{return}
    \SetKw{Continue}{continue}
    \SetKw{Break}{break}
    \SetKwProg{Fn}{Function}{:}{end}
    \Fn{\textsc{AdmitTasks}}{
    \Input{Task Intake Object $\bm{\Gamma}$, Set of filling classes $\Phi = \{Z_{\phi}, \Pi_{\phi}\}_{\phi}$, Set of live network resources $\mathfrak{R}$.}
    \Output{Set of Filling classes $\Phi$, List of accepted PGTs $\bm{A}$.}
        Set $\timeresourcer \leftarrow T^{SI}$ for $r \in \mathfrak{R}$ \;
        \For{$\phi \in \Phi$}{
            $R(Z_{\phi}) = \textsc{CalculateRequiredTime}(Z_{\phi})$ \;
        \For{$r \in \xi(\Pi_{\phi})$}{
        Set $\timeresourcer \leftarrow \timeresourcer - R(Z_{\phi})$ \;
        
        }
    }
    \While{$\bm{\Gamma} \neq \cancel{0}$}{
        Set $\Gamma_d \leftarrow \bm{\Gamma}[0]$ \;
        \While{$\Gamma_d \neq \cancel{0}$}{
            Set $\gamma \leftarrow \Gamma_d[0]$ \; 
            \try{}{
                Set $\phi \leftarrow \textsc{GetFillingClass}(\gamma)$ \;
                }
            \except{\emph{\texttt{FillingClassError}}}{
                Set $\Gamma_d \leftarrow \Gamma_d \setminus \gamma$ \;
                \Continue \;
                }
            Set  $\tilde{Z}_{\phi} \leftarrow Z_{\phi} + \gamma$\;
            Set $\texttt{Accept} \leftarrow \text{True}$ \;
            Set $\timeresourcer \leftarrow \timeresourcer + R(Z_{\phi})$ for $r \in \xi(\Pi_{\phi})$ \;
            Set $R(\tilde{Z_{\phi}}) \leftarrow \textsc{CalculateRequiredTime}(\tilde{Z_{\phi}})$ \;
            \For{$r \in \xi(\Pi_{\phi})$}{
                \If{$\timeresourcer - R(\tilde{Z_{\phi}}) < 0$}{
                    Set $\texttt{Accept} \leftarrow \text{False}$ \;
                    Set $\Gamma_d \leftarrow \Gamma_d \setminus \gamma$ \;
                    \If{$\Gamma_d = \cancel{0}$ }{
                        Set $\bm{\Gamma} \leftarrow \bm{\Gamma} \setminus \Gamma_d$ \;
                    }
                    \Break \;
                    }
                }
            \If{\emph{\texttt{Accept}}}{
                Set $\bm{A} \leftarrow \bm{A} + \gamma$ \;
                Set $Z_{\phi} \leftarrow \tilde{Z_{\phi}}$ \;
                Set $R(Z_{\phi}) \leftarrow R(\tilde{Z_{\phi}})$ \;
                Set $\timeresourcer \leftarrow \timeresourcer - R(Z_{\phi})$ for $r \in \xi(\Pi_{\phi})$ \;
                Set $\bm{\Gamma} \leftarrow \bm{\Gamma} \setminus \Gamma_d$ \;
                \Break \;
            }
        }
    }
\Return $\Phi$, $\bm{A}$}
    \caption[Admission Control]{The Admit Tasks process of the Network Scheduler is implemented by the function \textsc{AdmitTasks}. The helper function \textsc{GetFillingClass} returns the filling class of a task if it exists, or else it raises a \texttt{FillingClassError.} The function
    \textsc{CalculateRequiredTime} is defined in Algorithm \ref{alg: Calculate Required Time}.}\label{alg: Admit Tasks}
\end{algorithm*}

\begin{algorithm}
    \SetKwInOut{Input}{Input}
    \SetKwInOut{Output}{Output}
    \SetKw{Return}{return}
    \SetKw{Continue}{continue}
    \SetKw{Break}{break}
    \SetKwProg{Fn}{Function}{:}{end}

    \Fn{\textsc{DirectAllocation}}{
    \Input{Filling class $\fillclass = \big{(} Z_{\phi}, \Pi_{\phi} \big{)}$ with $Z_{\phi} = (\gamma_0, ..., \gamma_{M-1})$, network schedule $S$, start time $t_0$.}
    \Output{Updated network schedule $S$.}

    Order PGTs $\gamma \in Z_{\phi}$ by a set of indices $X=\{1, \cdots, |Z_{\phi}|\}$ such that $\minimumallocation_x\leq \minimumallocation_{x+1}$ 

    Set $M\leftarrow|Z_{\fillclass}|$

    Set $c_x \leftarrow \max\left( \underset{y \in \{x, \cdots M\}}{\max} \big{(} E_y + \tminsep_y \big{)} , \sum_{y=x}^ME_y \right)$\;
    Set $n_x \leftarrow \minimumallocation_x - \minimumallocation_{x-1}$ for $x\geq 1$\;
    Set $n_0 \leftarrow\minimumallocation_0  -1 $\;

    Set $t_{\text{start}}\leftarrow t_0$\;

    \For{$m = 0, ..., M-1$}{
        \For{$k = 0, ..., n_m - 1$}{
            \For{$x = m, ..., M-1$}{
                Set $t_{\text{\text{offset}}} \leftarrow \sum_{y=m}^{x}E_y$\;
                \For{$r\in\pi_{\gamma_x}$}{
                    Add PGA $\pga(\gamma_x, t_{\text{start}} + k c_m + t_{\text{offset}}; r)$ to $S[r]$\;
                }
                
            }
        }
        \For{$r\in\pi_{\gamma_x}$}{
            Add PGA $\pga(\gamma_m, t_{\text{start}} + n_mc_m; r)$ to $S[r]$\;
        }
        
        Set $t_{\text{start}} \leftarrow t_{\text{start}} + n_mc_m+E_m$\;
        
    }

    }
    \caption[Scheduler for minimal allocation by direct allocation of start times.]{Scheduler for direct allocation of PGA start times for all PGTs in a single filling class $\phi$. $\minimumallocation_x$ is the number of PGAs, each with duration $E_x$, of PGT $\gamma_x$ that need to be scheduled. 
    }
    \label{alg: direct allocation subroutine}

\end{algorithm}

\begin{algorithm}[t]
    \SetKwInOut{Input}{Input}
    \SetKwInOut{Output}{Output}
    \SetKw{Return}{return}
    \SetKw{Continue}{continue}
    \SetKw{Break}{break}
    \SetKwProg{Fn}{Function}{:}{end}

    \Fn{\textsc{RoundRobinBonus}}{

    \Input{Filling class $\phi$, network schedule $S$.}
    \Output{Updated network schedule $S$.}

        Set $T^{\text{rel}}\leftarrow \{\}$\;
        Set $\test \leftarrow $ \textsc{nextStartTime}$(S, T^{\text{rel}}, 0)$\;
        Set $k\leftarrow 0$\;
        \While{$t^{\texttt{\emph{est}}} < T^{SI}$}{
          \For{$i\in\mathbb{Z}_{|\phi|,k}$}{
            Set $C_{\text{resource}}\leftarrow\textsc{allResourcesAvailable}(\gamma_i, \test; S)$\\
            Set $V^{MS}_L \leftarrow \textsc{getLeftMinsepViolations}(\gamma_i, \test; S)$\\
            Set $V^{MS}_R \leftarrow \textsc{getRightMinsepViolations}(\gamma_i, \test; S)$\\
            Set $C_{\text{minsep}} \leftarrow V^{MS}_L\cup V^{MS}_R = \emptyset$\;
            \uIf {$C_{\text{\emph{resource}}} \wedge C_{\text{\emph{minsep}}} \wedge t^{\texttt{\emph{est}}}+E_{\gamma_i} < T^{SI}$}{
              \For{$r\in\pi_{\gamma_i}$}{
                Add PGA $\pga(\gamma_i, \test; r)$ to $S_r$\;
              }
              Set $k\leftarrow i$\;
            }\ElseIf{$V^{MS}_L\neq \emptyset$}{
              Append $\max_{\pga\in V^{MS}_L}(t^{\texttt{end}}_{\pga^*} + \tminsep_{\gamma_i})$ to $T^{\text{rel}}$
            }
          }
          Set $\test\leftarrow\textsc{nextStartTime}(S, T^{\text{rel}},\test)$\;
        }
    }

\caption{Algorithm for the inner loop of the bonus allocation phase of Compute Schedule.
${\mathbb{Z}_{|\phi|, k} = \{ k, k+1, ...,|\phi|-1, 0, 1, ..., k-1) \} }$. $\pga(\gamma, t; r)$ is a PGA for a PGT~$\gamma$ scheduled at time~$t$ on resource~$r$. 
The subroutines \textsc{nextStartTime}, \textsc{allResourcesAvailable} and \textsc{getLeft[Right]MinsepViolations} are specified in Algorithm~\ref{alg: auxiliary algoroithms used in bonus round robin}.
}
\label{alg: bonus round robin resource schedule}
\end{algorithm}

\begin{algorithm*}[!h]
    \SetKwInOut{Input}{Input}
    \SetKwInOut{Output}{Output}
    \SetKw{Return}{return}
    \SetKw{Continue}{continue}
    \SetKw{Break}{break}
    \SetKwProg{Fn}{Function}{:}{end}

    \Fn{\textsc{nextStartTime}}{
        \Input{Network Schedule $S$, list of release times $\hat{T}^{\text{rel}}$, current time $t$}
        \Output{next possible start time $\test$, updated set of release times $T^{\text{rel}}$}

        Set $S' = \bigcup_{r\in\mathfrak{R}}\{\pga\in S_r : t^{\texttt{end}}_\pga > t\}$\;

        \If{$S'=\emptyset \wedge \hat{T}^{\text{\emph{rel}}} = \emptyset$}{
            \Return $\infty$
        }
    
        Set $t_s = \min\{t^{\texttt{end}}_\pga : \pga\in S'\}$\;
        Set $t_r = \min \hat{T}^{\text{rel}}$\;
        \If{$t_r\leq t_s$}{
            Set $T^{\text{rel}} \leftarrow \hat{T}^{\text{rel}} \setminus t_r$\;
        }
        \Return $\min\{t_s, t_r\}, T^{\text{rel}}$\;
    }

    \Fn{\textsc{allResourcesAvailable}}{
        \Input{PGT $\gamma$, time $t$, network schedule $S$}
        \Output{boolean}

        \For{$r\in\pi_\gamma\cap\mathfrak{R}$}{
            \If{$r$ is not available in $[t, t+E_{\gamma})$}{\Return \textit{False}}\;
        }
        \Return \textit{True}\;
    }

    \Fn{\textsc{getLeftMinsepViolations}}{
        \Input{PGT $\gamma$, time $t$, network schedule $S$}
        \Output{set of conflicting PGAs, $V^{MS}_L$}
        Set $V^{MS}_L \leftarrow \emptyset$\;

        \For{$r\in\pi_\gamma\cap\mathfrak{R}$}{
            Set $S'\leftarrow \{\delta\in S_r : [t^\texttt{start}_\delta,t^\texttt{end}_\delta)\cap(t - \tminsep_\gamma, t] \neq \emptyset\}$\;
            Set $V^{MS}_L\leftarrow V^{MS}_L\cup\{\delta\in S' : \textsc{demandID}(\delta) = \textsc{demandID}(\gamma)\}$\;
        }

        \Return $V^{MS}_L$\;
    }
    \Fn{\textsc{getRightMinsepViolations}}{
        \Input{PGT $\gamma$, time $t$, network schedule $S$}
        \Output{set of conflicting PGAs, $V^{MS}_R$}
        Set $V^{MS}_L \leftarrow \emptyset$\;

        \For{$r\in\pi_\gamma\cap\mathfrak{R}$}{
            Set $S'\leftarrow \{\delta\in S_r : [t^\texttt{start}_\delta,t^\texttt{end}_\delta)\cap(t, t + \tminsep_\gamma) \neq \emptyset\}$\;
            Set $V^{MS}_R\leftarrow V^{MS}_R\cup\{\delta\in S' : \textsc{demandID}(\delta) = \textsc{demandID}(\gamma)\}$\;
        }

        \Return $V^{MS}_R$\;
    }
    
\caption{Auxiliary algorithms used in \textsc{BonusRoundRobin}, Algorithm~\ref{alg: bonus round robin resource schedule}.}
\label{alg: auxiliary algoroithms used in bonus round robin}
\end{algorithm*}
\clearpage
\section{Performance Analysis}\label{app: performance analysis proofs}
The main goal of this Appendix is to prove Theorem \ref{thm: min alloc guaranteed}, which is a statement about the performance of the Network Scheduler control application. To prove this theorem, we first develop a series of intermediate results which support the final analysis. 
This section is organized as follows: \begin{enumerate}
    \item We define a method of comparing filling classes and a \textit{well-behaved} property for an abstract set of filling classes. Subsequent results for the performance of the Admit Tasks and Compute Schedule processes require an abstract set of filling classes to have this property. 
    \item We define sequentially valid schedules, define conditions under which a schedule for a single filling class may be guaranteed to be sequentially valid, and we calculate the required duration of sequentially valid schedules for a single filling class. 
    \item We define valid schedules and prove that the compute schedule process produces valid schedules.
    \item We prove that the admission control process is sound, meaning that newly accepted demands do not disrupt existing service agreements \ref{consideration: AC prevents overload}.
    \item We prove Theorem \ref{thm: min alloc guaranteed}.
\end{enumerate}


\subsection{Filling Classes}\label{ssapp: filling classes results}
In this section we first prove that each PGT is only in a single filling class, as long as the path partition $\Pi$ which defines $\Phi$ is disjoint. This result is used later in the proof of \ref{prop: timeReqSingleFC}.
We then build up towards defining a well-behaved property for an abstract set of filling classes, required by Theorem~\ref{thm: min alloc guaranteed}. As a pre-curser, we define a method of comparing filling classes, $\leq_{\xi}$. 
Finally in this section we prove that a well-behaved set of filling classes in an internally connected network has a unique greatest element under the ordering $\leq_{\xi}$. This result is used later in the proof of Theorem~\ref{thm: good accounting applies to partially ordered FCs}.


\begin{lemma}\label{lemma: one fill class per PGT}
Let $\Pi = \{\Pi_{\phi} \}$ be a disjoint path partition and $\Phi$ the set of filling classes defined by $\Pi$.
Let $Z$ be a set of PGTs such that $\forall \gamma \in Z, \ \exists \phi \in \Phi \text{ s.t. } \pi_{\gamma} \in \Pi_{\phi}$. Then, $\{Z_\phi\}$ is a disjoint partition of $Z$.
\end{lemma}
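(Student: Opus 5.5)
The proof checks the two defining properties of a disjoint partition of $Z$: the cells $Z_\phi$ together cover $Z$, and distinct cells share no element. We may also note that each $Z_\phi \subseteq Z$ by Definition~\ref{def: filling classes}, since $Z_\phi = \{\gamma\in Z : \pi_\gamma\in\Pi_\phi\}$. Whether empty cells are allowed is a matter of convention. The Remark following Definition~\ref{def: filling classes} explicitly permits empty filling classes, so we would read ``partition'' as allowing empty cells, or else restrict to the non-empty $Z_\phi$; either reading leaves the argument unchanged.

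\emph{Covering.} Take any $\gamma \in Z$. By hypothesis there is some $\phi\in\Phi$ with $\pi_\gamma\in\Pi_\phi$. By the defining condition of $Z_\phi$ this means $\gamma\in Z_\phi$. Hence $Z \subseteq \bigcup_{\phi} Z_\phi$, and combined with $Z_\phi\subseteq Z$ we get $\bigcup_\phi Z_\phi = Z$.

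\emph{Disjointness.} Suppose $\gamma \in Z_\phi \cap Z_{\phi'}$. Then $\pi_\gamma \in \Pi_\phi \cap \Pi_{\phi'}$. Each PGT has a single fixed path $\pi_\gamma$, per Definition~\ref{def: PGT}. Since $\Pi$ is a disjoint partition of the path set, $\Pi_\phi\cap\Pi_{\phi'}=\emptyset$ whenever $\phi\neq\phi'$, so we must have $\Pi_\phi=\Pi_{\phi'}$. The filling classes are in bijection with the cells of $\Pi$, because each $\phi$ is defined as the pair $(Z_\phi,\Pi_\phi)$ and $Z_\phi$ is determined by $\Pi_\phi$. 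Therefore $\phi=\phi'$, and distinct filling classes have disjoint PGT sets.

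The only step needing care is this last identification of filling classes with cells of $\Pi$. It rules out two distinct elements of $\Phi$ sharing the same path cell, and we would justify it directly from how $\Phi$ is constructed in Definition~\ref{def: filling classes}. There is no substantial obstacle beyond that.
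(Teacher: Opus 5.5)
Your proof is correct and follows essentially the same route as the paper. Covering comes directly from the hypothesis, and disjointness holds because a PGT lying in two filling classes would put $\pi_\gamma$ in two cells of the disjoint partition $\Pi$. Your extra remarks, identifying filling classes with cells of $\Pi$ and handling empty cells, make explicit points the paper leaves implicit, but they do not change the argument.
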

\begin{proof}
 (Disjointedness) Suppose $\exists\gamma\in Z $ such that $\gamma\in Z_\phi$ and $\gamma\in Z_\psi$.
    This implies that $\pi_\gamma\in\Pi_\phi$ and $\pi_\gamma\in\Pi_\psi$, which implies $\Pi_\phi\cap\Pi_\psi \neq\emptyset$. 
    This contradicts that $\Pi$ is a disjoint partition and so no such $\gamma$ may exist. 
    \bigskip
    \newline (Completeness) By the hypothesis, ${\forall \gamma \in Z, \ \exists \phi \in \Phi}$ such that ${\pi_{\gamma} \in \Pi_{\phi}}$. Hence every $\gamma \in Z$ is a member of some $Z_{\phi}$.
\end{proof}





\begin{definition}[Comparison of filling classes]\label{def: comparison of filling classes}
    We define a partial order on $\Phi$ by $\phi>_\xi\psi \Leftrightarrow \xi(\Pi_{\psi})\subset\xi(\Pi_{\phi})$, with equality iff $\xi(\Pi_{\psi}) = \xi(\Pi_{\phi})$.
\end{definition}

As the ordering $\geq_\xi$ is only a partial order, there may exist non-comparable elements of $\Phi$ using $\geq_\xi$.
In particular, if $\xi(\Pi_{\psi})\cap\xi(\Pi_{\phi})=\emptyset$ then $\psi$ and $\phi$ do not have a determinate order under $\geq_\xi$. 

Recall that the network resource graph $\mathcal{G}$, from which the set of valid paths $\mathcal{P}_{\text{valid}}$ and the path partition $\Pi$ of $\mathcal{P}_{\text{valid}}$ are derived, may have the property of being internally connected. 
In that case we write that the path partition $\Pi$ describes an internally connected network. 

\begin{restatable}[Well-behaved set of filling classes]{definition}{WellBehavedFCs}\label{def: well-behaved set of filling classes}
    Let $\Phi =\{\phi_0, \phi_1, ..., \phi_{n-1}\}$ be a set of filling classes defined by the path partition $\Pi$. Let $\xi$ be the mapping of between a set of paths and their associated resources.
    We say that $\Phi$ is \emph{\textbf{well-behaved}} if the following conditions are met: 
    \begin{enumerate}
        \item $\forall i\neq j$, $\xi(\Pi_{\phi_i})\neq\xi(\Pi_{\phi_j})$
        \item $\forall i\neq j$, $\xi(\Pi_{\phi_i})\cap\xi(\Pi_{\phi_j})\neq\emptyset \implies \xi(\Pi_{\phi_i})\subset\xi(\Pi_{\phi_j}) \ \vee \ \xi(\Pi_{\phi_j})\subset\xi(\Pi_{\phi_i})$.
    \end{enumerate}
\end{restatable}


With the ordering $\leq_{\xi}$, condition $2.$ may be re-stated as: 
\begin{enumerate}[label=2'.]
    \item $\forall i\neq j, \ \xi(\Pi_{\phi_i})\cap\xi(\Pi_{\phi_j})\neq\emptyset \implies \phi_{j} >_{\xi} \phi_{i} \ \vee \ \phi_{i} >_{\xi} \phi_{j}$.
\end{enumerate}

These conditions ensure that firstly no two filling classes can be associated with exactly the same set of resources and secondly that if there is any overlap of associated resources between filling classes then the filling classes are comparable. 

\begin{lemma}\label{lemma: unique greatest element of Phi}
    Let the path partition $\Pi$ describe an internally connected network.
    Let ${\Phi = \{\phi_{n-1}, \phi_{n-2}, \cdots, \phi_1, \phi_{0}\}}$ be a well-behaved set of filling classes defined by $\Pi$.
    Then under $\leq_\xi$ as in Definition~\ref{def: comparison of filling classes}, there is a unique greatest element of $\Phi$. 
\end{lemma}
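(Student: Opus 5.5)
The plan is to exploit the fact that, for a well-behaved $\Phi$, the family $\{\xi(\Pi_\phi)\}_{\phi\in\Phi}$ is a \emph{laminar} family of pairwise distinct sets. Any two members are either disjoint or strictly nested, by conditions 1 and 2 of Definition~\ref{def: well-behaved set of filling classes}. Internal connectivity will then be used to rule out two disjoint "top blocks".

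\textbf{Step 1 (existence of maximal elements).} Since $\Phi$ is finite and $\leq_\xi$ is a partial order (strict inclusion of associated resource sets, Definition~\ref{def: comparison of filling classes}), $\Phi$ has at least one maximal element.

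\textbf{Step 2 (distinct maximal elements are disjoint).} Suppose $\phi\neq\psi$ are both maximal. If $\xi(\Pi_\phi)\cap\xi(\Pi_\psi)\neq\emptyset$, condition 2' gives $\phi>_\xi\psi$ or $\psi>_\xi\phi$. Condition 1 excludes equality of the sets, so the inclusion is strict, which contradicts maximality of one of them. Hence the resource sets of distinct maximal elements are pairwise disjoint.

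\textbf{Step 3 (maximal blocks partition the used resources, and each path lies in one block).} By laminarity, every $\chi\in\Phi$ satisfies $\chi\leq_\xi\mu$ for exactly one maximal $\mu$. Uniqueness holds because two maximal classes above $\chi$ would both contain $\xi(\Pi_\chi)$, contradicting Step 2. Consequently the sets $\xi(\Pi_\mu)$, for $\mu$ maximal, partition the resources $\mathcal R$ used by paths in $\mathcal P$. Moreover, every path $\pi\in\mathcal P$ lies in a single cell $\Pi_\chi$, so all internal resources of $\pi$ lie in a single block $\xi(\Pi_\mu)$.

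\textbf{Step 4 (connectivity forces a single block).} Pick $r\in\xi(\Pi_\phi)$ and $r'\in\xi(\Pi_\psi)$. By Definition~\ref{def: internally connected} there is a path $r=\rho_0,\rho_1,\dots,\rho_m=r'$ through internal resources only. I will show that consecutive resources $\rho_i,\rho_{i+1}$ always lie in the same block. Chaining this along the path then puts $r$ and $r'$ in the same block, contradicting Step 2, so the maximal element is unique.

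To show that two adjacent internal resources share a block, it suffices to exhibit a single path of $\mathcal P$ containing both. The idea is to extend the edge $(\rho_i,\rho_{i+1})$ at each end to an end node, using internal paths from $\rho_i$ and from $\rho_{i+1}$ to end nodes, again provided by internal connectivity. One then trims any repeated vertices so that the result is a valid entanglement generation path (Definition~\ref{def:valid ent gen paths}) that still contains the edge.

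\textbf{Main obstacle.} I expect the hardest part to be this edge-extension step. Trimming repeated vertices may delete the edge itself. In that case I would fall back on a weaker but sufficient claim: there is a valid path containing $\rho_i$ and a valid path containing $\rho_{i+1}$ whose resource sets overlap. That already gives a common block, since blocks are disjoint and each path lies in one block. I would obtain such paths by connecting both resources to a common end node.

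When $\mathcal P=\mathcal P_{\text{allowed}}$ rather than $\mathcal P_{\text{valid}}$, an additional hypothesis is needed: every pair of end nodes is joined by an allowed path, or, more weakly, the allowed paths "cover" every internal edge used. I would state this explicitly as part of what "$\Pi$ describes an internally connected network" means.
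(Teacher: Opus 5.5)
\textbf{Steps 1--3.} These are correct and match the first half of the paper's proof. Two distinct maximal classes must be incomparable, hence (by both well-behavedness conditions) have disjoint associated resources.

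\textbf{The gap is Step 4.} This step carries the whole argument, and you leave it unproved. Chaining edge by edge along $\rho_0,\dots,\rho_m$ needs two facts you do not establish.
\begin{enumerate}
\item Every intermediate $\rho_i$ must lie on some path of $\mathcal{P}$, so that ``its block'' is defined. Internal connectivity alone does not put an internal vertex on a valid path: one with a single neighbour lies on none. So an argument is needed here.
\item Adjacent $\rho_i,\rho_{i+1}$ must lie on a common path of $\mathcal{P}$. As you observe, trimming repeated vertices can delete the edge, and your fallback does not rescue this. End nodes are not associated resources, and an end node may be adjacent to several EGIs. So two valid paths through a common end node can have disjoint internal vertex sets. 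Besides, a path from $\rho_i$ to a single end node is not an element of $\mathcal{P}$.
\end{enumerate}

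\textbf{How to close it.} The chaining is unnecessary. The paper uses internal connectivity once, to get a single path $\pi^*\in\mathcal{P}$ meeting both $\xi(\Pi_\phi)$ and $\xi(\Pi_\psi)$. The cell containing $\pi^*$ then meets both blocks. Comparing it with $\phi$ either places it strictly above $\phi$ or forces $\xi(\Pi_\phi)\cap\xi(\Pi_\psi)\neq\emptyset$. The paper asserts $\pi^*$ without constructing it, but a splice supplies it, and this is the missing idea for your Step 4.
\begin{itemize}
\item Pick $P_1\in\Pi_\phi$ with endpoints $e_1,e_2$ and $P_2\in\Pi_\psi$ with endpoints $e_3,e_4$, each having at least one internal vertex. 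This is possible because a maximal class among $n\geq 2$ cannot have empty associated resources, by condition 1. Their interiors are disjoint by your Step 2.
\item Let $Q=(x=q_0,\dots,q_m=y)$ be a shortest path through internal vertices from the interior of $P_1$ to the interior of $P_2$. It exists by internal connectivity, and by minimality $q_1,\dots,q_{m-1}$ avoid both interiors.
\item Follow $P_1$ from $e_1$ to $x$, then $Q$, then $P_2$ from $y$ to whichever of $e_3,e_4$ differs from $e_1$. This is a simple path between distinct end nodes whose interior is internal, i.e.\ an element of $\mathcal{P}_{\text{valid}}$.
\end{itemize}
This path contains $x\in\xi(\Pi_\phi)$ and $y\in\xi(\Pi_\psi)$, which contradicts your Step 3 directly, with no trimming needed.

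Your caveat about $\mathcal{P}_{\text{allowed}}$ is reasonable, since the spliced path need not be allowed. However, the paper introduces the phrase ``$\Pi$ describes an internally connected network'' for partitions of $\mathcal{P}_{\text{valid}}$, and there the splice suffices.
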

\begin{proof}

The existence of a path partition $\Pi$ with at least one cell in the partition defines a set of filling classes with at least one filling class. In the case that the partition has only a single cell $\Pi_0$, then there is a single filling class $\phi_0$, and $\phi_0$ is a unique greatest element of $\Phi$. 

Suppose therefore that the path partition has ${n>1}$ cells, and therefore defines a set of ${n>1}$ filling classes $\Phi$. For the sake of contradiction, suppose $\Phi$ has no unique greatest element.

Then there exists $i,j$ such that there is no $k$ satisfying $\phi_k >_{\xi} \phi_i$ and $\phi_k >_{\xi} \phi_j$. Such $\phi_i, \phi_j$ are greatest elements of $\Phi$.  
As $\phi$ is well-behaved, we must have that $\phi_i$ and $\phi_j$ are incomparable under $\geq_\xi$ and so $\xi(\Pi_{\phi_i})\cap\xi(\Pi_{\phi_j})=\emptyset$. 
    
    As $\Pi$ describes an internally connected network, there exists some path $\pi^*$ which passes through nodes in $\xi(\Pi_{\phi_i})$ and nodes in $\xi(\Pi_{\phi_j})$. 
    As $\Pi$ is a disjoint partition, there exists a unique $k^*$ such that $\pi^*\in\Pi_{\phi_{k^*}}$.
    As $\pi^*$ passes through nodes in $\xi(\Pi_{\phi_i})$, we must have that $\xi(\Pi_{\phi_{k^*}})\cap\xi(\Pi_{\phi_i})\neq \emptyset$. 
    As $\Phi$ is well-behaved, this implies that either $\xi(\Pi_{\phi_i})\subset\xi(\Pi_{\phi_{k^*}})$ or $\xi(\Pi_{\phi_{k^*}})\subset\xi(\Pi_{\phi_{i}})$. 
    If the former case holds, then $\phi_{k^{*}}>_\xi\phi_{i}$, contradicting that $\phi_i$ is a greatest element of $\Phi$. 
    If the latter case holds, then $\xi(\{\pi^*\})\subseteq\xi(\Pi_{\phi_{k^*}})\subset\xi(\Pi_{\phi_i})$.
    However, as $\xi(\{\pi^*\})\cap\xi(\Pi_{\phi_j})\neq\emptyset$, then $\xi(\Pi_{\phi_i})\cap\xi(\Pi_{\phi_j})\neq\emptyset$ contradicting that $\phi_i$ and $\phi_j$ are incomparable under $\geq_\xi$. 

    The same logic applies symmetrically to $\phi_j$, and therefore no such $i,j$ can exist and the result holds. 
\end{proof}

\subsection{Schedules for a Single Filling Class}
In the following, we first introduce a property called a \textit{minsep violation}. We then make use of this property to define a sequentially valid schedule for a single filling class. We then prove that a scheduling method that adheres to three simple conditions produces sequentially valid schedules for a single filling class. 

 \begin{definition}[Minsep violation]
    Let $S$ be a schedule for filling class $\phi$ with ${Z_{\phi} = \{\gamma_0, \cdots, \gamma_{M-1}\}}$ with start times $s_{x,i}$ and end times $e_{x,i}$ for the $i$th scheduled PGA of PGT $\gamma_x$, where ${x \in \{0, \cdots, M-1\}}$. 
    We say a \emph{\textbf{minsep violation}} occurs in $S$ if ${\exists x, i}$ such that ${s_{x,i+1} - e_{x,i}<\tminsep_{\gamma_x}}$.
\end{definition}

\begin{definition}[Sequentially valid]
    A schedule is said to be \textbf{sequentially valid} if it has the following two properties: \begin{enumerate}
    \item No two PGAs are scheduled simultaneously, and
    \item No minsep violations occur.
\end{enumerate}
\end{definition}

We now define two simple conditions on the start and end times of PGAs in a network schedule that are compatible with sequentially valid schedules.
\begin{definition}[First and Second Sequential Scheduling Conditions]
Let $S$ be a schedule for an arbitrary filling class $\phi = (Z_{\phi}, \Pi_{\phi})$, populated by PGTs $Z_{\phi} = \{\gamma_x\}_x$. Let $s_{x,i}$ and $e_{x,i}$ denote the start and end times, respectively, for the $i$th PGA from PGT $\gamma_x$. Let $E_{\gamma_x}$ denote the execution time of a PGA for PGT $\gamma_x$.
The \textbf{sequential scheduling conditions} are,
  \newline (Condition $c_1$): \begin{equation}\label{eq: sched condition c1}
        e_{x,i} = s_{x,i} + E_{\gamma_{x}}, \ \forall x \geq 0,
  \end{equation} and, 
  \newline (Condition $c_2$):\begin{equation}\label{eq: sched condition c2}
      s_{x,i} = e_{x-1, i}, \ \forall x\geq 1.
  \end{equation}
\end{definition}

\subsubsection{Calculating the required duration of sequentially valid schedules}
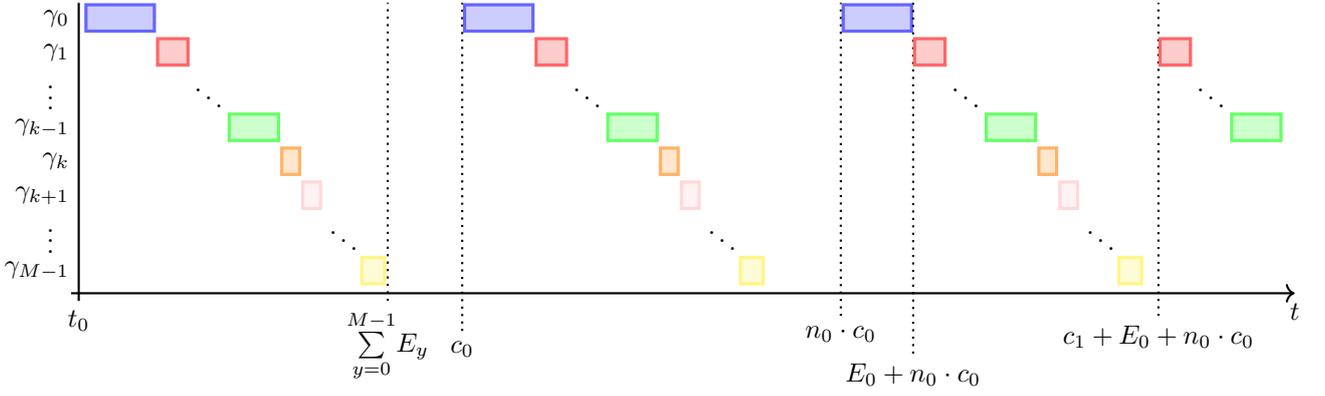
\begin{figure*}[t]
    \centering

\begin{tikzpicture}[node distance = 1cm,
        background rectangle/.style={fill=white},
        d0/.style={rectangle,
                        draw=blue!60, 
                        fill=blue!20, 
                        very thick, 
                        align=center, 
                        minimum width=0.9cm, 
                        minimum height = 0.35cm,
                        },
        d1/.style={rectangle,
                        draw=red!60, 
                        fill=red!20, 
                        very thick, 
                        align=center, 
                        minimum width=0.4cm, 
                        minimum height = 0.35cm,
                        },
        dk-1/.style={rectangle,
                        draw=green!60, 
                        fill=green!20, 
                        very thick, 
                        align=center, 
                        minimum width=0.65cm, 
                        minimum height = 0.35cm,
                        },
        dk/.style={rectangle,
                        draw=orange!60, 
                        fill=orange!20, 
                        very thick, 
                        align=center, 
                        minimum width=0.15cm, 
                        minimum height = 0.35cm,
                        },
        dk+1/.style={rectangle,
                        draw=pink!60, 
                        fill=pink!20, 
                        very thick, 
                        align=center, 
                        minimum width=0.2cm, 
                        minimum height = 0.35cm,
                        },
        dm/.style={rectangle,
                        draw=yellow!60, 
                        fill=yellow!20, 
                        very thick, 
                        align=center, 
                        minimum width=0.3cm, 
                        minimum height = 0.35cm,
                        },
    ]

    \coordinate (root) at (0,0);
                    
    \draw[thick, ->] (-0.1,0) -- (16,0) node [below] {$t$};
    \draw[thick] (0, -0.1) node [below] {$t_{0}$} -- (0, 3.85);

    \node (d0 anchor) at (-0.05,3.65) {};

    \node[d0] (d0-1) [right=0cm of d0 anchor] {};
    \node[d1] (d1-1) [right=0cm of d0-1, yshift=-0.45cm] {};
    \node[dk-1] (d2-1) [right=0.5cm of d1-1, yshift=-1.0cm] {};
    \node[dk] (dk-1) [right=0cm of d2-1, yshift=-0.45cm] {};
    \node[dk+1] (d3-1) [right=0cm of dk-1, yshift=-0.45cm] {};
    \node[dm] (dm-1) [right=0.5cm of d3-1, yshift=-1.0cm] {};

    \node (ddots1) at ($(d2-1.north west)!0.5!(d1-1.south east)$) {$\ddots$}; 
    \node (ddots2) at ($(d3-1.north west)!0.5!(dm-1.south east)$) {$\ddots$}; 

    \coordinate (d0label) at ($(d0-1.east -| root)$);
    \coordinate (d1label) at ($(d1-1.east -| root)$);
    \coordinate (d2label) at ($(d2-1.east -| root)$);
    \coordinate (dklabel) at ($(dk-1.east -| root)$);
    \coordinate (d3label) at ($(d3-1.east -| root)$);
    \coordinate (dmlabel) at ($(dm-1.east -| root)$);
    \coordinate (dots1label) at ($(ddots1.east -| root)$);
    \coordinate (dots2label) at ($(ddots2.east -| root)$);

    \node [left=0cm of d0label] {$\gamma_0$};
    \node [left=0cm of d1label] {$\gamma_1$};
    \node [left=0cm of d2label] {$\gamma_{k-1}$};
    \node [left=0cm of dklabel] {$\gamma_{k}$};
    \node [left=0cm of d3label] {$\gamma_{k+1}$};
    \node [left=0cm of dmlabel] {$\gamma_{M-1}$};
    \node [left=0.2cm of dots1label] {$\vdots$};
    \node [left=0.2cm of dots2label] {$\vdots$};

    \coordinate (t0 mark) at ($(root -| dm-1.east)$);

    \draw[thick, dotted] ($(t0 mark) + (0.02,-0.1)$) node [below] {$\overset{M-1}{\underset{y=0}{\sum}} E_y$} -- ($(t0 mark) + (0.02, 3.85)$);

    \draw[thick, dotted] ($(t0 mark) + (1.0,-0.5)$) node [below] {$ c_0$} -- ($(t0 mark) + (1.0, 3.85)$);

    \coordinate (d0 anchor) at ($(dm-1.north east |- d0-1.east)$);
    \node[d0] (d0-1) [right=1.01cm of d0 anchor] {};
    \node[d1] (d1-1) [right=0cm of d0-1, yshift=-0.45cm] {};
    \node[dk-1] (d2-1) [right=0.5cm of d1-1, yshift=-1.0cm] {};
    \node[dk] (dk-1) [right=0cm of d2-1, yshift=-0.45cm] {};
    \node[dk+1] (d3-1) [right=0cm of dk-1, yshift=-0.45cm] {};
    \node[dm] (dm-1) [right=0.5cm of d3-1, yshift=-1.0cm] {};

    \node at ($(d2-1.north west)!0.5!(d1-1.south east)$) {$\ddots$}; 
    \node at ($(d3-1.north west)!0.5!(dm-1.south east)$) {$\ddots$};

    \coordinate (t1 mark) at ($(root -| dm-1.east)$);


    \draw[thick, dotted] ($(t1 mark) + (1.0,-0.3)$) node [below] {$n_0 \cdot c_0$} -- ($(t1 mark) + (1.0, 3.85)$);

    \coordinate (d0 anchor) at ($(dm-1.north east |- d0-1.east)$);
    \node[d0] (d0-1) [right=1.01cm of d0 anchor] {};

    \coordinate (t2 mark) at ($(root -| d0-1.east)$);
    \draw[thick, dotted] ($(t2 mark) + (0.0,-0.8)$) node [below] {$E_0 + n_0 \cdot c_0$} -- ($(t2 mark) + (0.0, 3.85)$);

    \coordinate (d1 anchor) at ($(dm-1.north east |- d1-1.east)$);
    \node[d1] (d1-1) [right=1.95cm of d1 anchor] {};
    \node[dk-1] (d2-1) [right=0.5cm of d1-1, yshift=-1.0cm] {};
    \node[dk] (dk-1) [right=0cm of d2-1, yshift=-0.45cm] {};
    \node[dk+1] (d3-1) [right=0cm of dk-1, yshift=-0.45cm] {};
    \node[dm] (dm-1) [right=0.5cm of d3-1, yshift=-1.0cm] {};

    \node at ($(d2-1.north west)!0.5!(d1-1.south east)$) {$\ddots$};
    \node at ($(d3-1.north west)!0.5!(dm-1.south east)$) {$\ddots$};

    \coordinate (t3 mark) at ($(root -| dm-1.east)$);
    \draw[thick, dotted] ($(t3 mark) + (0.2,-0.3)$) node [below] {$c_1 + E_0 + n_0 \cdot c_0 $} -- ($(t3 mark) + (0.2, 3.85)$);

    \coordinate (d1 anchor) at ($(dm-1.north east |- d1-1.east)$);
    \node[d1] (d1-1) [right=0.2cm of d1 anchor] {};
    \node[dk-1] (d2-1) [right=0.5cm of d1-1, yshift=-1.0cm] {};

    \node at ($(d2-1.north west)!0.5!(d1-1.south east)$) {$\ddots$};

\end{tikzpicture}

    \caption{Example \textsc{DirectAllocation} schedule for a filling class $\phi$ populated by PGTs ${Z_{\phi} = \{\gamma_0, \gamma_1, \cdots, \gamma_{M-1} \}}$. In this example, ${\minimumallocation_0 =3}$, hence task $\gamma_0$ drops out after ${n_0=2}$ executions of cycle~$c_0$. Then, one final PGA for $\gamma_0$ is scheduled. All other PGTs $\gamma_x$ in the example satisfy
    ${\minimumallocation_x > 4}$, hence they occur in at least two executions of cycle~$c_1$. Here also it is illustrated that in cycle~$c_0$, task $\gamma_0$ is the task which satisfies the maximum in \eqref{eq:cycleDuration}, hence the amount of idle time inserted in $c_1$ differs from that inserted in $c_0$.}
    \label{fig: direct allocation schedule}
\end{figure*}

\begin{lemma}\label{lemma: direct allocation schedule properties}
  Let ${\phi = (Z_{\phi}, \Pi_{\phi})}$ with ${Z_{\phi} = \{\gamma_0,...,\gamma_{M-1}\}}$ be a filling class such that ${\forall \gamma\in Z_{\phi}}$, the minimum allocation satisfies ${\minimumallocation_\gamma = \overline{N}}$. 
  Let $S$ be a schedule for $\phi$ satisfying the sequential scheduling conditions, \eqref{eq: sched condition c1} and \eqref{eq: sched condition c2}. 
  The schedule $S$ is sequentially valid if and only if ${\forall i = 0, ..., \overline{N}-2}$: 
  \begin{equation}\label{eq: condition start times PGT 0}
    s_{0, i+1} - s_{0, i} \geq \max\left(\max_{\gamma\in Z_{\phi}}\left(E_\gamma+\tminsep_\gamma\right), \sum_{\gamma\in Z_{\phi}}E_\gamma\right).
  \end{equation}
\end{lemma}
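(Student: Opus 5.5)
Under conditions $c_1$ and $c_2$ the whole schedule is determined by the start times $s_{0,i}$ of the PGAs of $\gamma_0$. Unrolling \eqref{eq: sched condition c1} and \eqref{eq: sched condition c2} gives, for every $x$ and $i$,
\begin{equation*}
s_{x,i} = s_{0,i} + \sum_{y=0}^{x-1} E_y, \qquad e_{x,i} = s_{0,i} + \sum_{y=0}^{x} E_y .
\end{equation*}
Round $i$ therefore occupies the interval $[s_{0,i}, s_{0,i}+\sum_{y=0}^{M-1}E_y)$, and within a round the PGAs of $\gamma_0,\dots,\gamma_{M-1}$ are back to back. Let $\Delta_i = s_{0,i+1}-s_{0,i}$; the plan is to express both parts of sequential validity as conditions on the $\Delta_i$. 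Since every PGT has the same minimal allocation $\overline{N}$, every round contains exactly one PGA of each $\gamma_x$, so the indices $i=0,\dots,\overline{N}-1$ are the same for all $x$. This is what makes the reduction clean.

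For the first part of sequential validity (no overlap), PGAs in the same round are disjoint by construction. Overlap between consecutive rounds $i$ and $i+1$ is excluded exactly when $s_{0,i+1} \ge s_{0,i} + \sum_y E_y$, i.e.\ $\Delta_i \ge \sum_y E_y$. I will state explicitly that $\Delta_i \ge 0$ (PGAs of $\gamma_0$ are indexed in chronological order). Then, if consecutive rounds do not overlap, non-consecutive rounds do not overlap either, by monotonicity. For necessity: if $\Delta_i < \sum_y E_y$, the last PGA of round $i$ (of $\gamma_{M-1}$) ends after $s_{0,i+1}$, the start of the first PGA of round $i+1$. So the two intersect unless the intervals are disjoint, and with positive durations they are not; this is a simultaneous scheduling and violates the first condition.

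For the second part (no minsep violation), compute $s_{x,i+1}-e_{x,i} = \Delta_i - E_x$ from the displayed formulas. The requirement $s_{x,i+1}-e_{x,i}\ge \tminsep_{\gamma_x}$ for all $x$ is then equivalent to $\Delta_i \ge \max_x (E_x+\tminsep_x)$. Combining the two parts, the schedule is sequentially valid if and only if $\Delta_i \ge \max\big(\max_x(E_x+\tminsep_x), \sum_x E_x\big)$ for all $i=0,\dots,\overline{N}-2$, which is \eqref{eq: condition start times PGT 0}. The direction ``sequentially valid $\Rightarrow$ \eqref{eq: condition start times PGT 0}'' follows from the necessity arguments in each part, and the converse from the sufficiency arguments.

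The main obstacle is the bookkeeping in the overlap part, particularly the necessity direction. Two interpretive points need settling: whether ``simultaneous'' means any intersection of the half-open intervals $[s,e)$, which I will adopt, and whether the indexing of PGAs is chronological. I will fix both conventions explicitly at the start of the proof and then argue carefully that the overlapping pair can always be taken to be the last PGA of round $i$ and the first PGA of round $i+1$.
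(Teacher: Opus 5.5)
Your route matches the paper's. You unroll $c_1$ and $c_2$ into $s_{x,i}=s_{0,i}+\sum_{y<x}E_{\gamma_y}$ and $e_{x,i}=s_{0,i}+\sum_{y\le x}E_{\gamma_y}$. You turn ``no minsep violation'' into $\Delta_i\ge\max_x(E_{\gamma_x}+\tminsep_{\gamma_x})$ via $s_{x,i+1}-e_{x,i}=\Delta_i-E_{\gamma_x}$, and ``no overlap'' into $\Delta_i\ge\sum_y E_{\gamma_y}$. Your monotonicity remark for non-consecutive rounds also makes explicit a case the paper's sufficiency direction leaves implicit.

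The step that fails is the one you single out for necessity of the overlap bound. When $\Delta_i<\sum_y E_{\gamma_y}$, the clashing pair cannot always be taken to be the last PGA of round $i$ and the first PGA of round $i+1$. Knowing $e_{M-1,i}>s_{0,i+1}$ is not enough. You would also need $s_{M-1,i}<s_{0,i+1}+E_{\gamma_0}$, which fails whenever $\Delta_i+E_{\gamma_0}\le\sum_{y=0}^{M-2}E_{\gamma_y}$. For instance, take $M=3$, all $E_{\gamma_x}=1$, all $\tminsep_{\gamma_x}=0$, $s_{0,i}=0$ and $s_{0,i+1}=1$. There is no minsep violation; round $i$ occupies $[0,1),[1,2),[2,3)$ and round $i+1$ occupies $[1,2),[2,3),[3,4)$. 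The schedule is not sequentially valid, yet the pair you name, $[2,3)$ and $[1,2)$, is disjoint.

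The repair is short. If $0\le\Delta_i<\sum_y E_{\gamma_y}$, then $s_{0,i+1}$ lies in $[s_{0,i},\,s_{0,i}+\sum_y E_{\gamma_y})$. By $c_1$ and $c_2$ this interval is tiled without gaps by the half-open intervals $[s_{x,i},e_{x,i})$. Hence $s_{x,i}\le s_{0,i+1}<e_{x,i}$ for some $x$, and that PGA of round $i$ overlaps the first PGA of round $i+1$. The paper's proof simply asserts $s_{0,j+1}\ge e_{M-1,j}$ at this point, so with this fix your argument is more complete than the original.

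You also do not need chronological indexing as an extra convention. In the ($\Rightarrow$) direction, treat the minsep part first: a negative $\Delta_i$ gives $s_{x,i+1}-e_{x,i}<0\le\tminsep_{\gamma_x}$. So sequential validity already forces $\Delta_i\ge E_{\gamma_x}+\tminsep_{\gamma_x}>0$. In the ($\Leftarrow$) direction, positivity of $\Delta_i$ is part of the hypothesis.
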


Once the start time of the very first PGA of PGT $\gamma_0$ is known, conditions \eqref{eq: sched condition c1} and \eqref{eq: sched condition c2} define a method of adding the $i^{\text{th}}$ PGA of PGTs ${\gamma_x= \{\gamma_1, \cdots, \gamma_{M-1}\}}$ to a schedule, but these conditions do not specify how to add the $(i+1)^{\text{th}}$ PGA of PGT $\gamma_0$ to the schedule. For that, \eqref{eq: condition start times PGT 0} is required to ensure the schedule produced is sequentially valid. The very first PGA of PGT $\gamma_0$ can be added at an arbitrary start time $t_0$. 

\begin{proof}
By the hypothesis, the schedule $S$ contains exactly $\overline{N}$ PGAs of each ${\gamma \in Z_{\phi}}$. Suppose ${j \leq \overline{N}-2}$. Beginning from $s_{0,j}$, the start time of the $j^{\text{th}}$ PGA of PGT $\gamma_0$, repeated application of conditions $c_1$ and $c_2$ generate the sequence of start and end times of the $j^{\text{th}}$ PGAs of each PGT $\gamma_x$,
\begin{align}
    s_{x, j} &= s_{0,j} + \sum_{y=0}^{x-1} E_{\gamma_y} \label{eq: start time PGT x}, \ \forall x \geq 1,\\
    e_{x,j} &= s_{0,j} + \sum_{y=0}^{x}E_{\gamma_y}, \ \forall x \geq 0. \label{eq: end time PGT x}
\end{align}

    ($\Rightarrow$) Suppose the schedule $S$ has properties $c_1$ and $c_2$ of a sequentially valid schedule. By~\eqref{eq: sched condition c1}, the start time for the $(j+1)^{\text{th}}$ PGA of PGT $\gamma_0$ must satisfy ${s_{0, j+1} \geq e_{M-1,j}}$. By~\eqref{eq: end time PGT x}, that is \begin{equation}
        s_{0, j+1}  \geq s_{0, j} + \sum_{y=0}^{M-1} E_{\gamma_y}.\label{eq: sum of execution times requirement}
    \end{equation} 
    By~\eqref{eq: sched condition c2}, ${\forall x \geq 0}$, \begin{equation}\label{eq: no minsep violations occus}
        s_{x, j+1} - s_{x, j} \geq E_{\gamma_x} + \tminsep_{\gamma_x}. 
    \end{equation}
    Combining (\ref{eq: start time PGT x}) and (\ref{eq: no minsep violations occus}), $\forall x \geq 1, $ \begin{align}
        s_{0, j+1} + \sum_{y=0}^{x-1}E_{\gamma_x} - (s_{0, j} + \sum_{y=0}^{x-1}E_{\gamma_x}) &\geq  E_{\gamma_x} +\tminsep_{\gamma_{x}} \nonumber \\
        s_{0, j+1} - s_{0, j} &\geq E_{\gamma_x} + \tminsep_{\gamma_{x}} \label{eq: implies max over minseps and executions}
    \end{align}
    Combining (\ref{eq: no minsep violations occus}), which holds for ${x\geq 0}$ and (\ref{eq: implies max over minseps and executions}), which holds for $x\geq 1$, \begin{equation}
        s_{0, j+1} - s_{0, j} \geq \max_{x} (E_{\gamma_x} + \tminsep_{\gamma_x}).\label{eq: set max over sum of minseps and execution times}
    \end{equation}
     The combined requirement from (\ref{eq: sum of execution times requirement}) and (\ref{eq: set max over sum of minseps and execution times}) is, \small \begin{equation*}
         s_{0, j+1} - s_{0, j} \geq \max \bigg{(}  \max_{x} (E_{\gamma_x} + \tminsep_{\gamma_x}) , \sum_{y=0}^{M-1}E_{\gamma_{y}} \bigg{)}.
     \end{equation*} \normalsize

    $(\Leftarrow)$ Suppose the start times ${s_{0, i}, \ s_{0, i+1}}$ ${ \forall i \in \{0, \cdots, \overline{N}-2 \}}$ satisfy (\ref{eq: condition start times PGT 0}). 
  
   It follows from  (\ref{eq: condition start times PGT 0}) and (\ref{eq: end time PGT x}),
   \begin{align}\label{eq: execution time based earliest start time PGA j of PGT 0}
       s_{0, j+1} &\geq \sum_{y = 0}^{M-1} E_{\gamma_x} + s_{0,j} \nonumber \\
       s_{0, j+1} & \geq e_{M-1, j}.
   \end{align}
   Condition $c_2$ guarantees that the $j^{\text{th}}$ PGAs of any PGT are non-overlapping. Then, \eqref{eq: execution time based earliest start time PGA j of PGT 0} guarantees that the $(j+1)^{\text{th}}$ and $j^{\text{th}}$ PGAs of any PGT are non-overlapping. Since $j$ is arbitrary, it follows that $S$ has property $c_1$. 
   
    In consideration of property $c_2$, it follows from (\ref{eq: condition start times PGT 0}) that, \begin{equation}\label{eq: minsep EST PGA j+1 PGT 0}
        s_{0, j+1} \geq \max_{y \geq 0} (E_{\gamma_y} + \tminsep_{\gamma_y}) + s_{0,j}.
    \end{equation}
 
    Combination of (\ref{eq: start time PGT x}) for $s_{x, j+1}$ with (\ref{eq: minsep EST PGA j+1 PGT 0}) gives,
    \begin{align}\label{eq: start times in terms of minseps}
        s_{x, j+1} &\geq \max_{y \geq 0}(E_{\gamma_y} + \tminsep_{\gamma_y}) + s_{0,j} + \sum_{y=0}^{x-1}E_{\gamma_y},
    \end{align} 
    which holds for $x \geq 1$.
    Subtracting (\ref{eq: end time PGT x}) from (\ref{eq: start times in terms of minseps}), for $x\geq 1$ gives \begin{align}
        s_{x, j+1} - e_{x,j} &\geq \max_{y \geq 0}(E_{\gamma_y} + \tminsep_{\gamma_y}) - E_{\gamma_x} \nonumber \\
        &\geq \tminsep_{\gamma_x}.
    \end{align}
    Similarly, for $x=0$, subtracting (\ref{eq: end time PGT x}) for $e_{0,j}$ from (\ref{eq: minsep EST PGA j+1 PGT 0}) gives, \begin{align}
        s_{0,j+1} - e_{0,j} &\geq \max_{y \geq 0} (E_{\gamma_{y}} + \tminsep_{\gamma_y}) - E_{\gamma_0} \nonumber \\
        & \geq \tminsep_{\gamma_0}.
    \end{align}
    Since $j$ is arbitrary, it follows that $S$ has property $c_2$, no minsep violations occur in $S$.
\end{proof}

\begin{proposition}\label{prop: timeReqSingleFC}
    Let $\phi = (Z_{\phi}, \Pi_{\phi})$ be a filling class. Let $R(Z_{\phi}) = \textsc{CalculateRequiredTime}(Z_{\phi})$ denote the output of Algorithm \ref{alg: Calculate Required Time}. Then, $R(Z_{\phi})$ is an upper bound for the minimum time required to execute a sequentially valid schedule which contains $\minimumallocation_{\gamma}$ PGAs of every PGT $\gamma \in Z_{\phi}$.
\end{proposition}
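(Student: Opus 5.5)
The proof is constructive. We exhibit the schedule that \textsc{DirectAllocation} (Algorithm~\ref{alg: direct allocation subroutine}) produces for $\phi$. We then show three things: it contains $\minimumallocation_\gamma$ PGAs of every $\gamma\in Z_\phi$, it is sequentially valid, and its duration is exactly $R(Z_\phi)$. Since the minimum time over all sequentially valid schedules with these allocations is at most the duration of any particular such schedule, $R(Z_\phi)$ is an upper bound.

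\textbf{Phases.} Order the PGTs so that $\minimumallocation_0\le\cdots\le\minimumallocation_{M-1}$. Split the schedule into $M$ consecutive phases. In phase $m$ only the PGTs $\gamma_m,\dots,\gamma_{M-1}$ remain, and these are exactly the PGTs that still need PGAs. Phase $m$ consists of $n_m$ cycles of length $c_m$. In each cycle the remaining PGTs are placed back to back in index order, so conditions $c_1$ and $c_2$ hold. The phase then ends with one extra PGA of $\gamma_m$ of length $E_m$.

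\textbf{Counting.} $\gamma_x$ appears once per cycle in phases $0,\dots,x$ and once more at the end of phase $x$. Its total count is therefore
\[
\sum_{m\le x} n_m + 1 = (\minimumallocation_0 - 1) + \sum_{m=1}^{x}(\minimumallocation_m-\minimumallocation_{m-1}) + 1 = \minimumallocation_x,
\]
by telescoping.

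\textbf{Duration.} Summing the phase lengths $n_m c_m + E_m$ gives exactly the formula returned by Algorithm~\ref{alg: Calculate Required Time}.

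\textbf{Validity within a phase.} Inside phase $m$ the restricted filling class $\{\gamma_m,\dots,\gamma_{M-1}\}$ is scheduled with constant cycle spacing $c_m$. This is exactly the setting of Lemma~\ref{lemma: direct allocation schedule properties}, with the PGTs active in that phase all receiving the same count. Since $c_m$ is the right-hand side of \eqref{eq: condition start times PGT 0} for this subset, the lemma gives that there are no overlaps and no minsep violations within the phase.

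\textbf{Main obstacle: phase boundaries.} Two checks are needed.
\begin{enumerate}
\item The trailing PGA of $\gamma_m$ starts at offset $n_m c_m$ in the phase. The previous PGA of $\gamma_m$ started at offset $(n_m-1)c_m$. The gap between them is $c_m\ge E_m+\tminsep_m$, so $\gamma_m$ has no minsep violation. Every other task's last PGA in that cycle ends by offset $(n_m-1)c_m+\sum_{y\ge m}E_y \le n_m c_m$, so nothing overlaps the trailing PGA.
\item For $x>m$, the next PGA of $\gamma_x$ is in phase $m+1$ (or later, if $n_{m+1}=0$). It starts at least $E_m+\sum_{y=m+1}^{x-1}E_y$ after offset $n_m c_m$. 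Its previous PGA ended at offset $(n_m-1)c_m+\sum_{y=m}^{x}E_y$. The separation is therefore at least
\[
c_m - E_x - E_m + E_m \ge \tminsep_x,
\]
using $c_m\ge E_x+\tminsep_x$.
\end{enumerate}
Careful index bookkeeping is needed here, along with the degenerate cases $n_m=0$ (equal minimum allocations) and $\minimumallocation_0=1$, where phases are empty and gaps must be carried across several phases.

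Since the monotonicity $c_{m+1}\le c_m$ is not needed, only $c_m\ge E_x+\tminsep_x$ for each active $x$, I expect these cases to go through. With them the schedule is sequentially valid, which gives the proposition.
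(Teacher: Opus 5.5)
Your proposal is correct and follows the same route as the paper's proof. Both build the \textsc{DirectAllocation} cycle schedule, count each PGT's PGAs by telescoping the $n_m$, sum the phase lengths $n_m c_m + E_m$ to recover $R(Z_\phi)$, and use Lemma~\ref{lemma: direct allocation schedule properties} for validity within cycles. Your explicit phase-boundary checks, namely the trailing PGA and the separation $c_m - E_x$ across phases including the $n_m = 0$ cases, fill in a step the paper only asserts when it says the schedule with cycle durations $c_m$ has no overlaps and no minsep violations.
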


Wherever it is relevant to do so, throughout the proof of Proposition \ref{prop: timeReqSingleFC} we refer to Figure \ref{fig: direct allocation schedule}. Note that the Figure is not required for any of the arguments in the proof. The Figure simply provides a visual illustration of the concepts introduced in the proof, easing their comprehension. 
\begin{proof}
    To upper bound the minimum amount of time required to execute $\minimumallocation_{\gamma}$ PGAs of every PGT ${\gamma \in Z_{\phi}}$ such that no two PGAs are scheduled simultaneously and no minsep violations occur, it is sufficient to account for the time required, $R$, to execute a specific schedule that meets these criteria. 
    We first describe a method of scheduling PGAs based on the conditions in Lemma~\ref{lemma: direct allocation schedule properties} and which leverages the possibility of inserting idle time into a schedule. 
    We then demonstrate that this method of scheduling results in a particularly simple calculation of the required time $R$.

    Let ${M = |Z_{\phi}|}$ and let $X$ be an ordering of $Z_{\phi}$ such that ${\minimumallocation_{\gamma_x} \leq \minimumallocation_{\gamma_{x+1}}, \ \forall x \in \{0, \cdots, M-1 \}}$.
    As in Lemma~\ref{lemma: direct allocation schedule properties}, let $s_{x, i}$ and $e_{x, i}$ identify the start and end times of the $i^{\text{th}}$ PGA of PGT $\gamma_x$ added to the schedule. When the $i^{\text{th}}$ PGA of PGT $\gamma_x$ is added to the schedule, the start and end times are specified by condition $c_1$ of Lemma~\ref{lemma: direct allocation schedule properties}, \begin{equation}
        e_{x,i} = s_{x,i} + E_{\gamma_x}, \ \forall x \in \{0, \cdots, M-1\}. \label{eq: execution times give ends}
    \end{equation}
    Let the term \textit{cycle~0} refer to  a schedule consisting of the $i^{\text{th}}$ PGA from each PGT $\gamma_x \in \phi$ ($i \leq \minimumallocation_{\gamma_0}$) according to (\ref{eq: execution times give ends}) and the second sequential scheduling condition \eqref{eq: sched condition c2}, 
    \begin{equation}
        s_{x, i} = e_{x-1, i}, \ \forall x \in \{0, \cdots, M-1\}. \label{eq: back to back PGAs}
    \end{equation}
    Let cycle~0 also include an idle time of duration $I_0$ following the $i^{\text{th}}$ PGA of PGT $\gamma_{M-1}$, so that \begin{equation}
        s_{0,i+1} =  s_{0, i} + \sum_{y=0}^{M-1} E_y + I_0. 
    \end{equation}
    Note that cycle~0 has the property that no two PGAs are scheduled simultaneously. 
    By the hypothesis, supposing that the required network resources to execute a PGA are available, it is possible to schedule a PGA as long as doing so does not introduce a minsep violation.
    Since cycle~0 meets the conditions that no two PGAs are scheduled simultaneously and no minsep violations occur, it is possible to commence the schedule with an iteration of cycle~0.
    In Figure \ref{fig: direct allocation schedule}, an iteration of cycle~0 with idle time~$I_0=0$ is illustrated between time~$t_0$ and the first dotted line.
    If ${\minimumallocation_{\gamma_0} >2}$, then after the first iteration of cycle~0 all PGTs ${\gamma_x \in \phi}$ have at least 2 PGAs which remain to be scheduled. Let us consider a scheduling method where the objective is to schedule an identical repetition of cycle~0. The earliest possible start time for this repetition of cycle~0 is the earliest possibility for $s_{0,1}$. The duration of the idle time~$I_0$ included in cycle~0 is set such that $s_{0,1}$ occurs as soon as is possible, without violating the two requirements.  
    By Lemma~\ref{lemma: direct allocation schedule properties}, with ${\overline{N}=\minimumallocation_{\gamma_0}}$, to ensure the schedule meets the requirements that no two PGAs are scheduled simultaneously and no minsep violations occur, two cases can arise for $s_{0,1}$ and $I_0$. 

        \textbf{Case 1:} If \begin{equation}
            \sum_{y =0 | y \neq x}^{M-1} E_{\gamma_y} \geq \tminsep_{\gamma_x} \ \forall x \in \{0, \cdots, M-1 \}, \label{eq:noIdleTimeCondition}
        \end{equation} then,
        \begin{align}\label{eq:cycleZeroExec}
            s_{0,1} &= \sum_{y=0}^{M-1} E_{\gamma_y}, \\
            I_0 &= 0. \nonumber
        \end{align} 
      
        \textbf{Case 2:} If \eqref{eq:noIdleTimeCondition} doesn't hold, there exists at least one task $x$ such that \begin{equation}\label{eq:cycleZeroMinSep}
             \sum_{y =0 | y \neq x}^{M-1} E_{\gamma_y} < \tminsep_{\gamma_x}.
        \end{equation} Then, \begin{align}\label{eq:cycleZeroMinSepEST}
            s_{0,1} &= \max_{y \in \{0, \cdots, M-1 \}} \big{(} E_{\gamma_y} + \tminsep_{\gamma_y} \big{)}, \\
            I_0 &= \max_{y \in \{0, \cdots, M-1 \}} (E_{\gamma_y} + \tminsep_{\gamma_y}) - \sum_{y=0}^{M-1}E_{\gamma_y}. \nonumber 
        \end{align} 
        
        Figure~$\ref{fig: direct allocation schedule}$ illustrates a scenario where case~2 applies. The first dotted line in the figure indicates the time given by (\ref{eq:cycleZeroExec}), and the second dotted line indicates the time given by (\ref{eq:cycleZeroMinSepEST}).

        Let the term \textit{cycle~m} refer to a schedule consisting of the $i^{\text{th}}$ PGA from each PGT $\gamma_x \in \phi$ such that \begin{align}
            e_{x, i} &= s_{x,i} + E_{\gamma_x}, \ \forall x \in \{m, \cdots, M-1 \}\label{eq: cycle y PGA end times} \\
            s_{x,i} &= e_{x-1, i}, \ \forall x \in \{m, \cdots, M-1 \},\label{eq: cycle y PGA start times}
        \end{align} and the $i^{\text{th}}$ PGA of PGT $\gamma_{M-1}$ is followed by idle time $I_m$ so that \begin{equation}
            s_{m, i+1} = s_{m, i} + \sum_{y=m}^{M-1} E_{\gamma_y} + I_m. \label{eq: cycle m includes idle time Im}
        \end{equation} 
        Note that cycle~m no longer includes PGAs for the PGTs $\{\gamma_0, \cdots, \gamma_m\}$.

        The basis of the scheduling method is to continue scheduling cycles of 
        PGA executions according to (\ref{eq: cycle y PGA end times}), (\ref{eq: cycle y PGA start times}) and (\ref{eq: cycle m includes idle time Im}), which preserve the ordering $X$. However, we only need to schedule $\minimumallocation_{\gamma_x}$ PGAs for PGT $\gamma_x$. Therefore, it is necessary to account for when a PGT $\gamma_x$ \textit{drops out} of a cycle, meaning that $\minimumallocation_{\gamma_x} -1$ PGAs of the PGT have already been scheduled by previous cycles. 
        After a drop out point, the PGT that is dropping out needs to be scheduled a final time. 
        Earlier we assumed $\minimumallocation_{\gamma_0} > 2$. In accounting for when PGTs drop-out, we address the other possibilities, $\minimumallocation_{\gamma_0} = 1, 2$. 
       Due to the ordering $X$, it will always be the case that the PGT or PGTs dropping out at a drop out point are the first PGTs of the cycle. 
        Figure \ref{fig: direct allocation schedule} illustrates a scenario wherein $\minimumallocation_{\gamma_0}=3$, hence it is indicated that cycle~$c_0$ is repeated $n_0=2$ times, after which point $\gamma_0$ drops out, marking a transition in the schedule from cycle~0 to cycle~1. The final execution of $\gamma_0$ is scheduled before cycle~1 begins.

        It follows from (\ref{eq: cycle y PGA end times}), (\ref{eq: cycle y PGA start times}), (\ref{eq: cycle m includes idle time Im}) and Lemma \ref{lemma: direct allocation schedule properties} (with $\overline{N}=\minimumallocation_{\gamma_m}$) that if the duration $c_m$ of cycle~m is  
        \begin{equation}\label{eq:cycleDuration}
            c_m = \max \Big{(} \underset{y \in \{m, \cdots M -1\} }{\max} \big{(} E_{\gamma_y} + \tminsep_{\gamma_y} \big{)},  \sum_{y=m}^{M-1} E_{\gamma_y}\Big{)},
        \end{equation}
        then the schedule has the properties that no two PGAs are scheduled simultaneously, and no minsep violations occur.

        Exactly $n_m$ iterations of cycle~m, with duration $c_m$, occur before the subsequent drop out point. The first task (task zero) will drop out after \begin{equation}
            n_0 = \minimumallocation_{\gamma_0} - 1
        \end{equation} cycles. In general, $n_{m}$ cycles of duration $c_m$ will be scheduled by our scheduling method, with $n_{m}$ given by,
        \begin{equation}\label{eq:numCycles}
            n_{m} = \minimumallocation_{\gamma_m} - \minimumallocation_{\gamma_{m-1}}.
        \end{equation}

        After $\sum_{y=0}^{x} n_y$ cycles, $\minimumallocation_x - 1$ PGAs for PGT $\gamma_x$ have been added to the schedule, since \begin{align*}
            \sum_{y=0}^{x} n_y &= n_0 + \sum_{y=1}^{x} n_m \\
            &= \minimumallocation_{\gamma_0} - 1 + \sum_{y=1}^{x} (\minimumallocation_{\gamma_y} - \minimumallocation_{\gamma_{y-1}}) \\
            &= \sum_{y=0}^{x} \minimumallocation_{\gamma_y} - \sum_{y=0}^{x-1} \minimumallocation_{\gamma_y} - 1 \\
            &= \minimumallocation_{\gamma_x} - 1.
        \end{align*} 
        After it's drop out point, each task is scheduled one final time. In total these individual executions contribute an amount of time $\underset{y=0}{\overset{M-1}{\sum}} E_{\gamma_y}$ to the required time.
        In total, following our scheduling method, the time required to execute $\minimumallocation_{\gamma}$ PGAs of every task $\gamma \in Z_{\phi} $ is \begin{align}
            R(Z_{\phi}) &= \sum_{x=0}^{M-1} ( n_{x} \cdot c_{x} + E_{\gamma_x})\\
            &= \textsc{CalculateRequiredTime}(Z_{\phi}). \label{eq:RphiIsCalcReqTime}
        \end{align}
\end{proof}

\subsection{The Compute Schedule process produces valid schedules}
In this section we first extend the definition of sequentially valid schedules, which is only suitable for schedules covering a single filling class, to a definition of valid schedules for multiple filling classes. To do so, we first define the notion of a resource conflict.  
Then, we analyze the Compute Schedule process (Algorithm \ref{alg: Compute Schedule}). This process includes a minimal allocation phase that relies on the \textsc{DirectAllocation} (Algorithm \ref{alg: direct allocation subroutine}) subroutine and a bonus allocation phase that relies on the \textsc{RoundRobinBonus} (Algorithm \ref{alg: bonus round robin resource schedule}) subroutine. We prove that the schedules produced by Compute Schedule are valid and always include a minimum allocation of PGAs for every active PGT.

\begin{definition}[Resource Conflict]\label{def: resource conflict}
Let $\Phi$ be a set of filling classes and $Z_{\Phi}$ the set of active PGTs in $\Phi$. Let $S$ be a schedule for the PGTs $Z_{\Phi}$ with start times $s_{\gamma, i}$ and end times $e_{\gamma, i}$ for the $i^{\text{\emph{th}}}$ scheduled PGA of PGT ${\gamma, \ \forall \gamma \in Z_{\Phi}}$. We say a \emph{\textbf{resource conflict}} occurs in $S$ if 
${\exists \gamma, \gamma', \ \exists i, j}$ such that 
\begin{equation*}
    \xi(\pi_\gamma) \cap \xi(\pi_\gamma') \neq \emptyset,
\end{equation*} and \begin{equation*}
    s_{\gamma,i} \leq s_{\gamma', j} < e_{\gamma, i}.
\end{equation*}
\end{definition}

\begin{definition}[Valid schedule]
    A schedule is said to be \textbf{valid} if it has the following two properties: \begin{enumerate}
        \item No resource conflicts occur, and
        \item No minsep violations occur.
    \end{enumerate}
\end{definition}

Note that the first condition of a valid schedule, prohibiting resource conflicts, is a weaker condition than the first condition of a sequentially valid schedule, prohibiting any two PGAs from being scheduled simultaneously. The later condition guarantees that there are no resource conflicts within the same filling class, but it would also prevent simultaneously scheduling PGAs from different filling classes, even if they do not have overlapping resource requirements. 

\begin{corollary}\label{corr: DirectAllocReqTime}
  The \textsc{DirectAllocation} scheduling algorithm (Algorithm \ref{alg: direct allocation subroutine}) implements the specific scheduling method described in the proof of Proposition~\ref{prop: timeReqSingleFC}.
\end{corollary}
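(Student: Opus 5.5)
The plan is a direct comparison argument: trace Algorithm~\ref{alg: direct allocation subroutine} line by line and show that the start and end times it writes into $S$ coincide with those prescribed by the cycle construction in the proof of Proposition~\ref{prop: timeReqSingleFC}.

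First I will match the preprocessing. Both procedures order $Z_\phi$ so that $\minimumallocation_x\le\minimumallocation_{x+1}$. Both set $n_0=\minimumallocation_0-1$ and $n_m=\minimumallocation_m-\minimumallocation_{m-1}$, as in \eqref{eq:numCycles}. Both take $c_m$ as in \eqref{eq:cycleDuration}. I read the algorithm's upper index $M$ in $c_x$ as $M-1$, since PGTs are indexed $0,\dots,M-1$.

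Next I will run an induction on the outer loop index $m$, with the invariant that at the start of iteration $m$ the variable $t_{\text{start}}$ equals the time at which cycle~$m$ begins in the proof's method. The base case is $t_{\text{start}}=t_0$.

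Within iteration $m$, the $k$-th pass of the middle loop places $\gamma_x$, for $x=m,\dots,M-1$, at $t_{\text{start}}+kc_m+\sum E_y$. These PGAs are placed back to back, which is conditions \eqref{eq: cycle y PGA end times}–\eqref{eq: cycle y PGA start times}. Successive passes are separated by exactly $c_m$, which is \eqref{eq: cycle m includes idle time Im} with $I_m=c_m-\sum_{y\ge m}E_y\ge 0$. So the middle loop realises exactly $n_m$ repetitions of cycle~$m$.

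The trailing statement then places the final PGA of the dropping-out task $\gamma_m$ at $t_{\text{start}}+n_mc_m$. That PGA has duration $E_m$, and $t_{\text{start}}$ advances by $n_mc_m+E_m$. This matches the proof's accounting of the drop-out point and completes the inductive step. Because the PGA is written to every $r\in\pi_{\gamma_x}$ at the same time, the aligned resource schedules agree with the single-filling-class schedule.

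Summing the per-iteration advances gives a total duration of $\sum_m(n_mc_m+E_m)=R(Z_\phi)$, in agreement with \eqref{eq:RphiIsCalcReqTime}. I will also count PGAs: $\gamma_x$ appears in passes for $m\le x$, giving $\sum_{m\le x}n_m=\minimumallocation_x-1$ PGAs, plus one final PGA, for a total of $\minimumallocation_x$.

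The main obstacle I expect is the index bookkeeping. The algorithm's offset $t_{\text{offset}}=\sum_{y=m}^{x}E_y$ includes $E_x$, which would make $\gamma_x$ start at its own end time rather than at $e_{x-1}$. The trailing loop is also written over $\pi_{\gamma_x}$ where $\gamma_m$ is meant. I will interpret these as the intended $\sum_{y=m}^{x-1}E_y$ and $\pi_{\gamma_m}$, noting that a uniform shift of $E_m$ within a cycle would in any case preserve back-to-back spacing but would collide with the final PGA at the cycle boundary. The degenerate cases $n_m=0$ (ties in $\minimumallocation$) and $\minimumallocation_0=1$ must also be checked: in both, the middle loop is empty and only the final PGA is placed, exactly as in the proof.
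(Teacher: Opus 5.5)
Your proposal is correct and follows the same route as the paper, whose proof simply states that the result follows by direct comparison of Algorithm~\ref{alg: direct allocation subroutine} with the construction in the proof of Proposition~\ref{prop: timeReqSingleFC}; your induction on $m$ makes that comparison explicit, and your fixes to the index typos ($\sum_{y=m}^{x-1}E_y$, $\pi_{\gamma_m}$, upper index $M-1$) match what Listing~\ref{lst: directallocation} actually implements. One small correction to your aside: the literal offset $\sum_{y=m}^{x}E_y$ shifts $\gamma_x$ by its own $E_x$, not by a uniform $E_m$, so it does not preserve back-to-back spacing (consecutive PGAs overlap whenever $E_{x+1}<E_x$); this does not affect your argument, since you adopt the intended reading.
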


\begin{proof}
    The result follows by direct comparison of Algorithm~\ref{alg: direct allocation subroutine} and the proof of Proposition~\ref{prop: timeReqSingleFC}.
\end{proof}

By Corollary~\ref{corr: DirectAllocReqTime} and Proposition~\ref{prop: timeReqSingleFC}, $\textsc{DirectAllocation}$ produces a valid schedule for a filling class ${\phi=(Z_{\phi}, \Pi_{\phi})}$ which requires an execution time $R(Z_{\phi}) = \textsc{CalculateRequiredTime}(Z_{\phi})$, as given by the output of Algorithm~\ref{alg: Calculate Required Time}. 
We subsequently refer to this method of scheduling simply as direct allocation.

Proposition \ref{prop: timeReqSingleFC} makes it possible to account for the duration of a valid schedule for a single filling class. To progress towards proving Theorem~\ref{thm: min alloc guaranteed}, it is necessary to account for the duration of a valid schedule for a set of filling classes. The next results build towards the required accounting.

It is possible for a set of filling classes ${\Theta = \{\theta_0, \theta_{1}, \cdots, \theta_{M-1} \}}$ to have the property of being totally ordered, according to some ordering $\geq$. If $\Theta$ is totally ordered by an arbitrary ordering $\geq$, then ${\theta_{i-1} \geq \theta_{i} \geq \theta_{i+1} \ \forall i \in \{0, 1, \cdots, M-1 \}}$. In general a set of filling classes $\Phi$ may only be partially ordered by an arbitrary ordering $\geq$, in which case there exist incomparable elements $\phi$ and $\psi$ such that neither ${\phi \geq \psi}$, nor ${\psi \geq \phi}$.

\begin{proposition}\label{prop: totally ordered good accounting}
    Let $\Theta$ be a totally ordered set of filling classes conforming to Definition~\ref{def: filling classes}. Let $Z_{\Theta}$ denote the set of active PGTs in $\Theta$. The minimum time required to execute a valid schedule in which there are $\minimumallocation_{\gamma}$ PGAs of every PGT ${\gamma \in Z_{\Theta}}$ is upper bounded by the quantity
    \begin{equation}\label{eq:minimalAllocReqtime}
        \tilde{\mathcal{R}}(Z_{\Theta}) = \sum_{\theta_l \in \Theta} R(Z_{\theta_l}),
    \end{equation}
    where $R(Z_{\theta_l}) = \textsc{CalculateRequiredTime}(Z_{\theta_l})$ is the output of Algorithm~\ref{alg: Calculate Required Time}.
\end{proposition}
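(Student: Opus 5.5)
The bound is proved by exhibiting one explicit valid schedule of total duration $\tilde{\mathcal{R}}(Z_\Theta)$; the minimum required time can be no larger than the duration of any valid schedule containing the minimal allocations. I will use the serial concatenation that \textsc{MinimalAllocation} performs. Write $\Theta = \{\theta_0, \theta_1, \cdots, \theta_{M-1}\}$ in its total order, and set start offsets
\[
t_l = t_0 + \sum_{l' < l} R(Z_{\theta_{l'}}).
\]
For each $l$, place the \textsc{DirectAllocation} schedule for $\theta_l$ started at $t_l$. By Corollary~\ref{corr: DirectAllocReqTime} and Proposition~\ref{prop: timeReqSingleFC}, each block is sequentially valid, contains $\minimumallocation_\gamma$ PGAs of every $\gamma\in Z_{\theta_l}$, and occupies exactly the window $[t_l, t_l + R(Z_{\theta_l}))$. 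The blocks therefore lie in pairwise disjoint consecutive windows, and the last one ends at $t_0 + \sum_l R(Z_{\theta_l})$, which gives the claimed duration.

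Next I check that the concatenated schedule is valid. Lemma~\ref{lemma: one fill class per PGT} says each PGT lies in exactly one $Z_{\theta_l}$, so all PGAs of a given PGT fall inside a single block. Minsep violations compare only consecutive PGAs of the same PGT, so they are ruled out by the sequential validity of that block. For resource conflicts, take two PGAs of $\gamma,\gamma'$. If they are in the same filling class, sequential validity gives that they are never scheduled simultaneously, so the condition $s_{\gamma,i}\le s_{\gamma',j}<e_{\gamma,i}$ cannot hold. If they are in different classes, the PGAs lie in disjoint time windows $[t_l,t_{l+1})$. Since every PGA in block $l$ ends by $t_{l+1}$, the interval condition fails again.

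The main obstacle is making sure every PGA of block $l$ really ends by $t_l+R(Z_{\theta_l})$, that is, that no PGA spills past the window and that the final idle time is correctly accounted for. I would settle this by reading the end times directly from the cycle construction in the proof of Proposition~\ref{prop: timeReqSingleFC}. There, the last event in the block is the final single execution of $\gamma_{M-1}$, which ends exactly at $\sum_x (n_x c_x + E_x)$ after the block start.

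The total order itself is not actually needed for this upper bound; any enumeration of $\Theta$ works. I would remark on this, since the ordering only matters later, when overlapping concatenations are considered for partially ordered $\Phi$.
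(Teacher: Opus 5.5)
Your proposal is correct and follows essentially the same route as the paper. Both concatenate the \textsc{DirectAllocation} schedules back to back, use the fact that each PGT lies in a single filling class to rule out cross-class minsep violations, and use the temporal disjointness of the blocks to rule out resource conflicts, giving total duration $\sum_l R(Z_{\theta_l})$; your extra checks (each block ends exactly at $t_l + R(Z_{\theta_l})$, and the total order is used only to fix an enumeration) are consistent with the paper's argument and somewhat more explicit.
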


\begin{proof}
     To upper bound the minimum amount of time required to execute ${\minimumallocation_{\gamma}}$ PGAs of every PGT ${\gamma \in Z_{\Theta}}$, we build upon the proof of Proposition~\ref{prop: timeReqSingleFC} and calculate the time required to execute a specific schedule for $\Theta$ including ${\minimumallocation_{\gamma}}$ executions of each PGT ${\gamma \in Z_{\Theta}}$.
     
     Let $M$ be the number of filling classes in $\Theta$. 
     By the hypothesis, $\Theta$ is totally ordered, according to some particular ordering $\geq$, such that $\theta_{0} \geq \theta_1 \geq \cdots \geq \theta_{M-1}$. 
     The only restrictions which can disallow scheduling a PGA at a certain time are that it would introduce a resource conflict or a minsep violation.
     By Lemma~\ref{lemma: one fill class per PGT}, every PGT $\gamma$ is associated with a single filling class. Since minsep violations may only occur between PGAs of the same PGT, no minsep violations may occur between PGTs in different filling classes. Hence, in building a schedule for $\Theta$ it is always possible to schedule back-to-back executions of a sequentially valid schedule for filling class $\theta_l$ and a sequentially valid schedule for filling class $\theta_{l+1}$. 
     With the given ordering, the greatest element is scheduled first, and back-to-back executions mean that the start time of the schedule for filling class $\theta_{l+1}$ is equal to the end time of the schedule for filling class $\theta_{l}$. 
     In this scheme no two PGAs are ever scheduled simultaneously, hence no resource conflicts can occur.

     To construct a specific schedule for $\Theta$, it is thus sufficient to proceed by using direct allocation scheduling to schedule $\minimumallocation_\gamma$ executions of all PGTs $\gamma \in Z_{\theta_0}$. By Proposition \ref{prop: timeReqSingleFC}, this schedule has duration $R(Z_{\theta_0})$. To continue construction of the schedule for $\Theta$, use direct allocation scheduling to
     construct a schedule for each filling class ${\theta_l, \ l \in \{2, \cdots, M-1 \}}$ and append this schedule to the end of the schedule for filling class $\theta_{l-1}$. The total time required to execute this specific schedule for $\Theta$ is \begin{equation*}
         \tilde{\mathcal{R}}(Z_{\Theta}) = \sum_{l =0}^{M-1} R(Z_{\theta_l}).
    \end{equation*}
\end{proof}

The ordering $\geq_{\xi}$ introduced in Definition~\ref{def: comparison of filling classes} serves as an example of a specific ordering which may totally order the filling classes $\Theta$ in Proposition~\ref{prop: totally ordered good accounting}.
Two filling classes are incomparable under $\geq_{\xi}$ whenever ${\xi(\phi) \cap \xi(\psi) = \emptyset}$.
The following theorem extends the result of Proposition~\ref{prop: totally ordered good accounting} from a set of totally ordered filling classes to a set of well-behaved filling classes, partially ordered by $\geq_{\xi}$, as in Definition~\ref{def: comparison of filling classes}. 
The extension is non-trivial, and requires accounting for how the totally ordered subsets of $\Phi$ can be consistently added to a schedule without introducing resource conflicts.

A set $\Phi$ that is partially ordered by an order $\leq$ contains totally ordered subsets. If ${\Theta = \{\theta_0, \theta_1, \theta_2 \}}$ is a totally ordered subset of $\Phi$, then $\{\theta_0 \}$, $\{\theta_1 \}$, $\{\theta_2 \}$,  $\{\theta_0, \theta_1\}$, $\{\theta_1, \theta_2 \}$, and $\{\theta_0, \theta_2 \}$ are also totally ordered subsets of $\Phi$. In this example, only $\Theta$ is a candidate for a maximally sized totally ordered subset of $\Phi$, as all of the other totally ordered subsets listed are subsets of $\Theta$. 

\begin{theorem}(Good Accounting)
    \label{thm: good accounting applies to partially ordered FCs} 
    Let $\Phi$ be a set of well-behaved filling classes, partially ordered by $\geq_{\xi}$. Let $Z_{\Phi}$ denote the set of active PGTs in $\Phi$.
    Let ${\Omega = \{\Theta_0, \Theta_1, \cdots, \Theta_n \}}$ be the set of all totally ordered subsets of $\Phi$, of which there are a number $n$, for some ${n \in \mathbb{N}, \ n \geq 0}$.
    The minimum time required to execute a valid schedule in which there are $\minimumallocation_{\gamma}$ PGAs of every PGT ${\gamma \in Z_{\Phi}}$ is upper bounded by 
    \begin{equation}\label{eq: partiallyOrderedminimalAllocReqTime}
        \mathcal{R}(Z_{\Phi}) = \max_{\Theta_i \in \Omega} \tilde{\mathcal{R}}(Z_{\Theta_i}), 
    \end{equation} where $\tilde{\mathcal{R}}(Z_{\Theta_i})$ is given by \emph{\eqref{eq:minimalAllocReqtime}}.
\end{theorem}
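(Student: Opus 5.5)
The plan is to exhibit one explicit valid schedule containing $\minimumallocation_\gamma$ PGAs of every $\gamma\in Z_{\Phi}$ and to bound its duration by $\mathcal{R}(Z_{\Phi})$. The schedule is the natural extension of the one in Proposition~\ref{prop: totally ordered good accounting}. Each filling class $\phi$ is given the start time
\[
t_\phi = \sum_{\psi >_\xi \phi} R(Z_\psi),
\]
the total required time of all filling classes strictly above it under $\geq_\xi$. A direct allocation schedule for $\phi$ is then placed on the interval $[t_\phi, t_\phi + R(Z_\phi))$. By Corollary~\ref{corr: DirectAllocReqTime} and Proposition~\ref{prop: timeReqSingleFC}, each such block is sequentially valid and contains the required number of PGAs for every PGT of $\phi$. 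By Lemma~\ref{lemma: one fill class per PGT}, minsep violations can only arise inside a block, so none occur. It therefore remains to rule out resource conflicts between blocks and to bound the end time.

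The first key step is structural, using well-behavedness. The set $U_\phi=\{\psi : \psi \geq_\xi \phi\}$ is a chain. Indeed, take $\psi_1,\psi_2\in U_\phi$. Both contain $\xi(\Pi_\phi)$, which is nonempty, so their associated resource sets intersect. Condition 2 of Definition~\ref{def: well-behaved set of filling classes} then makes them comparable, and condition 1 makes the order strict. Consequently the ancestors of $\phi$ form a totally ordered subset $\Theta\in\Omega$. The end time of the block for $\phi$ is $t_\phi + R(Z_\phi) = \tilde{\mathcal{R}}(Z_{U_\phi}) \le \max_{\Theta\in\Omega}\tilde{\mathcal{R}}(Z_\Theta) = \mathcal{R}(Z_\Phi)$. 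Taking the maximum over $\phi$ bounds the duration of the whole schedule.

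The second key step, which I expect to be the main obstacle, is showing that there are no resource conflicts. Suppose PGTs $\gamma\in Z_\phi$ and $\gamma'\in Z_\psi$ with $\phi\neq\psi$ share a resource. Then $\xi(\Pi_\phi)\cap\xi(\Pi_\psi)\neq\emptyset$, and by well-behavedness we may assume without loss of generality that $\psi >_\xi \phi$. Since $U_\psi \subsetneq U_\phi$ and $\psi\in U_\phi\setminus U_\psi$, we get
\[
t_\phi \ \geq\ \sum_{\chi\in U_\psi\setminus\{\psi\}} R(Z_\chi) + R(Z_\psi) \ =\ t_\psi + R(Z_\psi).
\]
So the block for $\phi$ begins no earlier than the block for $\psi$ ends, and the two blocks are disjoint in time. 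Classes whose associated resources are disjoint may run simultaneously, which is allowed because Definition~\ref{def: resource conflict} only forbids overlaps on shared resources.

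The delicate points are to check carefully that $t_\phi$ is well defined as a finite sum, that $U_\psi\subset U_\phi$ follows from transitivity of $>_\xi$ (which holds because it is strict set inclusion), and that the empty-$Z$ or single-class cases are trivial. Lemma~\ref{lemma: unique greatest element of Phi} is not needed for the argument, although it is consistent with it: the top element has $t=0$. Combining the validity of the constructed schedule with the duration bound gives the theorem.
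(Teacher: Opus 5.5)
Your proof is correct, and it reaches the bound by a different and somewhat cleaner route than the paper.

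The paper's route goes through the set $\Omega^{\max}$ of maximal chains. It proves a Claim~1: two intersecting maximal chains coincide on a common top segment. It then builds a sequential schedule $S_\omega$ for each maximal chain as in Proposition~\ref{prop: totally ordered good accounting}, peeling off greatest elements via Lemma~\ref{lemma: unique greatest element of Phi}, and uses Claim~1 to show that a class shared by several chains receives the same start time in every $S_\omega$. Finally it merges the $S_\omega$, and treats a disconnected network separately by splitting it into internally connected pieces.

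You instead fix the start times globally, $t_\phi=\sum_{\psi>_\xi\phi}R(Z_\psi)$, so consistency across chains is automatic. You also replace Claim~1 by the single structural fact that every up-set $U_\phi$ is a chain. Claim~1 in fact follows from this fact, since a maximal chain through $\phi$ must contain all of $U_\phi$.

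Your comparison of start times gives absence of resource conflicts explicitly: for $\psi>_\xi\phi$ you show $t_\phi\geq t_\psi+R(Z_\psi)$. The paper only asserts conflict-freeness after merging. Your argument needs neither internal connectivity nor the separate disconnected case. Your start times are exactly those used by \textsc{MinimalAllocation} in \eqref{eq: set start phi}, so the same argument also settles Corollary~\ref{corr:minimumAllocTimeReq} without the ``direct comparison'' step. What the chain-by-chain construction buys is mainly presentational: the $\max_{\Theta\in\Omega}$ form of the bound appears literally as the length of the longest chain schedule.

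The one assertion you should justify rather than state is $\xi(\Pi_\phi)\neq\emptyset$, which you need for $U_\phi$ to be a chain. A class with $\xi(\Pi_\phi)=\emptyset$ lies strictly below every other class, so its $U_\phi$ need not be a chain. Its PGTs touch no resources, though, so you can simply start its block at time $0$. There it conflicts with nothing and has length $R(Z_\phi)=\tilde{\mathcal{R}}(Z_{\{\phi\}})\leq\mathcal{R}(Z_\Phi)$. If the cell $\Pi_\phi$ is empty, then $Z_\phi=\emptyset$ and the block is empty anyway.
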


\begin{proof}(Good Accounting.)
Suppose that path partition $\Pi$ describes an internally connected network.
Let $\Omega^{\max}$ be the set of maximally sized elements of $\Omega$, \begin{equation}
    \Omega^{\max} := \{\Theta_i \in \Omega \ | \ \forall i' \neq i, \ \Theta_i \ \cancel{\subset} \ \Theta_{i'} \}, 
\end{equation}
where the inclusion $\subset$ is the usual set inclusion. 
Whenever ${\Omega \neq \emptyset}$, it follows by definition that ${\Omega^{\max} \neq \emptyset}$. Note that ${\forall \phi \in \Phi}$, ${\exists \omega \in \Omega^{\max}}$ such that $
{\phi \in \omega}$.
Therefore, $\Omega^{\max}$ contains a total number of elements $m$ such that ${0 \leq m \leq n}$. 
Denote the elements as ${\Omega^{\max} = \{\omega_0, \omega_1, \cdots, \omega_{m-1} \}}$. 
Consider elements ${\omega, \omega' \in \Omega^{\max}}$, where ${\omega = \{\phi_{0}, \dots, \phi_{k-1}\}}$ consists of $k$ filling classes ordered by $\geq_{\xi}$, and ${\omega' = \{\phi'_{0}, \cdots, \phi'_{k'-1}\}}$ consists of $k'$ filling classes ordered by $\geq_{\xi}$. The ordering implies that $\phi_0$ is the greatest element of $\omega$ under $\geq_{\xi}$ and $\phi'_0$ is the greatest element of $\omega'$ under $\geq_{\xi}$. As a preliminary, we prove the following claim: 

\bigskip

\textbf{Claim 1:} Whenever ${\omega \cap \omega' \neq \emptyset}$, then for some index ${0 \leq j \leq \min(k, k')}$, $\omega$ and $\omega'$ are identical up to this $j^{\text{th}}$ element. That is, ${\phi_{0}=\phi'_0, \ \phi_1 = \phi'_1, \cdots, \ \phi_{j-1} = \phi'_{j-1}}$.

\bigskip
 We proceed recursively and by contradiction. In the base case, suppose for the sake of contradiction that $\phi_0 \neq \phi'_{0}$. 
 Since $\omega \cap \omega' \neq \emptyset$, $\exists \phi $ such that $\phi \in \omega$ and $\phi \in \omega'$. Since $\phi_0, \ \phi'_0$ are respectively the greatest elements of $\omega$ and $\omega'$ under the ordering $\geq_{\xi}$, 
 \begin{align}
     \xi(\Pi_\phi) \subset \xi(\Pi_{\phi_0}), \label{eq: phi zero inclusion} \\
     \xi(\Pi_\phi) \subset \xi(\Pi_{\phi'_0}).\label{eq: phi prime zero inclusion}
 \end{align}
 Therefore, ${\xi(\Pi_{\phi_0}) \cap \xi(\Pi_{\phi'_0}) \neq \emptyset}$. There are three possible cases. In the first case, $\xi(\Pi_{\phi_0}) = \xi(\Pi_{\phi'_0})$, and we arrive at the contradiction that ${\phi_0 = \phi'_0}$, since $\Phi$ is well-behaved.
 In the second case, \begin{equation}
     \xi(\Pi_{\phi_0}) \subset \xi(\Pi_{\phi'_0}). \label{eq: phi zero inclusion in phi prime zero}
 \end{equation} Then, \begin{equation}
     \omega'' = \{\phi'_0, \phi_0, \phi_1, \cdots, \phi_{k-1} \} \label{eq: omega '' second case}
 \end{equation} is a totally ordered subset of $\Phi$ such that $\omega \subset \omega'',$ and we arrive at the contradiction that $\omega \ \cancel{\in} \ \Omega^{\max}$.
 In the third case, $\xi(\Pi_{\phi'_0}) \subset \xi(\Pi_{\phi_0})$ and the argument follows the same logic as in the second case. 

If ${(\omega \setminus \phi_0) \cap (\omega' \setminus \phi'_0) = \emptyset}$, then $\omega, \ \omega'$ are only identical up to the $1^{st}$ element. Otherwise, we can repeat the argument from the base case with $\omega \setminus \phi_0$ and $\omega \setminus \phi'_0 $ as the basis. This recursive argumentation can be repeated until we arrive at an element $j\leq \min(k, k')$ such that ${(\omega \setminus \phi_{j-1}) \cap (\omega' \setminus \phi'_{j-1}) = \emptyset}$.

\bigskip
As a remark, note that  a singleton $\omega = \phi \in \Phi$ can be a maximal element of $\Omega$ as long as ${\xi(\Pi_{\phi}) \cap \xi(\Pi_{\phi'}) = \emptyset, \ \forall \phi' \neq \phi \in \Phi}$.

\bigskip
We now proceed to construct a schedule $S$ for $\Phi$, with start time $t_0$, by constructing schedules $\forall \omega \in \Omega^{\max}$. 
For every $\omega \in \Omega^{\max}$, identify $\alpha_{\omega}$, the unique greatest element of $\omega$ according to the order $\geq_{\xi}$. 
The existence of this unique greatest element is guaranteed by Lemma \ref{lemma: unique greatest element of Phi}. Let $S_{\omega}$ be a direct allocation schedule for $\alpha_{\omega}$ with start time ${s_{\alpha_{\omega}} = t_0}$. 
By Corollary~\ref{corr: DirectAllocReqTime}, the end time of the schedule $S_{\omega}$ is ${e_{\alpha_{\omega}} = t_0 + R(\alpha_{\omega})}$. Define $\tilde{\omega} = \omega \setminus \alpha_{\omega}$, a temporary update of $\omega$ excluding the greatest element. 
Now, by Lemma~\ref{lemma: unique greatest element of Phi}, $\tilde{\omega}$ has a unique greatest element $\beta_{\omega}$, according to the order $\geq_{\xi}$. 
Append to schedule $S_{\omega}$ a direct allocation schedule for $\beta_{\omega}$ with start time ${s_{\beta_{\omega}} = e_{\alpha_{\omega}}}$. 
By Corollary~\ref{corr: DirectAllocReqTime}, the end time of the schedule $S_{\omega}$ is ${e_{\beta_{\omega}} = t_0 + R(\alpha_{\omega}) + R(\beta_{\omega})}$. 
Update ${\tilde{\omega} = \omega \setminus \{\alpha_{\omega}, \beta_{\omega} \}}$. 
Continue in this manner until there are no remaining filling classes in $\tilde{\omega}$, i.e. until ${\tilde{\omega} = \emptyset}$. 
Then, $S_{\omega}$ is a schedule for $\omega$ constructed in the manner of the proof of Proposition~\ref{prop: totally ordered good accounting} which has start time $t_0$ and end time ${e_{\omega} = t_0 + \underset{\phi \in \omega}{\sum} R(\phi_{\omega})}$. Hence the required execution time of $S_{\omega}$ is $\tilde{R}(\omega)$, as given by \eqref{eq:minimalAllocReqtime}.

Claim 1 ensures that $\forall \omega, \omega' \in \Omega^{\max}$ the schedules $S_{\omega}$, $S_{\omega'}$ are consistent, meaning if $ \exists \phi \in \omega \cap \omega'$, then the start time $s_{\phi_{\omega}}$ assigned to $\phi$ in schedule $S_{\omega}$ is the same as the start time $s_{\phi_{\omega'}}$ assigned to $\phi$ in schedule $S_{\omega'}$, \begin{equation}
    s_{\phi_{\omega}} = s_{\phi_{\omega'}}. \label{eq: consistent start times}
\end{equation}
It follows immediately that \begin{equation}
    e_{\phi_{\omega}} = e_{\phi_{\omega'}},
\end{equation} since in both $S_{\omega}$ and $S_{\omega'}$, the schedule for $\phi$ is a direct allocation schedule with start times obeying \eqref{eq: consistent start times}.

Therefore, $\forall \omega, \omega' \in \Omega^{\text{max}}$, the schedules $S_{\omega}, \ S_{\omega'}$ are either independent, which happens when $\omega \cap \omega' = \emptyset$, or else the schedules are consistent. 
To construct the full schedule $S$,  it is therefore sufficient to loop through the schedules $S_{\omega}, \ \forall \omega \in \Omega^{\text{max}}$. For every schedule $S_{\omega}$, $\forall \phi \in \omega$, if $S$ does not already include the start time $s_{\phi}$ and end time $e_{\phi}$ for $\phi$, these start and end times are added to $S$. This method guarantess that no resource conflicts occur in $S$. Furthermore, since $\forall \phi \in \Phi$, $\exists \omega \in \Omega^{\max}$ such that $
\phi \in \omega$, the schedule $S$ constructed in this manner includes a direct allocation schedule $\forall \phi \in \Phi$. Hence, the schedule $S$ includes $\minimumallocation_{\gamma}$ PGAs of every PGT $\gamma \in Z_{\Phi}$ such that no minsep violations occur.

Let $\omega^{*}\in \Omega^{\max}$ indicate a set with schedule $S_{\omega^{*}}$ which is a solution to  \begin{equation}\label{eq: max schedule duration}
    S_{\omega^{*}} = \max_{\omega \in \Omega^{\max}} \tilde{R}(Z_{\omega}),
\end{equation}
where $Z_{\omega}$ denotes the set of PGTs which occupy the set of filling classes $\omega$. 
Whenever $\Omega^{\max} \neq \emptyset$, there exists such an $\omega^{*}, \ S_{\omega^{*}}$, which need not be unique. The amount of time required to execute the schedule $S$ is the amount of time required to execute $S_{\omega^{*}}$. This concludes the proof in the case that the path partition $\Pi$ describes an internally connected network. 

\bigskip
In the case that the path partition $\Pi$ describes a network that is disconnected, partition $\Pi$ into it's internally connected sub-networks. The preceding proof then applies to each internally connected sub-network. Let $\{S\}$ denote the set of schedules produced for the set of internally connected sub-networks.  The schedules $S \in \{S\}$ are independent and have no resource conflicts since the schedules are for sub-networks of the path partition, which have no resources in common. Each schedule $S \in \{S\}$ has start time $t_0$ and some end time $e_{S} \geq t_0$. Therefore, the set of schedules $S \in \{S\}$ constructs a  schedule $\mathfrak{S}$ for $\Phi$ in which $\minimumallocation_{\gamma}$ PGAs of every PGT $\gamma \in \Phi$ are scheduled, no minsep conflicts occur, and no resource conflicts occur.
Moreover, ${\exists S^{*}}$ with end time $e_{S^{*}}$ which satisfies ${e_{S^{*}} = \underset{{S \in \{S\}}}{\max} \ e_{S}}$. The schedule $\mathfrak{S}$ has required execution time ${e_{S^{*}} = \mathcal{R}(\Phi)}$, given by \eqref{eq: partiallyOrderedminimalAllocReqTime}. 

\end{proof}


\begin{corollary}\label{corr:minimumAllocTimeReq}
The \textsc{MinimalAllocation} function (Algorithm \ref{alg: Compute Schedule}) implements the specific scheduling method described in the proof of Theorem~\ref{thm: good accounting applies to partially ordered FCs} (Good Accounting). 
\end{corollary}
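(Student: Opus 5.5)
The plan is to compare Algorithm~\ref{alg: Compute Schedule} line by line with the construction in the proof of Theorem~\ref{thm: good accounting applies to partially ordered FCs}, in the same way Corollary~\ref{corr: DirectAllocReqTime} was obtained from Proposition~\ref{prop: timeReqSingleFC}. That construction has two ingredients. First, every filling class receives a direct allocation schedule. Second, the start time of each such block is the sum of the required times $R(Z_{\phi'})$ of the classes placed before it in the ordering $\geq_\xi$. The proof then reduces to checking that \textsc{MinimalAllocation} reproduces both ingredients.

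\emph{Step one: per-class blocks.} For each $\phi\in\Phi$, \textsc{MinimalAllocation} calls \textsc{DirectAllocation}$(\phi,S,t_0+\sum_{\phi'<\phi}R(Z_{\phi'}))$. By Corollary~\ref{corr: DirectAllocReqTime}, this call inserts exactly the direct allocation schedule of the proof of Proposition~\ref{prop: timeReqSingleFC}. That schedule occupies the interval $[t_\phi, t_\phi+R(Z_\phi))$, where $t_\phi$ denotes the start time passed in. So each block has the required form.

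\emph{Step two: start times.} I will read the sum $\sum_{\phi'<\phi}$ with respect to $\geq_\xi$, restricted to the classes comparable with $\phi$. By the well-behaved property and Claim~1 of the Good Accounting proof, these comparable predecessors are exactly the classes preceding $\phi$ in every $\omega\in\Omega^{\max}$ that contains $\phi$. In the Good Accounting construction, $\phi$ is appended after $\alpha_\omega,\beta_\omega,\dots$ in $S_\omega$, so its start time there is $t_0+\sum_{\phi'\in\omega,\,\phi'>_\xi\phi}R(Z_{\phi'})$. The two expressions therefore agree term by term. This step relies on Claim~1, which ensures that all $\omega\ni\phi$ share the same prefix of predecessors, so the start time is well defined. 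Finally, the construction's step of merging the $S_\omega$ into $S$ corresponds to the algorithm writing each class block into $S$ exactly once, in its loop over $\phi\in\Phi$.

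\emph{Main obstacle.} The hard part is the interpretation of $\phi'<\phi$ in Algorithm~\ref{alg: Compute Schedule}. The text describes this only as a partial order fixed by the implementation. If it were read as a linear extension that also counts incomparable classes, the start times would differ from the construction: blocks would be shifted later and could exceed $\mathcal{R}(Z_\Phi)$. I would therefore state explicitly that the order used is $\geq_\xi$ restricted to comparable classes, and that the algorithm iterates over $\Phi$ in an order compatible with $\geq_\xi$. Under that reading, the sum contains exactly the strict $\geq_\xi$-predecessors of $\phi$. For the disconnected case, I would note that classes in different internally connected sub-networks are incomparable, so each component starts at $t_0$. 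This matches the final paragraph of the Good Accounting proof.
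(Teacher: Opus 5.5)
Your proposal is correct and takes the paper's route: a direct comparison in which each class gets its direct allocation block of length $R(Z_\phi)$, starting at $t_0+\sum_{\phi'>_\xi\phi}R(Z_{\phi'})$, and this start time coincides with the one assigned in the Good Accounting construction. You also make explicit that the $\sum_{\phi'<\phi}$ in Algorithm~\ref{alg: Compute Schedule} must be read as a sum over the strict $\geq_\xi$-predecessors of $\phi$, which by well-behavedness are exactly the classes preceding $\phi$ in every maximal chain containing it; the paper adopts this reading only implicitly when it writes $s_\phi=t_0+\sum_{\phi'>_\xi\phi}R(\phi')$.
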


\begin{proof}
    The result follows by direct comparison of Algorithm~\ref{alg: Compute Schedule} and the proof of Theorem~\ref{thm: good accounting applies to partially ordered FCs}. In particular, note that the \textsc{MinimalAllocation} function in \textsc{ComputeSchedule} makes use of the partial ordering of the filling classes in building 
    the schedule $S$ with start time $t_0$. 
    To construct $S$, ${\forall \phi \in \Phi}$, a direct allocation schedule for $\phi$ is inserted into $S$ with start time $s_{\phi}$ and end time ${e_{\phi} = s_{\phi} + R(\phi)}$.
    The start time $s_\phi$ is set according to \begin{equation}\label{eq: set start phi}
        s_{\phi} = t_0 + \sum_{\phi' >_{\xi} \phi} R(\phi').
    \end{equation}
    The identification of ${\phi' >_{\xi} \phi}$, implicitly describes that \textsc{MinimalAllocation} identifies the subset ${\omega\subset \Omega}$ of all totally ordered subsets of $\Phi$ for which ${\omega \cap \phi \neq 0}$. By~\eqref{eq: set start phi}, ${\forall \phi' \geq_{\xi} \phi}$, ${s_{\phi'} < s_{\phi}}$ and moreover 
    \begin{equation}
        s_{\phi} = \max_{\phi' >_{\xi} \phi} e_{\phi'},
    \end{equation}
    which is exactly how $s_\phi$ is assigned in the proof of Theorem~\ref{thm: good accounting applies to partially ordered FCs}.
\end{proof}

As a consequence of Theorem~\ref{thm: good accounting applies to partially ordered FCs}, given the set of filling classes $\Phi$ as input, the minimal allocation phase of Compute Schedule (Algorithm \ref{alg: Compute Schedule}) produces a valid schedule which includes $\minimumallocation_{\gamma}$ PGAs of every PGT $\gamma \in \Phi$. The schedule requires execution time $\mathcal{R}(\Phi)$, as given by (\ref{eq:minimalAllocReqtime}). 
We subsequently refer to this method of scheduling simply as minimal allocation scheduling.

\subsubsection{Admit Tasks is Sound}\label{ssapp: Admin Control is Sound}

For the Admit Tasks process to be sound, it must guarantee that newly accepted demands do not disrupt existing service agreements (\ref{consideration: AC prevents overload}). 
Before proceeding, it is useful to introduce definitions describing the availability of resources and how much time has already been reserved on a network resource. 

\begin{definition}[Resource Availability]
    A resource is \emph{available} between times $t_1$ and $t_2$ if there are no scheduled PGAs for any PGTs at any time between $t_1$ and $t_2$. The total amount of \emph{\textbf{time available}} on each resource ${r \in \mathfrak{R}}$ is a non-negative number \begin{equation}\label{eq:AvailabelTimeResr}
        T_{\xi_r} \in [0, T^{SI}].
    \end{equation}
\end{definition}

\begin{definition}[Time Reserved on a Resource]\label{def:timeResResource}
Let ${r \in \mathfrak{R}}$ and let $T_{\xi_r}$ be the total amount of time available on $r$. A duration of time $T$ is \emph{\textbf{reserved on resource}} $r$ by decrementing the available time by $T$,
\begin{equation}\label{eq:reserveTimeT}
    T_{\xi_{r}} \leftarrow T_{\xi_r} - T.
\end{equation}
We say it is \emph{possible} to reserve time $T$ on resource $r$ if \emph{(\ref{eq:reserveTimeT})} is positive.
A reservation of time with duration $T$ may be \emph{\textbf{canceled}} by incrementing the available time on resource $r$ by $T$,
\begin{equation*}
    T_{\xi_r} \leftarrow T_{\xi_r} + T.
\end{equation*}
\end{definition}

Note that in the proofs of Propositions~\ref{prop: timeReqSingleFC} and~\ref{prop: totally ordered good accounting} it was assumed that the required network resources to execute a PGT are available. It remains to be shown that Algorithm~\ref{alg: Admit Tasks} guarantees that a PGT is only accepted if it is possible to reserve sufficient time on the required network resources.

\begin{theorem}[Admit Tasks is Sound]\label{thm: admin control is sound} 
Let $\gamma' \in \bm{\Gamma}$, let $\Phi$ be a well-behaved set of filling classes with the set of active PGTs $Z_{\Phi}$, and suppose ${\exists \phi \in \Phi}$ such that $\pi_{\gamma'} \in \Pi_{\phi}$. 
Then, $\gamma'$ is accepted by Admit Tasks as implemented by Algorithm~\ref{alg: Admit Tasks} if the time required to execute a minimal allocation schedule for $Z_{\Phi} + \gamma'$ is less than or equal to the scheduling interval. That is, \begin{equation}\label{eq:reqLessThanSI}
    \mathcal{R}(Z_{\Phi} + \gamma') \leq T^{SI},
\end{equation} where $\mathcal{R}(\cdot)$ is as in \eqref{eq: partiallyOrderedminimalAllocReqTime}.
\end{theorem}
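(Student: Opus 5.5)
The plan is to show that the per-resource bookkeeping in Algorithm~\ref{alg: Admit Tasks} computes $T^{SI}-\mathcal{R}(\cdot)$ exactly, resource by resource. Then the acceptance test for $\gamma'$ reduces to the inequality \eqref{eq:reqLessThanSI}. Concretely, I would first establish an identity for the available times $\timeresourcer$ and then read off the acceptance condition.

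\emph{Step 1 (chains through a resource).} For $r\in\mathfrak{R}$, let $\Theta_r=\{\phi\in\Phi : r\in\xi(\Pi_\phi)\}$. Any two elements of $\Theta_r$ have intersecting associated resources, since both contain $r$. Condition~2 of Definition~\ref{def: well-behaved set of filling classes} therefore makes them comparable under $\geq_\xi$, so $\Theta_r$ is totally ordered, i.e. $\Theta_r\in\Omega$.

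Conversely, let $\omega\in\Omega$ be a nonempty chain and let $\phi_{\min}$ be its least element. I will use that every cell of the path partition $\Pi$ contains a path through at least one internal resource, so that $\xi(\Pi_{\phi_{\min}})\neq\emptyset$. Pick $r\in\xi(\Pi_{\phi_{\min}})$. Every $\phi\in\omega$ satisfies $\xi(\Pi_{\phi_{\min}})\subseteq\xi(\Pi_\phi)$, so $r\in\xi(\Pi_\phi)$ and hence $\omega\subseteq\Theta_r$. Since every $R(Z_\phi)\ge 0$, $\tilde{\mathcal R}$ is monotone under inclusion of chains. It follows that
\begin{equation*}
\mathcal{R}(Z_\Phi)=\max_{\Theta\in\Omega}\tilde{\mathcal R}(Z_\Theta)=\max_{r\in\mathfrak R}\sum_{\phi\in\Theta_r}R(Z_\phi).
\end{equation*}
The sets $\Theta_r$ depend only on $\Pi$ and not on which PGTs populate the filling classes. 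The same identity therefore holds with $Z_\Phi$ replaced by $Z_\Phi+\gamma'$, with $R(Z_\phi)$ replaced by $R(\tilde Z_\phi)$ for the unique class $\phi\ni\gamma'$ given by Lemma~\ref{lemma: one fill class per PGT}.

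\emph{Step 2 (tracking the algorithm).} After the initialization loop of Algorithm~\ref{alg: Admit Tasks}, each resource satisfies $\timeresourcer=T^{SI}-\sum_{\phi\in\Theta_r}R(Z_\phi)$, by direct inspection. I would carry this as an invariant through every previous acceptance in the same run. An acceptance replaces $R(Z_\phi)$ by $R(\tilde Z_\phi)$ on exactly the resources $r\in\xi(\Pi_\phi)$, i.e. exactly those $r$ with $\phi\in\Theta_r$, so the invariant persists. A rejection restores the reservation and leaves all quantities unchanged.

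When $\gamma'$ in class $\phi$ is tested, the algorithm cancels the reservation $R(Z_\phi)$ on $\xi(\Pi_\phi)$ and then checks $\timeresourcer-R(\tilde Z_\phi)\ge 0$. By the invariant, this quantity equals $T^{SI}-\sum_{\psi\in\Theta_r}R(Z'_\psi)$, where $Z'_\psi$ is the population of $\psi$ after adding $\gamma'$.

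\emph{Step 3 (conclusion).} Assume $\mathcal{R}(Z_\Phi+\gamma')\le T^{SI}$. By Step~1, $\sum_{\psi\in\Theta_r}R(Z'_\psi)\le T^{SI}$ holds for every $r$, in particular for every $r\in\xi(\Pi_\phi)$. So no check fails, \texttt{Accept} stays true, and $\gamma'$ is accepted. The same computation also gives the converse: if $\gamma'$ is accepted, then all resources in $\xi(\Pi_\phi)$ pass the check. Resources outside $\xi(\Pi_\phi)$ keep their prior nonnegative values, by the invariant applied inductively to earlier acceptances and to the state inherited from previous scheduling intervals. Hence $\mathcal R(Z_\Phi+\gamma')\le T^{SI}$, which is the actual soundness property in the sense of~\ref{consideration: AC prevents overload}.

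I expect the main obstacle to be Step~1. The delicate points are justifying $\xi(\Pi_{\phi_{\min}})\neq\emptyset$ for every filling class, and checking that the maximum over arbitrary chains in \eqref{eq: partiallyOrderedminimalAllocReqTime} is really attained at some $\Theta_r$. If the nonemptiness assumption fails, I would discard filling classes with empty associated resources, which contribute nothing to either side of the identity.
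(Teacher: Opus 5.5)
Your argument is correct and has the same skeleton as the paper's proof: per-resource reservations, cancelling and recomputing $R$ for the class $\phi$ of $\gamma'$, and comparing with $T^{SI}$. Where you differ is in how you justify the central identity.

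The paper obtains ``maximum reserved time over resources $=\mathcal{R}(\cdot)$'' by citing Corollary~\ref{corr:minimumAllocTimeReq} and Theorem~\ref{thm: good accounting applies to partially ordered FCs}, i.e.\ through the schedule construction. It never shows why per-resource sums coincide with the maximum over chains. You prove $\mathcal{R}(Z_\Phi)=\max_{r}\sum_{\phi\in\Theta_r}R(Z_\phi)$ directly, from condition~2 of well-behavedness and $R(Z_\phi)\ge 0$.

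This makes explicit where well-behavedness is used. It also splits the identity into two halves. The direction actually stated (bound implies acceptance) needs only that each $\Theta_r$ is a chain. The converse needs every chain to lie in some $\Theta_r$, and hence $\xi(\Pi_\phi)\neq\emptyset$.

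The paper's written proof conditions on all comparisons passing. What it establishes is therefore the converse: acceptance implies $\mathcal{R}(Z_\Phi+\gamma')\le T^{SI}$. You cover both directions. Your invariant also handles PGTs examined after earlier acceptances in the same run, which the paper leaves implicit.

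Two details need correcting.

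\emph{Rejection branch.} Your claim that a rejection restores the reservation holds for Listing~\ref{lst: admit new tasks}. It does not hold for Algorithm~\ref{alg: Admit Tasks} as written, whose reject branch never subtracts $R(Z_\phi)$ again. For the pseudocode your invariant survives only as the inequality $T_{\xi_r}\ge T^{SI}-\sum_{\psi\in\Theta_r}R(Z_\psi)$. That still yields the stated direction. The converse, however, then fails: a lost reservation can let a later PGT of an overlapping class through. So the converse genuinely depends on the re-subtraction in the listing.

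\emph{Empty associated resources.} Discarding a filling class with $\xi(\Pi_\phi)=\emptyset$ does not preserve your identity if that class contains PGTs. It is still a chain by itself, so $R(Z_\phi)$ enters $\mathcal{R}$ while being reserved on no resource. Again only the converse is affected. The case cannot occur when every allowed path traverses an internal resource, as in the implementation.
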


\begin{proof}
    Throughout, let $\mathcal{R}(\Theta)$ denote the time required to execute a minimal allocation schedule for a set of filling classes $\Theta$, as given by \eqref{eq: partiallyOrderedminimalAllocReqTime}, and let $R(\theta)$ denote the time required to execute a direct allocation schedule for a single filling class $\theta$, as given by~\eqref{eq:RphiIsCalcReqTime} .

    The set of active PGTs $Z_{\Phi}$ satisfies $\mathcal{R}(Z_{\Phi}) \leq T^{SI}$, as every active PGT was previously accepted by Admit Tasks. 
    In the special case where there are no active PGTs, i.e. $Z_{\Phi}=\emptyset$ and $N=|Z_{\Phi}|$ is zero, this still holds as \begin{equation*}
        \mathcal{R}(Z_{\Phi}=\emptyset)=0 \Rightarrow \mathcal{R}(Z_{\Phi}) \leq T^{SI}.
    \end{equation*}
    In that case also \begin{equation*}
        Z_{\phi} = \emptyset  \wedge R(Z_{\phi}) = 0 \ \forall \phi \in \Phi.
    \end{equation*}
     This situation must occur at least the very first time Admit Tasks executes, and may occur in some scheduling interval $k$ if all active PGTs expire or are terminated by Update Filling Classes, which precedes Admit Tasks. 
     
    In the first step of \textsc{AdmitTasks}, the available time $T_{\xi_r}$ on each resource $r \in \mathfrak{R}$ is reset to the duration of the scheduling interval, $T^{SI}$. 
    For each filling class ${\phi \in \Phi}$, the amount of time $R(Z_{\phi})$ required to execute a direct allocation schedule is calculated. Then, ${\forall r \in \xi(\Pi_{\phi})}$ the time reserved on $r$ is incremented by $R(Z_{\phi})$. 
    By Corollary \ref{corr:minimumAllocTimeReq}, sequentially reserving the time required to execute a direct allocation schedule on a resource ${r \in \xi(\Pi_{\phi})}, \ \forall \phi$, is equivalent to reserving time on the the resource according to a minimal allocation schedule.
    
    Let $M^{Z_{\Phi}}$ denote the maximum amount of time reserved on any resource for the set of active PGTs $Z_{\Phi}$, \begin{equation*}
        M^{Z_{\Phi}} := \max_{r \in \mathfrak{R}} (T^{SI} - T_{\xi_r}).
    \end{equation*}
    By Corollary~\ref{corr:minimumAllocTimeReq} this is equal to the duration of the minimal allocation schedule for $Z_{\Phi}$. Overall,
    \begin{equation*}
        M^{Z_{\Phi}} = \mathcal{R}(Z_{\Phi}) \leq T^{SI}.
    \end{equation*}
    At this point, the task $\gamma'$ is considered by \textsc{AdmitTasks}. 
    By the hypothesis, ${\exists \phi \in \Phi}$ such that ${\pi_{\gamma'} \in \Pi_{\phi}}$, so \textsc{GetFillingClass} returns the filling class $\phi$. 
    The next step of \textsc{AdmitTasks} is to add the PGT $\gamma'$ to $\tilde{Z}_{\phi} = Z_{\phi} + \gamma'$, a temporary update of the set of active PGTs in filling class $\phi$. 
    The \texttt{Accept} variable is initialized as True. 
    The time reservation of amount $R(Z_{\phi})$ is then canceled on every resource $r \in \xi(\Pi_{\phi})$. 
    Next, the time $R(\tilde{Z}_{\phi})$ required to execute a direct allocation schedule for $\tilde{Z}_{\phi}$ is calculated.
    
    By the proof of Proposition \ref{prop: timeReqSingleFC} and Corollary \ref{corr: DirectAllocReqTime}, the direct allocation schedule for $\phi$ may include idle time based on the values of $\tminsep_{\gamma}, \ \forall \gamma \in Z_{\phi}$. For this reason, it is possible that the time required to execute a direct allocation schedule of $\tilde{Z}_{\phi}$, $R(\tilde{Z}_{\phi})$ is not substantially greater than $R(Z_{\phi})$, and it may be that $R(\tilde{Z}_{\phi}) = R(Z_{\phi})$.
    Canceling the time reservation of $R(Z_{\phi})$ allows for correct comparison of $R(\tilde{Z}_{\phi})$ to the available time on all resources $r \in \xi(\Pi_{\phi})$. Overall, $R(Z_{\phi}) \leq R(\tilde{Z_{\phi}})$.

    For every resource $r \in \xi(\Pi_{\phi})$, the time $T_{\xi_r}$ available on resource $r$ is compared to the required time $R(\tilde{Z}_{\phi})$.
    If $\exists r \in \xi(\Pi_{\phi})$ such that \begin{equation}
        T_{\xi_r} - R(\tilde{Z}_{\phi}) < 0,\label{eq: test required time}
    \end{equation}
     then the \texttt{Accept} variable is set to false, the PGT $\gamma$ is removed from $\bm{\Gamma}$, and $\gamma'$ receives a reject decision.
    Otherwise, the value of \texttt{Accept} remains true, the set of active PGTs $Z_{\phi}$ is updated to $\tilde{Z}_{\phi}$, and the available time on each associated resource $r \in \xi(\tilde{Z}_{\phi})$ is decremented by the required time for the filling class, $R(\tilde{Z}_{\phi})$. 
    Define $M^{Z_{\Phi} + \gamma'}$ and $r^{*}$ respectively as the maximum amount of time reserved on any resource due to the PGTs in $Z_{\Phi} + \gamma'$ and the resource on which the maximum reservation occurs.
    By Corollary \ref{thm: good accounting applies to partially ordered FCs}, \begin{equation}\label{eq:MaxReservedIsMinimumAllocRequired}
        M^{Z_{\Phi} + \gamma'} = \mathcal{R}(Z_{\Phi} + \gamma').
    \end{equation}
    Since we condition on all comparisons \eqref{eq: test required time} passing, \begin{equation}
        T_{\xi_r} - R(\tilde{Z}_{\phi}) \geq 0, \ \forall r \in \xi(\Pi_{\phi}).
    \end{equation}
    Hence it is guaranteed that
    \begin{equation}
        T_{\xi_{r^{*}}} = T^{SI} - M^{Z_{\Phi}+\gamma'} \geq 0.\label{eq:MaxReservedLeqSchedulingInterval} 
    \end{equation}
    Combining (\ref{eq:MaxReservedIsMinimumAllocRequired}) and (\ref{eq:MaxReservedLeqSchedulingInterval}), \begin{equation}
        \mathcal{R}(Z_{\Phi}+\gamma') \leq T^{SI}.
    \end{equation}
\end{proof}

Theorem \ref{thm: admin control is sound} (Admit Tasks is Sound) conditioned on inclusion of the path $\pi_{\gamma'} \in \Pi_{\phi}$ for some $\phi \in \Phi$. The following Lemma addresses PGTs $\gamma' \in \bm{\Gamma}$ for which ${\cancel{\exists} \ \phi}$ such that $ {\pi_{\gamma'} \in \ \Pi_{\phi}} $.
 This can happen in a network with a dynamically evolving network resource graph. For example, if a resource on the path $\pi_{\gamma}$ becomes unresponsive between the time at which the demand was registered and the time at which $\bm{\Gamma}$ is retreived by the Network Scheduler. 
\begin{lemma}\label{lemma: Admin sound when no FC}
    Suppose $\gamma' \in \bm{\Gamma}$ and ${\cancel{\exists} \ \phi}$ such that $ {\pi_{\gamma'} \in \ \Pi_{\phi}}$. 
    Then, $\gamma'$ is rejected by \textsc{AdmitTasks}.
\end{lemma}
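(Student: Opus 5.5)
The plan is to trace the control flow of Algorithm~\ref{alg: Admit Tasks} on a PGT $\gamma'$ whose path lies in no cell of the current partition. The only step that depends on whether $\pi_{\gamma'}$ is in a cell is the call to \textsc{GetFillingClass}. By the specification in the caption of Algorithm~\ref{alg: Admit Tasks}, this helper returns the filling class $\phi$ with $\pi_{\gamma'} \in \Pi_{\phi}$ if one exists, and otherwise raises a \texttt{FillingClassError}. Under the hypothesis, no such $\phi$ exists, so the exception is raised.

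Next I follow the \texttt{except} branch. It sets $\Gamma_d \leftarrow \Gamma_d \setminus \gamma'$ and executes \texttt{continue}. This branch is reached before $\tilde{Z}_{\phi}$ is formed, before \texttt{Accept} is initialized, and before the only line that appends to $\bm{A}$. So, on this pass of the inner loop, $\gamma'$ is never added to $\bm{A}$ and no filling class $Z_{\phi}$ is modified. Since $\gamma'$ has been removed from $\Gamma_d$, it is never considered again in later iterations. The inner loop then moves on to the other alternatives in $\Gamma_d$ or ends when $\Gamma_d$ is empty.

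The main subtle point is the meaning of ``rejected.'' The Network Scheduler sends the pair $(\mathbf{A}, \bm{\Gamma}\setminus\mathbf{A})$ to the Demand Manager (Algorithm~\ref{alg: Network Scheduler main}), and everything not in $\mathbf{A}$ receives the rejection outcome. It therefore suffices to show that $\gamma' \notin \mathbf{A}$. This follows because the only statement that inserts into $\mathbf{A}$ sits after the \texttt{try} block, and that point is unreachable for $\gamma'$.

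One further check is needed: the \texttt{continue} must not leave the algorithm stuck or let $\gamma'$ slip through in another way. Here I note that $\Gamma_d$ strictly shrinks each time the branch is taken. I would also observe, as a minor remark, that if $\Gamma_d$ empties through this branch, the outer loop's removal of $\Gamma_d$ from $\bm{\Gamma}$ has to be handled by the loop guard. Either way, $\gamma'$ is never accepted, so it is rejected, and the demand is rejected if every alternative fails.
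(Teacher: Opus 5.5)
Your proposal is correct and follows the same route as the paper. Both arguments trace the call to \textsc{GetFillingClass}, note that it raises \texttt{FillingClassError}, and observe that the \texttt{except} branch removes $\gamma'$ from $\Gamma_d$ before any acceptance step is reached. Your extra care goes slightly beyond the paper's proof but does not change the argument: you make precise that rejection means $\gamma' \notin \mathbf{A}$, and you flag that the pseudocode never removes a $\Gamma_d$ emptied through this branch from the intake object, a termination issue the paper does not address.
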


\begin{proof}
    Let $d$ be the demand associated with $\gamma'$. Let $\bm{\Gamma}_d \subseteq \bm{\Gamma}$ denote set PGTs in $\bm{\Gamma}$ associated with the demand $d$. 
    The \textsc{AdmitTasks} function calls the \textsc{GetFillingClass} function for the PGT $\gamma'$. If ${\cancel{\exists} \ \phi}$ such that $ {\pi_{\gamma'} \in \ \Pi_{\phi}}$, then \textsc{GetFillingClass} raises a \texttt{FillingClassError}, the PGT $\gamma'$ is removed from $\bm{\Gamma}_{d}$, and the PGT $\gamma'$ is rejected by \textsc{AdmitTasks}.  
\end{proof}

\begin{remark}
    When a PGT $\gamma'$ which realizes a demand $d$ is accepted, the final step of \textsc{AdmitTasks} is to remove the set of alternative PGTs $\bm{\Gamma}_d \subseteq \bm{\Gamma}$ associated with the same demand. Hence $\bm{\Gamma}$ is updated to $\bm{\Gamma} \setminus \bm{\Gamma_d}$. This ensures that at most one PGT is admitted per demand.
\end{remark}

\subsection{Minimal Allocation is Guaranteed}

It is now possible to prove one of our main results, Theorem~\ref{thm: min alloc guaranteed} (Minimal Allocation is Guaranteed).
This theorem underpins the proof of Theorem~\ref{thm: QoSAcceptedTasks} (Deterministic Satisfaction of Service Agreements) in Section~\ref{sec: Performance Analysis}.

The following theorem assumes that the set of filling classes is well-behaved. 
In our implementation of Arqon (Section~\ref{sec: Implementation}) we specify a path partition and the set of filling classes it induces. Then, we prove that this specific set of filling classes is well-behaved.
This is sufficient to demonstrate the existence of well-behaved sets of filling classes. 

\MinAllocGuaranteed*

\begin{proof}
    By Theorem~\ref{thm: admin control is sound}, the time required to execute a minimum allocation schedule does not exceed the scheduling interval $T^{SI}$. 
    
    It remains to show that given a schedule $S_k$ as input, the bonus allocation phase of Compute Schedule cannot extend the required execution time of the schedule $S_k$ to beyond $T^{SI}$.
    The \textsc{BonusAllocation} function in Algorithm \ref{alg: Compute Schedule} for Compute Schedule takes the schedule $S_k$ as input, and ${\forall \phi \in \Phi}$, it returns an update of the schedule, $S_k = \textsc{RoundRobinBonus}(\phi, S_k)$. \textsc{RoundRobinBonus}, given by Algorithm~\ref{alg: bonus round robin resource schedule} may insert PGAs of a PGT $\gamma$ into the schedule $S_k$. Before any PGA may be added, the following condition is checked: \begin{equation}\label{eq: condition to add in RRB}
        \test + E_{\gamma} < T^{SI}.
    \end{equation}
    Condition \eqref{eq: condition to add in RRB} ensures that a PGA is only added to schedule $S_k$ if the time required to execute the updated schedule $S_k$ does not exceed $T^{SI}$. Hence, after all updates are made to $S_k$ in the bonus allocation phase, the time required to execute $S_k$ does not exceed $T^{SI}$.
\end{proof}

\section{Complexity Analysis of the Network Scheduler}
\label{app: complexity proofs}

We now develop the complexity analysis of the Network Scheduler, from which we derive and prove Theorem~\ref{thm: overall complexity} in Section~\ref{sec: Complexity}. 
We calculate the complexity in terms of the number of active PGTs ${N=|Z_{\Phi}|}$ in the set of filling classes $\Phi$, the number of active PGTs ${N_{\phi}=|Z_{\phi}|}$ in a single filling class $\phi$,
the number of PGTs ${k=|\bm{\Gamma}|}$ in the task intake object, and the number of internal resources ${R=|\mathfrak{R}|}$. 

To bound the complexity of a process, it must eventually halt on any input. 
Unlike the other processed of the Network Scheduler, which are based on loops with a finite number of iterations, the bonus allocation phase is based on \textsc{RoundRobinBonus}, which contains a \texttt{while} loop.
To bound the operational complexity of \textsc{RoundRobinBonus}, it is necessary to demonstrate that this algorithm halts on any input. 

In this Appendix we first prove that \textsc{RoundRobinBonus} halts on any input, and then we bound the complexity of each process of \textsc{NetworkScheduler}.

\newcommand{\trel}{T^{\text{rel}}}
\newcommand{\htrel}{\hat{T}^{\text{rel}}}
\newcommand{\tend}{t^\texttt{end}}
\newcommand{\tstart}{t^\texttt{start}}
\newcommand{\tminimal}{t^\texttt{minimal}}
\newcommand{\sinfty}{S_\infty}

\newcommand{\nst}{\textsc{nextStartTime}}

\subsection{Round Robin Bonus Halts}
To prove that \textsc{RoundRobinBonus}, Algorithm~\ref{alg: bonus round robin resource schedule}, halts on any input we need to show that in finite time the earliest start time $\test$ is updated by \textsc{RoundRobinBonus} to a value that exceeds the scheduling interval, $\test > T^{SI}$. 

Let us simplify the representation of the algorithm in the following manner:

\begin{algorithm}[h]
    \SetKwInOut{Input}{Input}
    \SetKwInOut{Output}{Output}
    \SetKw{Return}{return}
    \SetKw{Continue}{continue}
    \SetKw{Break}{break}
    \SetKwProg{Fn}{Function}{:}{end}

    \Fn{\textsc{RoundRobinBonus}}{

    \Input{Filling class $\phi$, network schedule $S^{\text{input}}$.}
    \Output{Updated network schedule $S$.}

        Set $T^{\text{rel}}\leftarrow \{\}$\;
        Set $\test \leftarrow 0$\;
        Set $k\leftarrow 0$\;
        \While{$t^{\texttt{\emph{est}}} < T^{SI}$}{
          $S, \htrel \leftarrow f(S, T^{\text{rel}}, \test)$\;
          Set $\test, T^{\text{rel}} \leftarrow\textsc{nextStartTime}(S, \htrel,\test)$\;
        }
    }
\caption{Simplified representation of \textsc{RoundRobinBonus}.}
\label{alg: bonus round robin resource schedule rewrite}
\end{algorithm}
In Algorithm~\ref{alg: bonus round robin resource schedule rewrite} the function $f$ represents the action of the for loop in Algorithm~\ref{alg: bonus round robin resource schedule}. 

To prove that the algorithm halts, we need the sequences $S_i, \htrel_i, \trel_i$ and $\test_i$,  where \begin{align}
        S_{i+1}, \htrel_{i+1} &= f(S_i, \trel_i, \test_i),\label{eq: sequence f}\\
        \test_{i+1}, \trel_{i+1} &= \textsc{nextStartTime}(S_{i+1}, \htrel_{i+1}, \test_i), \label{eq: sequence nst}
    \end{align}
    and 
    \begin{align}
        \test_0 &= 0, & \trel_0 &= \emptyset, & \htrel_0 &= \emptyset, & S_0 &= S^{\text{input}}.
    \end{align}

Note that \textsc{NextStartTime} removes a release time from the set of release times to consider if it is the next start time, so in general $\htrel_{k} \neq \trel_{k}$.

Similarly, if a PGA is scheduled, then $f$ will add a release time to the set of release times to consider, so $\trel_{i}\neq\htrel_{i+1}$ in general. 


We first show that the next earliest start time $\test_k$ is smaller than any remaining release times to consider. 
\begin{lemma}
    $t^{\texttt{\emph{est}}}_k < \min T^{\text{\emph{rel}}}_k.$
    \label{lemma: test-k < min trel-k}
\end{lemma}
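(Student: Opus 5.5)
The plan is to argue by induction on $k$, using the recursive definitions \eqref{eq: sequence f} and \eqref{eq: sequence nst} together with the explicit form of \textsc{nextStartTime} in Algorithm~\ref{alg: auxiliary algoroithms used in bonus round robin}. The base case is immediate: $\trel_0=\emptyset$, so $\min\trel_0=+\infty$ by convention and $\test_0=0<\min\trel_0$. For the inductive step I will track two facts about the iteration from $k$ to $k+1$. The first concerns which release times are added by $f$. The second concerns what \textsc{nextStartTime} returns and removes.

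\textbf{Step 1 (release times added by $f$ lie strictly after $\test_k$).} Inside the for loop of Algorithm~\ref{alg: bonus round robin resource schedule}, a release time is appended only when $V^{MS}_L\neq\emptyset$. It then has the form $t^{\texttt{end}}_{\delta}+\tminsep_{\gamma_i}$ for some PGA $\delta$ of the same demand. Because $\delta\in V^{MS}_L$, its execution interval meets $(\test_k-\tminsep_{\gamma_i},\test_k]$. I intend to show this forces $t^{\texttt{end}}_\delta>\test_k-\tminsep_{\gamma_i}$, which gives $t^{\texttt{end}}_\delta+\tminsep_{\gamma_i}>\test_k$. Every element of $\htrel_{k+1}$ is then either an old element of $\trel_k$, which exceeds $\test_k$ by the inductive hypothesis, or a new element, which also exceeds $\test_k$. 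This is the step where half-open interval conventions matter. The delicate case is $\tminsep_{\gamma_i}=0$: the window $(\test_k,\test_k]$ is empty, so $V^{MS}_L=\emptyset$ and nothing is appended. I expect this boundary bookkeeping to be the main obstacle, and I will check it carefully against the interval conventions in \textsc{getLeftMinsepViolations}.

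\textbf{Step 2 (behaviour of \textsc{nextStartTime}).} Call \textsc{nextStartTime}$(S_{k+1},\htrel_{k+1},\test_k)$, and let $t_s$ be the minimal end time among PGAs with end time $>\test_k$ and $t_r=\min\htrel_{k+1}$. It returns $\test_{k+1}=\min\{t_s,t_r\}$. There are two cases.

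If $t_r\le t_s$, then $t_r$ is removed and $\test_{k+1}=t_r$. Every remaining element of $\trel_{k+1}=\htrel_{k+1}\setminus t_r$ is $\ge t_r$. To get the strict inequality I need either that release times are not duplicated or that duplicates equal to $t_r$ are also removed. I will treat $\trel$ as a set, as the notation $\{\}$ suggests, so that removal of $t_r$ leaves only elements strictly larger than $t_r$.

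If $t_s<t_r$, then $\test_{k+1}=t_s<t_r\le\min\trel_{k+1}$, since $\trel_{k+1}=\htrel_{k+1}$ in this case.

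Combining the two cases gives $\test_{k+1}<\min\trel_{k+1}$, which closes the induction. The degenerate return value $\infty$ with empty sets only occurs once the loop exits, so it does not affect the claim.
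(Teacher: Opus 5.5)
Your argument is correct, and its substantive part (Step 2) is exactly the paper's proof: a two-case analysis of \textsc{nextStartTime} according to whether $t_r \le t_s$ or $t_s < t_r$, relying as the paper implicitly does on $T^{\text{rel}}$ being a set, so that removing $t_r$ leaves only strictly larger elements. The induction and Step 1 are superfluous for this lemma, since Step 2 never uses the inductive hypothesis; the fact that newly appended release times exceed $t^{\texttt{est}}_k$ is instead what the paper needs later, in its proof that $t^{\texttt{est}}$ is strictly increasing.
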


\begin{proof}
Recall that $\delta = (\gamma, t; r) \in \tilde{S}$ is a PGA for PGT $\gamma$ scheduled at start time $t$ on resource $r$ in a schedule $\tilde{S}$. The end time of such a PGA $\delta$ is ${t^{\text{end}}_{\delta} = t + E_{\gamma}}$.
We examine the affect of \nst{}.
There are two cases to consider. 
If $\min\{t_\delta^{end} : \delta\in S'_k\} < \min \htrel_k$, then \begin{equation*}
    \test_k = \min\{t_\delta^{end} : \delta\in S'\} < \min \htrel_k = \min \trel_k.
\end{equation*} 
    Otherwise, if $\min \htrel_k \leq \min\{t_\delta^{end} : \delta\in S'_k\}$, then \begin{equation*}
        \test = \min\htrel_k, \text{ and } \trel_k  = \htrel_k\setminus \{\test_k\} \implies \test_k < \min \trel_k.
    \end{equation*} 
\end{proof}

Next we show that new release times are only added when $\test$ is the end time of a PGA in the schedule.
\begin{lemma}\label{lemma: only add to T-hat at end times}
    Suppose that for some $t^{\text{\emph{est}}}_k$, $\nexists \delta\in S_k$ such that $t^{\text{\emph{est}}}_k = t^{\text{\emph{end}}}_\delta$. 
    Then, following the application of $f$ as in~\eqref{eq: sequence f},   $\hat{T}^{\text{\emph{rel}}}_{k+1} = T^{\text{\emph{rel}}}_k$, i.e. no release times are added by $f$.  
\end{lemma}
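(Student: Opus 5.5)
The plan is to trace the body of the for loop in Algorithm~\ref{alg: bonus round robin resource schedule}, which is what $f$ abstracts, and show that under the hypothesis it never reaches the only statement that appends to $T^{\text{rel}}$. In $f$ there is exactly one place where a release time is added: the \textbf{else if} branch, executed when $V^{MS}_L\neq\emptyset$ and the PGA was not scheduled. So it suffices to show that $V^{MS}_L=\emptyset$ for every PGT $\gamma_i$ examined during this call, when $\test_k$ is not the end time of any PGA in $S_k$. Then $\htrel_{k+1}=\trel_k$.

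First I will pin down how $\test_k$ arises. By~\eqref{eq: sequence nst}, \nst{} returns either the minimum end time $t_s$ of PGAs in $S'$, or the minimum release time $t_r$. By hypothesis it is not an end time, so $\test_k=t_r$. I will then argue that $t_r$ was appended in an earlier iteration as $\max_{\delta\in V^{MS}_L}(\tend_\delta+\tminsep_{\gamma_i})$ for some PGT $\gamma_i$. This is the earliest instant at which the left minsep window $(t-\tminsep_{\gamma_i},t]$ no longer meets any earlier PGA of that demand.

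The main step is to show that at $t=\test_k$ no PGA $\delta$ of $\gamma$ on any $r\in\pi_\gamma$ satisfies $[\tstart_\delta,\tend_\delta)\cap(t-\tminsep_\gamma,t]\neq\emptyset$. Two kinds of PGA could cause a violation. One is a PGA that ends after $t-\tminsep_\gamma$ and lies wholly before $t$. The other is a PGA that contains $t$, and since $t$ is not an end time this would mean it is strictly running at $t$. I will use that $\gamma$ already has a PGA at or before $t$, so bookkeeping via end times and release times forces $t$ past the minsep boundary.

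This is the obstacle I expect. As stated, the lemma seems to need more than the definitions give directly. A different PGT $\gamma_j$ could have an earlier PGA ending inside $(t-\tminsep_{\gamma_j},t]$, making $V^{MS}_L\neq\emptyset$ for $\gamma_j$ at a release time created by $\gamma_i$. My first attempt will be to rule this out using the filling-class structure: within one filling class, the resource sets interact so that any such PGA's end time would be strictly smaller than $t$ and not yet visited, contradicting Lemma~\ref{lemma: test-k < min trel-k}, which orders release times above $\test$.

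If that fails, I would weaken the claim to what the halting argument actually needs. Added release times are strictly larger than $\test_k$, because $\tminsep>0$ or the appended value exceeds $\tend_\delta>\ldots$. That is the property the subsequent halting proof uses.
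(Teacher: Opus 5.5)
The gap is in your reduction: the claim that $V^{MS}_L=\emptyset$ for every PGT examined at $t=\test_k$ is false, so the proof cannot go through it. A PGT $\gamma'$ whose most recent PGA ended at an earlier end time $a$ with $a+\tminsep_{\gamma'}>t$ still has a nonempty left-violation set at $t$. This holds even when $t$ is a release time created by a different PGT, and neither the filling-class structure nor Lemma~\ref{lemma: test-k < min trel-k} rules it out. Your proposed repair also runs in the wrong direction. \textsc{nextStartTime} returns $\min\{t_s,t_r\}$ with $t_s$ the earliest end time after $\test_{k-1}$, and $\test_k$ is not an end time, so no PGA ends in $(\test_{k-1},\test_k]$. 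Hence any violator that has ended by $\test_k$ ended at or before $\test_{k-1}$: its end time has already been passed, not ``not yet visited''. Your fallback (added release times exceed $\test_k$) is a different statement, and it is not what the halting proof needs. The corollary on finitely many steps between end times invokes this lemma precisely to say that $|\hat{T}^{\text{rel}}|$ cannot grow between consecutive end times.

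The missing idea, which the paper uses, is to show that the violations present at $\test_k$ are not \emph{new}, rather than that there are none. The paper argues by contradiction. An element $t^*\in\hat{T}^{\text{rel}}_{k+1}\setminus T^{\text{rel}}_k$ has the form $t^{\texttt{end}}_{\delta^*}+\tminsep_{\gamma^*}$ for a PGA with $\test_{k-1}<t^{\texttt{end}}_{\delta^*}\le\test_k$. Since \textsc{nextStartTime} picks the earliest end time after $\test_{k-1}$, this forces $\test_k=t^{\texttt{end}}_{\delta^*}$, contradicting the hypothesis.

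Unpacked, the lower bound holds because of the following. A violator with $t^{\texttt{end}}_\delta\le\test_{k-1}$ and $t^{\texttt{end}}_\delta+\tminsep>\test_k$ was already a violator at $\test_{k-1}$. So its release time was recorded then, and since it exceeds $\test_k$ it has not yet been removed. Re-inserting it leaves the set unchanged.

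The upper bound $t^{\texttt{end}}_{\delta^*}\le\test_k$ presumes the violator has already ended. That is exactly your second case (a PGA still running at $t$), and bookkeeping cannot close it. Under the literal \textbf{else if} branch of Algorithm~\ref{alg: bonus round robin resource schedule}, take a PGA of $\gamma'$ (with $\tminsep_{\gamma'}>0$) scheduled at $\test_{k-1}$ and still running at a release time $\test_k$. It lies in $V^{MS}_L$ and can yield a genuinely new release time. What excludes this case is that the Python implementation (Listing~\ref{lst: bonusRR}) checks past minsep violations only after the resource-availability test has passed. So a PGA running on $\pi_{\gamma'}$ never triggers an append.
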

\begin{proof}
    Suppose not.
    Then there exists some time $t^*$ such that ${t^*\in\htrel_{k+1}}$ but ${t^*\not\in\trel_k}$. 
    This implies there exists a PGA $\delta^*$ from a PGT $\gamma^*$ such that ${t^* = \tend_{\delta^*} + \tminsep_{\gamma^*}}$ and ${\test_{k-1} < \tend_{\delta^*} \leq \test_{k}}$. 
    
    As ${\test_{k-1} < \tend_{\delta^{*}}}$, by the action of \textsc{nextStartTime} we must have that $\test_{k} \leq \tend_{\delta^*}$.
     This implies that ${\test_k = \tend_{\delta^*}}$. 
    But this contradicts the assumption that $\test_k$ is not an end time of any scheduled PGA and so no such $\delta^*, t^{*}$ exist.
\end{proof}

We now show that $\test_i$, which is updated in each loop of \textsc{RoundRobinBonus} by \textsc{nextStartTime} is strictly increasing.
\begin{lemma}
    For all $i>0$, ${t^{\text{\emph{est}}}_{i+1} - t^{\text{\emph{est}}}_{i} > 0}$. 
    \label{lemma: nextStartTime is non-decreasing.}
\end{lemma}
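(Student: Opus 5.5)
The plan is to unfold one iteration of the simplified loop in Algorithm~\ref{alg: bonus round robin resource schedule rewrite}. By \eqref{eq: sequence nst}, $t^{\texttt{est}}_{i+1} = \min\{t_s, t_r\}$. Here $t_s = \min\{t^{\texttt{end}}_\delta : \delta\in S'_{i+1}\}$, where $S'_{i+1}$ collects the PGAs of $S_{i+1}$ with $t^{\texttt{end}}_\delta > t^{\texttt{est}}_i$, and $t_r = \min \hat{T}^{\text{rel}}_{i+1}$. (If both sets are empty, \textsc{nextStartTime} returns $\infty$ and the claim is trivial.) It therefore suffices to prove two separate strict inequalities, $t_s > t^{\texttt{est}}_i$ and $t_r > t^{\texttt{est}}_i$.

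The first inequality holds by construction. $S'_{i+1}$ is defined by the strict filter $t^{\texttt{end}}_\delta > t^{\texttt{est}}_i$, and it is a finite set, so its minimum is again strictly greater than $t^{\texttt{est}}_i$. PGAs that $f$ adds at start time $t^{\texttt{est}}_i$ end at $t^{\texttt{est}}_i + E_\gamma > t^{\texttt{est}}_i$, since $E_\gamma>0$. They therefore enter $S'$ without disturbing the strict inequality.

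The second inequality is where I expect the main obstacle, because $\hat{T}^{\text{rel}}_{i+1}$ differs from $T^{\text{rel}}_i$ by the release times that $f$ appends. For the entries already in $T^{\text{rel}}_i$, I will invoke Lemma~\ref{lemma: test-k < min trel-k}, which gives $t^{\texttt{est}}_i < \min T^{\text{rel}}_i$.

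For the newly appended entries, I will read off Algorithm~\ref{alg: bonus round robin resource schedule}. A release time is appended only when $V^{MS}_L \neq \emptyset$, and its value is $\max_{\delta\in V^{MS}_L}(t^{\texttt{end}}_\delta + t^{\texttt{minsep}}_{\gamma_i})$. Every $\delta\in V^{MS}_L$ satisfies $[t^{\texttt{start}}_\delta, t^{\texttt{end}}_\delta)\cap(t^{\texttt{est}}_i - t^{\texttt{minsep}}_{\gamma_i}, t^{\texttt{est}}_i]\neq\emptyset$. This forces $t^{\texttt{end}}_\delta > t^{\texttt{est}}_i - t^{\texttt{minsep}}_{\gamma_i}$, and hence $t^{\texttt{end}}_\delta + t^{\texttt{minsep}}_{\gamma_i} > t^{\texttt{est}}_i$. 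The appended value is strictly larger than $t^{\texttt{est}}_i$, as required.

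One subtlety remains: the case $t^{\texttt{minsep}}_{\gamma_i}=0$. There the interval $(t-0,t]$ is empty, so $V^{MS}_L=\emptyset$ and nothing is appended. Lemma~\ref{lemma: only add to T-hat at end times} can additionally be cited to locate where appends occur.

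Combining the two inequalities gives $t^{\texttt{est}}_{i+1} = \min\{t_s,t_r\} > t^{\texttt{est}}_i$.
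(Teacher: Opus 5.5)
Your proof is correct and is essentially the paper's argument. You split $t^{\texttt{est}}_{i+1}=\min\{t_s,t_r\}$ into two branches: in the end-time branch the strict filter $t^{\texttt{end}}_\delta > t^{\texttt{est}}_i$ in \textsc{nextStartTime} does the work, and in the release-time branch carried-over entries are handled by the preceding lemma $t^{\texttt{est}}_k<\min T^{\text{rel}}_k$ while newly appended entries are handled by the left-minsep bound $t^{\texttt{end}}_\delta+t^{\texttt{minsep}}_{\gamma}>t^{\texttt{est}}_i$. The paper wraps the same case analysis in an induction whose hypothesis it never uses, and it asserts that last inequality without proof, whereas you derive it from the interval condition, so your direct version is slightly tighter.
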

\begin{proof}
    We proceed by induction on the iteration index $i$. 

    For the base case $i=0$. 
    Consider the action of $f$ on $S_0, \trel_0, \test_0$.
    The only way for a new time to be added to $\trel$ is from a past minsep violation. 
    As there are no PGAs scheduled before time $t=0$, there cannot be any past minsep violations at $\test_0$. 
    Therefore, there are no new release times added to $\trel_0$ by $f$ and so ${\htrel_1 = \trel_0 = \emptyset}$.

    Now consider the action of \nst{} on $S_1, \htrel_1, \test_0$. 
    As $\htrel = \emptyset$ and \nst{} only removes elements from $\htrel$, we must have ${\trel_1 = \emptyset}$. 
    Moreover, there are no release times to consider, and so the next earliest start time must come from the earliest end time of a PGA in $S_1$, i.e. \begin{equation}
        \test_1 = \min_{\delta\in S_1} \tend_\delta \geq \min_{\gamma\in Z_{\Phi}} E_\gamma > 0,
    \end{equation} and the base case follows. 

    Suppose now that the result holds for all $i\leq k$. 
    We need to consider two cases, whether $\test_{k+1}$ is an end time of a PGA in $S_{k+1}$ or whether it is a release time in $\htrel_{k+1}$. 
    In the first case, we have that \nst{} explicitly only considers PGAs with end times greater than $\test_k$ and so we must have $\test_{k+1}>\test_k$. 
    In the latter case, suppose that $\exists t^*\in\htrel_{k+1}$ such that $t^*<\test_k$, i.e. \nst{} would choose $\test_{k+1} = t^*$. 
    
    There are two ways for $t^*$ to get into $\htrel_{k+1}$, either carried forward from~\eqref{eq: sequence nst} in iteration $k$ or from~\eqref{eq: sequence f} in iteration $k+1$. 
    By Lemma~\ref{lemma: test-k < min trel-k} the time returned by \nst{} is less that the minimum value in $\trel_k$, so it must be that $t^*$ was added by $f$ as in~\eqref{eq: sequence f}, in iteration $k+1$. 
    In particular, $\exists \delta^*\in S_k$ such that $\delta^*$ is a left-minsep violation for some PGT $\gamma^*$ at time $\test_k$, i.e. $t^* = \tend_{\delta^*} + \tminsep_{\gamma^*} > \test$. 
    This contradicts that $t^* < \test$, and so no such $t^*$ can exist. 

    Therefore, $\test_{k+1} > \test_k$ and the result follows. 
\end{proof}

We now show that the sequence $\test_k$ visits the end time of every scheduled PGA in a finite number of steps. 
To do so, we introduce the notation \begin{equation}\label{eq: def S infinity}
    S_\infty := \lim_{k\to\infty}S_k.
\end{equation}
As $S_k$ is strictly increasing, in the sense that PGAs can only be added to a schedule, this limit is well-defined.  
Note that in the limit $k\to\infty$ we don't impose a cutoff at the end of a scheduling interval, and instead assume the functions $f$ and $\textsc{NextStartTime}$ are applied as long as the algorithm does not halt. 

\begin{corollary}[Finite number of steps between end times]
    Let $t{^{\text{\emph{est}}}_k < \max_{\delta\in S_{\infty}}t^{\text{\emph{end}}}_\delta}$ be the end time of a PGA in $S_\infty$. 
    Then $\exists l < \infty$ such that $t^{\text{\emph{est}}}_{k+l}$ is also the end time of a PGA in $\sinfty$. 
\end{corollary}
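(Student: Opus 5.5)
The plan is a counting argument on the release-time set. Release times can only be \emph{consumed} between two consecutive end-time visits of $t^{\texttt{est}}$, never replenished. So only finitely many consecutive steps can land on pure release times before \textsc{nextStartTime} is forced onto an end time.

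\textbf{Step 1: classify each step.} Fix $k$ with $t^{\texttt{est}}_k$ an end time of a PGA in $S_\infty$ and $t^{\texttt{est}}_k < \max_{\delta\in S_\infty} t^{\texttt{end}}_\delta$. For each $j\ge 1$, the value $t^{\texttt{est}}_{k+j}$ returned by \textsc{nextStartTime} is $\min\{t_s,t_r\}$. If $t_s\le t_r$, then $t^{\texttt{est}}_{k+j}=t_s$ is an end time of a PGA in $S_{k+j}\subseteq S_\infty$, and we are done with $l=j$. (This includes the tie $t_r=t_s$.) Otherwise $t^{\texttt{est}}_{k+j}=t_r<t_s$ is a release time that is not an end time, and $t_r$ is removed, so $|T^{\text{rel}}_{k+j}|=|\hat T^{\text{rel}}_{k+j}|-1$.

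\textbf{Step 2: bound the number of release-time steps.} One application of $f$ adds at most $|Z_\phi|$ release times, since the for loop of Algorithm~\ref{alg: bonus round robin resource schedule} appends at most one release time per PGT. So $|\hat T^{\text{rel}}_{k+1}|\le |T^{\text{rel}}_k|+|Z_\phi|$, which is finite.

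Now suppose steps $k+1,\dots,k+j$ all land on release times that are not end times. By Lemma~\ref{lemma: only add to T-hat at end times}, the applications of $f$ at those times add nothing, so $\hat T^{\text{rel}}_{k+i+1}=T^{\text{rel}}_{k+i}$. Hence the release set strictly shrinks at each such step. It follows that $j\le |T^{\text{rel}}_k|+|Z_\phi|<\infty$.

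\textbf{Step 3: exclude the empty case.} It remains to show the procedure cannot end with both $S'$ and $T^{\text{rel}}$ empty, i.e.\ \textsc{nextStartTime} returning $\infty$, before an end time is visited. Let $\delta^\ast$ be a PGA in $S_\infty$ attaining $\max t^{\texttt{end}}_\delta$.

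If $\delta^\ast\in S_{k}$, then $t^{\texttt{end}}_{\delta^\ast}>t^{\texttt{est}}_k$. By strict monotonicity (Lemma~\ref{lemma: nextStartTime is non-decreasing.}), every visited $t^{\texttt{est}}$ before a visit to $t^{\texttt{end}}_{\delta^\ast}$ lies below it, so $S'$ stays nonempty.

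If instead $\delta^\ast$ is added later, it is added at some visited start time $t^{\texttt{est}}_{k+i}$. Its end time is then strictly greater than $t^{\texttt{est}}_{k+i}$, so $S'$ is again nonempty at the next call. If $\delta^\ast$ is never added, the algorithm halting would contradict its membership in $S_\infty$.

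Hence $t_s<\infty$ at each step, and Step 1's dichotomy applies throughout. Combining Steps 1–3 gives a finite $l$ with $t^{\texttt{est}}_{k+l}$ an end time.

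\textbf{Main obstacle.} Step 3 is the delicate point: making rigorous that the relevant end time remains in $S'$, given that $S_k$ grows and that the limit $S_\infty$ is taken without the $T^{SI}$ cutoff. I would handle it via the case split above on whether $\delta^\ast$ is already in $S_k$ or is added later.
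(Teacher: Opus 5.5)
Your proposal is correct and follows essentially the paper's argument. By Lemma~\ref{lemma: only add to T-hat at end times}, no release times are added between end-time visits, while each intermediate release-time step consumes one, so an end time is reached after finitely many steps. Your version is more careful than the paper's on three points: you make the strict decrease of the release set explicit; you bound it by $|T^{\text{rel}}_k|+|Z_\phi|$ directly from the for-loop, rather than by claiming each PGT contributes only one release time overall; and you rule out \textsc{nextStartTime} returning $\infty$ before an end time is visited, which the paper leaves implicit.
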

\begin{proof}
    Consider $\test_i$ for $i>k$.
    If $\test_i \neq \tend_{\delta'}$, for some $\delta' \in \sinfty$, then $\test_i$ must be a release time. By Lemma~\ref{lemma: only add to T-hat at end times}, the number of release times in the set of release times cannot increase until ${\test_l =\tend_{\delta'}}$ for some $\delta' \in \sinfty$. Hence for ${k+1 \leq i \leq l}$, ${|\htrel_{i}| \leq |\htrel_{k+1}|}$.
    Furthermore, $|\htrel_{k+1}| < |Z_{\Phi}| < \infty$ as each PGT can only contribute one release time. 
    Therefore, after at most $l=|Z_{\Phi}|$ steps, $\test_{k+l}=\tend_{\delta'}$ for some $\delta' \in \sinfty$. 
\end{proof}
\begin{corollary}[Visit every end time]
    Let $\delta$ be a PGA in $\sinfty$ with end time $t^{\text{\emph{end}}}_\delta$. Then there exists some $k<\infty$ such that $t^{\text{\emph{est}}}_k = t^{\text{\emph{end}}}_\delta$. 
    \label{coro: vist every end time}
\end{corollary}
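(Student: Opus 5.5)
We argue by strong induction on the position of the PGA end times in increasing order, combining the previous corollary (finite number of steps between end times) with Lemma~\ref{lemma: nextStartTime is non-decreasing.}. The end times of PGAs in $\sinfty$ form a set which is discrete. Every PGA added by \textsc{RoundRobinBonus} starts at some $\test_k$ and has duration at least $\min_{\gamma} E_\gamma>0$. Only finitely many PGAs fit below any fixed horizon, since each resource carries non-overlapping PGAs of positive length. So we order the distinct end times as $\tau_1<\tau_2<\cdots$ and show that each $\tau_j$ is hit by some $\test_k$.

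For the base case $j=1$, consider the first iteration. Lemma~\ref{lemma: nextStartTime is non-decreasing.} (base case) gives $\test_1=\min_{\delta\in S_1}\tend_\delta$. We must argue that no PGA added later has an end time below $\test_1$. A PGA is only ever added at a start time $\test_k\geq \test_0$, so its end time is at least $\test_k+E_\gamma$. If such an end time were smaller than the current candidate, \nst{} would have selected it at the step immediately following its insertion, because \nst{} takes the minimum over all end times exceeding the current $\test$. This shows $\test_1=\tau_1$, or more precisely that $\tau_1$ is visited.

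For the inductive step, suppose $\tau_j=\test_k$ for some $k$, with $\tau_{j}<\max_{\delta\in\sinfty}\tend_\delta$. The previous corollary gives a finite $l$ such that $\test_{k+l}$ is again an end time in $\sinfty$. By strict monotonicity, $\test_{k+l}>\tau_j$.

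The key claim is that the first end time visited after $\tau_j$ is exactly $\tau_{j+1}$, meaning \nst{} cannot skip over an end time. Suppose $\tau_{j+1}<\test_{k+l}$ and $\tau_{j+1}$ is never visited. Let $\delta$ be the PGA with end time $\tau_{j+1}$. There are two cases.

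\textbf{Case 1: $\delta\in S_{k+i}$ at the moment \nst{} computes $\test_{k+i+1}$, for some $i$ with $\test_{k+i}<\tau_{j+1}$.} Then $\delta\in S'$, so $t_s\le\tau_{j+1}$, and the returned value is at most $\tau_{j+1}$. Strict monotonicity (Lemma~\ref{lemma: nextStartTime is non-decreasing.}) together with finiteness of the release-time set then forces the iterates to reach $\tau_{j+1}$ within finitely many steps, since there are at most $|Z_\Phi|$ release times in between (Lemma~\ref{lemma: only add to T-hat at end times}).

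\textbf{Case 2: $\delta$ is added later.} Then it is added at a start time $\test_{k'}\ge\test_{k+l}>\tau_{j+1}$, so its end time exceeds $\tau_{j+1}$, which is a contradiction.

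Case 2 cannot occur, and Case 1 shows $\tau_{j+1}$ is visited. Applying induction over $j$ then gives the corollary.

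The main obstacle is making Case 1 rigorous. It must hold even when the PGA $\delta$ that produces the end time was itself inserted during the for-loop $f$ at a step before it is ``seen''. The fact that makes this work is that insertions always occur at the current $\test$, so any newly created end time exceeds the current $\test$ and is visible to the very next call of \nst{}.
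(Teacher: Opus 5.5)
Your proof is correct and follows essentially the same route as the paper. Both enumerate the end times of $S_\infty$ in increasing order, use the preceding corollary to get finitely many steps between consecutive visited end times, and use that PGAs are only inserted at the current $t^{\texttt{est}}$, so every new end time lies strictly in the future and cannot be skipped by \textsc{nextStartTime}. Your discreteness argument (positive durations, non-overlapping PGAs on finitely many resources) usefully justifies the increasing enumeration, which the paper asserts from countability alone; the only slip is in Case~2, where the negation of Case~1 gives $t^{\texttt{est}}_{k'}\ge\tau_{j+1}$ rather than $t^{\texttt{est}}_{k'}\ge t^{\texttt{est}}_{k+l}$, and this still yields the contradiction.
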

\begin{proof}
    As PGAs can only be scheduled at the current value of $\test$, any new end times are in the future and will therefore not be skipped over. 
    The number of PGAs in $\sinfty$ is countable, therefore let us enumerate them by $\delta_0, \delta_1, ...$ such that $\tend_{\delta_i} \leq \tend_{\delta_{i+1}}$. Then by the previous corollary we have that there exists some $l < k|Z_{\Phi}|$ such that $\test_l = \tend_{\delta_k}$.  
\end{proof}

Finally, we show that for every time $c> 0$, $\test_k$ exceeds $c$ in a finite number of steps.

\begin{lemma}
     For each filling class ${\phi\in \Phi}$, \textsc{BonusRoundRobin}$(\phi, S^{\text{\emph{input}}})$ generates the sequences~\eqref{eq: sequence f} and~\eqref{eq: sequence nst} such that $ {\forall c\geq 0}$ there exists $k<\infty$ for which ${t^{\text{\emph{est}}}_k \geq c}$.
    \label{lemma: test hits in finite time}
\end{lemma}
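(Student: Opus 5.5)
The strategy is to argue by contradiction. Suppose some $c \geq 0$ satisfies $t^{\texttt{est}}_k < c$ for every $k$. By Lemma~\ref{lemma: nextStartTime is non-decreasing.}, the sequence $(t^{\texttt{est}}_k)_k$ is strictly increasing. Being bounded above by $c$, it converges to a limit $t^* \leq c$, and it has infinitely many distinct terms in $[0, t^*)$. The aim is to show that only finitely many candidate values for $t^{\texttt{est}}_k$ can lie below $c$. Strict monotonicity would then be contradicted.

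The first step is to classify every $t^{\texttt{est}}_k$. By the definition of \textsc{nextStartTime}, each term is either the end time $t^{\texttt{end}}_\delta$ of a PGA $\delta \in S_{k}$, or a release time that was previously appended to $T^{\text{rel}}$. Each release time has the form $t^{\texttt{end}}_\delta + \tminsep_{\gamma}$ for some scheduled PGA $\delta$. It therefore suffices to show that only finitely many PGAs of $S_\infty$ have end time below $c$. Given that, there are finitely many end times below $c$ and finitely many release times below $c$, since each PGT contributes at most one release time per end time. Hence the set of possible values of $t^{\texttt{est}}_k$ below $c$ is finite.

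The second step bounds the number of PGAs ending before $c$. The input schedule $S^{\text{input}}$ is finite, because it is produced by \textsc{MinimalAllocation} with finitely many PGAs. New PGAs are added only at times $t^{\texttt{est}}_k$, on the resources of $\pi_{\gamma_i}$, and only when \textsc{allResourcesAvailable} holds. Consequently, PGAs added by \textsc{RoundRobinBonus} never overlap on a shared resource. Each PGA has duration at least $E_{\min} := \min_{\gamma \in Z_\phi} E_\gamma > 0$, and every added PGA occupies at least one resource, since paths contain internal resources. So on each of the finitely many resources in $\mathfrak{R}$, at most $\lceil c/E_{\min}\rceil + 1$ non-overlapping PGAs can end before $c$. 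The number of PGAs in $S_\infty$ ending before $c$ is therefore at most $|S^{\text{input}}| + R(\lceil c/E_{\min}\rceil + 1)$, which is finite.

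Combining the two steps, the strictly increasing sequence $(t^{\texttt{est}}_k)$ takes values in a finite subset of $[0,c)$ while it stays below $c$. It can do so for at most finitely many indices, so some $k < \infty$ has $t^{\texttt{est}}_k \geq c$. Corollary~\ref{coro: vist every end time} can also be used to confirm that no end time is skipped, but it is not needed for the contradiction.

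The main obstacle is the first step: justifying that release times cannot accumulate below $c$. Lemma~\ref{lemma: only add to T-hat at end times} shows that release times are added only when $t^{\texttt{est}}$ equals an end time. I would combine this with the finiteness of end times below $c$ from the second step to bound the total number of release times ever inserted below $c$. I must also handle the case where \textsc{nextStartTime} returns $\infty$, that is, when no later end times or release times remain. In that case the claim holds trivially at that index, since $\infty \geq c$.
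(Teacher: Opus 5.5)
Your proof is correct, but it takes a different route from the paper.

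\textbf{The paper's route.} The paper first asserts that $S_\infty$ is finite, justified by the finite expiry times of the PGTs. It then splits on whether $c$ lies below the last end time in $S_\infty$. If it does, Corollary~\ref{coro: vist every end time} guarantees that the sequence reaches an end time exceeding $c$. If not, the at most $|Z_\phi|$ outstanding release times are consumed one per step until \textsc{nextStartTime} returns $\infty$.

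\textbf{Your route.} You pair strict monotonicity (Lemma~\ref{lemma: nextStartTime is non-decreasing.}) with a purely local finiteness statement. Every finite value of $t^{\texttt{est}}_k$ is either an end time or a release time $t^{\texttt{end}}_\delta + t^{\texttt{minsep}}_\gamma \geq t^{\texttt{end}}_\delta$. Hence all values below $c$ are generated by PGAs ending before $c$. There are finitely many such PGAs, because bonus PGAs never overlap on an internal resource and have duration at least $E_{\min} > 0$.

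\textbf{Comparison.} Your argument never needs $S_\infty$ to be finite. It therefore does not depend on the expiry-time reasoning, or on any cutoff being retained in the limit. It also bypasses the two corollaries about visiting end times. In exchange, the paper's route gives the stronger structural conclusion that the sequence eventually jumps to $\infty$.

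\textbf{Small remarks.}
\begin{enumerate}
\item The step you flag as the main obstacle is already settled by your own observation. A release time below $c$ forces its source end time below $c$, since $t^{\texttt{minsep}} \geq 0$. So Lemma~\ref{lemma: only add to T-hat at end times} is not needed.
\item For filling classes processed after the first, $S^{\text{input}}$ also contains bonus PGAs from earlier calls. This is harmless, since it is still a finite, conflict-free schedule.
\item Like the paper, you tacitly rely on $E_\gamma > 0$ and on every path containing an internal resource.
\end{enumerate}
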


\begin{proof}
    As every PGT in $Z_{\phi}$ has a finite expiry time, after which no further PGAs are scheduled, there are only a finite number of PGAs in $\sinfty$.
    We consider two cases, ${c<\max_{\delta\in\sinfty}\tend_\delta}$ and ${c\geq \max_{\delta\in\sinfty}\tend_\delta}$. 
    In the former case, there exists some end time $\tend_{\delta^*} > c$. 
    By Corollary~\ref{coro: vist every end time}, the sequence $(\test_i)$ reaches $\tend_{\delta^*}$ in finite time and so exceeds $c$ in finite time. 

    In the latter case, consider the values that $(\test_i)$ takes after reaching the last end time in $S_\infty$.  
    Let $k$ be such that $\test_k = \max_{\delta\in\sinfty}\tend_\delta$, and let $\htrel_{k+1}$ be the set of release times to consider after $\test_k$. 
    As each PGT can only contribute at most one release time to $\htrel_{k+1}$, we have $r = |\htrel_{k+1}| \leq |Z_{\phi}| < \infty$. 
    Furthermore, we have that $|\htrel_{k+i+1}| = |\htrel_{k+i}| - 1$ for $i = 1, ..., r$. 
    Therefore, we have that $\htrel_{k+r} = \emptyset$.
    Moreover, at $\test_{k+r}$ there are no future end times (as nothing can be added to schedule which ends after $\max_{\delta\in\sinfty}\tend_\delta < \test_{k+r}$) and no release times as $\htrel_{k+r} = \emptyset$. 
    Therefore, the action of \textsc{nextStartTime} is to set $\test_{k+r+1} = \infty > c$.
    Therefore $(\test_i)$ hits $\infty$ in a finite number of time steps and hence exceeds $c$ in a finite number of time steps. 
    
\end{proof}

We can now prove that \textsc{RoundRobinBonus}($\phi, \ S$) halts on all inputs. 
\begin{theorem}
    For all possible inputs~${\phi, \ S}$, \textsc{RoundRobinBonus}(${\phi, S}$) halts.
\end{theorem}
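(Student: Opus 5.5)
The plan is to reduce halting to two finiteness facts. First, the outer \texttt{while} loop of \textsc{RoundRobinBonus} runs at most finitely many times. Second, each iteration does only finitely much work. Using the simplified form in Algorithm~\ref{alg: bonus round robin resource schedule rewrite}, one outer iteration applies $f$ once and then \textsc{nextStartTime} once.

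For the inner work, $f$ is a \texttt{for} loop over $\mathbb{Z}_{|\phi|,k}$, which has exactly $|Z_\phi|<\infty$ iterations. Each iteration calls \textsc{allResourcesAvailable} and the two \textsc{getLeft[Right]MinsepViolations} routines. These loop over the finite set $\pi_\gamma\cap\mathfrak{R}$ and scan the finite schedule $S_r$, so each call terminates. The schedule is finite at every step, since it starts finite and each iteration adds at most finitely many PGAs. Likewise \textsc{nextStartTime} computes minima over the finite sets $S'$ and $\hat T^{\text{rel}}$, the latter having at most $|Z_\phi|$ entries. Hence each of \eqref{eq: sequence f} and \eqref{eq: sequence nst} is computed in finite time.

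For the outer loop, the loop guard is $\test<T^{SI}$. I would invoke Lemma~\ref{lemma: test hits in finite time} with $c=T^{SI}$. This gives a finite $k$ with $\test_k\ge T^{SI}$, so the guard fails after at most $k$ iterations and the loop exits. The one point to check is that the sequence analysed in that lemma, which is defined without the $T^{SI}$ cutoff via $S_\infty$, agrees with the actual execution up to its first index $k$ with $\test_k\ge T^{SI}$. This holds because the two processes apply identical updates while the guard is true. The case where \textsc{nextStartTime} returns $\infty$ is already covered, since $\infty\ge T^{SI}$ also terminates the loop.

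The main obstacle is the finiteness of $S_\infty$ used inside Lemma~\ref{lemma: test hits in finite time}, which relies on finite expiry times. If I wanted a self-contained argument, I would replace it with a bound coming from the add-condition $\test+E_{\gamma}<T^{SI}$. Only PGAs ending before $T^{SI}$ can be added, and by Lemma~\ref{lemma: nextStartTime is non-decreasing.} each PGA is added at a distinct, strictly increasing start time $\test$, with PGAs of one PGT separated by at least $E_\gamma>0$. This bounds the number of PGAs in the truncated run by roughly $|Z_\phi|\,T^{SI}/\min_\gamma E_\gamma$. The corollaries on visiting end times then give finitely many steps between consecutive end times, at most $|Z_\phi|$ each, so the run is finite.
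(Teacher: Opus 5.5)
Your proposal is correct and is essentially the paper's proof. The \texttt{for} loop has $|Z_\phi|$ iterations, and Lemma~\ref{lemma: test hits in finite time} with $c=T^{SI}$ makes the guard $t^{\texttt{est}}<T^{SI}$ fail after finitely many cycles; your extra checks (that the cutoff-free sequence agrees with the actual run, and the $\infty$ return) only make explicit what the paper leaves implicit. Your optional replacement of the finite-expiry argument by the add condition $t^{\texttt{est}}+E_\gamma<T^{SI}$ is also sound, though start times are distinct only per PGT, since PGTs with disjoint paths can share a $t^{\texttt{est}}$, and per-PGT separation is all your count $|Z_\phi|\,T^{SI}/\min_\gamma E_\gamma$ actually needs.
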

\begin{proof}
    Suppose not. 
    Then the algorithm must enter an infinite loop somewhere. 
    The \verb|For| loop has finite number of steps, $|Z_{\phi}|$, so will always exit.
    We define a `cycle' of the round-robin scheduler to be each time the \verb|While| condition is checked. 
    Note that each cycle the value of $\test$ is incremented by \textsc{nextStartTime} as per the sequences~\eqref{eq: sequence f},~\eqref{eq: sequence nst} and by Lemma~\ref{lemma: nextStartTime is non-decreasing.} $\test$ increases with each cycle. 
    The program halts when $\test >T^{SI}$.
    By Lemma~\ref{lemma: test hits in finite time} this occurs in a finite number of cycles and so the program will always halt. 
\end{proof}

\subsection{Complexity Analysis}

We determine the complexity of \textsc{NetworkScheduler} as in Algorithm~\ref{alg: Network Scheduler main} by examining each of the subroutines: \textsc{UpdateFillingClasses}, \textsc{AdmitTasks}, and \textsc{ComputeSchedule}. 

\subsubsection{Update Filling Classes}
First, we examine the complexity of \textsc{UpdateFillingClasses}, Algorithm~\ref{alg: Update Filling Classes}, as implemented in Listing~\ref{lst: update filling classes}. 

\begin{lemma}\label{lemma: UFC complexity}
  \textsc{UpdateFillingClasses} as implemented in listing~\ref{lst: update filling classes} has operational complexity $O((N+R)^2)$.
\end{lemma}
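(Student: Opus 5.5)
The plan is to read Algorithm~\ref{alg: Update Filling Classes} line by line, as it is realized in Listing~\ref{lst: update filling classes}, charge each fundamental operation its cost, and sum the contributions. The algorithm has three parts: (i) the call \textsc{BuildFillingClasses}$(\emptyset, \Pi)$, which builds the empty filling classes $\Phi$ and the associated-resource map $\xi$; (ii) the double loop over all $\psi \in \Psi$ and $\gamma \in Z_\psi$, which tests membership in $\tau$ and expiry, locates the cell $\Pi_\phi$ containing $\pi_\gamma$, and calls \textsc{AssignFillingClass}; and (iii) the loop over the missing-path set $\bm{\Gamma}^*$ calling \textsc{ApplyRuleMissingPath}. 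I will bound each part by $O((N+R)^2)$.

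For (i), I will use that every cell of the path partition is indexed by an internal resource. In the implementation, with $\Pi$ as in Definition~\ref{def: specific path partition}, the cells are $\Pi^B$, $\Pi^J_j$ and $\Pi^I_i$, so $|\Phi| \le R+1$. Building $\xi(\Pi_\phi)$ for each cell produces a subset of $\mathfrak{R}$, so it costs $O(R)$ per cell, provided the listing computes $\xi$ from stored resource sets and does not enumerate all paths. The total for (i) is then $O(R^2)$.

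For (ii), there are $N$ PGTs in total by Lemma~\ref{lemma: one fill class per PGT}. For each one, the termination check is $O(1)$ if $\tau$ is a hash set, or at worst $O(N)$ against a list, since there cannot be more termination requests than active demands. Looking up the filling class of $\pi_\gamma$ should cost $O(R)$, either by scanning the path's vertices for a backbone, junction or interface resource, or by scanning the at most $R+1$ classes. Appending to $Z_\phi$ is $O(1)$, or $O(N)$ if it is an ordered insertion. This gives $O(N(N+R))$ for (ii).

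For (iii), I will assume the default missing-path rule from Section~\ref{sec: Implementation}. In a static evaluation, where the path partition does not change, $\bm{\Gamma}^* = \emptyset$. Otherwise each re-assignment costs $O(N+R)$, and $|\bm{\Gamma}^*| \le N$, so (iii) contributes $O(N(N+R))$. Summing the three parts gives $O(R^2 + N^2 + NR) = O((N+R)^2)$.

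I expect the main obstacle to be the exact cost of the filling-class lookup and of \textsc{ApplyRuleMissingPath} as actually written in the listing. If the missing-path rule re-runs admission, its cost could exceed this bound. I would handle that by noting that it is not exercised in the static implementation, or by charging it under \textsc{AdmitTasks}.
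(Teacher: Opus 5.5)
Your target bound is right, but your accounting omits the step of Listing~\ref{lst: update filling classes} that actually forces the $N^2$ term. After expired and terminated PGTs are filtered out, \texttt{update\_live\_tasks} in Listing~\ref{lst: single filling class} recomputes the cycle counts, the cycle times and the minimum service time for every non-empty filling class. In effect it re-runs \textsc{CalculateRequiredTime} on each such class in every scheduling interval. By Lemma~\ref{lemma: compute required time complexity} this costs $O(N_\phi^2)$ per class, because each $c_x$ is a maximum and a sum over a suffix of $Z_\phi$. The total is therefore $O(N^2)$, since $\sum_\phi N_\phi^2 \le (\sum_\phi N_\phi)^2 = N^2$; this is exactly where the paper's proof gets its $N^2$.

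In your proposal, $N^2$ enters only through hedged worst cases: a list-valued $\tau$, and ordered insertion into $Z_\phi$. Neither occurs in the code's static branch. The termination buffer is a \texttt{Set}, and when the path partition is unchanged the old filling classes are carried forward with no re-insertion. Resolved against the listing, your parts (i)--(iii) charge only $O(R^2+NR)$ in that branch, while the routine really does quadratic work in $N$ there. Adding the recomputation cost explicitly repairs the argument.

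There are also two smaller mismatches with the listing. In (i) you attribute the $O(R^2)$ to computing $\xi$, ``provided the listing computes $\xi$ from stored resource sets.'' In Listing~\ref{lst: single filling class}, however, the associated resources are a lazily computed property built by iterating over every path in the cell, so that proviso fails. The paper does not charge $\xi$ to this routine at all. Its $O(R^2)$ comes from constructing \texttt{SetOfFillingClasses} when the partition changes: $O(R)$ classes created at $O(1)$ each, an $O(R\log R)$ sort, and the double loop that builds the dictionary of preceding filling classes. Your accounting omits all of these.

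In (iii), the implemented missing-path rule does not re-run admission. PGTs whose path lies in no cell are returned as failed tasks and their demand ids are recorded as removed, which is an $O(N)$ list construction. So there is nothing to re-charge to \textsc{AdmitTasks}. Dismissing a branch because it is not exercised in static runs would not be a valid step in a worst-case bound for the implementation anyway.
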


\begin{proof}
    The algorithm is split into two phases: building the set of filling classes from $\Pi$ and then removing terminated and expired demands from each filling class. 
    
    If the path partition is unchanged, the old set of filling classes can be brought forward and there are no required operations. 
    Otherwise, we build a set of filling classes from the new path partition. 
    We create one filling class per cell of the path partition, then sort them and finally create a dictionary holding the ordering. 
    Creating a filling class is an $O(1)$ operation, as it only requires value assignment. 
    
    To calculate how many filling classes there can be, consider the specific choice of filling classes given in Definition~\ref{def: specific path partition}. 
    Let $|B|$ be the number of long-distance backbones, $|J|$ the number of junction nodes and $|I|$ be the number of EGIs in the network, so $|B|+|J|+|I|=R$. 
    
    There is a single backbone partition $\Pi^B$. 
    Each set of junction nodes $J_i$ is disjoint, and so there are at most $|J|$ junction filling classes.
    Each EGI contributes a filling class, so there are $I$ EGI filling classes.
    Therefore there are at most $1+|J|+|I| < R$ filling classes, and so there are $O(R)$ filling classes. 

    Creating all of the filling classes is thus $O(R)$. 
    Sorting the set of filling classes is $O(R\log(R))$. 
    Computing the dictionary holding the ordering requires a double iteration of the list of filling classes, which takes $O(R^2)$ time. 
    The implementation of these processes is given in Listing~\ref{lst: set of filling classes}.

    The remaining assignments in lines 14-16 of Listing~\ref{lst: set of filling classes} take $O(1)$ time, because a None value is always assigned in \textsc{NetworkScheduler}. These properties are never initialized in this main loop and are only initialized in the numeric complexity evaluations of Section~\ref{sec: Evaluation}, resulting in Figures~\ref{fig: admission control complexity} and~\ref{fig: scheduling complexity}. 
    The overall complexity of constructing the set of filling classes is thus $O(R^2)$. 

    If the path partition changes, we attempt to re-add each PGT $\gamma$ to a filling class. 
    This requires an $O(R)$ lookup to find the correct filling class and then $O(1)$ time to add $\gamma$ to the filling class. 
    Note that re-calculating the minimum required time for the filling class is delayed until the end of this whole routine. 
    Therefore, re-adding PGTs takes $O(NR)$ time. 
    Subsequently, updating the set of removed tasks requires creating and assigning a list of length $O(N)$. 

    Therefore, building the set of filling classes takes $O(1)$ time if there is no change to the path partition and ${O(R^2+NR+N)}$ time otherwise. 

    To remove the terminated and expired demands, each filling class $\phi \in \Phi$ is treated in turn. 
    For a given filling class $\phi$ with $N_\phi$ active PGTs, each PGT is iterated over and if it is in the set of terminated demands it is removed, or if it has expired it is removed. 
    Checking if a PGT is in the terminated set is an $O(1)$ operation (set membership).
    Checking if a PGT has expired is also an $O(1)$ operation. 
    Therefore, removing the terminated or expired PGTs takes $O(N_\phi)$ time. 
    
    After the PGTs are removed, the required time is recalculated (\textsc{CalculateRequiredTime}($\phi$), implemented by lines 52-63 in Listing~\ref{lst: single filling class}), at a cost of $O(N_\phi^2)$ -- see Lemma~\ref{lemma: compute required time complexity}. 
    Therefore, updating each filling class takes $O(N_\phi^2)$ time, for a total of $O(N^2)$ time across all filling classes. 

    Updating the lists of removed and expired demands are $O(N_\phi)$ or $O(N)$ ((once we sum over all filling classes) operations, so are sub-leading. 

    Therefore, the overall complexity is $$O(R^2+NR+N) + O(N^2) = O((N+R)^2).$$ 
    
\end{proof}

\subsubsection{Admit Tasks}
We now show the complexity of Admit Tasks, Algorithm~\ref{alg: Admit Tasks}, as implemented by Listing~\ref{lst: admit new tasks}. It makes use of some of the methods of a filling class, implemented in Listing~\ref{lst: single filling class}. First we determine the operational complexity of \textsc{CalculateRequiredTime}, which Admit Tasks uses to reserve time for the minimum allocation of PGAs of active PGTs on all resources associated with a filling class.
\begin{lemma}
    \textsc{CalculateRequiredTime}$(\phi)$ for a filling class $\phi$ as implemented in Listing~\ref{lst: single filling class} has operational complexity $O(N_\phi^2)$.
    \label{lemma: compute required time complexity}
\end{lemma}
\begin{proof}
    Calculating the total required time requires calculating the vector of cycle times $\mathbf{c}^\phi = (c_i^\phi)$ and the vector of cycle numbers $\mathbf{n}^\phi = (n_i^\phi)$ as in the proof of Proposition~\ref{prop: timeReqSingleFC} in Appendix~\ref{app: performance analysis proofs}.
    Calculating each of the $n_i^\phi$ each requires a single value assignment and subtraction, an $O(1)$ operation, for a total of $O(N_\phi)$ operations. 
    
    Calculating $c_k^\phi$ has the following steps:
    We obtain the list of end times plus minseps for each PGT with index greater than or equal to $k$, which has complexity $O(N-k)$. 
    Obtaining the sum of the execution times is also $O(N-k)$.
    Finding the maximum of this list and the sum of the executions times is an $O(log(N-k))$ operation. 
    Combining these three steps, the total complexity for computing each of the $c_k^\phi$ is $O(N-k)$. 
    The total complexity for computing the $c_i^\phi$ is then $\sum_{k=0}^{N_\phi - 1}O(N_\phi - k) = O(N_\phi^2)$.
    
    Once we have the $n_i^\phi$ and the $c_i^\phi$ terms, we need to calculate $\mathbf{n}^\phi.\mathbf{c}^\phi$, which requires $O(N_\phi)$ multiplications, $O(N_\phi)$ additions and one value assignment, for an overall complexity of $O(N_\phi)$. 

    Combining these calculations, the total complexity is \begin{equation}
        O(N_\phi) + O(N_\phi)^2 + O(N_\phi) = O(N_\phi^2).
    \end{equation}
\end{proof}

\begin{theorem}\label{thm: admit new tasks complexity}
  \textsc{AdmitTasks} as implemented in Listing~\ref{lst: admit new tasks}, operating on a set of filling classes $\Phi$ as implemented in Listing~\ref{lst: set of filling classes}, has operational complexity \begin{equation}\label{eq: admit tasks complexity}
      O(k[(N+k)^2+R]2^R+R^2).
  \end{equation}  
\end{theorem}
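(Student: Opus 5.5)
The plan is to bound \textsc{AdmitTasks} phase by phase, following the structure of Algorithm~\ref{alg: Admit Tasks}, and then add the contributions. There are two phases: an initialisation phase that reserves time for the active PGTs, and the main \texttt{while} loop over the task intake object. The main tools are Lemma~\ref{lemma: compute required time complexity}, which gives $O(N_\phi^2)$ for \textsc{CalculateRequiredTime}, and the count of $O(R)$ filling classes from the proof of Lemma~\ref{lemma: UFC complexity}. We also need a bound on the size of an associated-resource set, $|\xi(\Pi_\phi)| \le R$.

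\textbf{Initialisation.} Setting $T_{\xi_r} \leftarrow T^{SI}$ for every $r \in \mathfrak{R}$ costs $O(R)$. For each of the $O(R)$ filling classes we call \textsc{CalculateRequiredTime}; summing over classes gives $\sum_\phi O(N_\phi^2) \le O(N^2)$. We then decrement the available time on each $r \in \xi(\Pi_\phi)$, which costs $O(R)$ per class, for $O(R^2)$ in total. Any $R^2$-type overhead in the listing (for example building or looking up the $\xi$ dictionary over pairs of filling classes) is also accounted for here. This phase therefore contributes $O(N^2 + R^2)$.

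\textbf{Main loop.} Each PGT in $\bm{\Gamma}$ is examined at most once, because it is either accepted (and its $\bm{\Gamma}_d$ removed) or removed from $\bm{\Gamma}_d$. Hence there are at most $k$ iterations of the inner body. In one iteration the costs are as follows.
\begin{enumerate}
    \item \textsc{GetFillingClass} must locate the cell of $\Pi$ containing $\pi_\gamma$. This is the step that introduces the $2^R$ factor, and I expect it to be the main obstacle. In the implementation a path is matched against the path partition, and the relevant cells contain up to $O(2^R)$ allowed paths, since valid paths are $O(2^{|\mathcal V|})$. Unless hashing is assumed, locating $\pi_\gamma$ therefore costs $O(2^R)$ comparisons, each of length $O(R)$. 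To pin this down I would need to read Listing~\ref{lst: admit new tasks} carefully and confirm that the membership test is a linear scan of the cell.
    \item Forming $\tilde Z_\phi$ requires copying or sorting a filling class of size at most $N+k$, since up to $k$ newly accepted PGTs may have joined it.
    \item Recomputing \textsc{CalculateRequiredTime}$(\tilde Z_\phi)$ costs $O((N+k)^2)$ by Lemma~\ref{lemma: compute required time complexity}.
    \item Cancelling, testing and re-reserving time on $\xi(\Pi_\phi)$ costs $O(R)$.
\end{enumerate}

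If the $2^R$ factor multiplies the whole body, which is the conservative reading, each iteration costs $O([(N+k)^2 + R]\,2^R)$. Multiplying by $k$ iterations and adding the initialisation cost gives $O(k[(N+k)^2+R]2^R + N^2 + R^2)$. The $N^2$ term is absorbed into $k(N+k)^2 2^R$ whenever $k \ge 1$, and the case $k=0$ is trivial. This yields \eqref{eq: admit tasks complexity}.

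A tighter bound of the form $O(k 2^R R + k(N+k)^2)$ would also imply the stated result. So if the listing turns out to separate the $2^R$ lookup from the required-time recomputation, the theorem still holds as an upper bound.
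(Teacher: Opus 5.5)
Your per-PGT accounting matches the paper: the insertion, \textsc{CalculateRequiredTime} at $O((N+k)^2)$, and the $O(R)$ cancel, test and re-reserve on $\xi(\Pi_\phi)$ are all as in its proof. Your bounding of the initialisation is also in line with the paper. However, the step you singled out as the source of $2^R$ is misdiagnosed.

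In Listing~\ref{lst: set of filling classes}, \texttt{get\_filling\_class} loops over the $O(R)$ filling classes and evaluates \texttt{pgt.path in filling\_class.path\_partition}. Here \texttt{path\_partition} is a Python \texttt{Set} of tuples, so this is a hashed membership test, which the paper charges at $O(1)$. The lookup therefore costs $O(R)$, or $O(R^2)$ if you charge for hashing the tuple. It is not a scan of $O(2^R)$ paths.

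Your justification also fails on its own terms:
\begin{itemize}
\item A cell of $\Pi$ holds paths for every pair of end nodes it serves. For instance, $\Pi^I_i$ contains $(e,i,e')$ for every pair of end nodes adjacent to $i$.
\item The count $O(2^{|\mathcal V|})$ involves $|\mathcal V| = |E| + R$, so it does not give $O(2^R)$ paths per cell.
\item A genuine linear scan would therefore bring in a dependence on $|E|$ that \eqref{eq: admit tasks complexity} does not have.
\end{itemize}

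In the paper, the $2^R$ has a different origin. The proof counts $k$ \emph{demands}. Each demand carries $|\bm{\Gamma}_d| = O(2^R)$ alternative PGTs, one per allowed path between its end nodes times a constant number of protocols. Each of these PGTs is decided at cost $O((N+k)^2+R)$.

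The repair is immediate and gives a sharper result. If $k=|\bm{\Gamma}|$ counts PGTs, as in Theorem~\ref{thm: overall complexity}, then your observation that each PGT is examined at most once yields $O(k[(N+k)^2+R]+R^2)$. The $2^R$ is then slack, not something you need to produce.

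One smaller point concerns the initialisation. It reads \texttt{minimum\_service\_time}, a cached property already computed during \textsc{UpdateFillingClasses}, so it costs $O(R^2)$ rather than your $O(N^2+R^2)$. Without this, your claim that $k=0$ is trivial does not hold, since $O(N^2+R^2)$ is not $O(R^2)$.
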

\begin{proof}

    Initially \textsc{AdmitTasks} calculates and sets the current available time on each resource $r \in \mathfrak{R}$. 
    Setting the initial available time of a full scheduling interval requires $O(R)$ value assignments (each $O(1)$).
    In our implementation of the filling classes the total required time is stored as a property of a filling class, so we can get the total required time in $O(1)$ (amortized) time. 
    Each filling class is associated with $O(R)$ resources, so decrementing the time available ($O(1)$) for all resources associated with a particular filling class takes $O(R)$ time.
    There are $O(R)$ filling classes in total, so the initialization of the routine takes $O(R^2) + O(R) = O(R^2)$ time in total. 
    
   Next \textsc{AdmitTasks} determines whether to admit a particular PGT $\gamma\in\bm{\Gamma}$.
    To find the filling class of $\gamma$ we check each filling class $\phi$ sequentially to determine if the path is in the cell $\Pi_{\phi}$ of the path partition. 
    Each filling class can be obtained in $O(1)$ time (dictionary query), performing the inclusion check is $O(1)$ (set inclusion), and there are $O(R)$ filling classes, for a total of $O(R)$ operations. 

    Suppose we find a filling class $\phi$ for $\gamma$, and let $N_\phi = |Z_{\phi}|$. 
    First, we need to remove the contribution from the PGTs in $Z_{\phi}$ to the reserved time on each resource associated with $\phi$. 
    Updating each of the $O(R)$ resources takes $O(1)$ time (dictionary parameter query, subtraction, value assignment) for a total complexity of $O(R)$. 
    
    We can then add $\gamma$ to $Z_{\phi}$.
    To do so, first we need to determine at which index in the ordering it should be inserted to preserve the ordering by $\minimumallocation$. 
    In our implementation this takes $O(N_\phi)$ time as we examine each element in sequence.  
    Once the index $i$ is found, $\gamma$ can be inserted into the list, which takes $O(N_\phi - i)$ time. 
    Finally, adding $\gamma$ to $Z_{\phi}$ means the minimum required time needs to be recomputed, which by Lemma~\ref{lemma: compute required time complexity} takes $O(N_\phi^2)$ time.
    Therefore, the total operational complexity of adding $\gamma$ to $Z_{\phi}$ is $O(N_\phi) + O(N_\phi - i) + O(N_\phi^2) = O(N_\phi^2)$.

    Once the perspective PGT has been added to its filling class, we need to check if $R(Z_\phi) \leq T^{SI}$, in which case the PGT can be accepted. 
    Updating the reserved time on each resource associated with $\phi$ takes $O(1)$ time (dictionary query, addition, value assignment)
    Checking (float comparison, $O(1)$) that the available time for all $O(R)$ resources is positive takes a total of $O(R)$ time. 

    If this condition is met (i.e. the PGT is accepted), then updating the dictionary of accepted demands takes $O(1)$ time.
    Otherwise, if the PGT is rejected, then we need to undergo the same procedure as above but in reverse, which has at worst the same complexity. 

    Therefore, determining an accept/reject decision for a PGT $\gamma$ has total complexity \begin{equation*}
        O(R) + O(N_\phi^2)+O(R) + O(1) = O(N_\phi^2)+O(R).
    \end{equation*}
    Each demand in $\bm{\Gamma}$ contributes at most one PGT into $Z_{\phi}$, so we can bound $|N_\phi|$ above by $N+k$, and the total complexity of making a determination for $\gamma$ is $O((N+k)^2+R)$
    
    For each possible path $\pi$ along which PGAs for a demand $d$ can be realized, there is a constant finite number of entanglement generation protocols which can be employed.
    Therefore, the total number of possible PGTs for a given demand is $O(2^R)$.
    There are $k$ demands for which a decision needs to be made.
    Therefore the overall computational complexity is given by ${O(k[(N+k)^2+R]2^R+R^2)}$.     
\end{proof}

\subsubsection{Compute Schedule}
We determine the complexity of Compute Schedule by first determining the complexity of it's subprocesses. Fundamentally, computing schedules requires adding PGAs to the network schedule, which is implemented as in Listing~\ref{lst: resource schedule}. Compute Schedule has three phases: minimal allocation, bonus allocation, and schedule compilation. The minimal allocation phase computes direct allocation schedules (\textsc{DirectAllocation}, Algorithm~\ref{alg: direct allocation subroutine}, Listing~\ref{lst: directallocation}) for each filling class. The bonus allocation phase attempts to add PGAs to the direct allocation schedules for each filling class, according to \textsc{BonusRoundRobin}, Algorithm~\ref{alg: bonus round robin resource schedule}, which is implemented in Listing~\ref{lst: bonusRR}. Schedule compilation (\textsc{CompileSchedule}) is implemented in Listing~\ref{lst: resource schedule}.

\begin{lemma}
    Let the format of the network schedule be as implemented in Listing \ref{lst: resource schedule}.
    Then, appending a PGA to the schedule has complexity $O(R)$. Inserting a PGA anywhere else in the schedule has complexity $O(RN)$. 
    \label{lemma: add PGA to schedule}
\end{lemma}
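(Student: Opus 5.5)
The lemma concerns the concrete data layout of Listing~\ref{lst: resource schedule}, so the plan is to read the cost of each primitive list operation directly off that layout. We assume, as the listing suggests, that a network schedule $S$ is stored as a dictionary keyed by the internal resources $r\in\mathfrak{R}$ (plus end nodes, which only add a constant factor per path). Each value $S[r]$ is a Python list of PGAs ordered by start time. Adding a PGA $\pga(\gamma,t;\cdot)$ means adding it to $S[r]$ for every $r\in\pi_\gamma$. A valid path visits each vertex at most once, so $|\pi_\gamma\cap\mathfrak{R}|\le R$. Hence both claimed bounds reduce to a per-resource cost multiplied by $O(R)$.

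For appending, the per-resource work is a dictionary lookup of $S[r]$ (amortised $O(1)$), possibly an $O(1)$ comparison with the last element to confirm the new PGA starts after it, and a list \texttt{append}. The append is amortised $O(1)$ in Python's dynamic arrays. Summing over the at most $R$ resources on $\pi_\gamma$ gives $O(R)$.

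For insertion elsewhere, the per-resource work is to locate the position and then call \texttt{list.insert}. Locating the position is either a linear scan, $O(|S[r]|)$, or a bisection, $O(\log|S[r]|)$. The insert itself shifts the tail of the list, which costs $O(|S[r]|)$. The key step is therefore to bound $|S[r]|$ by $O(N)$, and this is where I expect the main obstacle.

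To do this, I would argue that at any time a scheduling interval holds only a bounded number of PGAs per active PGT. The total number of PGAs is at most $T^{SI}/\min_\gamma E_\gamma$, and $T^{SI}$ and the $E_\gamma$ are treated as constants in this analysis, as in the rest of the appendix. Alternatively, I would follow the convention used in Theorem~\ref{thm: admit new tasks complexity}, where list lengths are measured in the number of PGTs, so that $|S[r]| = O(N)$. With that bound, each per-resource insertion costs $O(N)$, and summing over $O(R)$ resources gives $O(RN)$. I would state the counting convention for $|S[r]|$ explicitly, since it is the only non-mechanical point, and note that it matches how $N$ enters the $O(N^3R)$ bound for the bonus phase.
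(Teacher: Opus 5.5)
Your proposal is correct and follows the paper's proof: an $O(1)$ dictionary lookup per resource on $\pi_\gamma$, then per resource either an amortised $O(1)$ append, or a binary search plus an $O(|S_r|)$ tail shift with $|S_r| = O(N)$, all multiplied by the $O(R)$ factor from the path length. The one difference is that the paper simply asserts $|S_r| = O(N)$, implicitly using the convention of \eqref{eq: max possible start times} that each PGT contributes $O(1)$ PGAs per interval, whereas you state this counting convention explicitly, which is the more careful treatment.
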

\begin{proof}
    The process of scheduling a PGA for a PGT $\gamma$ can be decomposed into three parts.
    Firstly, we get the resource schedules for the individual resources on the path $\pi_{\gamma}$. 
    The network schedule is implemented as a dictionary of resource schedules keyed by their resource identifier, so we simply loop through the path $\pi_\gamma$ and retrieve each schedule in turn (dictionary query, $O(1)$). 
    Therefore, the complexity of getting all the schedules to update is $O(|\pi_\gamma|) = O(R)$. 
    
    Secondly, for each resource in $\pi_{\gamma}$, we need to configure the PGA(s) to be scheduled, as in lines 53-56 of Listing~\ref{lst: resource schedule}. This takes $O(1)$ time. 
    
    Finally, we need to add the constructed PGA(s) to the resource schedule $S_r$.
    If a PGA can be appended to the end of the resource schedule whilst maintaining time ordering of all scheduled PGAs, adding the PGA has the same (amortized) complexity as appending to a list, which is $O(1)$. 
    If it is not known at which index a new PGA needs to be inserted to maintain time ordering, then it can be obtained using binary search in time $O(\log |S_{r}|)$.
    As $S_r$ is a list of PGAs, the total number of which is $O(N)$, we have that finding the index required is $O(\log N)$. 
    Let $q$ be the index at which the new PGA should be inserted. 
    Then performing the insertion has complexity $O(|S_r|-q) = O(N)$. 

    The total complexity of parts two and three combined is then $O(1)$ when appending and $O(N)$ otherwise. 
    Since these parts need to be repeated for each resource on the path $\pi_{\gamma}$, in total the complexity is $O(R)$ when appending and $O(RN)$ otherwise. 
\end{proof}

\begin{lemma}\label{lemma: single FC direct allocation}
  \textsc{DirectAllocation}($\phi$), as implemented in Listing~\ref{lst: directallocation} for a filling class $\phi$, operating on a network schedule as implemented in Listing~\ref{lst: resource schedule}, has computational complexity \begin{equation}\label{eq: direct allocation complexity}
      O(N_\phi  R) + O(N_\phi^2).
  \end{equation}
\end{lemma}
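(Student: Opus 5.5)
The plan is to count operations in \textsc{DirectAllocation} (Algorithm~\ref{alg: direct allocation subroutine}, Listing~\ref{lst: directallocation}) in three groups. The first group is the preprocessing: sorting $Z_\phi$ by minimal allocation, computing the cycle durations $c_x$, and computing the cycle counts $n_x$. The second group is the nested loops that add PGAs to the schedule. The third group is the per-iteration bookkeeping of $t_{\text{start}}$ and $t_{\text{offset}}$.

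For the preprocessing, sorting costs $O(N_\phi\log N_\phi)$. The $n_x$ cost $O(N_\phi)$. The $c_x$ can be bounded exactly as in Lemma~\ref{lemma: compute required time complexity}: each $c_x$ needs a max and a sum over the suffix $\{x,\dots,M-1\}$, which is $O(N_\phi-x)$, so all of them together cost $O(N_\phi^2)$. The offsets $t_{\text{offset}}=\sum_{y=m}^{x}E_y$ inside the innermost loop should be maintained as running sums (incremented by $E_x$ as $x$ advances) so that each costs $O(1)$. If the listing recomputes them instead, they are still suffix sums, and bounding them over the $x$-loop again gives at most $O(N_\phi^2)$ per cycle type, which I will fold into the $O(N_\phi^2)$ term.

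The main step is the PGA-insertion loops. I will use Lemma~\ref{lemma: add PGA to schedule}, which says that appending a PGA costs $O(R)$ and a general insertion costs $O(RN)$. Direct allocation schedules are built in strictly increasing time order, and \textsc{MinimalAllocation} places filling classes at start times $t_0+\sum_{\phi'>_\xi\phi}R(Z_{\phi'})$. So on every resource $r\in\pi_\gamma$, each new PGA lies after all PGAs already in $S[r]$, and every addition is an append costing $O(R)$. The key fact to establish here is that resource schedules stay time-ordered under appending: a resource in $\xi(\Pi_\phi)$ only carries PGAs from filling classes $\phi'\geq_\xi\phi$ placed earlier, or from $\phi$ itself in order. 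This follows from well-behavedness and the start-time rule in Corollary~\ref{corr:minimumAllocTimeReq}.

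It then remains to count PGAs. The loops add $\sum_x\minimumallocation_{\gamma_x}$ PGAs in total, and this number depends on the minimal allocations, not on $N_\phi$. This is the main obstacle I expect: the stated bound $O(N_\phi R)$ implicitly treats the number of PGAs per PGT as $O(1)$, or equivalently counts operations per PGT and per cycle type rather than per PGA. I would first adopt the convention used for Lemma~\ref{lemma: add PGA to schedule}, where the schedule length is $O(N)$, which treats the minimal allocations as bounded constants. Under that convention there are $O(N_\phi)$ appends at $O(R)$ each, giving $O(N_\phi R)$. Adding the $O(N_\phi^2)$ preprocessing and bookkeeping yields $O(N_\phi R)+O(N_\phi^2)$. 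I will state explicitly that the constant absorbs $\max_\gamma\minimumallocation_\gamma$.
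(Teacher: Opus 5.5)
Your route is the paper's. It counts $\sum_{\gamma}\minimumallocation_\gamma = O(N_\phi)$ PGAs and charges each one an $O(R)$ append via Lemma~\ref{lemma: add PGA to schedule}, with appending taken ``by construction'', plus an $O(N_\phi)$ offset sum. It reads the cycle data as cached $O(1)$ lookups, giving $O(N_\phi)\bigl(O(R)+O(N_\phi)\bigr)$. Your extra charges for sorting and for computing the $c_x$ and $n_x$ stay inside $O(N_\phi^2)$.

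On the PGA count you are right, and more careful than the paper. The paper's reason, that each $\minimumallocation_\gamma$ is a positive integer, only gives $\sum_\gamma \minimumallocation_\gamma \geq N_\phi$, so an upper bound on minimal allocations really is an extra input. You can make your convention principled rather than ad hoc. The direct allocation schedule of $\phi$ contains $\minimumallocation_\gamma$ non-overlapping PGAs of length $E_\gamma$, so for admitted PGTs $\minimumallocation_\gamma E_\gamma \le R(Z_\phi) \le T^{SI}$. Hence $\minimumallocation_\gamma \le T^{SI}/E_\gamma$, which is the same convention the paper uses in \eqref{eq: max possible start times}.

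The step that does not go through as written is the offset fallback. It is also the case that actually applies, because the listing recomputes \texttt{offset} as a fresh slice sum in the innermost loop. An $O(N_\phi^2)$ bound per cycle type, summed over up to $N_\phi$ cycle types, gives $O(N_\phi^3)$, not $O(N_\phi^2)$. To fold it into $O(N_\phi^2)$ you have two options:
\begin{itemize}
\item Use the telescoping identity $\sum_m n_m = \minimumallocation_{\gamma_{M-1}}-1$, so the total number of cycles is $O(1)$ under your convention.
\item Do what the paper does: charge $O(N_\phi)$ per PGA and multiply by the $\sum_x(\minimumallocation_{\gamma_x}-1)=O(N_\phi)$ PGAs placed in the inner loop. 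The same per-PGA accounting also covers the prefix sum computed for \texttt{pga\_id}.
\end{itemize}

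Finally, the cost bound does not rest on your time-ordering argument. Every pointer is initialised to the negative sentinel $-100$, which the negative-pointer branch of \texttt{add\_pga} turns into a plain append, so each addition costs $O(R)$ regardless. As a correctness claim, that argument also needs the filling classes to be visited in an order extending $\geq_\xi$. Well-behavedness and the start-time rule alone do not supply this.
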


\begin{proof}
  The total number of PGAs to be scheduled from filling class $\phi$ is $\sum_{\gamma\in\phi}\minimumallocation_\gamma = O(N_\phi )$.
This holds since $\forall \gamma, \ \minimumallocation_{\gamma}$ is a positive integer.
    We proceed through Listing~\ref{lst: directallocation} line-by-line. 
    In initializing the routine, lines 11-15, we obtain the logger, set \verb|cycle_start_time| and set up one pointer for each resource. These are either value assignment or dictionary query operations, which are all $O(1)$. 
    
    In lines 17-45 we construct the schedule. The PGTs in a filling class are ordered by a set of indices $X$ such that ${\minimumallocation_{x} \leq \minimumallocation_{x+1}}$.  
    For each $\gamma_x$ in increasing order ${x=\{0, 1, \cdots, |Z_{\phi}|\}}$, we add all $\minimumallocation_{\gamma_x}$ PGAs for $\gamma_x$ to the schedule. 
    
    In lines 19-21 we retrieve the new cycle time (dictionary query, $O(1)$) and reset the cycle counter $k$ (value assignment, $O(1)$). This portion has total complexity $O(1)$.

    The two \texttt{for} loops, in lines 23-31 add all the PGAs for the PGT $\gamma_x$ to the schedule, for each cycle in turn.  
    For each of these PGAs, the offset of the start time from the start of the cycle is calculated in line 25.
    This requires summing the elements of a list of length order $O(N_\phi)$, so it is an $O(N_\phi)$ operation. 
    Then the PGA needs to be added to the schedule. 
    By Lemma~\ref{lemma: add PGA to schedule}, this has complexity $O(R)$ as, by construction, \textsc{DirectAllocation} only appends PGAs to the end of resource schedules. 
    Therefore, the total complexity incurred from adding a PGA to the schedule is ${O(R) + O(N_\phi)}$.
    Lines 33-42 simply schedule one last PGA for each PGT following the end of it's last cycle, as discussed in the proof of Proposition~\ref{prop: timeReqSingleFC}. 
    
    The total complexity is thus given by \begin{align}
        O(1) + O(N_\phi) + O(N_\phi)&(O(R)+O(N_\phi))\notag\\ &= O(N_\phi^2) + O(N_\phi R).
    \end{align}
\end{proof}

\begin{lemma}
  \textsc{RoundRobinBonus}($\phi$) as implemented in Listing~\ref{lst: bonusRR} for a filling class $\phi$ operating on a network schedule as implemented in Listing~\ref{lst: resource schedule} has operational complexity \begin{equation}\label{eq: rr bonus complexity}
      O(N^2N_\phi R)).
  \end{equation}
  \label{lemma: single FC Bonus RR}
\end{lemma}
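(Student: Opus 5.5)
The proof will follow the same line-by-line operation counting used for Lemma~\ref{lemma: single FC direct allocation}. The bound will be written as a product of three factors: the number of iterations of the \texttt{while}/\texttt{for} structure of Algorithm~\ref{alg: bonus round robin resource schedule}, times the cost of one iteration of the inner \texttt{for} loop, times the cost of any PGA insertion that iteration may trigger.

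\emph{Number of outer cycles.} The halting analysis above already shows three things. The sequence $\test_k$ is strictly increasing (Lemma~\ref{lemma: nextStartTime is non-decreasing.}). Every value $\test_k$ is either the end time of a scheduled PGA or a release time. Between consecutive end times there are at most $|Z_{\phi}|$ release-time steps.

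\begin{itemize}
\item I will bound the number of PGAs that can be present in the schedule within one scheduling interval by $O(N)$, treating each active PGT as contributing $O(1)$ PGAs per interval. This is the same convention used in the proof of Lemma~\ref{lemma: single FC direct allocation}, where $\sum_\gamma \minimumallocation_\gamma$ was treated as $O(N_\phi)$.
\item Hence the number of distinct end times visited is $O(N)$.
\item By Lemma~\ref{lemma: only add to T-hat at end times}, release times are only added at end times.
\item By Corollary~\ref{coro: vist every end time} and the corollary preceding it, each end time is followed by at most $O(N_\phi)$ release-time steps. I expect this count can be absorbed into the end-time count: each PGT contributes at most one pending release time, and each release time is consumed once. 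So the total number of \texttt{while} iterations is $O(N)$.
\end{itemize}

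\emph{Cost per cycle.} Each \texttt{while} iteration runs the \texttt{for} loop over $\mathbb{Z}_{|\phi|,k}$, which has $N_\phi$ iterations. In each iteration we do the following.

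\begin{itemize}
\item \textsc{allResourcesAvailable} is called. It loops over $O(R)$ resources and checks availability in a time-ordered resource schedule of length $O(N)$. That is $O(\log N)$ per resource by binary search, and $O(N)$ by a linear scan.
\item The two minsep-violation routines are called. Each scans, for each of $O(R)$ resources, the PGAs overlapping a window, which costs $O(N)$ per resource.
\item If the checks pass, a PGA is inserted. By Lemma~\ref{lemma: add PGA to schedule} this costs $O(RN)$ in the worst (non-append) case.
\item If only a left violation occurred, a release time is appended, which is $O(1)$ amortised.
\end{itemize}

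One iteration of the inner loop therefore costs $O(NR)$.

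\emph{Combining.} At the end of each \texttt{while} iteration, \textsc{nextStartTime} scans all resource schedules for end times after $\test$, at cost $O(NR)$, and takes a minimum over $O(N_\phi)$ release times. This is dominated by the inner loop cost $N_\phi\cdot O(NR)$. Multiplying the $O(N)$ cycles by $N_\phi$ inner iterations and by $O(NR)$ per inner iteration gives $O(N^2 N_\phi R)$, as claimed.

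\emph{Main obstacle.} The delicate step is the $O(N)$ bound on outer cycles. PGAs are themselves added during the bonus phase, so the set of end times grows as the algorithm runs. Restricting to a single scheduling interval, the natural argument is that each $\gamma$ contributes $O(1)$ PGAs there. But strictly, the number of PGAs can grow like $T^{SI}/\min_\gamma E_\gamma$, which is not a function of $N$ alone.

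I would first try to make precise the convention, implicit in the proof of Lemma~\ref{lemma: single FC direct allocation}, that per-PGT PGA counts in an interval are $O(1)$, treating $T^{SI}$ and the $E_\gamma$ as constants. Under that convention I would argue that each PGA added in the bonus phase creates exactly one new end time, and therefore contributes $O(1)$ outer cycles. This keeps the cycle count linear in the total number of PGAs, and hence $O(N)$.
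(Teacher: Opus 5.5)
Your proposal is correct and follows essentially the same route as the paper: $O(N)$ outer cycles, times $N_\phi$ inner rounds, each round dominated by the $O(RN)$ non-append insertion of Lemma~\ref{lemma: add PGA to schedule}. For the cycle count, the paper bounds the number of possible start times by $\sum_{\gamma} T^{SI}/(E_\gamma + t^{\texttt{minsep}}_\gamma) = O(N)$, which is the same convention you make explicit; your per-call costs for the availability, minsep and next-start-time routines are looser than the paper's pointer-based $O(R)$, $O(N)$ and $O(R)+O(N_\phi)$ estimates, but insertion dominates each round, so the final $O(N^2 N_\phi R)$ bound is unchanged.
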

\begin{proof}
    The overall complexity depends on the complexity of the subroutines which are detailed in Algorithm~\ref{alg: auxiliary algoroithms used in bonus round robin} and implemented in Listing~\ref{lst: bonusRR}.
    
    \verb|_get_next_start_time|: 
    With the implementation of the network schedule in Listing~\ref{lst: resource schedule}, getting the end times to consider relies on getting each resource schedule $\forall r \in \mathfrak{R}$ (dictionary query, $O(1)$) and then retrieving the end time of a specified PGA (list element retrieval at known index, $O(1)$). 
    Therefore, getting the PGA end times to consider has a total complexity of $O(R)$. 
    The minimal end time is determined by minimizing over the set of end times to consider, an $O(R)$ operation.
    Therefore, the total complexity to get the next end time is ${O(R) + O(R) = O(R)}$.
    The next release time is determined by minimizing over the list of release times supplied as an input to the subroutine. 
    Each PGT in $\phi$ can contribute at most one release time to this list for a total of $O(N_\phi)$ elements. 
    Therefore, finding the next release time is $O(N_\phi)$. 
    The next start time is the minimum of the next end time and the next release time. Calculating it takes $O(1)$ time (comparing two values). 
    
    If the next start time is the next end time, the set of schedule pointers is updated. 
    To update the pointers, for each resource in $\mathfrak{R}$, we check if the pointed PGA ends before or at the next start time (dictionary query, $O(1)$, and list item retrieval at known index, $O(1)$). 
    If so, the pointer is incremented by two ($O(1)$).
    Therefore, the total complexity for updating the pointers is $O(R)$.     
    Alternatively, if the next start time is the next release time, we remove the next start time from the set of release times, which is an $O(1)$ operation. 

    Therefore, the contributions from getting the next end time, the next release time, finding the minimum of these and then updating the release times/schedule pointers gives a total complexity of \begin{align}
        O(R) + O(N_\phi) + &O(1) + O(R) \notag\\ &= O(R) + O(N_\phi).
    \end{align}

    \verb|_resources_available|:
    To check if all the resources required to schedule a PGA from PGT $\gamma$ are available, we perform the following steps: For each resource in $\pi_\gamma$, we check if the pointed PGT is active at time $\test$. 
    This consists of a dictionary query to get the resource schedule $O(1)$, list retrieval at a known index ($O(1)$) and a constant number of comparisons ($O(1)$). 
    Therefore, checking if one resource is available at time $\test$ is an $O(1)$ operation. The total from checking all resources is then $O(|\pi_\gamma|) = O(R)$. 

    \verb|_future_minsep_violations|:
    To check if there are minium separation violations in the future for a PGA from PGT $\gamma$, it suffices to check only the resource schedule from the first internal resource in $\pi_\gamma$. 
    Getting the relevant resource and its resource schedule is an $O(1)$ operation (list retrieval from known index, dictionary query).
    Initializing the offset variable is also $O(1)$. 
    Each check which is performed is an $O(1)$ operation (comparison, property retrieval, value assignment).
    The number of scheduled PGAs to check is only bounded by the size of the schedule, and so there are $O(N)$ checks to perform. 
    Combining the contributions from obtaining the resource schedule, initializing the counter and then performing the checks gives a total complexity of $O(N)$. 

    \verb|_past_minsep_violations|:
    The process is the same as for \verb|_future_minsep_violations|, so the complexity is $O(N)$. 

    We now move to the main loop of the algorithm. 
    The maximum number of cycles is the same as the maximum number of possible start times. 
    Each possible start time corresponds to the end of some PGA in the schedule, so there are at most \begin{equation}\label{eq: max possible start times}
        \sum_{\phi\in\Phi}\sum_{\gamma\in Z_{\phi}}\frac{T^{SI}}{E_\gamma + \tminsep_\gamma} = O(N)
    \end{equation} possible start times. The expression \eqref{eq: max possible start times} dictates that every PGT can contribute at most a number of PGAs based on how many times the PGA duration plus minsep fits within the scheduling interval.
    Thus the number of cycles is $O(N)$. 
    
    In each cycle there are $N_\phi $ rounds, during each of which \verb|_resources_available|, \verb|_future_minsep_violations|, \verb|_past_minsep_violations| are called. The algorithm may also add a PGA to the schedule, which by Lemma~\ref{lemma: add PGA to schedule} takes $O(RN)$ time as these PGAs typically need to be inserted intro the schedule rather than appended to the end. 
    At the end of each cycle, \verb|_get_next_start_time| is called. 

    Overall, the complexity is \vspace{-2mm}\begin{align}
        O(N)\bigg{(}O(N_\phi ) \big{(}O(R) + O(N) &+ O(N) + O(NR) \big{)} \notag\\+ O(R) + O(N_\phi)\bigg{)}&= O(N^2N_\phi R).
    \end{align}\normalsize
\end{proof}

\begin{lemma}
  Given that the network schedule is implemented as in Listing~\ref{lst: resource schedule}, \textsc{CompileSchedule} has operational complexity $O(NR)$.
  \label{lemma: compile schedule complexity}
\end{lemma}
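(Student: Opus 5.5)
We obtain the bound by counting the operations that \textsc{CompileSchedule} performs on the data structure of Listing~\ref{lst: resource schedule}, where the network schedule is a dictionary of per-resource schedules keyed by resource identifier. The compilation step visits every resource schedule $S_r$, $r\in\mathfrak{R}$, and converts each scheduled PGA into the distributable format for that component. We therefore split the cost into an outer loop over the $R$ resources and, for each resource, an inner pass over the PGAs in $S_r$.

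\textbf{Outer loop.} Retrieving $S_r$ is a dictionary query, which is $O(1)$. Over all resources this contributes $O(R)$.

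\textbf{Inner pass.} For each $r$, we bound $|S_r|$ in the same way as in the proofs of Lemma~\ref{lemma: add PGA to schedule} and Lemma~\ref{lemma: single FC Bonus RR}. The number of PGAs on any resource is at most the number of possible start times given in \eqref{eq: max possible start times}, which is $O(N)$ because $T^{SI}$ and the ratios $T^{SI}/(E_\gamma+\tminsep_\gamma)$ are treated as constants. Each PGA requires only a constant number of elementary operations: property retrieval, value assignment, and appending to an output list at amortized $O(1)$ cost. No insertion or sorting is needed, because $S_r$ is already time-ordered. The inner pass therefore costs $O(N)$ per resource.

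\textbf{Combining.} The total cost is $O(R)\cdot O(N) + O(R) = O(NR)$.

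The main obstacle is justifying $|S_r| = O(N)$ after the bonus allocation phase has inserted additional PGAs. The strict bound $\sum_\gamma T^{SI}/(E_\gamma+\tminsep_\gamma)$ depends on the parameters $E_\gamma$ and $\tminsep_\gamma$. As elsewhere in the appendix, we would state explicitly that these quantities are regarded as constants with respect to $N$ and $R$, and reuse \eqref{eq: max possible start times} as stated. We would also check the listing to confirm that compilation never re-sorts $S_r$ and never performs a per-PGA lookup that is nonconstant. If it did, that step would add a $\log N$ factor.
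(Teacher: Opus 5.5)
Your proof is correct and follows the same route as the paper: there are $O(R)$ resource schedules, each holding $O(N)$ PGAs, and each PGA is converted into its transmission format in constant time, which gives $O(NR)$. The only difference is that you justify the $O(N)$ bound per resource schedule via the start-time count from the bonus-allocation analysis (treating the ratios of $T^{SI}$ to PGA duration plus minimum separation as constants), whereas the paper simply asserts that bound.
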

\begin{proof}
    The schedule is implemented as a dictionary of lists of PGAs, one per internal resource in the network, called a resource schedule. 
    Therefore, the only operation required is to convert each PGA into its transmission format. 
    This takes constant time for each PGA. 
    There are $O(N)$ PGAs in each resource schedule and $O(R)$ resource schedules, and so the overall complexity for compiling the schedule is $O(NR)$. 
\end{proof}

\begin{theorem}
  The program \textsc{ComputeSchedule} has operational complexity \begin{equation}\label{eq: compute schedule complexity}
      O(N^3R).
  \end{equation}
  \label{thm: compute NS complexity}
\end{theorem}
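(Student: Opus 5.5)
The plan is to decompose \textsc{ComputeSchedule} (Algorithm~\ref{alg: Compute Schedule}) into its three sequential phases and sum their costs: \textsc{MinimalAllocation}, \textsc{BonusAllocation}, and the final schedule compilation. Each phase is a loop over the filling classes $\phi\in\Phi$ calling a per-class subroutine. So the total cost of each phase is the sum over $\phi$ of the per-class bounds already proved, plus the loop overhead. There are $O(R)$ filling classes (as counted in the proof of Lemma~\ref{lemma: UFC complexity}), so the overhead is $O(R)$. Computing the start offset $t_0+\sum_{\phi'<\phi}R(Z_{\phi'})$ for every class costs at most $O(R^2)$ if done naively. Since each $R(Z_{\phi'})$ is stored as a property of the filling class, it could instead be accumulated incrementally in $O(R)$. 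I will take the incremental version, or otherwise absorb the $O(R^2)$ into the final bound by noting that it is subleading in the regime considered.

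For the minimal allocation phase I invoke Lemma~\ref{lemma: single FC direct allocation}, which gives $O(N_\phi R)+O(N_\phi^2)$ for each class. By Lemma~\ref{lemma: one fill class per PGT} the $Z_\phi$ partition $Z_\Phi$, so $\sum_\phi N_\phi=N$. Summing then gives $O(NR)+O(\sum_\phi N_\phi^2)\le O(NR)+O(N^2)$. For the bonus phase I invoke Lemma~\ref{lemma: single FC Bonus RR}, which gives $O(N^2N_\phi R)$ per class. Summing over $\phi$ with $\sum_\phi N_\phi=N$ gives $O(N^3R)$. Schedule compilation costs $O(NR)$ by Lemma~\ref{lemma: compile schedule complexity}.

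Adding the three phases gives $O(NR)+O(N^2)+O(N^3R)+O(NR)$. Each of $NR$ and $N^2$ is bounded by $N^3R$ whenever $N,R\ge 1$, so the total is $O(N^3R)$. The degenerate case $N=0$ needs a separate remark: the loops still touch $O(R)$ filling classes, which I will either treat as a constant or note is dominated once $N\ge1$.

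The main obstacle is the summation over filling classes in the bonus phase. The per-class bound in Lemma~\ref{lemma: single FC Bonus RR} mixes the global $N$ (through the number of cycles and the minsep checks) with the local $N_\phi$. I must make sure the global factors are not multiplied by the number of classes $|\Phi|=O(R)$, which would give $O(N^3R^2)$. To handle this, I will pull the $O(N^2 R)$ factor out of the sum, so that only $N_\phi$ is summed, and then use the partition property. This requires checking that the $N^2R$ factor in the lemma is uniform in $\phi$, which holds because it depends only on the whole schedule size and the number of resources.
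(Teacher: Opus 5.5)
Your proposal is correct and is essentially the paper's proof. You sum the per-class bounds of Lemmas~\ref{lemma: single FC direct allocation} and~\ref{lemma: single FC Bonus RR} over $\Phi$, add the $O(NR)$ compilation cost of Lemma~\ref{lemma: compile schedule complexity}, and use $\sum_{\phi} N_\phi = N$ to collapse $\sum_{\phi} O(N^2 N_\phi R)$ to $O(N^3R)$, which is exactly what the paper does.

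Your extra bookkeeping goes beyond the paper, which silently drops these subleading terms. This covers the $O(N_\phi^2)$ term, the per-class loop overhead, and the start-offset sums; for the paper's specific filling classes each offset sum involves at most two preceding classes, so it costs $O(R)$ in total.
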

\begin{proof}
  By Lemmas~\ref{lemma: single FC direct allocation}, \ref{lemma: single FC Bonus RR} and \ref{lemma: compile schedule complexity} the total complexity is given by \begin{align}
    &\sum_{\phi\in\Phi}O(N_\phi R) + \sum_{\phi\in\Phi}O(N^2N_\phi R) + O(NR)\notag\\
    &= O(NR) + O(N^3R) = O(N^3R).
  \end{align}
\end{proof}

\subsection{Numeric Evaluation}
In Section~\ref{sec: Evaluation} we discuss the simulation of randomly generated network topologies, parametrized by the number of long-distance backbones, local areas, and end nodes in the network. Table~\ref{tab: random topos computation times} compiles the average schedule computation times for each of these parametrizations. The time to compute the schedule is broken down by the processes Update Filling Classes, Admit Tasks, and Compute Schedule (without the Compile Schedule process). The total time to produce the network schedule includes the time for each of these processed and Compile Schedule. 
\renewcommand{\arraystretch}{1.2}
\begin{table*}[t]
    \centering
    \begin{tabular}{|c|c|c|c|c|c|c|}
    \hline
    \thead{\textbf{Long-distance}\\ \textbf{Backbones}} & \thead{\textbf{Local} \\ \textbf{Areas}} & \thead{\textbf{End} \\ \textbf{Nodes}} & \thead{\textbf{Time to Update}\\ \textbf{Filling Classes (s)}} & \thead{\textbf{Time to Admit}\\ \textbf{PGTs (s)}} & \thead{\textbf{Time to Compute}\\ \textbf{Schedule (s)} \\ \textbf{(without Compilation)}} & \thead{\textbf{Total Time to Produce}\\\textbf{Network Schedule (s)}}  \\ \hline
    1 & 2 & 15 & 0.00011 & 0.00017 & 0.00683 & 0.05268 \\
    2 & 2 & 15 & 0.00011 & 0.00019 & 0.00707 & 0.05246 \\
    2 & 2 & 50 & 0.00025 & 0.00180 & 0.03461 & 0.09126 \\
    2 & 3 & 30 & 0.00017 & 0.00051 & 0.02276 & 0.07343 \\
    2 & 3 & 50 & 0.00027 & 0.00147 & 0.04741 & 0.10465 \\
    3 & 3 & 30 & 0.00018 & 0.00061 & 0.02406 & 0.07529 \\
    5 & 4 & 40 & 0.00031 & 0.00133 & 0.06215 & 0.11797 \\
    6 & 3 & 35 & 0.00022 & 0.00121 & 0.03738 & 0.09196 \\
    7 & 5 & 50 & 0.00059 & 0.00291 & 0.13799 & 0.20227 \\
    12 & 4 & 40	& 0.00034 & 0.00182 & 0.08051 & 0.14175 \\
    \hline
    \end{tabular}
    \caption{Average computation times for each process of the Network Scheduler control application, from the simulations of randomly generated network topologies parametrized by Table~\ref{tab: random topology parameters}. These values are based on the simulations with default Arqon, where the bonus allocation phase of the Compute Schedule process is enabled.}
    \label{tab: random topos computation times}
\end{table*}

\section{Implementation: Supporting Results and Formal Definitions}\label{app: Implementation}

In this section we provide formal definitions for the set of allowed edges in a network resource graph and for the set of allowed entanglement generation paths that we consider in our implementation of Arqon. Then, we prove the supporting results Lemma~\ref{lemma: specific path partition disjoint} and Lemma~\ref{lemma: specific FCs well behaved} describing the properties of the specific choice of path partition and the set of filling classes it induces. Finally, we discuss our methods of setting the duration of a PGA for a given $\ppacket$.

\subsection{Allowed Edges in a Network Resource Graph}

\begin{definition}[Set of Allowed Edges, $\mathcal{E}^{*}$]\label{def: allowed edges}
Let ${v_i, v_{j} \in \mathcal{V}}$ be two vertices in the network resource graph. An edge ${e = (v_i, v_{j})}$ is an \emph{\textbf{allowed edge}} if the following conditions are satisfied:
\begin{enumerate}
    \item If ${v_{i} \in E}$ then ${v_{j} \in I}$. Reflexively, if ${v_{j} \in E}$, then ${v_{i} \in I}$. 
    \item If ${v_i \in B}$, then ${v_{j} \in J}$. Reflexively, if ${v_{j} \in B}$, then ${v_{i} \in J}$.
    \item If ${v_i \in J \wedge v_{j} \ \cancel{\in} \ B}$, then ${v_{j} \in I}$. Reflexively, if ${v_{j} \in J \wedge v_{i} \ \cancel{\in} \ B}$, then ${v_{i} \in I}$.
\end{enumerate}
The \emph{\textbf{set of allowed edges}} is \begin{equation}\label{eq: set of allowed edges}
    \mathcal{E}^{*} := \{(v_i, v_j) \ | \ (1. \vee 2. \vee 3.) \wedge v_i, \ v_j \in \mathcal{V} \}.
\end{equation}
\end{definition}

\subsection{The set of allowed entanglement generation paths}

\begin{definition}[Allowed entanglement generation paths]\label{def: allowed entanglement generation paths}
Let $\mathcal{G} = (\mathcal{V}, \mathcal{E})$ be an internally connected network resource graph with ${\mathcal{V} = E\sqcup I \sqcup J \sqcup B}$ and ${\mathcal{E}\subseteq \mathcal{E}^{*}}$. Let $\mathcal{L}$ be the set of local areas of $\mathcal{G}$. 
Define \begin{equation}
    \pi_{e,e'} = \big{\{} \pi=(e, v_1, \cdots, v_k, e') \in \mathcal{P}_{\text{\emph{valid}}} \big{\}},
\end{equation}
the set of all valid entanglement generation paths between two specific end nodes ${e, e' \in E}$, of variable length ${k=|\pi_{e,e'}| - 2}$.
The set $\mathcal{P}_{\text{\emph{allowed}}} \subset \mathcal{P}_{\text{\emph{valid}}}$ of \emph{\textbf{allowed entanglement generation paths}} is a restriction of the set of valid entanglement generation paths, as follows 
\begin{align}
    \mathcal{P}_{\text{\emph{allowed}}} = \bigg{\{}& \pi_{e,e'} \ \forall e, e' \in E \ \bigg{|} \pi_{e,e'} \in B^{*}(e, e') \ \wedge \nonumber \\ 
     &\forall \mathcal{L}_{y} \in \mathcal{L},  \pi_{e,e'} \in L^{*}_{y}(e, e', b^{*}) \bigg{\}},
\end{align}
where \begin{equation}
    B^{*}(e,e') = \{ \pi_{e,e'} \ \big{|} \ \sum_{x=1}^{k=|\pi_{e,e'}| - 2} \mathbbm{1}_{(v_{x} \in B)} = b^{*(e,e')}\},
\end{equation}
\begin{align}
    L^{*}_{y}(e,e',&b^{*}) = \big{\{} \pi_{e,e'} \in B^{*}(e,e') \ \big{|} \nonumber \\
    &\sum_{x=1}^{k=|\pi_{e,e'}| - 2} \mathbbm{1}_{(v_{x} \in \mathcal{L}_y)} \mathbbm{1}_{(v_{x} \in I)} = l^{*}_{y}(e,e',b^{*}) \big{\}},
\end{align}
and 
\begin{equation}
    b^{*}(e,e') = \min_{\pi_{e,e'}} \sum_{x=1}^{k=|\pi_{e,e'}| - 2} \mathbbm{1}_{(v_{x} \in B)},
\end{equation}
\begin{equation}
     l^{*}_{y}(e, e', b^{*}) = \min_{\pi_{e,e'}} \sum_{x=1}^{k=|\pi_{e,e'}| - 2} \mathbbm{1}_{(v_{x} \in \mathcal{L}_y)} \mathbbm{1}_{(v_{x} \in I)},
\end{equation}
and $\mathbbm{1}_{(A)}$ is an indicator function that takes value 1 if $A$ is true and value 0 if $A$ is false. 
\end{definition}
Allowed paths are restricted to those that minimize the number of backbones traversed in the path between two end nodes. Of those paths, they are also shortest paths within each local area.

\subsection{Specific Choice of Path Partition}

\SpecificPathPartitionDisjoint*

\begin{proof}[Proof of Lemma~\ref{lemma: specific path partition disjoint}]\label{proof: implementation: specific path partition disjoint}
    \textbf{Disjointness}: 
    By construction, if a path is in $\Pi^B$ it cannot be in any $\Pi_j^J$ nor any $\Pi_i^I$, and likewise paths in any $\Pi_j^J$ cannot be in any $\Pi^I_i$. 
    Therefore, it suffices to show that any path can be in at most one $\Pi_j^J$ or at most one $\Pi_i^I$.
    
    Suppose that there exists some $\pi, i, i'$ with $i\neq i'$ such that $\pi\in \Pi_i^I$ and $\pi\in\Pi_{i'}^I$. 
    By definition of the set of allowed edges $\mathcal{E}^{*}$,  $(i,i')\notin\mathcal{E}^{*}$. 
    Since $\mathcal E \subseteq \mathcal{E}^{*}$,  $(i,i')\notin\mathcal{E}$, thus $p = (u,...,i, \bullet,..., i',..., v)$ for some end nodes $u,v\in E$. 
    The only allowed edges $(i,\bullet)$ are $(i,w)$ for some $w\in E$, or $(i,j)$ for some $j\in J$. 
    In the former case, this implies $p\notin \mathcal{P}_{\text{valid}}$ and hence also $p\notin \mathcal{P}_{\text{allowed}}$. 
    In the latter case, $\exists k~s.t.~j\in J_k\implies p\in\Pi_k^J$.
    In both cases, this contradicts that $p\in\Pi_i^I$.
    Therefore, no such $p,i,i'$ can exist. 

    Suppose instead that there exists $p,j,j'$ with $j\neq j'$ such that $p\in\Pi^J_j$ and $p\in\Pi_{j'}^J$. 
    By the definitions of $\mathcal{E}^{*}$ and $\mathcal{P}_{\text{allowed}}$, the only paths from nodes in $J_j$ to nodes in $J_{j'}$ is via some $b\in B$. 
    Therefore, there exists some $k$ such that $p_k = b$ and thus $p\in\Pi^B$. 
    This contradicts that $p\in\Pi_i^J$ and so no such $p,j,j'$ can exist

    \textbf{Completeness}: 
    We proceed by construction.
    Consider some path $p = (e, \bullet, ...)$, for $e\in E$. By definition of $\mathcal{E}^{*}$, and since $\mathcal{E} \subseteq \mathcal{E}^{*}$, the only possible edges $(e,\bullet)$ are of the form $(e,i)$ for some $i\in I$. Then, $p = (e, i, \bullet, \cdots), $ for some $i \in I$. 
    By definition of $\mathcal{E}^{*}$, and since $\mathcal{E} \subseteq \mathcal{E}^{*}$, the only allowed edges are of the form $(i,e')$ for some $e' \in E$, or $(i, j)$ for some $j \in J$. 
    In the former case, $p \in \Pi_i$. 
    In the later case, since the $J_k$ form a disjoint partition of $J$, $\exists k$ such that $j\in J_k$. Therefore, if $p\notin\Pi^B$, then $p\in\Pi_k^J$.
    Otherwise, $p \in \Pi^B$. 
    Therefore, all paths in $\mathcal{P}_{\text{allowed}}$ are in some $\Pi$, and so $\Pi$ forms a disjoint partition of $\mathcal P_{\text{allowed}}$.

\end{proof}

\subsection{Specific Choice of Filling Classes}
Here we prove that the specific choice of filling classes used in our implementation is well-behaved. 

\begin{restatable}[The Specific Choice of Filling Classes is Well-Behaved]{lemma}{SpecificFCsWB}\label{lemma: specific FCs well behaved} 
    Let $\mathcal G$ be an internally connected network resource graph. The specific choice of filling classes $\Phi$ defined in~\eqref{eq: specific FC partition} is a well-behaved set of filling classes.
\end{restatable}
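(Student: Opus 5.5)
The plan is to compute the associated resource sets $\xi(\Pi_\phi)$ for the three kinds of filling classes in \eqref{eq: specific FC partition} explicitly, and then check the two conditions of Definition~\ref{def: well-behaved set of filling classes} case by case over pairs of class types: (I,I), (J,J), (I,J), (B,I), (B,J).

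The structural facts come from the allowed edges $\mathcal{E}^{*}$ (Definition~\ref{def: allowed edges}) and from how $\Pi$ is built (Definition~\ref{def: specific path partition}).
\begin{itemize}
\item \textbf{Interface classes.} A path in $\Pi^I_i$ contains neither a backbone nor a junction node. An end node may only neighbour EGIs, and EGIs may only neighbour end nodes or junctions. Hence such a path is $(e,i,e')$, and $\xi(\Pi^I_i)=\{i\}$ whenever the cell is nonempty.
\item \textbf{Junction classes.} A path in $\Pi^J_j$, for $j$ the index of a local area $L_j$, avoids $B$. It therefore stays inside one local area, alternating EGIs and junction nodes of $J_j$. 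So $\xi(\Pi^J_j)\subseteq (I\cup J)\cap\mathcal V_j$, and it contains at least one junction of $J_j$.
\item \textbf{Backbone class.} $\xi(\Pi^B)$ contains every backbone used by some allowed path, together with the junctions and EGIs on those paths.
\end{itemize}

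For \emph{condition~1} (distinctness of the sets):
\begin{itemize}
\item Two interface classes give distinct singletons.
\item An interface class contains no junction, whereas every junction class does.
\item Two junction classes contain junctions from different, disjoint local areas.
\item The backbone class is the only one containing an element of $B$.
\end{itemize}
Condition~1 applies only to nonempty cells. If some cell is empty, I would either adopt the convention that empty cells are dropped from $\Phi$, or note that it is vacuous for them. I would fix this convention explicitly at the start.

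For \emph{condition~2} (nesting whenever the sets intersect):
\begin{itemize}
\item \textbf{(I,I).} Two singletons that intersect are equal, so this case cannot arise for distinct classes.
\item \textbf{(J,J).} The sets lie in disjoint local areas, so they never intersect.
\item \textbf{(I,J).} If $i\in\xi(\Pi^J_j)$, then $\{i\}\subset\xi(\Pi^J_j)$, which is already nested.
\item \textbf{(B,I).} This is identical to (I,J), with $\xi(\Pi^B)$ in place of $\xi(\Pi^J_j)$.
\end{itemize}

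The real content is the remaining case, the intersection of $\Pi^B$ with a junction class, and I expect it to be the \textbf{main obstacle}. Suppose $\xi(\Pi^B)\cap\xi(\Pi^J_j)\neq\emptyset$. I must show $\xi(\Pi^J_j)\subset\xi(\Pi^B)$; the reverse inclusion fails because backbones lie only in $\xi(\Pi^B)$. Given $r\in\xi(\Pi^J_j)$, the aim is an allowed path through $r$ that also traverses a backbone.

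The approach I would take runs as follows.
\begin{enumerate}
\item Since the intersection is nonempty, some backbone path enters local area $L_j$. It enters through a junction attached to a backbone, reaching an end node of $L_j$ via EGIs.
\item The resource $r$ lies on some intra-area allowed path $(e,\dots,r,\dots,e')$.
\item Using internal connectivity and the shortest-within-each-area rule in Definition~\ref{def: allowed entanglement generation paths}, splice a backbone-crossing path so that its segment inside $L_j$ ends at $e$ or $e'$ and passes through $r$.
\end{enumerate}

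The delicate point is step 3. The minimality constraints $b^*$ and $l^*_y$ may exclude the spliced path, because a path through $r$ need not be the shortest one within $L_j$. I would first try to argue using the connectivity of $L_j$ together with the freedom to choose the destination end node. For the pair $(e,e'')$ with $e''$ in another area, the in-area segment can be taken to be a shortest route from $e$ to the border junction, and I would try to ensure that $r$ lies on such a route. Failing that, I would consider redefining $\Pi^J_j$ (or $\xi$) to make the inclusion hold, or assuming every resource of an area is reachable by some minimal backbone path. If the inclusion fails outright for some topology, that would show the lemma needs an additional hypothesis on $\mathcal G$.
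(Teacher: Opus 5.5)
Your case split by class type is the paper's, and computing $\xi(\Pi^I_i)=\{i\}$ directly from $\mathcal{E}^{*}$ makes the (I,I), (I,J) and (B,I) cases immediate, more cleanly than the paper does. But the proposal is not yet a proof. The $(\Pi^B,\Pi^J_j)$ case, which carries all the content, is left open.

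The paper's proof closes that case, together with (B,I), by an assertion. Because $\mathcal{G}$ is internally connected and $\mathcal{P}_{\text{allowed}}$ ``only restricts to a particular type of shortest path'', it claims that every $r\in\xi(\Pi^J_j)$ lies on some allowed path through a backbone. Your worry in step~3 is exactly where this breaks, and the break is fatal rather than technical. The $l^{*}_{y}$ minimality can exclude every such path, and then condition~2 genuinely fails, so no splicing argument can succeed.

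For a concrete counterexample, take a local area $L_1$ with:
end nodes $e_1,e_2,e_3$; EGIs $i_1,i_2,i_3$; junctions $j,j'$; and edges $(e_k,i_k)$, $(i_k,j)$ for $k=1,2,3$, together with $(i_2,j')$, $(i_3,j')$.
Attach a backbone $b$ via $(j,b)$, $(b,j_2)$ to a second local area with junction $j_2$, EGI $i_4$, end nodes $e_4,e_5$, and edges $(j_2,i_4)$, $(i_4,e_4)$, $(i_4,e_5)$. This $\mathcal{G}$ is internally connected with $\mathcal{E}\subseteq\mathcal{E}^{*}$.
\begin{itemize}
\item Both $(e_2,i_2,j,i_3,e_3)$ and $(e_2,i_2,j',i_3,e_3)$ use the minimal two EGIs of $L_1$. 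Both are therefore allowed, and $j,j'\in\xi(\Pi^J_1)$.
\item Every path in $\Pi^B$ leaves $L_1$ through $j$, and each $e_k$ reaches $j$ through a single EGI.
\item Any route through $j'$ uses at least two EGIs of $L_1$, so it is excluded by $l^{*}_{1}$, and $j'\notin\xi(\Pi^B)$.
\end{itemize}
Hence $i_2,j\in\xi(\Pi^B)\cap\xi(\Pi^J_1)$, while $j'\in\xi(\Pi^J_1)\setminus\xi(\Pi^B)$ and $b\in\xi(\Pi^B)\setminus\xi(\Pi^J_1)$. Neither inclusion holds, and both cells are nonempty.

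So the lemma as stated needs an additional hypothesis, or a change to $\mathcal{P}_{\text{allowed}}$ or to the cells $\Pi^J_j$. The hypothesis you float works: whenever $\xi(\Pi^B)\cap\xi(\Pi^J_j)\neq\emptyset$, every resource on a path of $\Pi^J_j$ also lies on some path of $\Pi^B$. Under it your (B,J) case is one line, with properness coming from the backbones in $\xi(\Pi^B)$, and the rest of your plan goes through.

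One smaller correction: condition~1 is \emph{not} vacuous for empty cells, since two empty cells have the same (empty) set of associated resources. In the example above, $\Pi^I_{i_1}=\Pi^I_{i_2}=\emptyset$. You therefore need the convention that empty cells are discarded from $\Phi$, a point the paper's proof does not address either.
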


\begin{proof}
Let $\mathcal{L} = \{L_i = (\mathcal V_i, \mathcal E_i) \}$ be the set of local areas of $\mathcal{G}$. Let $J_i = J \cap \mathcal{V}_i$ be the junction nodes in sub-graph $L_i$.  
We need to show that $\Phi$ satisfies both conditions of Definition \ref{def: well-behaved set of filling classes}.
\begin{enumerate}
    \item Distinct Resource Sets: \newline
    \textbf{Claim:} $\forall \phi, \phi' \in \Phi$ with $\phi \neq \phi'$, we have ${\xi(\Pi_{\phi}) \neq \xi(\Pi_{\phi'})}$.

    \textbf{Proof:} 
    We consider all possible pairs of distinct filling classes:
    \begin{enumerate}
        \item $\phi = \phi^{B}$ and $\phi' = \phi^J_j$ for some $j \in J$.

        Since every path $\pi \in \Pi^B$ contains at least one backbone node $b \in B$, we have ${B \cap \xi(\Pi^B) \neq \emptyset}$.
        By definition, $\Pi^J_j \subseteq \mathcal{P}_{\text{allowed}} \setminus \Pi^B$, so paths in $\Pi^J_j$ contain no backbone nodes.
        Therefore ${B \cap \xi(\Pi^J_j) = \emptyset}$. It follows that $\xi(\Pi^B) \neq \xi(\Pi^J_j)$.
        
        \item $\phi = \phi^B$ and $\phi' = \phi^I_i$ for some $i \in I$.

        Similarly to case (a), ${\Pi^I_i \subseteq \mathcal{P}_{\text{allowed}} \setminus \Pi^B}$, so $B \cap \xi(\Pi^I_i) = \emptyset$. But ${B \cap \xi(\Pi^B) \neq \emptyset}$, so ${\xi(\Pi^B) \neq \xi(\Pi^I_i)}$.

        \item $\phi = \phi^J_j$ and $\phi' = \phi^J_{j'}$ for $j \neq j' \in \{ 1, \cdots,  |\mathcal{L}|\}$.

        Paths $\pi \in \Pi^J_j$ contain at least one junction node from $J_j$ and paths $ \pi' \in \Pi^J_{j'}$ contain at least one junction node from $J_{j'}$.
        Since, with the exclusion of end nodes, local areas have disjoint vertex sets (by Definition \ref{def: local area} of local areas), $J_j \cap J_{j'} = \emptyset$.
        Therefore, $\xi(\Pi^J_j) \neq \xi(\Pi^J_{j'})$.

        \item $\phi = \phi^J_j$ and $\phi' = \phi^I_i$ for some $j  \in J$ and $i \in I$.

        By definition, paths in $\Pi^{I}_i$ contain no junction nodes. 
        Therefore, $\xi(\Pi^{I}_i) \cap J_{j'} = \emptyset, \ \forall j'$. Paths in $\Pi^J_j$ on the other hand contain at least one junction node $j' \in J_j$, hence $\xi(\Pi^J_j) \cap J_j \neq \emptyset$. 
        Therefore, $\xi(\Pi^J_j) \neq \xi(\Pi^I_i)$.

        \item $\phi = \phi^I_i$ and $\phi' = \phi^I_{i'}$ for $i \neq i' \in I$.

        By definition, paths in $\Pi^I_i$ must contain the entanglement generation interface $i$, similarly paths in  $\Pi^I_{i'}$ must contain the entanglement generation interface $i'$. 
        Based on Definition~\ref{def: allowed edges} of allowed edges, $\mathcal{E^{*}}$, each path in $\Pi^I_i \ (\Pi^I_{i'})$ can contain at most one entanglement generation interface. 
        It follows that $\xi(\Pi^I_i) \neq \xi(\Pi^I_{i'})$, and moreover $\xi(\Pi^I_i) \cap \xi(\Pi^I_{i'}) = \emptyset$.

    \end{enumerate}

    \item Intersection of Associated Resources: \newline
    \textbf{Claim:} $\forall \phi, \phi' \in \Phi$ with $\phi \neq \phi'$, if $\xi(\Pi_\phi) \cap \xi(\Pi_{\phi'}) \neq \emptyset$, then either $\xi(\Pi_{\phi}) \subset \xi(\Pi_{\phi'})$ or $\xi(\Pi_{\phi'}) \subset \xi(\Pi_{\phi})$.

    \textbf{Proof:} In consideration of the structure of the specific choice of path partition, we analyze each case where sets of associated resources can intersect:
    \begin{enumerate}
        \item $\phi = \phi^B$ and $\phi' \in \{\phi^J_j : j \in J, \phi^I_i: i \in I\}$. 

        Consider arbitrary $r \in \xi(\Pi_{\phi'})$.
        Then, $\exists \pi \in \Pi_{\phi'}$ such that $r \in \pi$.
        By Definition \ref{def: specific path partition}, $\pi$ is a path in $\mathcal{P}_{\text{allowed}}$ that does not contain backbone vertices.
        However, since $\mathcal{G}$ is internally connected (Definition \ref{def: internally connected}), there exists a path $\pi_b \in \mathcal{P}_{\text{allowed}}$ that: \begin{enumerate}
            \item contains at least one backbone vertex (so $\pi_b \in \Pi^B$),
            \item and traverses through resource $r$ (since $\mathcal{G}$ is internally connected and $\mathcal{P}_{\text{allowed}}$ does not decrease the network connectivity, but only restricts to a particular type of shortest path between any two nodes).
        \end{enumerate}

Therefore, $r \in \xi(\Pi^B)$. Since r was arbitrary, $\xi(\Pi_{\phi'}) \subseteq \xi(\Pi^{B})$.
Moreover, the inclusion is proper, as $B \cap \xi(\Pi_b) \neq \emptyset$ while $B \cap \xi(\Pi_{\phi'}) = \emptyset$ (from Condition 1).
Hence, $\xi(\Pi_{\phi'}) \subset \xi(\Pi^{B})$.

\item $\phi = \phi^J_j$ and $\phi' = \phi^J_{j'}$ for $j \neq j'  \in J$.

In case 1. (c), we noted that, with the exclusion of end nodes, local areas have disjoint vertex sets. Since paths in $\Pi^J_j$, respectively $\Pi^J_{j'}$, are entirely contained in the local areas $L_j$, respectively  $L_{j'}$, $\xi(\Pi^J_j) \cap \xi(\Pi^J_{j'}) = \emptyset$ and this case does not satisfy the premise of condition 2.

\item $\phi = \phi^J_j$ and $\phi' = \phi^I_i$ for some $j  \in J$ and $i \in I$.

There are two sub-cases. If ${i \notin \mathcal{V}_j}$, then ${\xi(\Pi^J_j) \cap \xi(\Pi^I_i) = \emptyset}$, since distinct local areas have distinct vertex sets. 
Hence this sub-case does not satisfy the premise of condition 2. 
Alternatively, ${i \in \mathcal{V}_j}$ and ${\xi(\Pi^J_j) \cap \xi(\Pi^I_i) \neq \emptyset}$.
In this sub-case, consider arbitrary ${r \in \xi(\Pi^I_i)}$. Then ${\exists \pi_i \in \Pi^I_i}$ such that ${r \in \pi_i}$.
By definition of $\Pi^I_i$ and the set of local areas, all vertices in $\pi_i$ are in $\mathcal{V}_j$.  
By the internal connectivity of local area $L_j$, there exists an allowed path $\pi_j \in L_j$ that: \begin{enumerate}
    \item Contains at least one junction $j' \in J_j$ (so $\pi_j \in \Pi^J_j$), 
    \item and traverses resource $r$.
\end{enumerate}

Therefore, $r \in \xi(\Pi^J_j)$. Since $r$ was arbitrary, $\xi(\Pi^I_i) \subseteq \xi(\Pi^J_j)$.
Moreover, the inclusion is proper since, from 1. (d),  $J_j \cap \xi(\Pi^J_j) \neq \emptyset$ while $J_j \cap \xi(\Pi^I_i) = \emptyset$.
Hence, $\xi(\Pi^I_i) \subset \xi(\Pi^J_j)$.

\item $\phi = \phi^I_i$ and $\phi' = \phi^I_{i'}$ for $i \neq i' \in I$.

In 1. (e) we showed that $\xi(\Pi^I_i) \cap \xi(\Pi^I_{i'}) = \emptyset$, and thus this case does not satisfy the premise of condition 2.
    \end{enumerate}
\end{enumerate}

\end{proof}

\subsection{Determining the PGA Execution Time}
We need to determine the duration of time $E_d$ s.t. 
\begin{equation}\label{eq: set E for ppacket} \mathbb P\left[\parbox{14em}{packet generated before time $E_d$}\right] \geq \ppacket.\end{equation}
For packets $\mathfrak{p}$ with $s=1$, this can be done with the geometric/exponential distribution, but for packets where the number of requested pairs $s>1$, the exact calculation of \eqref{eq: set E for ppacket} is not known. 
There are however results from \textit{scan statistics} which give very accurate estimations of this probability for a known number of attempts to generate entanglement $m$ and probability of each attempt succeeding. Such results can be found in, e.g.~\cite{alm_distributions_1983, naus_approximations_1982, glaz_scan_2001, glaz_scan_1999}. 
We translate between a number of attempts $m$ of some concrete duration and each with a probability of success $p_{\text{succ}}$ and the rate of successful link generation $R_{\pi}$ on an end-to-end path $\pi$ for a total duration of time $E_d$. For full details of our treatment of this problem, see~\cite{NetArx}, Appendix~C.
Every PGA from a demand $d$ is ultimately assigned an execution time $E_d$ that is long enough that each PGA succeeds with probability $\ppacket$.

\onecolumn


\section{Code Listings}
\label{app: code listings}

\definecolor{mygreen}{rgb}{0,0.6,0}
\definecolor{mygray}{rgb}{0.5,0.5,0.5}
\definecolor{mymauve}{rgb}{0.58,0,0.82}

\lstset{ %
  backgroundcolor=\color{white},   
  basicstyle=\footnotesize,        
  breaklines=true,                 
  captionpos=b,                    
  commentstyle=\color{mygreen},    
  escapeinside={\%*}{*)},          
  keywordstyle=\color{blue},       
  stringstyle=\color{mymauve},     
  numbers=left,
  numberstyle=\small,
  frame=single,
}
\subsection{Compute Network Schedule}
\subsubsection{Minimal Allocation}

\begin{lstlisting}[firstline=18, lastline=64,language=python, caption={Implementation of the minimal allocation phase of Compute Schedule, via the \textsc{DirectAllocation} scheduler, Algorithm~\ref{alg: direct allocation subroutine}. Note that rather than returning an updated network schedule, this subroutine simply updates the provided \texttt{NetworkSchedule} object.}, label={lst: directallocation}]
def minimal_allocation_scheduler(
        filling_class: FillingClass,
        network_schedule: NetworkSchedule,
        start_time: float
):
    if not filling_class:
        return

    network_schedule.add_tasks_minimal(filling_class)

    # logger = LogManager.get_resource_scheduler_logger()

    cycle_start_time = start_time

    ponters = {r: -100 for r in network_schedule.schedule.keys()}

    for i in range(len(filling_class)):

        cycle_time = filling_class.cycle_times[i]

        k = 0

        for k in range(filling_class.numbers_of_cycles[i]):
            for j, task in enumerate(filling_class[i:]):
                offset = sum(t.execution_time for t in filling_class[i:i + j])
                _start_to_schedule = time.process_time()
                network_schedule.schedule_packet_generation_attempt(
                    task, cycle_start_time + cycle_time * k + offset,
                    True,
                    pga_id=sum(filling_class.numbers_of_cycles[:i]) + k,
                    schedule_pointers=ponters)

        _start_to_schedule = time.process_time()
        network_schedule.schedule_packet_generation_attempt(
            filling_class[i],
            cycle_start_time + cycle_time * (
                filling_class.numbers_of_cycles[
                    i]), True,
            pga_id=sum(
                filling_class.numbers_of_cycles[
                    :i]) + k,
            schedule_pointers=ponters)

        cycle_start_time += filling_class[i].execution_time + (
            filling_class.numbers_of_cycles[i]) * cycle_time

\end{lstlisting}

\subsubsection{Bonus Round Robin}
\begin{lstlisting}[firstline=65, lastline=261,language=python, caption={Implementation of Algorithm~\ref{alg: bonus round robin resource schedule} (\textsc{BonusRoundRobin}) in Python. Note that rather than returning an updated network schedule, this subroutine simply updates the provided \texttt{NetworkSchedule} object.}, label={lst: bonusRR}]
def bonus_phase_round_robin_scheduler(
        filling_class: FillingClass,
        network_schedule: NetworkSchedule,
):
    # logger = LogManager.get_resource_scheduler_logger()

    if not filling_class:
        return

    _resources = [r for r in network_schedule.schedule.keys() if
                  not r & 1 << get_component_id_size() - 1]

    schedule_pointers = {r: 0 for r in _resources}

    release_times_to_consider = set()

    round_robin_index = 0

    next_start_time = 0

    network_schedule.add_tasks_bonus(filling_class)

    def _get_next_start_time() -> float | None:
        nonlocal next_start_time

        end_times = {
            r: network_schedule.schedule[r][schedule_pointers[r]].end_time for r
            in _resources if
            schedule_pointers[r] < len(network_schedule.schedule[r])}

        # logger.debug(f'considering end times: {end_times}')

        _next_end_time = min(end_times.values()) if end_times.values() else inf
        # logger.debug(f"next end time: {_next_end_time}")

        _minimising_resources = [
            r for r in end_times.keys() if
            math.isclose(end_times[r],
                         _next_end_time)] if _next_end_time != inf else []

        _index, _next_release_time = min(
            enumerate(release_times_to_consider),
            key=lambda t: t[1]
        ) if release_times_to_consider else (0, inf)

        # logger.debug(f"considering release times: {release_times_to_consider}")
        # logger.debug(f"next release time: {_next_release_time}")

        if _next_end_time == inf and _next_release_time == inf:
            return None

        if _next_end_time <= _next_release_time:
            for r in _minimising_resources:
                schedule_pointers[r] += 2

            if _next_end_time in release_times_to_consider:
                release_times_to_consider.remove(_next_end_time)
                # logger.debug("Also removed end time from release_times_to_consider")
            return _next_end_time

        else:

            release_times_to_consider.remove(_next_release_time)
            return _next_release_time

    def _task_indices_to_inspect(start: int) -> Generator[int, None, None]:
        for i in range(len(filling_class)):
            yield (start + i) % len(filling_class)

    def _resources_available(_tts: PacketGenerationTask) -> bool:
        for resource in _tts.path[1:-1]:
            if schedule_pointers[resource] >= len(
                    network_schedule.schedule[resource]):
                continue
            if network_schedule.schedule[resource][schedule_pointers[
                resource]].start_time < next_start_time + candidate_task.execution_time:
                return False
        return True

    def _future_minsep_violation(_tts: PacketGenerationTask) -> bool:
        _reference_resource = _tts.path[1]
        _schedule_except = network_schedule.schedule[_reference_resource]
        offset = 0
        while (
                schedule_pointers[_reference_resource] + offset < len(
            _schedule_except)
                and
                _schedule_except[
                    schedule_pointers[_reference_resource] + offset
                ].start_time < next_start_time + _tts.execution_time + _tts.minsep
        ):
            if _schedule_except[
                schedule_pointers[_reference_resource] + offset
            ].task_id == _tts.task_id:
                return True
            offset += 1
        return False

    def _past_minsep_violation(_tts: PacketGenerationTask) -> Tuple[
        bool, float | None]:
        _reference_resource = _tts.path[1]
        _schedule_excerpt = network_schedule.schedule[_reference_resource]
        offset = 1 if schedule_pointers[_reference_resource] < len(
            _schedule_excerpt) else 2

        while (
                schedule_pointers[_reference_resource] - offset >= 0
                and (
                        _schedule_excerpt[
                            schedule_pointers[
                                _reference_resource
                            ] - offset
                        ].end_time > next_start_time - _tts.minsep
                        and not
                        np.isclose(
                            _schedule_excerpt[
                                schedule_pointers[
                                    _reference_resource
                                ] - offset
                            ].end_time,
                            next_start_time - _tts.minsep
                        )
                )
        ):

            if _schedule_excerpt[
                schedule_pointers[
                    _reference_resource
                ] - offset
            ].task_id == _tts.task_id:
                return True, _schedule_excerpt[
                                 schedule_pointers[
                                     _reference_resource
                                 ] - offset
                                 ].end_time + _tts.minsep
            offset += 1
            if schedule_pointers[_reference_resource] - offset < 0:
                break

        return False, None

    if all(network_schedule.schedule[r][0].start_time > 0 for r in
           schedule_pointers.keys() if network_schedule.schedule[r]):
        next_start_time = min(network_schedule.schedule[r][0].end_time for r in
                              schedule_pointers.keys())

    while next_start_time < network_schedule.length:

        cycle_start_time = time.process_time()

        for index in _task_indices_to_inspect(round_robin_index):
            candidate_task = filling_class[index]
            # logger.debug(f'candidate_task: {candidate_task}')

            if (
                    not _resources_available(candidate_task)
                    or _future_minsep_violation(candidate_task)
                    or next_start_time > candidate_task.expiry_time
                    or next_start_time + candidate_task.execution_time >= network_schedule.length
            ):
                continue

            past_minsep_violation, release_time = _past_minsep_violation(
                candidate_task)
            if past_minsep_violation:
                if release_time > next_start_time:
                    release_times_to_consider.add(release_time)
                continue

            network_schedule.schedule_packet_generation_attempt(
                candidate_task,
                next_start_time,
                False,
                schedule_pointers=schedule_pointers
            )

            round_robin_index = index + 1

        old_start_time = 0 + next_start_time

        next_start_time = _get_next_start_time()

        if next_start_time is None:
            return

        assert old_start_time < next_start_time

        # logger.debug(f"Completed cycle in time {time.process_time() - cycle_start_time:.5g}")

        if next_start_time is None:
            break

        # logger.debug(f"next start time is {next_start_time}")

        pass

\end{lstlisting}

\subsubsection{Complete Scheduler}
\begin{lstlisting}[firstline=463, lastline=502,language=python, caption={Implementation of Algorithm~\ref{alg: Compute Schedule} in Python.}, label={lst: complete compute schedule}]
def complete_schedule_scheduler(
        filling_classes: SetOfFillingClasses,
        network_schedule: NetworkSchedule,
) -> Tuple[float, float]:
    logger = LogManager.get_resource_scheduler_logger()

    _start_minimal = time.process_time()
    logger.info("Computing Minimal Schedule")
    for filling_class in filling_classes.all_filling_classes:
        logger.debug(
            f"Starting to compute minimal schedule for filling class {filling_class}"
        )
        start_time = sum(fc.minimum_service_time for fc in
                         filling_classes.preceding_filling_classes[
                             filling_class])
        minimal_allocation_scheduler(filling_class, network_schedule,
                                     start_time)

    minimal_time = time.process_time() - _start_minimal

    logger.debug(
        f"Completed minimal phase in time {time.process_time() - _start_minimal:.5g}"
    )

    _start_bonus = time.process_time()

    logger.info("Computing Bonus Schedule")
    for fc in filling_classes.all_filling_classes:
        logger.debug(
            f"Starting to compute bonus schedule for filling class {fc}")
        bonus_phase_round_robin_scheduler(fc, network_schedule)

    bonus_time = time.process_time() - _start_bonus

    logger.debug(
        f"Completed bonus phase in time {time.process_time() - _start_bonus:.5g}"
    )

    return minimal_time, bonus_time
\end{lstlisting}

\subsection{Admit New Tasks}
\begin{lstlisting}[linerange={9-55,59-59},language=python, caption={Implementation of Algorithm~\ref{alg: Admit Tasks} in Python.}, label={lst: admit new tasks}]
def admit_new_tasks(filling_classes: SetOfFillingClasses, task_intake_object: Dict[int, List[PacketGenerationTask]], resources: List[int]) -> Dict[int, int]:
    """
    :param filling_classes: set of filling classes
    :param task_intake_object: task intake object in format {demand_id:[PGT]}
    :param resources: list of resources
    :return: list of accepted demands
    """

    logger = LogManager.get_admission_control_logger()

    available_time = {r : get_scheduling_interval_duration() for r in resources}
    for filling_class in filling_classes.all_filling_classes:
        for resource in filling_class.associated_resources:
            available_time[resource] -= filling_class.minimum_service_time

    accepted_demands = {}

    for demand in task_intake_object:
        logger.info(f"Considering demand {demand}")
        for pgt_to_admit in task_intake_object[demand]:
            logger.info(f"Considering PGT {pgt_to_admit}")
            try:
                filling_class = filling_classes.get_filling_class(pgt_to_admit)
            except FillingClassError:
                logger.info(f"Rejected PGT {pgt_to_admit} because no filling class was found")
                continue

            for resource in filling_class.associated_resources:
                available_time[resource] += filling_class.minimum_service_time

            filling_class.add_packet_generation_task(pgt_to_admit)
            if all(available_time[resource] - filling_class.minimum_service_time > 0 for resource in filling_class.associated_resources):


                accepted_demands[demand] = pgt_to_admit.task_id
                for resource in filling_class.associated_resources:
                    available_time[resource] -= filling_class.minimum_service_time

                logger.debug(f"Remaining available time: {available_time}")
                logger.info(f"accepted demand: {demand} with PGT {pgt_to_admit}")

                break
            else:
                logger.info(f"rejected PGT for insufficient time. Demand ID: {demand}, PGT: {pgt_to_admit}")
                filling_class.remove(pgt_to_admit)
                for resource in filling_class.associated_resources:
                    available_time[resource] -= filling_class.minimum_service_time
    return accepted_demands
\end{lstlisting}

\subsection{Update Filling Classes}
\clearpage
\begin{lstlisting}[linerange={8-160}, language=python, caption={Implementation of Algorithm~\ref{alg: Update Filling Classes} (\textsc{UpdateFillingClasses})}, label={lst: update filling classes}]
class CompletedDemands:
    expired_demands: List[int] = dataclasses.field(default_factory=list)
    removed_demands: List[int] = dataclasses.field(default_factory=list)
    terminated_demands: List[int] = dataclasses.field(default_factory=list)

    @property
    def all_removed_demands(self):
        return self.removed_demands + self.terminated_demands + self.expired_demands

def update_filling_classes(
        old_filling_classes: SetOfFillingClasses | None,
        path_partition: List[Set[Tuple[int, ...]]] | None,
        termination_buffer: Set[int],
        expiration_cutoff_time: float
) -> Tuple[
    SetOfFillingClasses,
    CompletedDemands,
]:

    completed_demands = CompletedDemands()

    if path_partition is not None:
        new_filling_classes = SetOfFillingClasses(path_partition)

        failed_tasks = new_filling_classes.add_packet_generation_tasks(old_filling_classes.all_tasks if old_filling_classes else [])
        completed_demands.removed_demands = [t.parent_demand_id for t in failed_tasks]

    else:
        new_filling_classes = old_filling_classes

    expired, terminated = new_filling_classes.update_live_tasks(termination_buffer, expiration_cutoff_time)

    completed_demands.expired_demands = [task.parent_demand_id for task in expired]
    completed_demands.terminated_demands = [task.parent_demand_id for task in terminated]

    return new_filling_classes, completed_demands

\end{lstlisting}

\clearpage
\begin{lstlisting}[linerange={8-160}, language=python, caption={Implementation of a single filling classes $\phi$. Any properties of a filling class (e.g. \texttt{minimum\_service\_time}) which are \texttt{None} when called are calculated and stored. This means that subsequent calls to get the property only require $O(1)$ time rather than recomputing the value each time.}, label={lst: single filling class}]
class FillingClass(List[PacketGenerationTask]):

    def __init__(self, path_partition: Set[Tuple[int, ...]]) -> None:
        super().__init__()
        self._path_partition: Set[Tuple[int, ...]] = path_partition
        self._associated_resources: Set[int] | None = None
        self._numbers_of_cycles: List[int] | None = None
        self._cycle_times: List[float] | None = None
        self._minimum_service_time: float | None = None

    def _calculate_associated_resources(self) -> None:
        self._associated_resources = set(
            [r for path in self._path_partition for r in path[1:-1]])

    @property
    def path_partition(self):
        return self._path_partition

    @property
    def associated_resources(self):
        if self._associated_resources is None:
            self._calculate_associated_resources()
        return self._associated_resources

    @property
    def associated_nodes(self) -> Set[int]:
        return {r for path in self._path_partition for r in path}

    def calculate_cycle_times(self):

        # self.sort(key=lambda task: task.minimum_allocation)

        self._cycle_times = [
            max([t.execution_time + t.minsep for t in self[i:]] + [
                sum(t.execution_time for t in self[i:])]) for i in
            range(len(self))]

        pass

    def calculate_numbers_of_cycles(self):

        # self.sort(key=lambda task: task.minimum_allocation)

        _list = [t.minimum_allocation for t in self]
        for i in range(len(self) - 1, 0, -1):
            _list[i] = _list[i] - _list[i - 1]

        _list[0] -= 1

        self._numbers_of_cycles = _list

    def calculate_minimum_service_time(self) -> None:
        self._minimum_service_time = sum(x * y for x, y in
                                         zip(self.numbers_of_cycles,
                                             self.cycle_times)) + sum(
            t.execution_time for t in self)

    def remove_expired_tasks(self, cutoff_time: float) -> List[
        PacketGenerationTask]:
        expired_tasks = [task for task in self if
                         task.expiry_time <= cutoff_time]
        self[:] = [task for task in self if task not in expired_tasks]
        return expired_tasks

    def remove_terminated_demands(self, terminated_demands: Iterable[int]) -> \
    List[PacketGenerationTask]:
        terminated_tasks = [task for task in self if
                            task.parent_demand_id in terminated_demands]
        self[:] = [task for task in self if task not in terminated_tasks]
        return terminated_tasks

    def update_live_tasks(
            self,
            terminated_demands: Set[int],
            current_time: float
    ) -> Tuple[
        List[PacketGenerationTask],
        List[PacketGenerationTask],
    ]:

        expired_tasks = self.remove_expired_tasks(current_time)
        terminated_tasks = self.remove_terminated_demands(terminated_demands)

        if self:
            self.calculate_numbers_of_cycles()
            self.calculate_cycle_times()
            self.calculate_minimum_service_time()
        else:
            self._numbers_of_cycles = None
            self._cycle_times = None
            self._minimum_service_time = None

        return expired_tasks, terminated_tasks

    @property
    def numbers_of_cycles(self):
        if self._numbers_of_cycles is None or len(
                self._numbers_of_cycles) != len(self):
            self.calculate_numbers_of_cycles()

        return self._numbers_of_cycles

    @property
    def cycle_times(self):
        if self._cycle_times is None or len(self._cycle_times) != len(self):
            self.calculate_cycle_times()
        return self._cycle_times

    @property
    def minimum_service_time(self):

        if not self:
            return 0

        if self._minimum_service_time is None:
            self.calculate_minimum_service_time()
        return self._minimum_service_time

    def add_packet_generation_task(self, task: PacketGenerationTask):
        if not isinstance(task, PacketGenerationTask):
            raise TypeError(
                f'Task must be of type PacketGenerationTask (provided type {type(task)})')
        _i = 0
        while _i < len(self) and self[
            _i].minimum_allocation < task.minimum_allocation:
            _i += 1
        self.insert(_i, task)
        self._numbers_of_cycles = None
        self._cycle_times = None
        self._minimum_service_time = None

    def clear(self):
        super().clear()
        self._numbers_of_cycles = None
        self._cycle_times = None
        self._minimum_service_time = None

    def append(self, task: PacketGenerationTask):
        if not isinstance(task, PacketGenerationTask):
            raise TypeError(
                f"Task must be of type PacketGenerationTask (provided type {type(task)})")
        else:
            super().append(task)
            self._numbers_of_cycles = None
            self._cycle_times = None
            self._minimum_service_time = None

    def remove(self, __value):
        super().remove(__value)
        self._numbers_of_cycles = None
        self._cycle_times = None
        self._minimum_service_time = None
\end{lstlisting}

\begin{lstlisting}[linerange={195-248}, language=python, caption={Implementation of the set of filling classes $\Phi$. Any properties of a filling class (e.g. \texttt{minimum\_service\_time}) which are \texttt{None} when called are calculated and stored. This means that subsequent calls to get the property only require $O(1)$ time rather than recomputing the value each time.}, label={lst: set of filling classes}]
class SetOfFillingClasses:

    def __init__(self, path_partition: List[Set[Tuple[int, ...]]]) -> None:

        self._all_filling_classes: List[FillingClass] = [FillingClass(partition)
                                                         for partition in
                                                         path_partition]
        self._all_filling_classes.sort(reverse=True)
        self._preceding_filling_classes: Dict[
            FillingClass, Set[FillingClass]] = {
            fc: {_fc for _fc in self._all_filling_classes if _fc > fc} if fc.path_partition else set() for fc in
            self._all_filling_classes}

        self._interface_filling_classes: List[FillingClass] | None = None
        self._local_area_filling_classes: List[FillingClass] | None = None
        self._backbone_filling_classes: List[FillingClass] | None = None

    def add_packet_generation_tasks(self, pgts_to_add: Iterable[
        PacketGenerationTask]) -> List[PacketGenerationTask]:
        failed_tasks: List[PacketGenerationTask] = []
        for pgt in pgts_to_add:
            try:
                self.get_filling_class(pgt).add_packet_generation_task(pgt)
            except FillingClassError:
                failed_tasks.append(pgt)

        return failed_tasks

    def update_live_tasks(
            self,
            terminated_demands: Set[int],
            current_time: float
    ) -> Tuple[
        List[PacketGenerationTask],
        List[PacketGenerationTask],
    ]:
        expired_tasks = []
        terminated_tasks = []

        for filling_class in self.all_filling_classes:
            _ex, _term = filling_class.update_live_tasks(terminated_demands,
                                                         current_time)
            expired_tasks.extend(_ex)
            terminated_tasks.extend(_term)

        return expired_tasks, terminated_tasks

    def get_filling_class(self, pgt: PacketGenerationTask) -> FillingClass:
        for filling_class in self._all_filling_classes:
            if pgt.path in filling_class.path_partition:
                return filling_class

        raise FillingClassError("No known filling class for pgt")
\end{lstlisting}

\subsection{Network Schedules}
\begin{lstlisting}[linerange={27-36, 79-121, 136-147}, language=python ,caption={Implementation of the network schedule.}, label={lst: resource schedule}]
class PacketGenerationAttempt(ArqonDataRecord):
    src: int
    dst: int
    start_time: float
    end_time: float
    task_id: int
    demand_id: Optional[int]
    pga_id: Optional[int]
    is_minimal: bool


class ResourceSchedule(List[PacketGenerationAttempt]):
    def __init__(self, resource_id: int, seq: tuple = ()):
        super().__init__(seq)
        self._resource_id = resource_id

    @property
    def resource_id(self):
        return self._resource_id



    def add_pga(self, __object: PacketGenerationAttempt, pointer: Optional[int] = None):
        """
        re-written append method to ensure time-ordering of PGAs.
        :param __object: PacketGenerationAttempt to add to resource schedule.
        :param pointer: index of PGA preceding requested time
        """
        if not isinstance(__object, PacketGenerationAttempt):
            raise TypeError(f"Only type PacketGenerationAttempt is allowed. (supplied {type(__object)})")

        else:
            if not self:
                super().append(__object)
                return

            if pointer is None:
                pointer = _bs(self, __object.end_time)

                if pointer > 0:
                    self.insert(pointer + 1, __object)
                elif self[pointer].end_time < __object.end_time:
                    self.insert(1, __object)
                else:
                    self.insert(0, __object)

            else:
                if pointer == 0:
                    self.insert(pointer, __object)
                elif pointer > 0:
                    self.insert(pointer, __object)
                elif pointer < 0:
                    super().append(__object)
        return {
            "node": self.resource_id,
            "packet_generation_attempts": [pga.to_dict() for pga in self]
        }



class NetworkSchedule:

    def __init__(self, start_time: float, length: float, resources: List[int], identifier: int | None = None):

        self._schedule: Dict[int, ResourceSchedule] = {r: ResourceSchedule(r) for r in resources}
\end{lstlisting}

\twocolumn
\end{document}